\documentclass[11pt]{article}

\usepackage[a4paper,margin=1in]{geometry}
\usepackage{amsmath,amssymb,amsthm}
\usepackage{mathtools}
\usepackage{booktabs}
\usepackage{tabularx}
\usepackage{array}
\usepackage{enumitem}
\usepackage{xcolor}
\usepackage[colorlinks=true,linkcolor=blue!50!black,citecolor=blue!50!black,urlcolor=blue!50!black]{hyperref}

\newcolumntype{Y}{>{\raggedright\arraybackslash}X}
\newcolumntype{R}[1]{>{\raggedright\arraybackslash}p{#1}}

\theoremstyle{plain}
\newtheorem{theorem}{Theorem}[section]
\newtheorem{proposition}[theorem]{Proposition}
\newtheorem{lemma}[theorem]{Lemma}
\newtheorem{corollary}[theorem]{Corollary}
\theoremstyle{definition}

\theoremstyle{remark}
\newtheorem{remark}[theorem]{Remark}

\newcommand{\Sect}{\mathcal{S}}
\newcommand{\bbe}{\mathbf{e}}
\newcommand{\tn}{\tilde\nabla}
\newcommand{\const}{\mathrm{const}}
\newcommand{\Real}{\mathbb{R}}
\newcommand{\ip}[2]{\langle #1 , #2 \rangle}
\newcommand{\nrm}[1]{\lVert #1 \rVert}
\DeclareMathOperator{\Ker}{Ker}

\title{\bfseries The warp factor of supersymmetric $D=11$ near-horizon geometries:\\
single-point rigidity, the $\mathrm{Spin}(7)$ perfect square,
and global constraints}
\author{Usman Kayani}
\date{}

\begin{document}
\maketitle

\begin{abstract}
\noindent
On a compact connected section $\Sect$ of a supersymmetric $M$-horizon the
Killing-spinor bilinears give a pointwise identity relating the warp factor
$\Delta$, the rotation one-form $h$ and the two spinor norms, with nothing
assumed about the norm of the Killing spinor. We derive three consequences of
that identity in the unreduced form it takes when the usual constancy
assumption is dropped.

\medskip\noindent
First, a single-point rigidity theorem: if $\Delta$ and $h$ vanish at one point
of the section, then the flux vanishes identically, $\Sect$ is Ricci-flat and
the horizon is $\Real^{1,1}\times\Sect$. A condition at a single point replaces
the global spinorial hypothesis usually imposed.

\medskip\noindent
Second, the $\mathrm{Spin}(7)$ decomposition of the flux fixes $h$ algebraically
and exhibits the invariant flux content of the scalar square $\Delta=4\Phi^2$
known in an adapted gauge:
\[
\Delta=\tfrac1{108}\big\lVert w_{\mathbf 7}\pm\sigma(\mathcal{Y}_{\mathbf 7})
\big\rVert^2\,.
\]
Only two of the flux modules, the $\mathbf 7$-summands $w_{\mathbf 7}$ and
$\mathcal{Y}_{\mathbf 7}$, reach the warp factor, and the square is degenerate
rather than definite: it vanishes on the linear subspace
$w_{\mathbf 7}=\mp\sigma(\mathcal{Y}_{\mathbf 7})$ rather than at the origin, so
positivity of $\Delta$ yields no case list.

\medskip\noindent
Third, the two results combine into a pointwise budget in which the single
constant of supersymmetry is shared between the deviation from staticity and the
$\mathrm{Spin}(7)$ mismatch of the flux, each bounded by that constant.

\medskip\noindent
The same identities characterise the constancy hypothesis, which is equivalent
to $V=-fh$ for the bilinear one-form $V$ and fails on the static branch of the
known warped $AdS_2$ solutions; they fix a weighted global integral of the warp
factor; and they reduce the Komar angular momentum of the horizon to a positive
bilinear integral that vanishes exactly on the static branch. Two no-go
statements, for hidden symmetries built from the Killing spinor and for a second
isometry built from the sector bilinears, are stated with their hypotheses in
the text.
\end{abstract}

\tableofcontents

\section{Introduction}
\label{sec:intro}

The near-horizon geometry of an extremal black hole is a solution in its own
right, and supersymmetric near-horizon geometries are far more rigid than
generic ones. The mechanism behind that rigidity is by now well understood in
outline. One writes the metric in Gaussian null coordinates adapted to the
Killing horizon, integrates the Killing spinor equations (KSEs) along the two
lightcone directions, and finds that the resulting system on the spatial
horizon section $\Sect$ is governed by a pair of Dirac-like operators whose
kernels can be counted. The counting shows that the number of preserved
supersymmetries is even and that the isometry algebra contains
$\mathfrak{sl}(2,\Real)$. This programme has been carried through for heterotic,
type~IIB, type~IIA and massive~IIA horizons, for $D=4$ and $D=5$ gauged
supergravities, for $D=6$ $(1,0)$ supergravity, and --- the case of interest
here --- for $M$-horizons in $D=11$
\cite{11index,iibindex,iiaindex,miiaindex,4dindex,5dindex,KayaniThesis}.

The counting is not what this paper is about. We ask instead what those same
identities say about the warp factor $\Delta$ of the near-horizon metric, and we
ask it with nothing assumed about the norm of the Killing spinor. That one
hypothesis, $\nrm{\eta_-}=\const$, would delete a gradient term from the
identity everything below runs on and leave the much stronger-looking
$\Delta+h^2=\const$ in its place. No maximum principle supplies it, it is
equivalent to an on-shell condition that can fail, and it does fail: the warped
$AdS_2$ solutions of \cite{KimPark} are static $M$-horizons with $h^2$
non-constant \cite[fn.~4]{Mhorizons}. So it is imposed nowhere here. The cost is
that each global statement has to be extracted from an identity carrying a
gradient; the return is that the statements hold on every supersymmetric
$M$-horizon with compact section rather than on a subclass, and that several of
them are void, not merely weaker, if the gradient term is discarded.

\medskip\noindent\emph{Two global facts.} These frame what follows, and neither
is new. Integrating the twist-divergence field equation \eqref{eq:divh} over a
compact section gives, with no supersymmetry assumption, the mean-value identity
$2\overline{\Delta}+\overline{h^2}=\tfrac13\overline{Y^2}+\tfrac1{72}\overline{X^2}$
and hence
\begin{equation}
\overline{\Delta}\;\le\;\tfrac16\overline{Y^2}+\tfrac1{144}\overline{X^2}\,,
\qquad\text{equality}\iff h\equiv0
\label{eq:intro-bound}
\end{equation}
(Theorem~\ref{thm:bound}). On the saturation locus $h\equiv0$ the remaining
field equations force $\mathrm{d}Y=0$ and $\Delta$ constant, so the geometry is
the unwarped product $AdS_2\times\Sect$ when $\Delta>0$ and
$\Real^{1,1}\times\Sect$ when $\Delta=0$, with
$\Delta=\tfrac16Y^2+\tfrac1{144}X^2$
(Proposition~\ref{prop:staticbosonic}). The second is the static $M$-horizon
theorem of \cite[\S2.1,\,\S3.3]{staticM}; the first is one integration of a
field equation, used in the same way in \cite[\S5]{11index} and
\cite[\S3.2.1]{KL2013review}. We state both because the rest of the paper is
organised around the saturation locus, and we claim neither: see
Remark~\ref{rem:staticbosonic-attrib}.

\medskip\noindent\emph{The master identity.} The warp--flux identities (3.22),
(3.23) of \cite{Mhorizons} have $\eta_-$ companions which, together with the
sector pairing $\eta_+=\Gamma_+\Theta_-\eta_-$, eliminate the vector bilinear
and give
\begin{equation}
\big(\Delta+h^2\big)\nrm{\eta_-}^2 + h^i\tn_i\nrm{\eta_-}^2
 \;=\; \nrm{\eta_+}^2 \;=\; \const \;\ge\; 0
\label{eq:intro-master}
\end{equation}
at every point of any $M$-horizon with a compact connected section and $N_->0$
--- by the counting $N=2N_-$ of \cite{11index}, every supersymmetric one
(Theorem~\ref{thm:master}). Alongside it stands the divergence identity
$\tn^i(\tn_i\nrm{\eta_-}^2+h_i\nrm{\eta_-}^2)=0$, which is what the $\eta_-$
Lichnerowicz identity becomes on a parallel spinor. Relation
\eqref{eq:intro-master} is not itself new: it is the $D=11$ member of a family
written down to close the $\mathfrak{sl}(2,\Real)$ algebra, and
Section~\ref{sec:lit} derives it from (6.9) of \cite{11index} in three lines.
Everything listed next is a consequence of that pair, used in the unreduced form
above.

\medskip
\hrule
\medskip

\noindent\emph{What the paper adds.} The identity \eqref{eq:intro-master} is
known; what is new is what follows from it when no norm assumption is made.
Three results carry the paper, and Section~\ref{sec:status} states each one
again with its hypotheses and its provenance.

\smallskip\noindent\textbf{A. Single-point rigidity} (Theorem~\ref{thm:rigid}).
If $\Delta$ and $h$ vanish at one point of a compact connected section, then
$X=Y=0$ and $h\equiv0$ identically, $\Sect$ is Ricci-flat and the geometry is
$\Real^{1,1}\times\Sect$.

\smallskip\noindent\textbf{B. The warp factor is a degenerate perfect square}
(Theorem~\ref{thm:spin7square}). In the $\mathrm{Spin}(7)$ frame the Killing
spinor defines,
$\Delta=\tfrac1{108}\lVert w_{\mathbf 7}\pm\sigma(\mathcal{Y}_{\mathbf 7})\rVert^2$:
the two $\mathbf 7$-summands $w_{\mathbf 7}$ and $\mathcal{Y}_{\mathbf 7}$ and
no more. That $\Delta$ is a square is known
--- it is $4\Phi^2$ in the adapted gauge of \cite{Mhorizons} --- and what is
new is the invariant flux content of that square: which modules enter, the
covariant formula, and the zero locus, which is a linear subspace rather than
the origin, so that positivity of $\Delta$ yields no case list
(Remark~\ref{rem:phisquare}).

\smallskip\noindent\textbf{C. The supersymmetric budget}
(Corollary~\ref{cor:budget}).
$\lvert V\rvert^2/f+\tfrac{f}{108}
\lVert w_{\mathbf 7}\pm\sigma(\mathcal{Y}_{\mathbf 7})\rVert^2=c$ at every
point. In derivation this is B substituted into a balance law, and it is
labelled a corollary throughout; in content it is the statement we would put
forward, being the one place where the pointwise algebra of the flux and the
global constant of supersymmetry meet.

\medskip
\hrule
\medskip

\noindent A and B are proved from the identities and from the representation theory of
$\mathrm{Spin}(7)$; the machine verification of Appendix~\ref{app:verify}
confirms their coefficients and exhibits witnesses, but is not what establishes
them. The negative results are the opposite way round: in
Proposition~\ref{prop:noclosure}, Theorem~\ref{thm:hidden} and
Proposition~\ref{prop:twosector} the exact rank computations \emph{are} the
proof. Everything else below is either quoted input or a consequence of A--C,
and the list (a)--(f) says which is which. The argument is one chain,

\begin{center}
known master identity $\;\longrightarrow\;$ weighted global identity
 $\;\longrightarrow\;$ single-point rigidity
 $\;\longrightarrow\;$ $\mathrm{Spin}(7)$ twist--flux relation and its obstruction
 $\;\longrightarrow\;$ global charges, balance identities and curvature
 $\;\longrightarrow\;$ two no-go theorems delimiting what is left,
\end{center}

\noindent in which the first link is quoted from \cite{11index} and the rest are
the contribution. The links do not carry equal weight, and it is better to say
so at the outset. The mathematical core is A and B --- items (b) and (c) below
--- together with the exact obstruction to closure that follows from them, and
C, item (d), is where they meet. The identities that carry the core to global
statements --- the weighted identity (a) and the reading
$8\pi G^{(11)}\lvert J\rvert=\int_\Sect\lvert V\rvert^2/f$ of the Komar
charge --- are consequences, and are stated as such: each is an elimination
between identities already listed, not a further input. The remaining material
--- the entropy representation, the total scalar curvature, the conformal
factor of positive scalar curvature, and the two no-go statements of (f) ---
is a set of further consequences and obstructions, included because a
classification attempt has to know which routes are closed, and each is
labelled with the hypotheses it needs in Section~\ref{sec:status}. We have not found the later links stated previously in
\cite{Mhorizons,11index,GGPreview,superalgebras,KL2013review,N4d11}; that is a
statement about our search, not a priority claim, and the comparison is made
equation by equation in Section~\ref{sec:status}, which is the one place in this
paper where attribution is settled.

\emph{(a) A weighted global identity.} Integrating the master identity
\eqref{eq:intro-master} against the field equations gives
\eqref{eq:weightedbound},
$\int_\Sect\Delta\nrm{\eta_-}^2
=\int_\Sect(\tfrac13Y^2+\tfrac1{72}X^2)\nrm{\eta_-}^2-c\,\mathrm{Vol}(\Sect)$
with $c=\nrm{\eta_+}^2$, the supersymmetric refinement of
\eqref{eq:intro-bound}. Its $c=0$ case is the integral argument of
\cite[\S5]{11index}, which does carry the same weight but runs only after
$\Theta_-\eta_-=0$ has been imposed; what is added is the $c$-term and the
validity on any supersymmetric $M$-horizon with no condition on
$\Theta_-\eta_-$. Since $\Delta\ge0$ it has an inequality form,
$\int_\Sect Qf\ge c\,\mathrm{Vol}(\Sect)$ for the flux invariant
$Q=\tfrac13Y^2+\tfrac1{72}X^2$, and the equality case is exactly the locus
$\Delta\equiv0$, which (c) characterises algebraically
(Corollary~\ref{cor:fluxlower}).

\emph{(b) Single-point rigidity.} The saturation locus of
\eqref{eq:intro-bound} is rigid from below: if $\Delta$ and $h$ vanish at a
\emph{single} point $p$ of a compact connected section, then $X=Y=0$ and
$h\equiv0$ identically, $\Sect$ is Ricci-flat and the geometry is
$\Real^{1,1}\times\Sect$ (Theorem~\ref{thm:rigid}). The chain is
\begin{equation*}
\Delta(p)=h(p)=0
\;\Longrightarrow\;\nrm{\eta_+}^2=0
\;\Longrightarrow\;\Theta_-\eta_-=0 \text{ everywhere}
\;\Longrightarrow\;X=Y=h=\Delta=0 ,
\end{equation*}
where the first implication is the master identity evaluated at $p$ together
with the constancy of $\nrm{\eta_+}$, and the last is the kernel theorem of
\cite[\S5]{11index}. The middle links are the point of the theorem:
\emph{the vanishing of $\Delta$ and $h$ at one geometric point implies the
global spinorial kernel condition $\Theta_-\eta_-=0$.} That bridge is not
available bosonically --- the field equations relate $\Delta$ and the flux only
in the mean, and vanishing at one point says nothing.

\emph{(c) $\mathrm{Spin}(7)$ twist--flux identity, and an obstruction.} A
nowhere-vanishing $\eta_-$ reduces the structure group of $\Sect$ to
$\mathrm{Spin}(7)$, and the restrictions this imposes were left open in
\cite[\S6.2.1]{11index} for a single Killing spinor. We solve the vector
identities algebraically for the twist one-form in terms of the flux and the
$\mathrm{Spin}(7)$ data $(Z,\varphi)$ (Theorem~\ref{thm:htoflux}) and expand
$\Delta$ in the same data \eqref{eq:Deltaexplicit}. Only two $\mathbf 7$-modules
of the flux survive in that expansion (Proposition~\ref{prop:spin7delta}), and
they assemble into a perfect square,
$\Delta=\tfrac1{108}\lVert w_{\mathbf 7}\pm\sigma(\mathcal{Y}_{\mathbf 7})\rVert^2$
(Theorem~\ref{thm:spin7square}), so the locus $\Delta\equiv0$ is cut out by the
seven linear conditions $w_{\mathbf 7}=\mp\sigma(\mathcal{Y}_{\mathbf 7})$ rather
than by any positivity argument. The same expansion supplies
an obstruction with a precise cause (Proposition~\ref{prop:noclosure}):
$\Delta$ does not factor through the $\varphi$-free scalar data
$(I_1,\dots,I_9)$ --- in particular it is not determined by $(h^2,X^2,Y^2)$ on
the allowed algebraic data --- because an independent $\mathrm{Spin}(7)$
contraction of the four-form survives. That is where the scalar closure step
fails: the mechanism through which the identical argument produces a case list
in minimal $D=5$ supergravity \cite[App.~D]{KayaniThesis} does not operate on
the pointwise $\mathrm{Spin}(7)$-algebraic data in $D=11$. It is not a claim
that no classification of $D=11$ near-horizon geometries is possible, and the
scope is fixed in Remark~\ref{rem:noclosurescope}: field equations, Bianchi
identities, topology and extra supersymmetry all impose relations that a
pointwise rank test does not see.

\emph{(d) Global charges and curvature.} Because the master identity carries
its gradient term, the same pair integrates against the Killing vector rather
than merely over $\Sect$, and the warp factor can then be removed from the
result. This gives a Komar charge for the horizon rotation as a weighted flux
integral against the constant $c$, with no $\Delta$ left in it
(Theorem~\ref{thm:komar}), a bound on that rotation
(Corollary~\ref{cor:rotationbound}), and, when $c>0$, a weighted
flux-and-warp representation of the volume of $\Sect$ and hence of the
Bekenstein--Hawking entropy (Corollary~\ref{cor:entropy}); neither is an
expression in the flux alone, since the warp factor $f$ and the constant $c$
remain in the integrands, and what has been removed is the geometric
$\Delta$. It also gives a weighted form of the same
identity, one integral law for each $p\in\Real$
(Theorem~\ref{thm:balance}); these are moments of a single pointwise statement
rather than an independent stock of conservation laws, and the member that does
the work below is $p=1$. The pointwise statement itself follows once the
invariance of $\nrm{\eta_-}$ along the rotation is used
(Theorem~\ref{thm:pwbalance}); it is the \emph{supersymmetric balance law}
\begin{equation}
\Delta f+\frac{\lvert V\rvert^2}{f}=c
\label{eq:intro-pwbalance}
\end{equation}
in which the one constant of the problem is split pointwise into a warp
contribution and a non-staticity contribution. The content of the weighted form is
not that there are many identities but that it survives the hypothesis
$\mathcal{L}_Vf=0$ being dropped. Substituting the perfect square of (c) for
$\Delta$ in \eqref{eq:intro-pwbalance} eliminates the geometry from the budget
altogether and leaves
\begin{equation}
\frac{\lvert V\rvert^2}{f}+\frac{f}{108}
 \big\lVert\,w_{\mathbf 7}\pm\sigma(\mathcal{Y}_{\mathbf 7})\,\big\rVert^2=c
\label{eq:intro-budget}
\end{equation}
(Corollary~\ref{cor:budget}): the deviation from staticity and the
$\mathrm{Spin}(7)$ mismatch of the flux share one pointwise budget, and each is
therefore bounded by $c$. The two saturations are the two extreme branches ---
$\lvert V\rvert^2/f=c$ exactly on the linear locus
$w_{\mathbf 7}=\mp\sigma(\mathcal{Y}_{\mathbf 7})$, where the flux need not
vanish, and the square exhausts the budget exactly on the static branch. Alongside them stands a purely
bosonic identity for the total scalar curvature of $\Sect$
(Proposition~\ref{prop:totalR}), and, on the branch where the constancy of
$\nrm{\eta_-}$ fails, again for $c>0$, a conformal factor $\nrm{\eta_-}^{2/7}$
of strictly positive scalar curvature built from the Killing spinor itself
(Theorem~\ref{thm:psc}). The first and the third are void if $\nrm{\eta_-}$ is
assumed constant --- the second is purely bosonic and independent of it ---
which is what makes (e) more than a technical remark. The balance family is the
one member of this group that survives compactness being dropped: its pointwise
form (Theorem~\ref{thm:pwbalance}) bounds the deviation from staticity by
$\lvert V\rvert^2/f\le c$ at every point (Corollary~\ref{cor:balance}), with
equality exactly where $\Delta$ vanishes. The charges close among themselves as
well, and the statement we would single out is the one the balance law
supplies: the horizon-local Komar charge is minus the integrated
non-staticity,
\begin{equation}
8\pi G^{(11)}\lvert J\rvert=\int_\Sect\frac{\lvert V\rvert^2}{f},
\label{eq:intro-komarbalance}
\end{equation}
so that $J=0$ exactly on the static branch, and with the orientation fixed in
Section~\ref{sec:charges} the signed form reads $8\pi
G^{(11)}J=-\int_\Sect\lvert V\rvert^2/f\le0$. Eliminating the warp factor
between the Komar charge and the weighted identity instead gives the
Smarr-type relation $8\pi G^{(11)}J=\int_\Sect Qf-8cG^{(11)}S_{\mathrm{BH}}$
(Corollary~\ref{cor:smarr}); this is algebra among identities already listed
rather than a further input, and we record it because the three charges close,
not as a headline.

\emph{(e) The status of the $\eta_-$ norm.} The $\eta_-$ Lichnerowicz identity
is a divergence rather than a Laplacian: it identifies $\Ker\mathcal{D}^{(-)}$
with the $\nabla^{(-)}$-parallel spinors with no hypothesis on the norm, but it
does not imply that norm is constant. We prove that constancy is
\emph{equivalent} to the on-shell relation $K_i+h_iK_+=0$, hence to
$V=-\nrm{\eta_-}^2h$ for the bilinear one-form $V$ of \eqref{eq:gpfamily}, and
we identify the branch in which it fails: $V\equiv0$, that is
$h=-\mathrm{d}\log\nrm{\eta_-}^2$ with $\Delta\nrm{\eta_-}^2$ constant, which is
the warped static horizon of \cite{staticM} and which the solutions of
\cite{KimPark} realise. Everything downstream is therefore stated in the
unreduced form \eqref{eq:intro-master}, and what would need the constancy --- in
particular $\Delta+h^2=\const$ and the pointwise bounds that follow from it ---
is labelled conditional, here and in Section~\ref{sec:status}. No claim is made
that the literature assumed $\nrm{\eta_-}$ constant: the hypothesis
$\ip{\phi}{\phi}=\const$ of \cite[fn.~6]{Mhorizons} belongs to the $+$ sector,
where the Hopf maximum principle discharges it \cite[(4.8)]{11index}, and the
asymmetry is explicit in \cite[\S11.3]{GGPreview}, where the Lichnerowicz
theorem for $\mathcal{D}^{(+)}$ is proved by the Hopf maximum principle, which
delivers $\nrm{\phi_+}=\const$ as a by-product, whereas the one for
$\mathcal{D}^{(-)}$ is proved by partial integration, which delivers no such
statement.

\emph{(f) Two no-go results.} The last two sections establish what is
\emph{not} available in $D=11$, and both statements are pointwise counts in the
$162$-dimensional flux space rather than failures of technique. First, the six
$\mathrm{Spin}(7)$-invariant forms the Killing spinor defines carry a
Killing--Yano or a closed conformal Killing--Yano structure only when the
$\mathrm{Spin}(7)$ structure is parallel and the flux reduces to its
$\mathbf{27}$ --- codimension $135$ --- so the hidden symmetries that make the
Kerr family integrable have no supersymmetric analogue among these forms
(Theorem~\ref{thm:hidden}, Corollary~\ref{cor:hidden}). The codimension is a
count in the pointwise algebraic flux data of the horizon, before any
differential constraint beyond the Killing spinor equation and the field
equations is imposed, and the forms tested are the six the $\mathrm{Spin}(7)$
structure of the Killing spinor generates; a Killing--Yano tensor of some other
origin is not addressed. Second, no Killing
vector of the section is built from the two sector unit vectors: the ansatz
$U=aZ_++bZ_-$ is obstructed by $28$ conditions at a generic point
(Proposition~\ref{prop:twosector}), in contrast with the affirmative $D=5$
answer of \cite{5dsecond}. This is a no-go for that ansatz and for nothing
else --- we do not claim that a supersymmetric $M$-horizon has no second
isometry. Taken with the Bochner integrand of
Section~\ref{sec:bochner}, which has no sign, these say that the pointwise
algebraic data of an eleven-dimensional horizon are less constrained than their
five-dimensional counterpart, and they say it quantitatively.

\medskip\noindent Everything else --- the lightcone reduction, the operators
$\Theta_\pm$ and $\nabla^{(\pm)}_i$ with the algebraic conditions accompanying
them \cite{Mhorizons}, the anti-Hermiticity step in its $D=11$ form
\cite[App.~C]{iiaindex}, the warp--flux identities (3.22), (3.23) of
\cite{Mhorizons}, their $\eta_-$ companions, the sector pairing, the constancy
of $\nrm{\eta_+}$ \cite[(4.8)]{11index}, and the two global facts above --- is
known, is attributed at the point of use, and is restated only to have one set
of conventions and a machine-checked audit trail. What the restatement adds,
beyond confidence, is bookkeeping: each algebraic condition is tracked to the
exact combination of field-equation and Bianchi residuals it is proportional to
off shell (Propositions~\ref{prop:B1}, \ref{prop:B5}, \ref{prop:B1minus}), and
the warp--flux identities are given their explicit flux expansions,
\eqref{eq:pointwiseexplicit} and \eqref{eq:pointwiseminusexplicit}.

\subsection*{Relation to previous work}
\label{sec:lit}

Almost every global statement below is a consequence of an identity that
already exists in the literature, so we set that literature out first. The gap
this paper occupies is the one left by the $D=11$ horizon programme of
\cite{staticM,Mhorizons,11index,GGPreview,KL2013review,N4d11}: what global
rigidity follows from the known spinorial identities without imposing extra
supersymmetry, extra symmetry, or a constant warp factor.

\medskip\noindent\emph{The horizon conjecture.} The programme originates in the
spinorial-geometry method \cite{spinorialgeometry} and in the observation,
first made for heterotic horizons \cite{hethorizons,hetdelpezzo}, that a
supersymmetric extremal horizon with compact section preserves an even number
of supersymmetries and carries an $\mathfrak{sl}(2,\Real)$ isometry algebra
that is derived rather than assumed. It has since been run in $D=11$
\cite{Mhorizons,11index}, type~IIB \cite{iibhorizons,iibindex}, type~IIA and
massive~IIA \cite{iiaindex,miiaindex}, minimal $D=5$ \cite{5dhorizons},
$D=5$ ungauged and $U(1)$-gauged supergravity coupled to abelian vector
multiplets \cite{5dindex}, $\mathcal{N}=2$, $D=4$ gauged \cite{4dindex} and
$D=6$ $(1,0)$ \cite{6dindex},
with several cases carried out in the conventions used here
\cite{KayaniThesis} and the possible superalgebras determined in
\cite{superalgebras}. In each case lightcone integration reduces the KSEs to
$\nabla^{(\pm)}_i\eta_\pm=0$ on the section, Lichnerowicz-type theorems
identify their solutions with the zero modes of two horizon Dirac operators, an
index argument gives $N=2N_-$, and bilinears built from the two sectors close
on $\mathfrak{sl}(2,\Real)$.

\medskip\noindent\emph{The identity family behind the
$\mathfrak{sl}(2,\Real)$.} The mechanism of that last step is what matters
here, because the global results of Section~\ref{sec:warp} are the same
mechanism read in a different order. In $D=11$ it is \cite[\S6]{11index}:
writing $V_i=\ip{\Gamma_+\eta_-}{\Gamma_i\eta_+}$ for the one-form bilinear
pairing the two sectors, the field equations and the KSEs give the identities
(6.9),
\begin{equation}
\begin{aligned}
2\ip{\Gamma_+\eta_-}{\Theta_+\eta_+} &= 2\nrm{\eta_+}^2 + h_iV^i\,, &\qquad
i_V(\mathrm{d}h) + 2\,\mathrm{d}\ip{\Gamma_+\eta_-}{\Theta_+\eta_+} &= 0\,,\\
2\ip{\Gamma_+\eta_-}{\Theta_+\eta_+} &= \Delta\nrm{\eta_-}^2\,, &
V + \nrm{\eta_-}^2\,h + \mathrm{d}\nrm{\eta_-}^2 &= 0\,,
\end{aligned}
\label{eq:gpfamily}
\end{equation}
together with $\ip{\eta_+}{\Gamma_i\Theta_+\eta_+}=0$ and
$\Delta\nrm{\eta_+}^2=4\nrm{\Theta_+\eta_+}^2$, which are (6.7) and (6.6)
there and (3.22), (3.23) of \cite{Mhorizons}. We write these with our symbols;
the normalisation of $\Gamma_+$ in \cite{11index} is not ours, so factors
involving $\nrm{\eta_+}^2$ differ from \eqref{eq:pairnorm} by a constant, and
we state consequences in the normalisation-free form
$\propto\nrm{\Theta_-\eta_-}^2$. The first three of \eqref{eq:gpfamily} make
the bilinears of \cite[(6.4)]{11index} close on $\mathfrak{sl}(2,\Real)$; the
last expresses $V$ in terms of $h$, as for heterotic horizons
\cite{hethorizons,hetdelpezzo}. The same family recurs in each theory listed
above, and in minimal $D=5$ supergravity yields $\tn_i(h^2+2\Delta)=0$ and with
it the case list of \cite[App.~D]{KayaniThesis}. \emph{This is the sense in
which the master identity of Theorem~\ref{thm:master} is not a new
phenomenon}: eliminating $V$ and $\ip{\Gamma_+\eta_-}{\Theta_+\eta_+}$ between
the first, third and fourth of \eqref{eq:gpfamily} gives
\begin{equation}
\big(\Delta+h^2\big)\nrm{\eta_-}^2 + h^i\tn_i\nrm{\eta_-}^2
 = 4\nrm{\Theta_-\eta_-}^2\,,
\label{eq:gpmaster}
\end{equation}
which is \eqref{eq:master}. What we add is that it is used as it stands:
assuming $\nrm{\eta_-}$ constant would drop the gradient term and leave
$\Delta+h^2=\const$, an assumption supplied by no maximum principle
(Section~\ref{sec:dirac}) and not used here; what needs it is collected in
Corollary~\ref{cor:normconst}.

\medskip\noindent\emph{Static horizons, $\mathrm{Spin}(7)$, and the horizon
limit.} The branch $V=0$ of \cite[\S6.2.2]{11index} gives $\mathrm{d}h=0$ with
$h$ exact, hence a warped product $AdS_2\times_w\Sect$; such static
$M$-horizons are analysed in \cite{staticM}, which also treats the strictly
smaller class $h\equiv0$ and shows that the field equations then force
$\mathrm{d}Y=0$ and $\Delta$ constant, so that the product is direct ---
Proposition~\ref{prop:staticbosonic} below, restated rather than claimed. A
nowhere-vanishing $\eta_-$ reduces the structure group of $\Sect$ to
$\mathrm{Spin}(7)$, and the geometric restrictions this imposes are left for
elsewhere in \cite[\S6.2.1]{11index}, the $\mathcal{N}=4$ case being worked out
in \cite{N4d11}; Section~\ref{sec:geom} sits in that gap for a single Killing
spinor, and Remark~\ref{rem:caselist} says why the step that closes the gap in
$D=5$ does not close it here.
The near-horizon limit is not merely formal: in the classification of
supersymmetric degenerate Killing horizons away from that limit
\cite{killinghorizons11}, the class with non-vanishing $\eta_-$ is locally
isometric to a supersymmetric near-horizon geometry, so the rigidity proved
here constrains actual horizons and not only a limiting construction.
Independently of supersymmetry, \cite{FRW,HIW} supply the classical route to
horizon symmetry, \cite{KL2013review} reviews near-horizon geometries and
\cite{nhsymmetrythm1,Reall} are the sources for the $AdS_2$ structure.

\medskip\noindent\emph{The same technology elsewhere.} A KSE reduced on a
compact section, a Lichnerowicz-type theorem, a maximum principle for a
bilinear and a Bochner argument with flux-quadratic weights belong to a larger
programme: the superalgebras of warped $AdS_2$ and horizon geometries
\cite{superalgebras}, whose no-go argument runs on a Bochner identity with
precisely the weight $\tfrac1{72}X^2-\tfrac16Y^2$; a non-existence theorem for
$AdS_3$ backgrounds with $N>16$ \cite{ads3n16} and a uniqueness theorem for
warped Minkowski backgrounds with $N>16$ \cite{warpedmink};
the constancy of the warp factor of heterotic $AdS$ backgrounds \cite{hetads};
$AdS_5$ backgrounds with $24$ supersymmetries \cite{ads5n24}; and the maximum
principle applied to $dS_n$ warp-factor equations \cite{dsn11,ds5enh}. Each
assumes a warped-product ansatz or a supersymmetry count above a threshold. The
statements below assume neither: one $\eta_-$ Killing spinor and a compact
connected section.

\subsection*{What is known, and what is new}
\label{sec:whatsnew}

Much of the machinery below is not ours, and the paper is of little use unless
it is clear which parts those are; the computational side of the same
accounting is in Appendix~\ref{app:verify}.

\medskip\noindent\emph{Quoted.} The identification of $\Ker\mathcal{D}^{(\pm)}$
with the parallel spinors and the counting $N=2N_-$ are \cite{11index}; we use
only $N=2N_-$, and only in Remark~\ref{rem:hypothesis}, to say that the
hypothesis $N_->0$ is automatic. The bilinear identities (6.6)--(6.9) of
\cite{11index} are quoted as \eqref{eq:gpfamily}; everything in
Sections~\ref{sec:warp} and \ref{sec:geom} is a consequence of them, reached
here by an independent route. The $\mathfrak{sl}(2,\Real)$ enhancement of
\cite[\S6]{11index} is quoted in Section~\ref{sec:sl2r}, where the only thing
added is the observation that its input is the list
\eqref{eq:sl2inputa}--\eqref{eq:sl2inputb} and nothing else.
That $\mathrm{Spin}(9)$ acts transitively on $S^{15}$ with stabiliser
$\mathrm{Spin}(7)$, and the $(Z,\varphi)$ data this induces, is classical
\cite{Harvey}.

\medskip\noindent\emph{Restated in one set of conventions, and claimed for
none of it.} The near-horizon field equations \eqref{eq:Rij}--\eqref{eq:divY}
and Proposition~\ref{prop:auxiliary} are \cite{Mhorizons,KayaniThesis}; the
lightcone integration \eqref{eq:lightconesol}, the operators $\Theta_\pm$ and
the algebraic conditions (cc1)--(cc5) are \cite{Mhorizons}; the constancy of
$\nrm{\eta_+}$ \cite{Mhorizons,11index}, the identities (3.22), (3.23) of
\cite{Mhorizons} and their $\eta_-$ companions (5.6), (6.9) of \cite{11index},
the sector pairing $\eta_+=\Gamma_+\Theta_-\eta_-$ \cite{11index,iiaindex}, the
triviality of $\Ker\Theta_-$ on a horizon with flux \cite[\S5]{11index}, and
the Killing property of the Dirac current are all restated in the conventions
of \eqref{eq:kse} and machine-checked. The two global facts above are likewise
not ours, and neither is the master identity \eqref{eq:master}, which is the
$D=11$ member of the family \eqref{eq:gpfamily}.

\medskip\noindent\emph{New here.} What is new about that identity is the use
made of it without assuming $\nrm{\eta_-}$ constant, together with the analysis
of that hypothesis in Section~\ref{sec:dirac} and
Proposition~\ref{prop:conconx}. Items (a)--(f) above are the consequences; each
is listed again, with its hypotheses and its attribution, in
Section~\ref{sec:status}, and that list, not the summary above, is where the
accounting is done. Two items of pure bookkeeping belong here rather than
there: the explicit flux expansions \eqref{eq:pointwiseexplicit} and
\eqref{eq:pointwiseminusexplicit}, and the off-shell residuals of
\eqref{eq:B1}, \eqref{eq:B5} and \eqref{eq:B1minus}, which are new as
bookkeeping rather than as content.

\medskip\noindent\emph{The same accounting in one table.} The entries below
name, for each principal statement, the closest thing we located in the sources
audited and what separates the two. It is a summary of
Section~\ref{sec:status}, which is where the hypotheses are recorded, and it
uses \emph{not found} in the sense fixed there.

\begin{center}
\small
\begin{tabularx}{\textwidth}{@{}>{\raggedright\arraybackslash}p{0.245\textwidth}>{\raggedright\arraybackslash}p{0.32\textwidth}>{\raggedright\arraybackslash}X@{}}
\toprule
\textbf{Statement here} & \textbf{Closest existing statement} & \textbf{What separates them}\\
\midrule
Master identity \eqref{eq:master} &
(6.9) of \cite{11index}; (3.22)--(3.23) of \cite{Mhorizons} &
Nothing: the same identity, quoted. What is added is that it is used unreduced\\[2pt]
Weighted identity \eqref{eq:weightedbound} &
The $c=0$ integral argument of \cite[\S5]{11index} &
The $c$-term, and validity with no condition imposed on $\Theta_-\eta_-$\\[2pt]
Single-point rigidity, Thm.~\ref{thm:rigid} &
The rigidity of \cite[\S5]{11index} from the global hypothesis $\Theta_-\eta_-\equiv0$ &
A pointwise geometric hypothesis at one point replaces the global spinorial one; the tail from $c=0$ onwards is theirs and is cited as such\\[2pt]
Constancy of $\nrm{\eta_-}$ settled, Prop.~\ref{prop:staticwarped} &
The characterisation of \cite{KimPark} in \cite[fn.~4]{Mhorizons}; \cite{staticM} &
The hypothesis is decided rather than avoided: it fails on a branch that is populated\\[2pt]
Twist--flux formula \eqref{eq:htoflux} &
(3.22)/(3.27) of \cite{Mhorizons} $=$ (6.7) of \cite{11index}, in unevaluated bilinear form; \texttt{finaleh} of \cite{N4d11}, differential and for $\mathcal{N}=2$ &
The algebraic $\mathrm{Spin}(7)$ evaluation, for a single Killing spinor, running from the flux to $h$ rather than the other way\\[2pt]
Perfect square, Thm.~\ref{thm:spin7square} &
$\Delta=4\Phi^2$ in the $SU(4)$-adapted gauge, (3.32) of \cite{Mhorizons}: the scalar square, not its module content &
The $\mathrm{Spin}(7)$-irreducible form --- only two $\mathbf 7$'s enter, a covariant norm formula, and a linear zero locus. \emph{New so far as we know}\\[2pt]
Non-closure, Prop.~\ref{prop:noclosure} &
Not found &
\emph{New so far as we know}; a structural obstruction on the pointwise algebraic data, with the scope fixed in Remark~\ref{rem:noclosurescope}\\[2pt]
Intrinsic torsion, Prop.~\ref{prop:torsion} &
The Fern\'andez classification of $\mathrm{Spin}(7)$ structures \cite{Fernandez} &
Classification standard; the flux content of each module, and the surjectivity of the flux-to-torsion map, not found\\[2pt]
Komar charge, Thm.~\ref{thm:komar}, Cor.~\ref{cor:smarr} &
The Komar construction is classical; the input identities are (6.8)--(6.9) of \cite{11index} &
A new supersymmetric $D=11$ reduction of the horizon Komar charge to a weighted flux integral --- an elimination, not a new Komar formalism\\[2pt]
Two no-go statements, Thm.~\ref{thm:hidden} and Prop.~\ref{prop:twosector} &
Not found &
\emph{New so far as we know}, and negative: each exhibits the obstruction rather than reporting a failed attempt\\
\bottomrule
\end{tabularx}
\end{center}

\medskip
The presentation follows the conventions of \cite{KayaniThesis}, so the
formulae here can be compared line by line with the type~IIA, massive~IIA and
$D=6$ analyses carried out there. Section~\ref{sec:conventions-nh} fixes
notation and collects the known near-horizon system;
Section~\ref{sec:dirac} the horizon Dirac operators and the two Lichnerowicz
identities. The new material begins at Section~\ref{sec:warp}: warp--flux
identities, the master identity and single-point rigidity. It continues in
Section~\ref{sec:geom}, the $\mathrm{Spin}(7)$ reduction and the obstruction,
and in Section~\ref{sec:consequences}, which draws the consequences in three
groups --- local, global, and what does not follow --- and records the open
branches.
Appendix~\ref{app:rep} records the explicit real representation,
Appendix~\ref{app:gnc} the Gaussian null frame reduction,
Appendix~\ref{app:integrability} the off-shell integrability conditions, and
Appendix~\ref{app:verify} the verification protocol.

\section{Conventions and the near-horizon equations}
\label{sec:conventions-nh}

\subsection{\texorpdfstring{$D=11$}{D=11} supergravity}
\label{sec:conventions}

The bosonic field content of $D=11$ supergravity \cite{CJS} is the metric
$G_{MN}$ and a three-form potential
$A^{(3)}$ with field strength $F=\mathrm{d}A^{(3)}$. The bosonic action is
\begin{equation}
(16\pi G^{(11)}_{\rm N})S =
\int \mathrm{d}^{11}x\,\sqrt{-G}\Big(R-\tfrac1{48}F_{M_1M_2M_3M_4}F^{M_1M_2M_3M_4}\Big)
-\tfrac16\int A^{(3)}\wedge F\wedge F\,,
\end{equation}
with field equations
\begin{align}
R_{MN} &= \tfrac1{12}F_{ML_1L_2L_3}F_N{}^{L_1L_2L_3}
 -\tfrac1{144}G_{MN}F_{L_1L_2L_3L_4}F^{L_1L_2L_3L_4}\,,
\label{eq:einstein}\\
\mathrm{d}\star F &= \tfrac12 F\wedge F\,,
\label{eq:formeq}
\end{align}
together with the Bianchi identity $\mathrm{d}F=0$. The gravitino variation is
\begin{equation}
\delta\psi_M = \nabla_M\epsilon
+\Big(-\tfrac1{288}\Gamma_M{}^{L_1L_2L_3L_4}F_{L_1L_2L_3L_4}
+\tfrac1{36}F_{ML_1L_2L_3}\Gamma^{L_1L_2L_3}\Big)\epsilon\,,
\label{eq:kse}
\end{equation}
and a bosonic background is supersymmetric if $\delta\psi_M=0$ admits a
non-trivial solution $\epsilon$, a section of the $\mathrm{Spin}(10,1)$ bundle
in the Majorana representation of real rank $32$.

\subsection{Near-horizon geometry}
\label{sec:nh}

\subsubsection{Gaussian null coordinates}

Near a smooth extremal Killing horizon the metric can be written in Gaussian
null coordinates $(u,r,y^I)$ as
\begin{equation}
\mathrm{d}s^2 = 2\,\bbe^+\bbe^- + \delta_{ij}\,\bbe^i\bbe^j\,,
\qquad
\bbe^+=\mathrm{d}u\,,\quad
\bbe^-=\mathrm{d}r+r\,h-\tfrac12 r^2\Delta\,\mathrm{d}u\,,\quad
\bbe^i=e^i{}_I\,\mathrm{d}y^I\,,
\label{eq:gnc}
\end{equation}
where the spatial horizon section $\Sect=\{r=u=0\}$ is a nine-dimensional
manifold, assumed throughout to be smooth, compact, connected and without
boundary. The warp function $\Delta$, the twist one-form $h=h_i\bbe^i$ --- the
rotation data of the horizon, carried by the $r\,h$ term of $\bbe^-$: it is what
makes the Komar charge of Theorem~\ref{thm:komar} non-zero, and it is exact on
the static branch of Proposition~\ref{prop:staticwarped} --- and the
frame $e^i{}_I$ depend only on the coordinates of $\Sect$; the whole
$r$- and $u$-dependence of the near-horizon data is the one displayed in
\eqref{eq:gnc}. We write $\tn$ for the Levi-Civita connection of the induced
metric on $\Sect$, and $\tilde R_{ij}$, $\tilde R$ for its curvature. Indices
$i,j,\dots$ run over the nine directions tangent to $\Sect$.

The frame \eqref{eq:gnc} has dual vector fields
\begin{equation}
E_+=\partial_u+\tfrac12 r^2\Delta\,\partial_r\,,\qquad
E_-=\partial_r\,,\qquad
E_i=\tilde E_i-r\,h_i\,\partial_r\,,
\label{eq:dualframe}
\end{equation}
with $\tilde E_i$ the frame of $\Sect$ lifted at constant $r$, and spin
connection
\begin{align}
\omega_{+-} &= -r\Delta\,\bbe^+ + \tfrac12 h_i\,\bbe^i\,, &
\omega_{-i} &= -\tfrac12 h_i\,\bbe^+\,,
\nonumber\\
\omega_{+i} &= \tfrac12 r^2(\Delta h_i-\partial_i\Delta)\,\bbe^+
              -\tfrac12 h_i\,\bbe^- + \tfrac12 r\,(\mathrm{d}h)_{ij}\,\bbe^j\,, &
\omega_{ij} &= \tilde\omega_{ij} - \tfrac12 r\,(\mathrm{d}h)_{ij}\,\bbe^+\,,
\label{eq:spinconn}
\end{align}
$\tilde\omega$ being the connection of $\Sect$. Appendix~\ref{app:gnc} sets out
the reduction that produces it, and the curvature that follows from it.

\subsubsection{Decomposition of the flux}

The four-form is decomposed compatibly with \eqref{eq:gnc} as
\begin{equation}
F = \bbe^+\wedge\bbe^-\wedge Y + r\,\bbe^+\wedge \beta + X\,,
\label{eq:fluxdecomp}
\end{equation}
so that $Y_{ij}=F_{+-ij}$ is a two-form and $X_{ijkl}=F_{ijkl}$ a four-form on
$\Sect$. The Bianchi identity $\mathrm{d}F=0$ has two independent components:
the mixed one fixes
\begin{equation}
\beta_{ijk} = (\mathrm{d}Y - h\wedge Y)_{ijk}
 = \tn_iY_{jk}-\tn_jY_{ik}+\tn_kY_{ij}-h_iY_{jk}+h_jY_{ik}-h_kY_{ij}\,,
\label{eq:beta}
\end{equation}
while the purely spatial one is $\tn_{[i}X_{jklm]}=0$, whose residual is written
$BX_{ijklm}$ throughout. The remaining components carry no further
information: the $+ijkl$ one reduces to $\mathrm{d}_h^2Y=-(\mathrm{d}h)\wedge Y$
once \eqref{eq:beta} is used, with $\mathrm{d}_h=\mathrm{d}-h\wedge$, and the
$-ijkl$ one vanishes identically.

\subsubsection{Field equations on the horizon section}

Substituting \eqref{eq:gnc}--\eqref{eq:fluxdecomp} into
\eqref{eq:einstein}--\eqref{eq:formeq} and separating powers of $r$ gives the
system of \cite{Mhorizons,KayaniThesis}. Every component carrying a $-$ and no
$+$ vanishes identically, on the geometric and on the flux side alike, and the
$r$-independent components are
\begin{align}
\tilde R_{ij} &= -\tn_{(i}h_{j)} + \tfrac12 h_ih_j
  -\tfrac12 Y_i{}^kY_{jk} + \tfrac1{12}\delta_{ij}Y^2
  + \tfrac1{12}X_i{}^{klm}X_{jklm} - \tfrac1{144}\delta_{ij}X^2 \,,
\label{eq:Rij}\\
\tn^ih_i &= 2\Delta + h^2 - \tfrac13 Y^2 - \tfrac1{72}X^2\,,
\label{eq:divh}\\
\tn^lX_{ijkl} &= \beta_{ijk} + h^lX_{ijkl}
  + \tfrac1{48}\epsilon_{ijk}{}^{lmnpqs}Y_{lm}X_{npqs}\,,
\label{eq:divX}\\
\tn^jY_{ij} &= -\tfrac1{1152}\epsilon_i{}^{jklmnpqs}X_{jklm}X_{npqs}\,,
\label{eq:divY}
\end{align}
where $h^2=h_ih^i$, $Y^2=Y_{ij}Y^{ij}$, $X^2=X_{ijkl}X^{ijkl}$ and $\epsilon$
is the volume form of $\Sect$, the eleven-dimensional orientation being fixed by
$\epsilon_{-+i_1\cdots i_9}=\epsilon_{i_1\cdots i_9}$. Reversing that
orientation reverses the sign of both Chern--Simons terms at once, so the
relative sign of \eqref{eq:divX} and \eqref{eq:divY} carries no convention
dependence and either one of them fixes the choice.

The remaining components are
\begin{align}
0 &= \tfrac12\tn^j(\mathrm{d}h)_{ij}-(\mathrm{d}h)_{ij}h^j
     -\tn_i\Delta+\Delta h_i
     +\tfrac14 Y^{jk}\beta_{ijk}-\tfrac1{12}X_{ijkl}\beta^{jkl}\,,
\label{eq:auxEpi}\\
0 &= \tfrac12\tn^2\Delta-\tfrac32 h^i\tn_i\Delta-\tfrac12\Delta\,\tn^ih_i
     +\Delta h^2+\tfrac14(\mathrm{d}h)^2-\tfrac1{12}\beta^2\,,
\label{eq:auxEpp}\\
0 &= h^k\beta_{ijk}-\tn^k\beta_{ijk}
     -\tfrac12 X_{ijkl}(\mathrm{d}h)^{kl}
     -\tfrac1{144}\epsilon_{ij}{}^{klmnpqs}\beta_{klm}X_{npqs}\,,
\label{eq:auxform}
\end{align}
with $(\mathrm{d}h)^2=(\mathrm{d}h)_{ij}(\mathrm{d}h)^{ij}$ and
$\beta^2=\beta_{ijk}\beta^{ijk}$. Equation \eqref{eq:auxEpi} is the $+i$
Einstein equation, \eqref{eq:auxEpp} the $++$ one and \eqref{eq:auxform} the
$+ij$ component of the four-form equation. Nothing below assumes them, and
nothing need be: they are consequences of the $r$-independent set.

\begin{proposition}
\label{prop:auxiliary}
On a configuration satisfying the Bianchi identity $\mathrm{d}F=0$ together
with \eqref{eq:Rij}--\eqref{eq:divY}, the three remaining equations
\eqref{eq:auxEpi}--\eqref{eq:auxform} hold automatically. Equivalently,
\eqref{eq:Rij}--\eqref{eq:divY} are the independent field equations of the
near-horizon geometry.
\end{proposition}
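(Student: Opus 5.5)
The route is the contracted Bianchi identities of the eleven-dimensional geometry, together with a direct derivative computation for the form equation. Write $E_{MN}$ for the residual of \eqref{eq:einstein}, $\mathcal F=\mathrm{d}\star F-\tfrac12F\wedge F$ for the residual of \eqref{eq:formeq}, and assume $\mathrm{d}F=0$. Then the identity $\nabla^M(R_{MN}-\tfrac12G_{MN}R)=0$ and the divergence of the flux stress tensor give the standard off-shell relation
\begin{equation*}
\nabla^M\big(E_{MN}-\tfrac12G_{MN}E\big)=\tfrac1{12}\,(\star\mathcal F)^{L_1L_2L_3}F_{NL_1L_2L_3}\,,
\end{equation*}
up to the normalisation of the Chern--Simons term. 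Here the $\mathrm{d}F=0$ terms are already absent. Applying $\mathrm{d}$ to $\mathcal F$ gives $\mathrm{d}\mathcal F=-F\wedge\mathrm{d}F=0$, so $\mathrm{d}\mathcal F=0$ identically. I would apply both identities in the Gaussian null frame, using the frame \eqref{eq:dualframe} and the spin connection \eqref{eq:spinconn}. All near-horizon data are $r$-polynomial with the fixed weights displayed in \eqref{eq:gnc} and \eqref{eq:fluxdecomp}, so each residual component has a definite $r$-dependence: the $ij$, $+-$, $--$ and $-i$ components and the $-ij$ form components are $r$-independent, while $E_{+i}\propto r$, $E_{++}\propto r^2$ and $\mathcal F_{+ij}\propto r$.

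First, the hypotheses say that every $r$-independent residual vanishes: \eqref{eq:Rij} is $E_{ij}$, \eqref{eq:divh} is $E_{+-}$ (with $E_{--}=E_{-i}=0$ identically), \eqref{eq:divX} is the $ijk$ form component, and \eqref{eq:divY} is the $-i$ form component. So $E_{MN}$ reduces to $E_{+i}=r\,A_i$ and $E_{++}=r^2B$, where $A_i$ and $B$ are the right-hand sides of \eqref{eq:auxEpi} and \eqref{eq:auxEpp}. Likewise $\mathcal F$ reduces to its $+ij$ components $r\,C_{ij}$, with $C$ the right-hand side of \eqref{eq:auxform}. Only the trace combination $-2E_{+-}+\dots$ enters $E$, and it vanishes.

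Second, take the $N=i$ component of the contracted Bianchi identity. The terms that survive are $\nabla^+E_{+i}=\nabla_-E_{+i}$, which contains $\partial_r(rA_i)=A_i$, plus connection terms multiplying $E_{+j}$ or $E_{++}$. The connection terms come from $\omega_{-i}$ and $\omega_{+i}$ and carry explicit powers of $r$. The right-hand side involves $(\star\mathcal F)$ contracted with $F$; since $\mathcal F$ has only $+ij$ components, $\star\mathcal F$ has a leg along $+$ (up to index position), and its contraction with $F_{i\dots}$ picks up only the $r\,\bbe^+\wedge\beta$ piece, hence is $O(r^2)$. Setting $r=0$ leaves $A_i=0$, which is \eqref{eq:auxEpi}.

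Third, the $ij$-type components of $\mathrm{d}\mathcal F=0$ will be handled in the same way. The $-ijkl$-type component involves $\partial_r$ of the $+jk$ residual along with spatial derivatives of components that have already vanished, and at $r=0$ it should give $C_{ij}=0$ after dualisation. If that component turns out to be identically trivial, I would instead use the component $+ij\,klmnpq$ of the nine-form identity in the $-$ direction. Finally, with $A=C=0$, the $N=+$ component of the contracted Bianchi identity gives $\nabla_-E_{++}+\dots=O(r^2)$, i.e.\ $2rB+O(r^2)=0$, so $B=0$, which is \eqref{eq:auxEpp}.

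The main obstacle is the order-by-order bookkeeping. One has to check that at each step the coefficient of the unknown residual is a nonzero constant, namely the factor produced by $\partial_r$ acting on $r^k$, and that no surviving connection term multiplies it at the same order in $r$. For $E_{++}$ in particular, terms such as $\omega_{+-}E_{++}$ appear at the same order and must be shown to cancel or to combine into a nonzero multiple. I would settle this by writing the $N=+$ equation explicitly at order $r$ with $E_{+i}=0$. If the coefficient comes out degenerate there, I would fall back on a direct computation: differentiate \eqref{eq:divh} and \eqref{eq:divY}, and use the Bochner/Ricci identity built from \eqref{eq:Rij} to reproduce \eqref{eq:auxEpi}. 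Then take the divergence of \eqref{eq:auxEpi} to obtain \eqref{eq:auxEpp}.
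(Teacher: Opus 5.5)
Your proposal is correct and is essentially the paper's argument: conservation of the form-equation residual together with the contracted Bianchi identity, read off order by order in $r$, with $N=i$ giving \eqref{eq:auxEpi} and $N=+$ giving \eqref{eq:auxEpp}. The coefficient you worried about in the $N=+$ step is exactly $2$, since $\omega_{+-}$ has no $\bbe^-$ leg and $\omega_{-i}\propto\bbe^+$, so no connection term multiplies $E_{++}$.

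On the flux side, the component to use is the one carrying both $\bbe^+$ and $\bbe^-$, namely the $(ij)$ component of $\nabla^N(\star\mathcal F)_{NPQ}=0$. There $\partial_r$ hits $rC_{ij}$, and the identity expresses $C_{ij}$ through $h^kW_{ijk}$ and $\tn^kW_{ijk}$, with $W$ the residual of \eqref{eq:divX}; so $C=0$ once \eqref{eq:divX} holds. Your fallback, a component with $+$ and eight spatial indices, gives only a first-order divergence condition on $C$ and would not suffice.
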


\begin{proof}
Both halves are conservation arguments, and both are frame evaluations that we
have machine-checked (Appendix~\ref{app:verify}). On the flux side, the
residual
\begin{equation*}
\Phi_{NPQ}=\tn^MF_{MNPQ}
 +\tfrac1{1152}\epsilon_{NPQ}{}^{M_1\cdots M_8}F_{M_1\cdots M_4}F_{M_5\cdots M_8}
\end{equation*}
obeys $\tn^N\Phi_{NPQ}=0$ identically whenever $\mathrm{d}F=0$, since
$\star\Phi=\pm(\mathrm{d}\star F+\tfrac12F\wedge F)$. Its components are
$\Phi_{+-i}=P_i$ and $\Phi_{ijk}=W_{ijk}$ at order $r^0$ and $\Phi_{+ij}=rQ_{ij}$
at order $r$, that is \eqref{eq:divY}, \eqref{eq:divX} and \eqref{eq:auxform};
the free indices $(ij)$ give $Q_{ij}=W_{ijk}h^k-\tn^kW_{ijk}$, so $W=0$ forces
$Q=0$. On the Einstein side, $\tn^M\hat E_{MN}=0$ for the trace-reversed
residual on configurations with $\mathrm{d}F=0$ and $\Phi=0$; imposing
\eqref{eq:Rij} and \eqref{eq:divh} leaves $\hat E_{+i}=rG_i$ and
$\hat E_{++}=r^2K$, and then $N=i$ gives $G_i=0$, $N=+$ gives
$2K-2G_ih^i+\tn^iG_i=0$ and hence $K=0$, while $N=-$ is identically
satisfied.
\end{proof}

In the computations below the residuals of
\eqref{eq:Rij}--\eqref{eq:divY} are carried as independent symbols
$E_{ij}$, $Fh$, $FX_{ijk}$, $FY_i$, added to the right-hand sides; on shell they
vanish. Tracing \eqref{eq:Rij} gives
$\tilde R = -\tn^ih_i + \tfrac12 h^2 + \tfrac14 Y^2 + \tfrac1{48}X^2$.

\subsection{The Killing spinor equations on the horizon section}
\label{sec:kse}

\subsubsection{Integration along the lightcone}
\label{sec:lightcone}

Write $\epsilon=\epsilon_++\epsilon_-$ with $\Gamma_\pm\epsilon_\pm=0$. Using
\eqref{eq:spinconn} and \eqref{eq:fluxdecomp}, the $-$ and $+$ components of
\eqref{eq:kse} can be integrated in $r$ and $u$, and give
\begin{equation}
\epsilon_+ = \phi_+(u,y)\,,\qquad
\epsilon_- = \phi_-(y) + r\,\Gamma_-\Theta_+\phi_+\,,\qquad
\phi_+ = \eta_+ + u\,\Gamma_+\Theta_-\eta_-\,,\qquad \phi_-=\eta_-\,,
\label{eq:lightconesol}
\end{equation}
where $\eta_\pm$ depend only on the coordinates of $\Sect$ and
\begin{equation}
\Theta_\pm \;=\;
\tfrac14 h_i\Gamma^i \;+\; \tfrac1{288}X_{ijkl}\Gamma^{ijkl}
\;\pm\;\tfrac1{12}Y_{ij}\Gamma^{ij}\,.
\label{eq:Theta}
\end{equation}
in agreement with \cite{Mhorizons}. Only the $Y$-term reverses between the two
sectors, being the flux component that carries both lightcone indices, so
$\Theta_+=\Theta_-$ if and only if $Y=0$. As sections of the
$\mathrm{Spin}(9)$ bundle over $\Sect$, $\eta_+$ and $\eta_-$ are isomorphic,
each of real rank $16$.

\subsubsection{The independent conditions}

Substituting \eqref{eq:lightconesol} back into \eqref{eq:kse} and expanding in
$r$ and $u$ produces the system of \cite{Mhorizons}. Define the flux operators
\begin{equation}
\Psi^{(\pm)}_i = \mp\tfrac14 h_i \pm \tfrac1{24}Y_{kl}\Gamma_i{}^{kl}
 \mp\tfrac16 Y_{ij}\Gamma^j
 -\tfrac1{288}\Gamma_i{}^{jklm}X_{jklm}
 +\tfrac1{36}X_{ijkl}\Gamma^{jkl}\,,
\label{eq:Psi}
\end{equation}
and the corresponding connections on the spinor bundle of $\Sect$,
\begin{equation}
\nabla^{(\pm)}_i = \tn_i + \Psi^{(\pm)}_i\,.
\label{eq:nablapm}
\end{equation}
The relative signs are fixed by the lightcone projection: the $h_i$ and
$Y_{jk}=F_{+-jk}$ terms carry a factor $\Gamma^{+-}$, acting as $\mp1$ on
$\eta_\pm$, whereas the $X_{jklm}$ terms carry no lightcone index and are
common to both sectors. Conjugation with respect to the
$\mathrm{Spin}(9)$-invariant Dirac inner product reverses the rank-three terms
only (Lemma~\ref{lem:ranks}), so
\begin{equation}
\Psi^{(\pm)T}_i = \mp\tfrac14 h_i \mp \tfrac1{24}Y_{kl}\Gamma_i{}^{kl}
 \mp\tfrac16 Y_{ij}\Gamma^j
 -\tfrac1{288}\Gamma_i{}^{jklm}X_{jklm}
 -\tfrac1{36}X_{ijkl}\Gamma^{jkl}\,,
\label{eq:Psitranspose}
\end{equation}
which is not the operator of the other sector: the two operations agree on the
rank-three terms and disagree on the rank-zero and rank-one ones, so
$\Psi^{(-)}_i\neq\Psi^{(+)T}_i$ unless $h$ and $Y$ both vanish. Both objects
occur below in different roles --- $\Psi^{(-)}_i$ is a connection,
$\Psi^{(+)T}_i$ is not.

The independent Killing spinor equations are
\begin{equation}
\nabla^{(\pm)}_i\eta_\pm = 0
\label{eq:indepKSE}
\end{equation}
\cite{Mhorizons}: every other condition generated by the substitution follows
from \eqref{eq:indepKSE} together with the field equations
\eqref{eq:Rij}--\eqref{eq:divY} and the Bianchi identities. Two of the
algebraic conditions are used below, and we record them in the normalisation of
\eqref{eq:kse}:
\begin{equation}
\xi \;=\; \tfrac12\Delta - \tfrac14\tn_ih_j\,\Gamma^{ij}
 + \tfrac1{72}\beta_{ijk}\Gamma^{ijk}
 + 2\Big(\tfrac14h_i\Gamma^i - \tfrac1{288}X_{ijkl}\Gamma^{ijkl}
 + \tfrac1{12}Y_{ij}\Gamma^{ij}\Big)\Theta_+\,,\qquad \xi\eta_+=0\,,
\label{eq:xi}
\end{equation}
\begin{equation}
\xi^- \;=\; -\tfrac12\Delta - \tfrac14\tn_ih_j\Gamma^{ij}
 + \tfrac1{24}\beta_{ijk}\Gamma^{ijk}
 + 2\Big(-\tfrac14h_i\Gamma^i + \tfrac1{288}X_{ijkl}\Gamma^{ijkl}
 + \tfrac1{12}Y_{ij}\Gamma^{ij}\Big)\Theta_-\,,\qquad
\xi^-\eta_-=0\,,
\label{eq:ximinus}
\end{equation}
which are the conditions (cc1) and (cc5) of \cite{Mhorizons}; here
$-\tfrac14\tn_ih_j\Gamma^{ij}=-\tfrac18(\mathrm{d}h)_{ij}\Gamma^{ij}$ and
$\beta=\mathrm{d}_hY$ by \eqref{eq:beta}. In each of them the operator
multiplying $\Theta_\pm$ is $\Theta_\mp$ with the sign of its $h$-term
reversed, which is what conjugation by $\Gamma_\pm$ does to a Clifford element
carried across the lightcone: it reverses the odd-rank terms. The two
conditions are otherwise not images of one another --- the sign of $\Delta$ and
the coefficient of the $\beta$-term differ as well.

The rest of the list --- the order-$r^2$ condition (cc2), the vector condition
$\xi_i\phi_+=0$ at order $r$, and the two $u$-linear images (cc3), (cc4), which
are images of $u$-independent conditions under
$\eta_+\mapsto\Gamma_+\Theta_-\eta_-$ and carry no additional content --- is
collected in Appendix~\ref{app:integrability}. The expansion terminates for a
structural reason: in the Gaussian null frame the metric and the flux are
polynomial in $r$ of bounded degree and independent of $u$, and the spinor
ansatz \eqref{eq:lightconesol} is linear in each of $r$ and $u$, so the
Killing spinor equation can produce conditions only at the orders listed. That
the coefficient at each of those orders is the condition stated, and that the
coefficients above them vanish, is confirmed in the explicit representation,
order by order in $r$ and $u$, on two independent sets of pseudo-random integer
horizon data with $\Delta$, $\tn_i\Delta$, $(\mathrm{d}h)_{ij}$ and the flux
unconstrained (Appendix~\ref{app:verify}); the confirmation is a check on
samples and not itself the proof of termination.

The three contracted integrability conditions used below, $\xi\eta_+=0$,
$\xi_i\eta_+=0$ and $\xi^-\eta_-=0$, are implied on shell by
\eqref{eq:indepKSE}. That is Propositions~\ref{prop:B1}, \ref{prop:B5} and
\ref{prop:B1minus}, placed in Appendix~\ref{app:integrability}: what is used
here are the three implications, whereas the propositions carry more, recording
off shell \emph{which} field-equation or Bianchi residual each term is
proportional to.
\subsection{Clifford algebra on the horizon section}
\label{sec:clifford}

Everything that follows rests on a single elementary observation about
$\mathrm{Cl}(9,0)$, which we isolate because it is used repeatedly and because
it is what allows several identities that fail as operator statements to
nevertheless hold where they are needed.

\begin{lemma}[Anti-Hermitian ranks]
\label{lem:ranks}
Let $\Gamma_i$, $i=1,\dots,9$, be a real symmetric representation of
$\mathrm{Cl}(9,0)$ on $\Real^{16}$, and let
$\Gamma^{(k)}=\Gamma_{i_1\cdots i_k}$ denote the rank-$k$ Clifford basis
elements. Then
\begin{equation}
\big(\Gamma^{(k)}\big)^T = (-1)^{k(k-1)/2}\,\Gamma^{(k)}\,,
\label{eq:transpose}
\end{equation}
so $\Gamma^{(k)}$ is Hermitian for $k=0,1,4,5,8,9$ and anti-Hermitian for
$k=2,3,6,7$. Here and throughout, Hermiticity is with respect to the Dirac
inner product $\ip{\cdot}{\cdot}$ of $\Sect$; since the spinors are Majorana
and the representation real, $M^\dagger=M^T$, and anti-Hermitian means
$M^T=-M$. Consequently, for every real spinor $\eta$,
\begin{equation}
\ip{\eta}{M\eta}=0 \quad\text{for anti-Hermitian }M\,,\qquad\text{in particular}\qquad
\ip{\eta}{\Gamma^{(k)}\eta}=0
\ \ \text{for }k=2,3,6,7\,,
\label{eq:nullranks}
\end{equation}
and $k=2,3,6,7$ are the only ranks for which the bilinear vanishes
identically.
\end{lemma}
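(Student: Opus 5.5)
The plan has three parts: compute the transpose of a product of distinct symmetric generators, deduce the vanishing of the diagonal bilinears for $k\equiv 2,3 \pmod 4$, and show by a rank count that no other rank has this property.

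\textbf{Transpose formula.} Since each $\Gamma_i$ is real symmetric, for distinct indices we have
\[
\big(\Gamma_{i_1}\cdots\Gamma_{i_k}\big)^T=\Gamma_{i_k}\cdots\Gamma_{i_1}.
\]
Reversing an ordered product of $k$ mutually anticommuting matrices takes $k(k-1)/2$ adjacent transpositions. Each transposition contributes a factor $-1$, which gives \eqref{eq:transpose}. Because $\Gamma_{i_1\cdots i_k}$ is the antisymmetrised product, it suffices to treat distinct indices, and linearity then covers the general case. Evaluating $(-1)^{k(k-1)/2}$ for $k=0,\dots,9$ gives the signs $+,+,-,-,+,+,-,-,+,+$, which is the stated Hermitian/anti-Hermitian split.

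\textbf{Vanishing of the bilinears.} For real $\eta$ and any $M$ with $M^T=-M$,
\[
\ip{\eta}{M\eta}=\eta^TM\eta=(\eta^TM\eta)^T=-\eta^TM\eta,
\]
so the bilinear is zero. This yields \eqref{eq:nullranks} for $k=2,3,6,7$.

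\textbf{Only these ranks.} For the converse we must show that for $k=0,1,4,5,8,9$ the bilinear $\eta\mapsto\ip{\eta}{\Gamma_{i_1\cdots i_k}\eta}$ is not identically zero for at least one basis element. I will use the following fact: if a symmetric matrix $S$ satisfies $\eta^TS\eta=0$ for all $\eta$, then polarisation gives $S=0$. Each $\Gamma_{i_1\cdots i_k}$ with distinct indices is nonzero, since it squares to $\pm1$. Therefore, for symmetric ranks, any nonzero $\Gamma^{(k)}$ has a nonvanishing quadratic form.

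The one point needing care is ranks $k\ge5$ in $\mathrm{Cl}(9,0)$ on $\Real^{16}$. There the product $\Gamma_1\cdots\Gamma_9$ is central and squares to $+1$, so in an irreducible real representation it acts as $\pm1$. Hence $\Gamma^{(k)}$ is proportional to $\Gamma^{(9-k)}$. This duality is consistent with the sign pattern: $9\leftrightarrow0$, $8\leftrightarrow1$, $5\leftrightarrow4$, $7\leftrightarrow2$ and $6\leftrightarrow3$ each pair ranks of the same symmetry type. So nonvanishing for the symmetric ranks reduces to the nonzero elements $1$, $\Gamma_i$ and $\Gamma_{ijkl}$, all of which are nonzero symmetric matrices.

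I expect no real obstacle. The only subtle step is this duality check, which confirms that the high-rank elements neither vanish nor change symmetry type in the sixteen-dimensional representation.
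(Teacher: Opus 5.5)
Your proof is correct and follows the paper's argument: the transpose sign from reversing $k(k-1)/2$ pairwise anticommuting factors, the vanishing of $\eta^T M\eta$ for antisymmetric $M$, and non-vanishing for the remaining ranks because each $\Gamma^{(k)}$ there is a nonzero symmetric matrix. Your duality paragraph via $\Gamma_1\cdots\Gamma_9=\pm1$ is correct but redundant: $(\Gamma_{i_1}\cdots\Gamma_{i_k})^2=\pm1$ already shows that every basis element is nonzero, and the transpose formula holds directly for all $k$.
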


\begin{proof}
A real symmetric representation exists (Appendix~\ref{app:rep}). For distinct
indices the antisymmetrisation defining $\Gamma^{(k)}$ is the ordered product
$\Gamma_{i_1}\cdots\Gamma_{i_k}$, so
$(\Gamma^{(k)})^T=\Gamma_{i_k}\cdots\Gamma_{i_1}$, and reversing $k$ pairwise
anticommuting factors costs $k(k-1)/2$ transpositions, giving
\eqref{eq:transpose}. A real quadratic form $\eta^TM\eta$ depends only on
$\tfrac12(M+M^T)$ and so vanishes identically when $M$ is antisymmetric, which
gives \eqref{eq:nullranks}; for the remaining ranks $\Gamma^{(k)}$ is symmetric
and non-zero, so the quadratic form is a non-zero polynomial in the components
of $\eta$. All representations of $\mathrm{Cl}(9,0)$ on $\Real^{16}$ being
equivalent, the statement is representation-independent.
\end{proof}

\begin{remark}[The anti-Hermiticity step of the other dimensions]
\label{rem:antiherm}
Lemma~\ref{lem:ranks} is the $D=11$ form of the step used at the same point in
\cite[App.~C]{iiaindex} and in the corresponding appendices of
\cite{11index,iibindex,miiaindex}, where the extra operator in the
Lichnerowicz-type identity is noted to be anti-Hermitian and its bilinear
discarded. What \eqref{eq:transpose} adds is the explicit list, and both halves
of it are used: the vanishing half in Proposition~\ref{prop:lich}, the
surviving half in Section~\ref{sec:warp}, where
\eqref{eq:pointwiseexplicit} and \eqref{eq:pointwiseminusexplicit} record which
flux contractions are left once the anti-Hermitian part is dropped, and where
the exhaustiveness clause is what makes those leftovers non-vacuous.
\end{remark}

Two consequences are used below. First, transposition acts as $-1$ on ranks
$2,3,6,7$ and $+1$ on the rest, which gives \eqref{eq:Psitranspose} and
likewise
\begin{equation}
\Theta_\mp^{\,T} = \Theta_\pm\,,\qquad
\Big(\tfrac14 h_i\Gamma^i - \tfrac1{288}X_{ijkl}\Gamma^{ijkl}
+ \tfrac1{12}Y_{ij}\Gamma^{ij}\Big)^{\!T}
= \tfrac14 h_i\Gamma^i - \tfrac1{288}X_{ijkl}\Gamma^{ijkl}
- \tfrac1{12}Y_{ij}\Gamma^{ij}\,,
\label{eq:transposes}
\end{equation}
since these operators have their $Y$-terms at rank two and their $h$- and
$X$-terms at ranks one and four. In particular transposition exchanges the two
sector operators rather than fixing them, so the operator multiplying
$\Theta_+$ in \eqref{eq:xi} is neither $\Theta_-$ nor $\Theta_-^{\,T}$; this is
what makes the bilinear of $\xi$ in Section~\ref{sec:warp} an inner product of
two \emph{different} vectors, and hence not a norm except in the vacuum.
Second, by Hodge duality on $\Sect$ the vanishing set $\{2,3,6,7\}$ is stable
under $k\mapsto9-k$, so a term of rank six is on the same footing as one of
rank three.

\section{Horizon Dirac operators and the Lichnerowicz identities}
\label{sec:dirac}

Define the horizon Dirac operators associated with \eqref{eq:nablapm},
\begin{equation}
\mathcal{D}^{(\pm)} = \Gamma^i\nabla^{(\pm)}_i
 = \Gamma^i\tn_i + \Psi^{(\pm)}\,,\qquad
\Psi^{(\pm)} = \Gamma^i\Psi^{(\pm)}_i\,.
\label{eq:diracops}
\end{equation}
No modification of these operators is needed: in theories with a dilatino they
have to be shifted by a multiple of the algebraic Killing spinor equation
before a Lichnerowicz-type theorem can be formulated \cite{iiaindex}, whereas
$D=11$ supergravity has only a gravitino equation. Every Killing spinor is a
zero mode of the corresponding operator, and the content of the statements
below is the converse.

The algebraic input is a Weitzenb\"ock rearrangement: reorganising the bilinear
$\ip{\eta_\pm}{(\,\cdot\,)\tn_i\eta_\pm}$ into a Dirac term plus a total
derivative produces a combination of $\Psi^{(\pm)}$ and $\Psi^{(\pm)T}_i$ that
closes on the Clifford elements linear in the horizon data,
\begin{equation}
\Gamma^i\Psi^{(\pm)} + 2\Psi^{(\pm)i\,T}
 = -2W^{(\pm)i} - \mathcal{F}^{(\pm)}\Gamma^i\,,
\label{eq:weitzenbock}
\end{equation}
with
\begin{equation}
W^{(\pm)i} = \pm\tfrac12 h^i\,,\qquad
\mathcal{F}^{(\pm)} = \mp\tfrac14 h_j\Gamma^j \mp \tfrac1{24}Y_{jk}\Gamma^{jk}
 - \tfrac1{288}X_{jklm}\Gamma^{jklm}\,.
\label{eq:weitzdata}
\end{equation}
Neither is postulated: a general ansatz spanning the Clifford elements linear
in $h_i$, $Y_{ij}$ and $X_{ijkl}$ makes \eqref{eq:weitzenbock} a linear system
with the unique solution \eqref{eq:weitzdata}. It is $W^{(\pm)i}=\pm\tfrac12h^i$
that produces the twist term of \eqref{eq:maxprin} and \eqref{eq:l2b}, with
opposite signs in the two sectors.

\begin{proposition}[Lichnerowicz identities]
\label{prop:lich}
On shell,
\begin{align}
\tfrac14\tilde R - \Gamma^i\tn_i\Psi^{(+)} - \Psi^{(+)i\,T}\Psi^{(+)}_i
 - \mathcal{F}^{(+)}\Psi^{(+)}
\;=\;&
\tfrac14 \Gamma^{ij}\tn_ih_j
-\tfrac5{48}Y^{ij}h_k\Gamma_{ij}{}^{k}
+\tfrac1{36}X^{ijkl}h_l\Gamma_{ijk}
\nonumber\\
&{}-\tfrac5{1152}X^{ijkl}Y^{mn}\Gamma_{ijklmn}
-\tfrac1{96}X^{ijkl}Y_{ij}\Gamma_{kl}\,,
\label{eq:lichplus}\\[4pt]
\tfrac14\tilde R - \Gamma^i\tn_i\Psi^{(-)} - \Psi^{(-)i\,T}\Psi^{(-)}_i
 - \mathcal{F}^{(-)}\Psi^{(-)}
\;=\;&
-\tfrac12\tn^ih_i
-\tfrac14 \Gamma^{ij}\tn_ih_j
\nonumber\\
&{}+\tfrac14\Gamma^{ijk}\tn_iY_{jk}
-\tfrac5{48}Y^{ij}h_k\Gamma_{ij}{}^{k}
+\tfrac1{18}X^{ijkl}h_l\Gamma_{ijk}
\nonumber\\
&{}+\tfrac5{1152}X^{ijkl}Y^{mn}\Gamma_{ijklmn}
+\tfrac1{96}X^{ijkl}Y_{ij}\Gamma_{kl}\,,
\label{eq:lichminus}
\end{align}
With the exception of the scalar $-\tfrac12\tn^ih_i$ in
\eqref{eq:lichminus}, every term on the right-hand sides is anti-Hermitian ---
each has Clifford rank $2$, $3$ or $6$, and by \eqref{eq:transpose} those ranks
are exactly the anti-Hermitian ones --- and so gives no contribution to the
bilinear with $\eta_\pm$:
\begin{multline}
\ip{\eta_\pm}{\big(\tfrac14\tilde R - \Gamma^i\tn_i\Psi^{(\pm)}
 - \Psi^{(\pm)i\,T}\Psi^{(\pm)}_i - \mathcal{F}^{(\pm)}\Psi^{(\pm)}\big)\eta_\pm}\\
 = \begin{cases} 0 & (+)\,,\\[2pt]
 -\tfrac12\big(\tn^ih_i\big)\nrm{\eta_-}^2 & (-)\,,\end{cases}
\label{eq:lichbilinear}
\end{multline}
for every real spinor $\eta_\pm$ on $\Sect$.
\end{proposition}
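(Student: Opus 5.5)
The plan is to compute the operator $L^{(\pm)}=\tfrac14\tilde R-\Gamma^i\tn_i\Psi^{(\pm)}-\Psi^{(\pm)i\,T}\Psi^{(\pm)}_i-\mathcal{F}^{(\pm)}\Psi^{(\pm)}$ in the Clifford basis, sorted by rank. Each Clifford product will be reduced by the standard expansion $\Gamma^{(p)}\Gamma^{(q)}=\sum_s\Gamma^{(p+q-2s)}$ with its contraction coefficients. Everything is then grouped by flux content: terms quadratic in $(h,Y,X)$, terms linear in the first derivatives $\tn h$, $\tn Y$, $\tn X$, and the scalar $\tfrac14\tilde R$.

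First, I would write $\Psi^{(\pm)}=\Gamma^i\Psi^{(\pm)}_i$ explicitly from \eqref{eq:Psi}. The rank-five and rank-four pieces contract, and I expect something of the form $\Psi^{(\pm)}=\mp\tfrac14 h_i\Gamma^i\pm c_1 Y_{ij}\Gamma^{ij}+c_2X_{ijkl}\Gamma^{ijkl}$. Then I would form $\Gamma^i\tn_i\Psi^{(\pm)}$. This yields a scalar $\mp\tfrac14\tn^ih_i$, a rank-two term $\tn_{[i}h_{j]}$, a rank-three term $\tn_{[i}Y_{jk]}$, a rank-one term $\tn^jY_{ij}$, a rank-five term $\tn_{[i}X_{jklm]}$ and a rank-three term $\tn^lX_{ijkl}$.

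Next, I would substitute the field equations and the Bianchi identity to eliminate the Hermitian derivative terms:
\begin{itemize}
\item $\tn_{[i}X_{jklm]}=0$ removes rank five;
\item \eqref{eq:divY} turns $\tn^jY_{ij}$ into an $XX$ term of rank one;
\item \eqref{eq:divX} turns $\tn^lX_{ijkl}$ into $\beta+hX+\star(Y\wedge X)$, with $\beta$ from \eqref{eq:beta};
\item the trace identity $\tilde R=-\tn^ih_i+\tfrac12h^2+\tfrac14Y^2+\tfrac1{48}X^2$, together with \eqref{eq:divh}, handles the scalars.
\end{itemize}
I would then expand the quadratic products $\Psi^{(\pm)i\,T}\Psi^{(\pm)}_i$ and $\mathcal{F}^{(\pm)}\Psi^{(\pm)}$, using \eqref{eq:Psitranspose} and \eqref{eq:weitzdata}. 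These are what must cancel the Hermitian quadratic leftovers: the scalars $h^2,Y^2,X^2$, the rank-one $XX$ from \eqref{eq:divY}, the rank-four $hY$, $YY$, $XX$ terms, and the rank-five and rank-eight ones. Collecting coefficients should leave exactly the displayed right-hand sides. The combination is built from \eqref{eq:weitzenbock}, whose unique solvability is what guarantees that the Hermitian debris matches.

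The main obstacle is the sheer bookkeeping of the $XX$ products, for example $\Gamma_i{}^{jklm}\Gamma^{i\,npqr}$, and of the $\epsilon$-contractions. These must be converted into Clifford elements by using $\Gamma^{(9)}\propto\mathbb 1$ in the irreducible $\mathrm{Cl}(9,0)$ representation, so that $\epsilon\,\Gamma^{(k)}\sim\Gamma^{(9-k)}$. This step is where the \eqref{eq:divY} term must cancel against the rank-one and rank-eight parts of $XX$. In practice I would verify the coefficients symbolically, or numerically on random data in the explicit representation of Appendix~\ref{app:rep}, as the paper does elsewhere. The asymmetry between the sectors should come out of the sign flip of $h$ and $Y$: in the $+$ sector the scalar $\tn^ih_i$ cancels, while in the $-$ sector it doubles to $-\tfrac12\tn^ih_i$.

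Finally, the bilinear statement \eqref{eq:lichbilinear} is immediate. Every remaining term on the right-hand side has rank $2$, $3$ or $6$, for instance $\Gamma^{ij}\tn_ih_j$, $\Gamma^{ijk}\tn_iY_{jk}$, $Y h\Gamma^{(3)}$, $Xh\Gamma^{(3)}$, $XY\Gamma^{(6)}$ and $XY\Gamma^{(2)}$. By Lemma~\ref{lem:ranks} these are antisymmetric, so $\ip{\eta}{M\eta}=0$ for them. Only the scalar $-\tfrac12\tn^ih_i$ survives, and only in the $-$ sector, which gives $-\tfrac12(\tn^ih_i)\nrm{\eta_-}^2$.
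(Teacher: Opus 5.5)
Your route is the paper's own. You expand in the Clifford basis, substitute $\mathrm{d}X=0$, \eqref{eq:divX}, \eqref{eq:divY} and the trace of \eqref{eq:Rij}, collect by rank, and then read \eqref{eq:lichbilinear} off Lemma~\ref{lem:ranks}; the paper does exactly this by machine, with abstract indices and the residuals carried, and again in the explicit representation. The structural bookkeeping is right. Explicitly, $\Psi^{(\pm)}=\mp\tfrac14h_i\Gamma^i\pm\tfrac18Y_{ij}\Gamma^{ij}+\tfrac1{96}X_{ijkl}\Gamma^{ijkl}$, so $-\Gamma^i\tn_i\Psi^{(\pm)}$ supplies $\pm\tfrac14\tn^ih_i$ against the $-\tfrac14\tn^ih_i$ of $\tfrac14\tilde R$. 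For that step the trace of \eqref{eq:Rij} alone suffices; bringing in \eqref{eq:divh} would only introduce a $\Delta$ you then have to remove.

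The gap is in the dualisation step. You write $\Gamma^{(9)}\propto\mathbb 1$ and attribute the difference between the sectors to the sign flip of $h$ and $Y$. But the two chirality halves restrict to the two inequivalent representations of $\mathrm{Cl}(9,0)$, with $\Gamma^{1\cdots9}=s=+1$ on $\eta_+$ and $s=-1$ on $\eta_-$. Hence $\epsilon_k{}^{a_1\cdots a_8}\Gamma^k=s\,\Gamma^{a_1\cdots a_8}$, and every Chern--Simons term you dualise carries this second, independent sector sign.

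This is not cosmetic. The $XX$ products in $-\Psi^{(\pm)i\,T}\Psi^{(\pm)}_i-\mathcal{F}^{(\pm)}\Psi^{(\pm)}$ have only even ranks, and their rank-eight part is the sector-independent Hermitian element $\tfrac1{4608}X_{ijkl}X_{mnpq}\Gamma^{ijklmnpq}$. The only term that can cancel it is $\mp\tfrac14\tn^jY_{jk}\Gamma^k$ from $-\Gamma^i\tn_i\Psi^{(\pm)}$. By \eqref{eq:divY} and the identity above, that term is $\mp\tfrac{s}{4608}$ times the same element, so the sum vanishes in both sectors only because $s=\pm1$.

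With a common $s$, one sector keeps a Hermitian remainder. Its bilinear is a non-zero multiple of $s\nrm{\eta}^2\epsilon_{ijklmnpqa}Z^aX^{ijkl}X^{mnpq}$ (Lemma~\ref{lem:spin7}), which is generically non-zero, so \eqref{eq:lichbilinear} fails in that sector. The rank-six $XY$ coefficient fed by \eqref{eq:divX} also comes out wrong there. The paper flags exactly this sign in Remark~\ref{rem:divhterm} and at the end of Appendix~\ref{app:integrability}.

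Relatedly, the unique solvability of \eqref{eq:weitzenbock} guarantees none of this. That linear system involves only Clifford elements linear in the flux, and it merely fixes $W^{(\pm)i}$ and $\mathcal{F}^{(\pm)}$. The on-shell cancellation of the Hermitian quadratic and derivative pieces is exactly what the computation has to establish, sign $s$ included. Those pieces are: rank zero $h^2,Y^2,X^2$; rank one $hY$; rank four $YY$, $XY$, $XX$ (there is no rank-four $hY$); rank five $hX$; rank eight $XX$.
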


\begin{remark}[The zeroth-order term is real, and it occurs in one sector only]
\label{rem:divhterm}
The scalar in \eqref{eq:lichminus} is not an artefact of how the combination is
written: both sides of \eqref{eq:lichbilinear} pair a fixed operator with a
fixed spinor, and no rearrangement of \eqref{eq:weitzenbock} removes it. Two
sector-dependent signs are responsible. The rank-zero part of $\Psi^{(\pm)}_i$
is $\mp\tfrac14h_i$, so $-\Gamma^i\tn_i(\Gamma^j\Psi^{(\pm)}_j)$ contributes
$\pm\tfrac14\tn^ih_i$; and the algebraic condition that eliminates it, $\xi$ in
the $+$ sector and $\xi^-$ in the $-$, carries $\Delta$ with opposite signs.
Once $\tn^ih_i$ and $\Delta$ are related by \eqref{eq:divh} the two
contributions cancel in the $+$ sector and add in the $-$. Everything that
distinguishes the sectors in Sections~\ref{sec:dirac} and \ref{sec:warp} traces
back to this. This is verified symbolically, with the
sector-dependent sign of the $\varepsilon\to\Gamma$ dualisation fixed
independently (Appendix~\ref{app:verify}).
\end{remark}

\noindent The off-shell forms carry, in addition,
$\tfrac14 E^i{}_i + \tfrac14 FY^i\Gamma_i \mp \tfrac1{96}BX^{ijklm}\Gamma_{ijklm}
\pm \tfrac1{24}FX^{ijk}\Gamma_{ijk}$; see Appendix~\ref{app:verify}.

\begin{remark}
The two identities are the \emph{same} combination of $\tilde R$,
$\Gamma^i\tn_i\Psi^{(\pm)}$, $\Psi^{(\pm)i\,T}\Psi^{(\pm)}_i$ and
$\mathcal{F}^{(\pm)}\Psi^{(\pm)}$ with the same coefficients, only the sector
data differing, so the multiple of $\tn^ih_i$ in the $-$ sector is forced
rather than chosen. Neither holds as an operator identity --- the right-hand
sides contain $\Gamma^{ij}\tn_ih_j=\tfrac12(\mathrm{d}h)_{ij}\Gamma^{ij}$,
which no choice in \eqref{eq:weitzdata} could cancel --- and what the
Lichnerowicz argument uses is that the obstruction is anti-Hermitian, so the
Dirac current does not see it.
\end{remark}

Given \eqref{eq:lichbilinear}, the standard argument \cite{11index,iiaindex,%
KayaniThesis} applied to a zero mode of $\mathcal{D}^{(\pm)}$ gives
\begin{align}
\tn^i\tn_i\nrm{\eta_+}^2 - h^i\tn_i\nrm{\eta_+}^2
 &= 2\nrm{\nabla^{(+)}\eta_+}^2\,,
\label{eq:maxprin}\\
\tn^i\big(\tn_i\nrm{\eta_-}^2 + h_i\nrm{\eta_-}^2\big)
 &= 2\nrm{\nabla^{(-)}\eta_-}^2\,,
\label{eq:l2b}
\end{align}
the first-order sign being that of $W^{(\pm)i}$ and the zeroth-order term of
\eqref{eq:l2b} the scalar of \eqref{eq:lichbilinear}. The two are of
\emph{different} analytic type. Equation~\eqref{eq:maxprin} has no
undifferentiated $\nrm{\eta_+}^2$ and is of maximum-principle type. For a
parallel spinor its right-hand side vanishes and it reads
$\tn^i\tn_i\nrm{\eta_+}^2=h^i\tn_i\nrm{\eta_+}^2$, which is the same type of
equation as (3.28) of \cite{Mhorizons} but not the same equation: theirs carries
$-2h^i\tilde\nabla_i$ where \eqref{eq:maxprin} carries $-h^i\tn_i$. The two
differ by $h^i\tn_i\nrm{\eta_+}^2$, which by the vector identity
\eqref{eq:vectorid} equals $2h^i\ip{\eta_+}{\Gamma_i\Theta_+\eta_+}$ and is not
identically zero in the horizon data (Appendix~\ref{app:verify}); on a compact
section it vanishes, because the Hopf maximum principle applied to either form
makes
$\nrm{\eta_+}$ constant and hence imposes the vector condition
\eqref{eq:vectorcond}, which is (3.22) of \cite{Mhorizons}. What we take from
\cite{Mhorizons} is therefore the conclusion and the method, not the identity in
this form. Equation~\eqref{eq:l2b} is a
divergence, and as a Laplacian reads
\begin{equation}
\tn^i\tn_i\nrm{\eta_-}^2 + h^i\tn_i\nrm{\eta_-}^2
 + \big(\tn^ih_i\big)\nrm{\eta_-}^2 = 2\nrm{\nabla^{(-)}\eta_-}^2\,,
\label{eq:l2blap}
\end{equation}
with $\tn^ih_i$ given by \eqref{eq:divh} and of no definite sign, so the Hopf
maximum principle does not apply. In this respect $D=11$ behaves like IIA,
where a zeroth-order term likewise forces one to integrate a divergence
\cite{iiaindex}; the difference is that here it is not a matter contribution
but the divergence of the twist.

Both conclusions that the Lichnerowicz argument needs survive. Applying the
Hopf maximum principle to \eqref{eq:maxprin} and integrating \eqref{eq:l2b}
over a compact connected $\Sect$ gives
\begin{equation}
\Ker\mathcal{D}^{(\pm)} = \{\eta_\pm : \nabla^{(\pm)}_i\eta_\pm = 0\}\,,
\qquad \nrm{\eta_+} = \const\ \text{ on }\Sect\,.
\label{eq:kernels}
\end{equation}
For a $\nabla^{(-)}$-parallel spinor \eqref{eq:l2b} becomes
\begin{equation}
\tn^i\big(\tn_i\nrm{\eta_-}^2 + h_i\nrm{\eta_-}^2\big) = 0\,,
\label{eq:divminus}
\end{equation}
which is used throughout Section~\ref{sec:warp}.

\begin{remark}[What \eqref{eq:l2b} does and does not give]
\label{rem:notconst}
Two things should be said about the difference between \eqref{eq:maxprin} and
\eqref{eq:l2b}, because both are easy to get wrong in opposite directions.

First, the zeroth-order term costs nothing for the Lichnerowicz theorem,
because the left-hand side of \eqref{eq:l2b} is a total divergence and
integrates to zero on a compact $\Sect$ whatever $\nrm{\eta_-}$ does, forcing
$\nabla^{(-)}_i\eta_-=0$ pointwise. So $\Ker\mathcal{D}^{(-)}$ is identified
with the parallel spinors unconditionally. This is the pointwise form of what
\cite{11index} states in integrated form: their Lichnerowicz identity carries
the twist term with a sector factor $(1\pm1)$, so it is absent in the $-$
sector and doubled in the $+$ one. The hypothesis
$\ip{\phi}{\phi}=\const$ that \cite[fn.~6]{Mhorizons} imposes belongs to
the $+$ sector, where it is needed exactly because \eqref{eq:maxprin} is not a
divergence, and it was discharged in \cite[(4.8)]{11index} by the maximum
principle; \eqref{eq:maxprin} is that equation. Nothing in either reference
assumes $\nrm{\eta_-}$ constant.

Second, and in the other direction, \eqref{eq:l2b} does \emph{not} give
$\nrm{\eta_-}=\const$, and neither does anything else: constancy is false in
general. It fails on the warped static branch $V\equiv0$, realised by the
solutions of \cite{KimPark} (Remark~\ref{rem:whynotconst}). The statement is
not vacuous --- it is equivalent
to a clean condition on the Dirac current, $K_i+h_iK_+=0$
(Proposition~\ref{prop:conconx}) --- but it is a hypothesis. In
Section~\ref{sec:warp} the results are therefore split: those that hold
outright, and those, collected in Corollary~\ref{cor:normconst}, that hold when
$\nrm{\eta_-}$ is constant.
\end{remark}

\noindent The constancy of $\nrm{\eta_+}$ is what makes the pointwise
identities of the next two sections usable globally, and, through the vector
identities, what turns them into statements with a sign.

\section{Warp--flux identities and single-point rigidity}
\label{sec:warp}

\subsection{The bosonic mean-value bound}

The first statement uses no supersymmetry at all: it is the integral of
\eqref{eq:divh} over a compact section. It is elementary, and we make no claim
of novelty for it --- integrating \eqref{eq:divh} in precisely this way is a
standard move in this literature, used for instance in \cite[\S5]{11index} to
show that $\tn^ih_i=h^2$ forces $h\equiv0$, and its saturation case
\eqref{eq:pointwisebosonic} is written down in \cite{staticM}. We state it
because the rest of the section is organised around it. For a function $f$ on
$\Sect$ write
$\overline{f}=\big(\int_\Sect\!1\big)^{-1}\int_\Sect f$ for its mean value with
respect to the induced volume.

\begin{theorem}[Integrated warp--flux identity and bound]
\label{thm:bound}
On any near-horizon geometry \eqref{eq:gnc}--\eqref{eq:fluxdecomp} of $D=11$
supergravity with $\Sect$ compact and without boundary,
\begin{equation}
2\overline{\Delta} + \overline{h^2}
= \tfrac13\overline{Y^2} + \tfrac1{72}\overline{X^2}\,,
\label{eq:integrated}
\end{equation}
and consequently
\begin{equation}
\overline{\Delta} \;\le\; \tfrac16\overline{Y^2} + \tfrac1{144}\overline{X^2}\,,
\label{eq:boundeq}
\end{equation}
with equality if and only if $h_i\equiv0$ on $\Sect$.
\end{theorem}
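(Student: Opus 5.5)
The plan is to integrate the twist-divergence field equation \eqref{eq:divh} over $\Sect$ and use the divergence theorem. Since $\Sect$ is smooth, compact and without boundary, $\int_\Sect\tn^ih_i=0$ for the smooth one-form $h$. The right-hand side of \eqref{eq:divh} then integrates to zero:
\[
0=\int_\Sect\Big(2\Delta+h^2-\tfrac13Y^2-\tfrac1{72}X^2\Big).
\]
Dividing by $\mathrm{Vol}(\Sect)=\int_\Sect1>0$ gives \eqref{eq:integrated} at once. Only the $r$-independent field equation \eqref{eq:divh} is used. No supersymmetry enters, and neither does Proposition~\ref{prop:auxiliary}.

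For the bound, I rearrange \eqref{eq:integrated} as
\[
2\overline{\Delta}=\tfrac13\overline{Y^2}+\tfrac1{72}\overline{X^2}-\overline{h^2}.
\]
The metric on $\Sect$ is Riemannian, so $h^2=\delta^{ij}h_ih_j\ge0$ pointwise, and hence $\overline{h^2}\ge0$. Dropping this term and halving gives \eqref{eq:boundeq}.

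For the equality clause, note that equality in \eqref{eq:boundeq} holds exactly when $\overline{h^2}=0$. The function $h^2$ is continuous and nonnegative, and the induced volume measure charges every nonempty open set. So $\overline{h^2}=0$ forces $h^2\equiv0$, hence $h_i\equiv0$. The converse direction is immediate.

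No step is a real obstacle. The only points that need care are the smoothness and boundarylessness of $\Sect$, which kill the boundary term, and the positivity of the mean-value normalisation, both of which are standing assumptions.
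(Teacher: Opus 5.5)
Your proposal is correct and is the paper's proof: integrate \eqref{eq:divh} over the closed section, drop the divergence term, then use $h^2\ge0$ pointwise, with equality forcing $h\equiv0$ by continuity. As you note, and as the paper also says, no supersymmetry is used.
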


\begin{proof}
Integrate \eqref{eq:divh} over $\Sect$. The left-hand side is the integral of a
divergence over a compact manifold without boundary and therefore vanishes,
giving \eqref{eq:integrated}. Since $\overline{h^2}\ge0$,
\eqref{eq:integrated} gives \eqref{eq:boundeq}, and equality holds precisely
when $\overline{h^2}=0$. As $h^2\ge0$ pointwise and $h$ is smooth, this is
equivalent to $h\equiv0$.
\end{proof}

\begin{corollary}[Vacuum rigidity]
\label{cor:vacuum}
If $X_{ijkl}=Y_{ij}=0$ identically and $\Delta\ge0$, then $\Delta=0$ and
$h_i=0$ on $\Sect$, and $\Sect$ is Ricci-flat; the near-horizon geometry is
locally $\Real^{1,1}\times\Sect$.
\end{corollary}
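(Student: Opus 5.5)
The plan is to read the corollary directly off Theorem~\ref{thm:bound} and then let the field equations \eqref{eq:Rij}--\eqref{eq:divY} finish the job. With $X=Y=0$ identically, \eqref{eq:integrated} collapses to
\[
2\overline{\Delta}+\overline{h^2}=0 .
\]
Both $\Delta\ge0$ (by hypothesis) and $h^2\ge0$ are pointwise non-negative smooth functions on the compact section. A sum of non-negative means can vanish only if each mean vanishes, so $\overline{\Delta}=0$ and $\overline{h^2}=0$. Continuity then forces $\Delta\equiv0$ and $h\equiv0$ on $\Sect$, exactly as in the equality clause of Theorem~\ref{thm:bound}. No supersymmetry is needed at any stage; only the integral of the divergence \eqref{eq:divh} and the sign hypothesis are used.

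Next I substitute $X=Y=0$ and $h=0$ into \eqref{eq:Rij}. Every term on the right-hand side is then either linear or quadratic in $h$, $Y$ or $X$, so it vanishes and $\tilde R_{ij}=0$: $\Sect$ is Ricci-flat. For consistency I would also check the remaining equations. Equation \eqref{eq:divh} reads $0=2\Delta$, which agrees with $\Delta\equiv0$. By \eqref{eq:beta}, $\beta=\mathrm{d}Y-h\wedge Y=0$, so \eqref{eq:divX} and \eqref{eq:divY} hold trivially. The auxiliary equations \eqref{eq:auxEpi}--\eqref{eq:auxform} are automatic by Proposition~\ref{prop:auxiliary}, and in any case hold term by term here.

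Finally I insert $\Delta=0$ and $h=0$ into the Gaussian null metric \eqref{eq:gnc}. This gives $\bbe^-=\mathrm{d}r$, so
\[
\mathrm{d}s^2=2\,\mathrm{d}u\,\mathrm{d}r+\mathrm{d}s^2_\Sect ,
\]
which is the product $\Real^{1,1}\times\Sect$ with flat two-dimensional Lorentzian factor, at least locally in $(u,r)$. The flux \eqref{eq:fluxdecomp} also vanishes, since $Y=X=\beta=0$.

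I expect no genuine obstacle. The only point needing care is that the hypothesis $\Delta\ge0$ is really used: without it \eqref{eq:integrated} only gives $\overline{\Delta}=-\tfrac12\overline{h^2}\le0$, and the sign of $\Delta$ is what kills both terms simultaneously. A second, minor point is that the word ``locally'' in the statement accounts for the Gaussian null chart covering only a neighbourhood of the horizon.
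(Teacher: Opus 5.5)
Your proposal is correct and follows essentially the same route as the paper: with $X=Y=0$ the integrated identity \eqref{eq:integrated} becomes $\int_\Sect(2\Delta+h^2)=0$, the non-negativity of both integrands forces $\Delta\equiv0$ and $h\equiv0$, and substituting into \eqref{eq:Rij} gives $\tilde R_{ij}=0$. Your extra consistency checks and the explicit reduction of \eqref{eq:gnc} to $2\,\mathrm{d}u\,\mathrm{d}r+\mathrm{d}s^2_\Sect$ are correct additions that the paper leaves implicit.
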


\begin{proof}
With $X=Y=0$, \eqref{eq:integrated} reads
$\int_\Sect(2\Delta+h^2)=0$ with both integrands non-negative, so
$\Delta=0$ and $h=0$. Substituting into \eqref{eq:Rij} gives $\tilde R_{ij}=0$.
\end{proof}

\begin{remark}[The role of the flux, and what saturation gives]
The flux terms in \eqref{eq:integrated} appear on the \emph{opposite} side from
$\overline{\Delta}$ and $\overline{h^2}$, not as an additional non-negative
summand; this is why the argument of Corollary~\ref{cor:vacuum} does not extend
and why non-trivial $D=11$ horizons exist at all. Saturation of
\eqref{eq:boundeq} is equivalent to $h\equiv0$, and the pointwise content of
\eqref{eq:divh} then sharpens to
\begin{equation}
2\Delta = \tfrac13 Y^2 + \tfrac1{72}X^2
\qquad\text{pointwise on }\Sect\,,
\label{eq:pointwisebosonic}
\end{equation}
so $\Delta\ge0$ everywhere, vanishing exactly where the flux does. The metric
is then $\mathrm{d}s^2=2\,\mathrm{d}u\,\mathrm{d}r-r^2\Delta\,\mathrm{d}u^2
+\gamma_{ij}\mathrm{d}y^i\mathrm{d}y^j$, a warped product which is not yet
static: with $V=\partial_u$,
$V\wedge\mathrm{d}V=-r^2\,\mathrm{d}r\wedge\mathrm{d}\Delta\wedge\mathrm{d}u$
vanishes if and only if $\Delta$ is constant on $\Sect$. That it is constant is
the content of the next proposition, which uses
\eqref{eq:auxEpi}--\eqref{eq:auxEpp} and no supersymmetry.
\end{remark}

\begin{proposition}[Saturation forces a direct product; \cite{staticM}]
\label{prop:staticbosonic}
On any near-horizon geometry \eqref{eq:gnc}--\eqref{eq:fluxdecomp} of $D=11$
supergravity with $\Sect$ compact and without boundary, $h\equiv0$ implies
$\mathrm{d}Y=0$ and $\Delta$ constant, with
$\Delta=\tfrac16\overline{Y^2}+\tfrac1{144}\overline{X^2}$. The near-horizon
geometry is then locally $AdS_2\times\Sect$ when $\Delta>0$ and
$\Real^{1,1}\times\Sect$ when $\Delta=0$.
\end{proposition}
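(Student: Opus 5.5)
The plan is to specialise the two auxiliary equations \eqref{eq:auxEpi} and \eqref{eq:auxEpp} to $h\equiv0$ and read off $\mathrm{d}Y=0$ and $\tn\Delta=0$ in turn. These equations are available on shell because they hold automatically by Proposition~\ref{prop:auxiliary}. With $h=0$ we also have $\mathrm{d}h=0$, and by \eqref{eq:beta} the three-form $\beta$ reduces to $\mathrm{d}Y$.

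First, \eqref{eq:auxEpp} collapses to
\[
\tfrac12\tn^2\Delta=\tfrac1{12}\beta^2 .
\]
Integrating over the compact section without boundary kills the Laplacian and gives $\int_\Sect\beta^2=0$. Since $\beta^2\ge0$ pointwise and $\beta$ is smooth, $\beta\equiv0$, that is $\mathrm{d}Y=0$.

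Second, \eqref{eq:auxEpi} with $h=0$ reads
\[
0=-\tn_i\Delta+\tfrac14Y^{jk}\beta_{ijk}-\tfrac1{12}X_{ijkl}\beta^{jkl} .
\]
The $\beta$-terms now vanish, so $\tn_i\Delta=0$, and $\Sect$ being connected, $\Delta$ is constant.

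The value of $\Delta$ then comes from \eqref{eq:divh} at $h=0$, which is the pointwise relation \eqref{eq:pointwisebosonic}: $2\Delta=\tfrac13Y^2+\tfrac1{72}X^2$. Because $\Delta$ is constant, it equals its mean. Taking means gives $\Delta=\tfrac16\overline{Y^2}+\tfrac1{144}\overline{X^2}$, consistent with equality in Theorem~\ref{thm:bound}. As a side remark, this also shows $\tfrac13Y^2+\tfrac1{72}X^2$ is constant on $\Sect$.

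Finally, with $h=0$ and $\Delta$ constant, \eqref{eq:gnc} becomes
\[
\mathrm{d}s^2=2\,\mathrm{d}u\,\mathrm{d}r-\Delta r^2\,\mathrm{d}u^2+\mathrm{d}s^2(\Sect),
\]
a direct product with no $\Sect$-dependence in the two-dimensional factor. That factor is $AdS_2$ in Poincar\'e-type coordinates, with radius set by $\Delta$, when $\Delta>0$, and it is flat $\Real^{1,1}$ when $\Delta=0$. The flux $F=\bbe^+\wedge\bbe^-\wedge Y+X$ respects this product since $\beta=0$.

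I expect no real obstacle; the only care needed is the relative sign of $\tn^2\Delta$ and $\beta^2$ in \eqref{eq:auxEpp}. The argument uses exactly that this sign makes the integral force $\beta=0$ rather than being trivially satisfied, so I would double-check that step against the off-shell bookkeeping of Appendix~\ref{app:verify}.
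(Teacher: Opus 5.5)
Your proposal is correct and is essentially the paper's own proof: specialise \eqref{eq:auxEpp} to $h\equiv0$, integrate $\tfrac12\tn^2\Delta=\tfrac1{12}\beta^2$ over $\Sect$ to get $\beta=\mathrm{d}Y=0$, then read $\tn_i\Delta=0$ off \eqref{eq:auxEpi} and fix the constant by averaging \eqref{eq:pointwisebosonic}. The sign you flagged does not actually matter: at $h\equiv0$ only the Laplacian and the $\beta^2$ term survive, and the Laplacian integrates to zero whatever the sign of its coefficient, which leaves $\int_\Sect\beta^2=0$ directly.
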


\begin{proof}
With $h\equiv0$ every term of \eqref{eq:auxEpp} carrying $h$ or
$\mathrm{d}h$ drops and $\beta=\mathrm{d}Y$, leaving
$\tfrac12\tn^2\Delta=\tfrac1{12}\beta^2$. Integrating over the compact $\Sect$
annihilates the left-hand side, and the right-hand side is a sum of squares, so
$\beta=\mathrm{d}Y=0$. Equation \eqref{eq:auxEpi} then reduces to
$\tn_i\Delta=0$, so $\Delta$ is constant, and equals its mean, which is
$\tfrac16\overline{Y^2}+\tfrac1{144}\overline{X^2}$ by
\eqref{eq:pointwisebosonic}. The product statement is the preceding remark. The
two algebraic reductions are checked in Appendix~\ref{app:verify}.
\end{proof}

\begin{remark}[Attribution]
\label{rem:staticbosonic-attrib}
Proposition~\ref{prop:staticbosonic} is not new: it is proved in exactly this
way, and with no restriction on $Y$, in \cite[\S2.1,\,\S3.3]{staticM}, where
\eqref{eq:pointwisebosonic} is also recorded in the form
$\Delta=\tfrac16Y^2+\tfrac1{144}X^2$ and $\Delta=0$ is observed to force
$X=Y=0$ with $\Sect$ of $\mathrm{Spin}(7)$ holonomy --- stronger than the
Ricci-flatness recovered in Theorem~\ref{thm:rigid} below. The magnetic case
$Y=0$ appears again in \cite[\S5.2.1]{Mhorizons}. We keep the proposition
because the rest of this section is organised around the saturation locus and
because its two algebraic reductions belong to the verification suite; the only
thing we would add is that the argument nowhere uses the Killing spinor
equations, though it is presented there inside a supersymmetric analysis.
\end{remark}

\noindent What is \emph{not} bosonic --- and is the subject of
Theorem~\ref{thm:rigid} below --- is the converse direction, that a horizon which is
flux-free at a single point is flux-free everywhere.

\subsection{The pointwise supersymmetric identity}

We now use supersymmetry. By Proposition~\ref{prop:B1}, a Killing spinor
$\eta_+$ satisfies $\xi\eta_+=0$ with $\xi$ as in \eqref{eq:xi}. Taking the
Dirac bilinear of this relation kills the $\Gamma^{ij}$ and $\Gamma^{ijk}$
terms of \eqref{eq:xi} outright, both being anti-Hermitian
(Lemma~\ref{lem:ranks}), and leaves:

\begin{theorem}[Pointwise warp--flux identity]
\label{thm:pointwise}
Let $\eta_+$ solve $\nabla^{(+)}_i\eta_+=0$. Then at every point of $\Sect$
\begin{equation}
\Delta\,\nrm{\eta_+}^2 \;=\; -4\,\ip{\Big(\tfrac14 h_i\Gamma^i - \tfrac1{288}X_{ijkl}\Gamma^{ijkl}
- \tfrac1{12}Y_{ij}\Gamma^{ij}\Big)\eta_+}{\Theta_+\eta_+}\,,
\label{eq:pointwise}
\end{equation}
with $\Theta_+$ as in \eqref{eq:Theta}, and the norm of $\eta_+$ obeys the
vector identity
\begin{equation}
\tn_i\nrm{\eta_+}^2 \;=\; 2\,\ip{\eta_+}{\Gamma_i\Theta_+\eta_+}\,.
\label{eq:vectorid}
\end{equation}
If $\Sect$ is compact, so that $\nrm{\eta_+}$ is a non-zero constant by
\eqref{eq:kernels}, then \eqref{eq:vectorid} reduces to the algebraic condition
\begin{equation}
\ip{\eta_+}{\Gamma_i\Theta_+\eta_+} \;=\; 0
\label{eq:vectorcond}
\end{equation}
and \eqref{eq:pointwise} becomes a norm,
\begin{equation}
\Delta\,\nrm{\eta_+}^2 \;=\; 4\,\nrm{\Theta_+\eta_+}^2 \;\ge\; 0\,,
\label{eq:pointwisenorm}
\end{equation}
with equality at a point precisely when $\Theta_+\eta_+$ vanishes there.
Explicitly, in terms of the flux invariants, and solved for the warp factor so
as to be directly comparable with its $\eta_-$ counterpart
\eqref{eq:pointwiseminusexplicit} below,
\begin{align}
\Delta\,\nrm{\eta_+}^2 &=
\Big(-\tfrac14h^2 + \tfrac1{864}X^2 + \tfrac1{18}Y^2\Big)\nrm{\eta_+}^2
+ \tfrac1{20736}X^{ijkl}X^{mnpq}\ip{\eta_+}{\Gamma_{ijklmnpq}\eta_+}
\nonumber\\
&\quad
- \tfrac1{288}X^{ijkl}X_{ij}{}^{mn}\ip{\eta_+}{\Gamma_{klmn}\eta_+}
- \tfrac1{54}X^{ijkl}Y_i{}^{m}\ip{\eta_+}{\Gamma_{jklm}\eta_+}
- \tfrac1{36}Y^{ij}Y^{kl}\ip{\eta_+}{\Gamma_{ijkl}\eta_+}\,.
\label{eq:pointwiseexplicit}
\end{align}
\end{theorem}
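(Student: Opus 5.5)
We split the statement into its algebraic half, \eqref{eq:pointwise} and \eqref{eq:pointwiseexplicit}, its differential half, \eqref{eq:vectorid}, and the compact-section consequences.

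\emph{The scalar identity \eqref{eq:pointwise}.} By Proposition~\ref{prop:B1}, a $\nabla^{(+)}$-parallel $\eta_+$ satisfies $\xi\eta_+=0$ on shell, with $\xi$ as in \eqref{eq:xi}. We take the Dirac bilinear $\ip{\eta_+}{\xi\eta_+}=0$ term by term.
\begin{itemize}
\item The $-\tfrac14\tn_ih_j\Gamma^{ij}$ term has rank two and the $\beta$ term has rank three. Both are anti-Hermitian, so Lemma~\ref{lem:ranks} removes them.
\item The scalar term gives $\tfrac12\Delta\nrm{\eta_+}^2$.
\item In the last term, write $A=\tfrac14h_i\Gamma^i-\tfrac1{288}X\Gamma^{(4)}+\tfrac1{12}Y\Gamma^{(2)}$ and move $A$ across the inner product: $\ip{\eta_+}{2A\Theta_+\eta_+}=2\ip{A^T\eta_+}{\Theta_+\eta_+}$. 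By \eqref{eq:transposes}, $A^T$ is the same operator with the sign of its $Y$-term flipped, which is exactly the operator displayed in \eqref{eq:pointwise}.
\end{itemize}
Rearranging gives \eqref{eq:pointwise} with the factor $-4$.

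\emph{The vector identity \eqref{eq:vectorid}.} Differentiate $\nrm{\eta_+}^2$ and use $\tn_i\eta_+=-\Psi^{(+)}_i\eta_+$. This gives $\tn_i\nrm{\eta_+}^2=-\ip{\eta_+}{(\Psi^{(+)}_i+\Psi^{(+)T}_i)\eta_+}$. By \eqref{eq:Psi} and \eqref{eq:Psitranspose}, the symmetric part keeps only the Hermitian pieces:
\begin{itemize}
\item the rank-zero term $-\tfrac14h_i$;
\item the rank-one term $-\tfrac16Y_{ij}\Gamma^j$;
\item the rank-five term in $X$.
\end{itemize}
The remaining step is to show that $2\Gamma_i\Theta_+$, after symmetrisation, reproduces the same Hermitian part. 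Expand $\Gamma_i\Gamma^{jklm}$ and $\Gamma_i\Gamma^{jk}$ into ranks $|r\pm1|$ and discard the anti-Hermitian ranks. This is a finite Clifford computation; we expect the coefficients $-\tfrac1{288}$, $\tfrac1{36}$ and $\mp\tfrac16$ to match with no field equation needed.

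\emph{The compact case.} On compact $\Sect$, \eqref{eq:kernels} makes $\nrm{\eta_+}$ constant, so \eqref{eq:vectorid} gives \eqref{eq:vectorcond}.

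\emph{The norm form \eqref{eq:pointwisenorm}.} The operator in \eqref{eq:pointwise} differs from $-\Theta_+$ only in its $h$-term: the $X$- and $Y$-terms are opposite to those of $\Theta_+$, while the $h$-term is equal. So we write it as $-\Theta_++\tfrac12h_i\Gamma^i$. Then \eqref{eq:pointwise} becomes
\[
\Delta\nrm{\eta_+}^2=4\nrm{\Theta_+\eta_+}^2-2h^i\ip{\eta_+}{\Gamma_i\Theta_+\eta_+},
\]
and the last term vanishes by \eqref{eq:vectorcond}. This yields \eqref{eq:pointwisenorm}, and the equality case is immediate from it.

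\emph{The explicit expansion \eqref{eq:pointwiseexplicit}.} Multiply out the product of the displayed operator with $\Theta_+$. Reduce each Clifford product $\Gamma^{(a)}\Gamma^{(b)}$ to its rank components and drop ranks $2,3,6,7$ by Lemma~\ref{lem:ranks}. Rank-one contributions of the form $h\cdot X$ and $h\cdot Y$ would have to be eliminated using \eqref{eq:vectorid}; if they survive, the stated formula holds only where that condition is available.

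\emph{Main obstacle.} We expect this bookkeeping to be the main difficulty. In particular, obtaining the coefficient $-\tfrac14h^2$ and the mixed $XY$ coefficient $-\tfrac1{54}$ requires care with the contraction factorials in products like $\Gamma^{ijkl}\Gamma^{mn}$. We would first carry it out generically for $\Gamma^{(4)}\Gamma^{(4)}$ (giving the ranks $8,6,4,2,0$) and check the result against the machine verification of Appendix~\ref{app:verify}.
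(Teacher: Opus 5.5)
Your proposal is correct and follows the paper's proof essentially step for step. The steps are the same: the Dirac bilinear of $\xi\eta_+=0$ with the rank-two and rank-three terms dropped by Lemma~\ref{lem:ranks}, transposition via \eqref{eq:transposes}, the Hermitian part of $\Psi^{(+)}_i$ matched against the bilinear-visible part of $\Gamma_i\Theta_+$, and the split of the left operator as $\tfrac12h_i\Gamma^i-\Theta_+$ giving \eqref{eq:pointwisesplit}; the paper transposes first and writes $\tfrac12h_i\Gamma^i-\Theta_-$, which is equivalent.

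The one point you leave open resolves in your favour. In your $-4\ip{\eta_+}{A\Theta_+\eta_+}$ the $h$--$X$ terms combine into the commutator $[\Gamma^i,\Gamma^{jklm}]$ and the $h$--$Y$ terms into the anticommutator $\{\Gamma^i,\Gamma^{jk}\}$. Both are of rank three, so no cross terms survive and \eqref{eq:pointwiseexplicit} holds without the vector identity, as the paper's ``Clifford expansion of the pairing'' asserts.
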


\begin{proof}
Take $\ip{\eta_+}{\,\cdot\,\eta_+}$ of $\xi\eta_+=0$. The terms
$-\tfrac14\tn_ih_j\Gamma^{ij}$ and $\tfrac1{72}\beta_{ijk}\Gamma^{ijk}$ of
\eqref{eq:xi} are anti-Hermitian and drop out (Lemma~\ref{lem:ranks}), and by
\eqref{eq:transposes} the remaining operator may be moved to the left of the
pairing at the cost of reversing its $Y$-term, which gives
\eqref{eq:pointwise}. For \eqref{eq:vectorid}, $\nabla^{(+)}_i\eta_+=0$ gives
$\tn_i\nrm{\eta_+}^2=-2\ip{\eta_+}{\Psi^{(+)}_i\eta_+}$, and once the two
rank-three terms of \eqref{eq:Psi} are discarded the survivors are exactly
minus the bilinear-visible part of $\Gamma_i\Theta_+$. For
\eqref{eq:pointwisenorm}, transposing the left entry of \eqref{eq:pointwise}
with $\Gamma_i^{\,T}=\Gamma_i$ produces the operator
$\tfrac12h_i\Gamma^i-\Theta_-$, so that
\begin{equation}
\Delta\,\nrm{\eta_+}^2 = 4\,\ip{\eta_+}{\Theta_-\Theta_+\eta_+}
- 2h^i\ip{\eta_+}{\Gamma_i\Theta_+\eta_+}\,,
\label{eq:pointwisesplit}
\end{equation}
with $\ip{\eta_+}{\Theta_-\Theta_+\eta_+}=\nrm{\Theta_+\eta_+}^2$ because
$\Theta_-=\Theta_+^{\,T}$; on a compact $\Sect$ the last term vanishes by
\eqref{eq:vectorcond}. Finally \eqref{eq:pointwiseexplicit} is the Clifford
expansion of the pairing with the anti-Hermitian ranks discarded.
\end{proof}

\begin{remark}[What signs the identity, and what does not]
Before the vector identity is used, \eqref{eq:pointwise} pairs two operators
that differ by the sign of both flux terms and is not manifestly signed: the
invariants enter \eqref{eq:pointwiseexplicit} with mixed signs and the
rank-four and rank-eight bilinears with none, and algebraic data realising
either sign exist (Appendix~\ref{app:verify}). What signs it is
\eqref{eq:vectorcond}, that is, the constancy of $\nrm{\eta_+}$ on a compact
section, which removes the last term of \eqref{eq:pointwisesplit}. The
conclusion $\Delta\ge0$ is (3.23) of \cite{Mhorizons}, of which
\eqref{eq:vectorcond} is (3.22) and \eqref{eq:vectorid} its (3.26); combining
it with Theorem~\ref{thm:bound} confines the mean value to
$0\le\overline{\Delta}\le\tfrac16\overline{Y^2}+\tfrac1{144}\overline{X^2}$.
\end{remark}

\begin{remark}[The explicit form as a combination of the two identities]
\label{rem:combination}
The flux expansion \eqref{eq:pointwiseexplicit} is \eqref{eq:pointwisenorm}
corrected by half the contraction of \eqref{eq:vectorcond} with $h^i$: the
Clifford expansion of $4\ip{\eta_+}{\Theta_-\Theta_+\eta_+}$ carries the two
$h$--flux cross terms
$\tfrac1{144}X^{ijkl}h^m\ip{\eta_+}{\Gamma_{ijklm}\eta_+}$ and
$-\tfrac13Y^{ij}h_j\ip{\eta_+}{\Gamma_i\eta_+}$, and
$\eqref{eq:pointwisenorm}-\tfrac12h^i\times\eqref{eq:vectorcond}$ is the unique
multiple that clears both, leaving $\tfrac18h^2$ as the only $h$-dependence.
This is the elimination that produces the corresponding $D=5$ formula in
\cite[App.~D]{KayaniThesis}; both steps are verified in
Appendix~\ref{app:verify}.
\end{remark}

\begin{corollary}[Vacuum specialisation]
\label{cor:vacspec}
If $X=Y=0$ both operators of \eqref{eq:pointwise} reduce to
$\Theta=\tfrac14h_i\Gamma^i$ and \eqref{eq:pointwise} becomes
$\big(\tfrac12\Delta+\tfrac18h^2\big)\nrm{\eta_+}^2=0$. Since $\nrm{\eta_+}$ is
a non-zero constant, \eqref{eq:vectorcond} already reads
$\tfrac14h_i\nrm{\eta_+}^2=0$ at zero flux, so $h=0$ and then $\Delta=0$, in
agreement with Corollary~\ref{cor:vacuum}; no sign assumption on $\Delta$ is
needed.
\end{corollary}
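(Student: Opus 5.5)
The plan is to specialise the two identities of Theorem~\ref{thm:pointwise}, the pointwise relation \eqref{eq:pointwise} and the algebraic vector condition \eqref{eq:vectorcond}, to $X=Y=0$. I would then read off $h$ from the vector condition first and $\Delta$ from the scalar identity second. No integration and no sign hypothesis on $\Delta$ should be needed. The only global input is that $\nrm{\eta_+}$ is a non-zero constant on the compact section, by \eqref{eq:kernels}, for a non-trivial Killing spinor $\eta_+$.

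First, setting $X=Y=0$ in \eqref{eq:Theta} gives $\Theta_+=\Theta_-=\tfrac14h_i\Gamma^i$. The operator on the left of the pairing in \eqref{eq:pointwise} has the same $h$-term, and its flux terms drop, so it is also $\tfrac14h_i\Gamma^i$. Using the Clifford relation $(h_i\Gamma^i)^2=h^2$ and the symmetry $\Gamma_i^{\,T}=\Gamma_i$, the right-hand side of \eqref{eq:pointwise} becomes
\[
-4\cdot\tfrac1{16}\,\ip{h_i\Gamma^i\eta_+}{h_j\Gamma^j\eta_+}=-\tfrac14h^2\nrm{\eta_+}^2 .
\]
Hence $\Delta\nrm{\eta_+}^2=-\tfrac14h^2\nrm{\eta_+}^2$, which is the stated $\big(\tfrac12\Delta+\tfrac18h^2\big)\nrm{\eta_+}^2=0$. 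As a cross-check, the explicit form \eqref{eq:pointwiseexplicit} at zero flux retains only its $-\tfrac14h^2$ term and gives the same result.

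Next, I would evaluate the vector condition at zero flux:
\[
\ip{\eta_+}{\Gamma_i\,\tfrac14h_j\Gamma^j\eta_+}=\tfrac14h_i\nrm{\eta_+}^2+\tfrac14h^j\ip{\eta_+}{\Gamma_{ij}\eta_+} .
\]
The rank-two bilinear vanishes by Lemma~\ref{lem:ranks}, so \eqref{eq:vectorcond} reads $\tfrac14h_i\nrm{\eta_+}^2=0$. Since the norm is a non-zero constant, $h\equiv0$. Substituting this into the scalar identity then gives $\Delta\equiv0$, with no appeal to positivity of $\Delta$.

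Finally, \eqref{eq:Rij} with $h=X=Y=0$ gives $\tilde R_{ij}=0$, in agreement with Corollary~\ref{cor:vacuum}. The only points needing care are bookkeeping ones. One is tracking that the left operator of \eqref{eq:pointwise} coincides with $\Theta_+$ once the flux is zero, so that the pairing becomes a norm of $h_i\Gamma^i\eta_+$. The other is noting that the non-vanishing of $\nrm{\eta_+}$ rests on the maximum-principle constancy of \eqref{eq:kernels} together with $\eta_+\neq0$.
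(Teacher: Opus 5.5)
Your proposal is correct and follows the paper's own argument, which is carried entirely in the statement. At zero flux both operators in \eqref{eq:pointwise} become $\tfrac14h_i\Gamma^i$, so the pairing collapses to $-\tfrac14h^2\nrm{\eta_+}^2$, and \eqref{eq:vectorcond} reduces to $\tfrac14h_i\nrm{\eta_+}^2=0$ once the rank-two bilinear is dropped by Lemma~\ref{lem:ranks}; this forces $h=0$ and then $\Delta=0$ with no sign input, and your explicit Clifford computations simply supply the arithmetic the paper leaves implicit.
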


\subsection{The \texorpdfstring{$\eta_-$}{eta-} companion identities}
\label{sec:minusid}

The $\eta_-$ sector carries the same two identities, with one structural
difference which turns out to be the source of everything that follows: the
$h$-term of $\Psi^{(-)}_i$ has the opposite sign to
that of $\Psi^{(+)}_i$, so the vector identity acquires an inhomogeneous term.

\begin{proposition}[Vector identity of the $\eta_-$ sector]
\label{prop:vectorminus}
Let $\eta_-$ solve $\nabla^{(-)}_i\eta_-=0$. Then
\begin{equation}
\tn_i\nrm{\eta_-}^2 \;=\; -h_i\nrm{\eta_-}^2
 + 2\,\ip{\eta_-}{\Gamma_i\Theta_-\eta_-}\,,
\label{eq:vectoridminus}
\end{equation}
equivalently
\begin{equation}
\ip{\eta_-}{\Gamma_i\Theta_-\eta_-} \;=\;
 \tfrac12\big(\tn_i\nrm{\eta_-}^2 + h_i\nrm{\eta_-}^2\big)\,.
\label{eq:vectorcondminus}
\end{equation}
Both statements are identities in the horizon data and the spinor; neither uses
compactness. If $\nrm{\eta_-}$ is constant they reduce to
$\ip{\eta_-}{\Gamma_i\Theta_-\eta_-}=\tfrac12h_i\nrm{\eta_-}^2$.
\end{proposition}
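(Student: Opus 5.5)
The plan mirrors the proof of \eqref{eq:vectorid} in Theorem~\ref{thm:pointwise}, with the sector sign of the $h$-term tracked. Since the Dirac inner product is $\tn$-compatible and $\eta_-$ is parallel,
\[
\tn_i\nrm{\eta_-}^2=2\ip{\eta_-}{\tn_i\eta_-}=-2\ip{\eta_-}{\Psi^{(-)}_i\eta_-}.
\]
Only the symmetric part of $\Psi^{(-)}_i$ contributes. By Lemma~\ref{lem:ranks} the two rank-three terms of \eqref{eq:Psi}, $\tfrac1{24}Y_{kl}\Gamma_i{}^{kl}$ (note the sign flip to $-$ in the $-$ sector) and $\tfrac1{36}X_{ijkl}\Gamma^{jkl}$, drop out. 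The survivors are
\[
\tfrac14h_i+\tfrac16Y_{ij}\Gamma^j-\tfrac1{288}\Gamma_i{}^{jklm}X_{jklm}.
\]

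Next I expand $\Gamma_i\Theta_-$ using \eqref{eq:Theta} and the Clifford identities
\[
\Gamma_i\Gamma^j=\Gamma_i{}^j+\delta_i^j,\qquad
\Gamma_i\Gamma^{jk}=\Gamma_i{}^{jk}+2\delta_i^{[j}\Gamma^{k]},\qquad
\Gamma_i\Gamma^{jklm}=\Gamma_i{}^{jklm}+4\delta_i^{[j}\Gamma^{klm]}.
\]
Discarding ranks $2,3$ leaves the bilinear-visible part
\[
\tfrac14h_i+\tfrac1{288}X_{jklm}\Gamma_i{}^{jklm}-\tfrac16Y_{ij}\Gamma^j.
\]
This is to be compared with the survivors of $\Psi^{(-)}_i$. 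The flux terms of the two lists are exact negatives of each other: the $X$ terms match with opposite sign exactly as in the $+$ sector, and the $Y$ terms flip in both objects together, so they still cancel. The $h$ term, however, has the same sign $+\tfrac14h_i$ in both. Hence
\[
-2\ip{\eta_-}{\Psi^{(-)}_i\eta_-}=2\ip{\eta_-}{\Gamma_i\Theta_-\eta_-}-h_i\nrm{\eta_-}^2,
\]
since the $h$ part of $-2\Psi^{(-)}_i$ is $-\tfrac12h_i$, while that of $-(-2\Gamma_i\Theta_-)$ would be $-\tfrac12 h_i$. So the mismatch is exactly $-h_i\nrm{\eta_-}^2$, which gives \eqref{eq:vectoridminus}.

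The main obstacle is exactly this sign bookkeeping: checking that the $Y$-terms, whose signs flip between sectors in both $\Psi$ and $\Theta$, still cancel, and that only the rank-zero term is left over. I would verify it term by term, cross-checking against the $+$-sector computation already done in the proof of Theorem~\ref{thm:pointwise}.

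Rearranging \eqref{eq:vectoridminus} gives \eqref{eq:vectorcondminus}. Nothing in the argument integrates, so compactness is never used. Setting $\tn\nrm{\eta_-}^2=0$ gives the constant-norm specialisation $\ip{\eta_-}{\Gamma_i\Theta_-\eta_-}=\tfrac12h_i\nrm{\eta_-}^2$.
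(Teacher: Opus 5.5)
Your proposal is correct and is essentially the paper's own proof: differentiate the norm, drop the rank-three terms of $\Psi^{(-)}_i$ and the rank-two and rank-three terms of $\Gamma_i\Theta_-$, then observe that the surviving flux terms are exact negatives of each other while the $h$-terms agree, which leaves the extra $\tfrac12 h_i$. One slip is in your explanatory sentence: the $h$-part of $2\Gamma_i\Theta_-$ is $+\tfrac12 h_i$, not $-\tfrac12 h_i$, and it is this opposite sign against the $-\tfrac12 h_i$ of $-2\Psi^{(-)}_i$ that produces the mismatch $-h_i$ --- your displayed identity already has this right.
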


\begin{proof}
As in Theorem~\ref{thm:pointwise},
$\tn_i\nrm{\eta_-}^2=-2\ip{\eta_-}{\Psi^{(-)}_i\eta_-}$. Taking the lower sign
in \eqref{eq:Psi} and discarding the two rank-three terms, which are
anti-Hermitian (Lemma~\ref{lem:ranks}), leaves
$\tfrac14h_i+\tfrac16Y_{ij}\Gamma^j-\tfrac1{288}\Gamma_i{}^{jklm}X_{jklm}$,
whereas the bilinear-visible part of $\Gamma_i\Theta_-$ is
$\tfrac14h_i-\tfrac16Y_{ij}\Gamma^j+\tfrac1{288}X_{jklm}\Gamma_i{}^{jklm}$. The
two differ by $\tfrac12h_i$, which gives \eqref{eq:vectoridminus}, and
\eqref{eq:vectorcondminus} is a rearrangement of it.
\end{proof}

\noindent Equation \eqref{eq:vectoridminus} is not new: expanding
$\Gamma_i\Theta_-$ it reads
\begin{equation*}
\tn_i\nrm{\eta_-}^2=-\tfrac12h_i\nrm{\eta_-}^2
+\ip{\eta_-}{\big(\tfrac1{144}\Gamma_i{}^{jklm}X_{jklm}
 -\tfrac13Y_{ij}\Gamma^j\big)\eta_-}\,,
\end{equation*}
which is (5.6) of \cite{11index}. What Lemma~\ref{lem:ranks} adds is the packaging in
terms of $\Theta_-$, which is what makes \eqref{eq:vectorcondminus} pair with
\eqref{eq:pointwisesplitminus} below. In contrast to the $\eta_+$
sector, the gradient term cannot be discarded: $\nrm{\eta_-}$ is not known to
be constant (Remark~\ref{rem:notconst}).

\begin{theorem}[Pointwise warp--flux identity of the $\eta_-$ sector]
\label{thm:pointwiseminus}
Let $\eta_-$ solve $\nabla^{(-)}_i\eta_-=0$. Then at every point
\begin{equation}
\big(\Delta + h^2\big)\nrm{\eta_-}^2 + h^i\tn_i\nrm{\eta_-}^2
 \;=\; 4\,\nrm{\Theta_-\eta_-}^2 \;\ge\; 0\,,
\label{eq:pointwisenormminus}
\end{equation}
with equality at a point precisely when $\Theta_-\eta_-$ vanishes there. No
compactness and no assumption on $\nrm{\eta_-}$ is used; when $\nrm{\eta_-}$ is
constant \eqref{eq:pointwisenormminus} reduces to
$\Delta\nrm{\eta_-}^2=4\nrm{\Theta_-\eta_-}^2-h^2\nrm{\eta_-}^2$. Explicitly,
and again unconditionally,
\begin{align}
\Delta\,\nrm{\eta_-}^2 &=
\Big(-\tfrac14h^2 + \tfrac1{864}X^2 + \tfrac1{18}Y^2\Big)\nrm{\eta_-}^2
+ \tfrac1{20736}X^{ijkl}X^{mnpq}\ip{\eta_-}{\Gamma_{ijklmnpq}\eta_-}
\nonumber\\
&\quad
- \tfrac1{288}X^{ijkl}X_{ij}{}^{mn}\ip{\eta_-}{\Gamma_{klmn}\eta_-}
+ \tfrac1{54}X^{ijkl}Y_i{}^{m}\ip{\eta_-}{\Gamma_{jklm}\eta_-}
- \tfrac1{36}Y^{ij}Y^{kl}\ip{\eta_-}{\Gamma_{ijkl}\eta_-}\,.
\label{eq:pointwiseminusexplicit}
\end{align}
\end{theorem}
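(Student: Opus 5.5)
Take the Dirac bilinear of the algebraic condition $\xi^-\eta_-=0$ of \eqref{eq:ximinus}, which holds on shell by Proposition~\ref{prop:B1minus}, and mirror the proof of Theorem~\ref{thm:pointwise}.

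\textbf{Step 1: the bilinear.} In $\ip{\eta_-}{\xi^-\eta_-}=0$ the terms $-\tfrac14\tn_ih_j\Gamma^{ij}$ (rank two) and $\tfrac1{24}\beta_{ijk}\Gamma^{ijk}$ (rank three) are anti-Hermitian. By Lemma~\ref{lem:ranks} they drop out. What remains is
\[
\tfrac12\Delta\nrm{\eta_-}^2=2\ip{\eta_-}{\big(-\tfrac14h_i\Gamma^i+\tfrac1{288}X\cdot\Gamma^{(4)}+\tfrac1{12}Y\cdot\Gamma^{(2)}\big)\Theta_-\eta_-}.
\]

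\textbf{Step 2: move the operator across the pairing.} Transpose the left operator using \eqref{eq:transposes}: the rank-two $Y$-term flips sign and the rank-one and rank-four terms are fixed. The transposed operator is then $-\tfrac14h_i\Gamma^i+\tfrac1{288}X\Gamma-\tfrac1{12}Y\Gamma$. This equals $\tfrac12h_i\Gamma^i-\Theta_+$, and since $\Theta_+=\Theta_-^{\,T}$ it equals $-\Theta_-^{\,T}+\tfrac12h_i\Gamma^i$. Hence
\[
\tfrac12\Delta\nrm{\eta_-}^2=-2\nrm{\Theta_-\eta_-}^2+h^i\ip{\eta_-}{\Gamma_i\Theta_-\eta_-}.
\]
The signs here are the place I expect an error. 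The target identity fixes them: it requires $\Delta\nrm{\eta_-}^2=4\nrm{\Theta_-\eta_-}^2-h^i\cdot2\ip{\eta_-}{\Gamma_i\Theta_-\eta_-}$, so the overall sign of $\xi^-$ against the $\Theta$-operator has to come out to give $+4\nrm{\Theta_-\eta_-}^2$. I will check this against the vacuum limit. At $X=Y=0$ one has $\Theta_-=\tfrac14h\cdot\Gamma$, and the identity must reduce consistently with \eqref{eq:vectorcondminus}. The correct sign is the one consistent with \eqref{eq:pointwisesplit}, with the $h$-term reversed as the remark after \eqref{eq:ximinus} notes.

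\textbf{Step 3: eliminate the vector bilinear.} Substitute \eqref{eq:vectorcondminus} for $\ip{\eta_-}{\Gamma_i\Theta_-\eta_-}$. This gives
\[
-2h^i\cdot\tfrac12\big(\tn_i\nrm{\eta_-}^2+h_i\nrm{\eta_-}^2\big)=-h^i\tn_i\nrm{\eta_-}^2-h^2\nrm{\eta_-}^2.
\]
Moving it to the left yields \eqref{eq:pointwisenormminus}. Nonnegativity and the equality case are immediate because the right-hand side is a norm. Nothing global is used: Proposition~\ref{prop:vectorminus} is pointwise, so neither compactness nor the constancy of $\nrm{\eta_-}$ enters. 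The constant-norm reduction follows by dropping the gradient.

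\textbf{Step 4: the explicit form.} Clifford-expand $4\ip{\eta_-}{\Theta_-^{\,T}\Theta_-\eta_-}$ and discard ranks $2,3,6,7$. Then subtract $h^i$ times \eqref{eq:vectorcondminus} in its expanded form (5.6). This is the elimination of Remark~\ref{rem:combination}: it cancels the cross terms $X h\,\Gamma^{(5)}$ and $Yh\,\Gamma^{(1)}$ and leaves the $h^2$ coefficient $-\tfrac14$.

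Relative to \eqref{eq:pointwiseexplicit}, the sign of $Y$ is reversed in $\Theta_-$. Consequently only the $XY$ cross term, which is odd in $Y$, changes sign; the $YY$ and $XX$ terms are unchanged. I would confirm the coefficients $\tfrac1{864},\tfrac1{18},\tfrac1{20736},\tfrac1{288},\tfrac1{54},\tfrac1{36}$ by the same machine check used for the $+$ sector.
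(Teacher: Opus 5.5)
Your route is the paper's: take the bilinear of $\xi^-\eta_-=0$, drop the rank-two and rank-three terms, rewrite what survives through $\Theta_\pm$ using $\Theta_+=\Theta_-^{\,T}$, and then spend \eqref{eq:vectorcondminus} on the term $h^i\ip{\eta_-}{\Gamma_i\Theta_-\eta_-}$. Step~1 is correct, but Step~2 is not a derivation. The identification you make there is false. The transpose you correctly computed, $-\tfrac14h_i\Gamma^i+\tfrac1{288}X_{ijkl}\Gamma^{ijkl}-\tfrac1{12}Y_{ij}\Gamma^{ij}$, equals $\Theta_--\tfrac12h_i\Gamma^i$. By contrast, $\check\Theta:=\tfrac12h_i\Gamma^i-\Theta_+=\tfrac14h_i\Gamma^i-\tfrac1{288}X_{ijkl}\Gamma^{ijkl}-\tfrac1{12}Y_{ij}\Gamma^{ij}$ has the opposite $h$- and $X$-terms. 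The result is an overall sign error: your display amounts to $\Delta\nrm{\eta_-}^2=-4\nrm{\Theta_-\eta_-}^2+2h^i\ip{\eta_-}{\Gamma_i\Theta_-\eta_-}$. After Step~3 this would read $(\Delta-h^2)\nrm{\eta_-}^2-h^i\tn_i\nrm{\eta_-}^2=-4\nrm{\Theta_-\eta_-}^2$, which is not the theorem. You then choose the sign by appeal to the statement being proved and by analogy with \eqref{eq:pointwisesplit}. That is circular, and the analogy is exactly what misleads. Passing from \eqref{eq:xi} to \eqref{eq:ximinus} flips not only the $\Delta$-term but also the operator multiplying $\Theta_\pm$. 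In $\xi$ that operator is $\tfrac12h_i\Gamma^i-\Theta_-$. In $\xi^-$ it is $\Theta_+-\tfrac12h_i\Gamma^i$, not the sector image $\tfrac12h_i\Gamma^i-\Theta_+$. Your Step~1 tracks the first flip and Step~2 loses the second.

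The repair is one line and uses nothing from the target. Let $M:=-\tfrac14h_i\Gamma^i+\tfrac1{288}X_{ijkl}\Gamma^{ijkl}+\tfrac1{12}Y_{ij}\Gamma^{ij}=\Theta_+-\tfrac12h_i\Gamma^i$ be the operator of your Step~1. Then $\Theta_+=\Theta_-^{\,T}$ and $\Gamma_i^{\,T}=\Gamma_i$ give $\ip{\eta_-}{M\Theta_-\eta_-}=\nrm{\Theta_-\eta_-}^2-\tfrac12h^i\ip{\eta_-}{\Gamma_i\Theta_-\eta_-}$. Hence $\Delta\nrm{\eta_-}^2=4\nrm{\Theta_-\eta_-}^2-2h^i\ip{\eta_-}{\Gamma_i\Theta_-\eta_-}$, which is \eqref{eq:pointwisesplitminus}, and Step~3 then goes through as you wrote it.

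The vacuum check you announced but did not carry out would have caught the slip. At $X=Y=0$ the bilinear of $\xi^-\eta_-=0$ reads $-\tfrac12\Delta\nrm{\eta_-}^2-\tfrac18h^2\nrm{\eta_-}^2=0$. With $\nrm{\Theta_-\eta_-}^2=\tfrac1{16}h^2\nrm{\eta_-}^2$ and $\ip{\eta_-}{\Gamma_i\Theta_-\eta_-}=\tfrac14h_i\nrm{\eta_-}^2$, the correct split form returns $\Delta\nrm{\eta_-}^2=-\tfrac14h^2\nrm{\eta_-}^2$, while yours returns $+\tfrac14h^2\nrm{\eta_-}^2$.

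Your Step~4 is sound once Step~2 is fixed. It differs only cosmetically from the paper, which expands $-4\check\Theta\,\Theta_-$ directly; the $h$--flux cross terms already cancel inside that product. You instead expand $4\nrm{\Theta_-\eta_-}^2$ and remove the cross terms with the unconditional expanded vector identity. Your $Y\to-Y$ shortcut is also valid. That substitution exchanges $\Theta_+$ and $\Theta_-$, so it maps the $+$ sector combination $4\Theta_-\Theta_+-2h^i\Gamma_i\Theta_+$ to the $-$ sector one $4\Theta_+\Theta_--2h^i\Gamma_i\Theta_-$. Only the surviving term odd in $Y$, the $XY$ one, changes sign.
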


\begin{proof}
Take $\ip{\eta_-}{\,\cdot\,\eta_-}$ of $\xi^-\eta_-=0$ with $\xi^-$ as in
\eqref{eq:ximinus}. The rank-two and rank-three terms are anti-Hermitian and
drop out (Lemma~\ref{lem:ranks}), and the operator multiplying $\Theta_-$ in
\eqref{eq:ximinus} is minus
$\check\Theta:=\tfrac14h_i\Gamma^i-\tfrac1{288}X_{ijkl}\Gamma^{ijkl}
-\tfrac1{12}Y_{ij}\Gamma^{ij}$, so what remains is
$\Delta\nrm{\eta_-}^2=-4\ip{\eta_-}{\check\Theta\,\Theta_-\eta_-}$. Now
$\check\Theta=\tfrac12h_i\Gamma^i-\Theta_+$ identically, and
$\Theta_+=\Theta_-^{\,T}$ by \eqref{eq:transposes}, whence
\begin{equation}
\Delta\,\nrm{\eta_-}^2 = 4\,\nrm{\Theta_-\eta_-}^2
 - 2h^i\ip{\eta_-}{\Gamma_i\Theta_-\eta_-}\,,
\label{eq:pointwisesplitminus}
\end{equation}
and \eqref{eq:vectorcondminus} converts the last term into
$h^2\nrm{\eta_-}^2+h^i\tn_i\nrm{\eta_-}^2$, which is
\eqref{eq:pointwisenormminus}. For \eqref{eq:pointwiseminusexplicit}, expand
$-4\check\Theta\,\Theta_-$ in the Clifford basis and discard the anti-Hermitian
ranks (Lemma~\ref{lem:ranks}).
\end{proof}

\noindent Identity \eqref{eq:pointwisenormminus} is equivalent to the
identities (6.9)$_{1,3,4}$ of \cite{11index}: the combination of those three
eliminates the bilinear $\ip{\Gamma_+\eta_-}{\Theta_+\eta_+}$ and the one-form
$V$ and leaves $(\Delta+h^2)\nrm{\eta_-}^2+h^i\tn_i\nrm{\eta_-}^2$ equal to a
fixed multiple of $\nrm{\Theta_-\eta_-}^2$. The derivation given here is
independent of that route --- it never introduces $V$ --- and it is pointwise
from the start. The explicit expansion \eqref{eq:pointwiseminusexplicit} we
have not found in the literature.

\begin{remark}[Two asymmetries between the sectors]
Both are visible in the proof. First, \eqref{eq:pointwisenorm} bounds a single
quantity, $\Delta\ge0$, whereas \eqref{eq:pointwisenormminus} bounds the
combination $(\Delta+h^2)\nrm{\eta_-}^2+h^i\tn_i\nrm{\eta_-}^2$ and separates
into a statement about $\Delta+h^2$ only when $\nrm{\eta_-}$ is constant.
Second, the expansion
\eqref{eq:pointwiseminusexplicit} contains no $h$--flux cross terms at all,
whereas in the $\eta_+$ sector the two cross terms of
Remark~\ref{rem:combination} are removed only after the vector identity is
used; in the $\eta_-$ sector the cross terms cancel inside
$-4\check\Theta\,\Theta_-$ before any identity is imposed, and the vector
identity is spent instead on the term $-2h^i\ip{\eta_-}{\Gamma_i\Theta_-\eta_-}$
of \eqref{eq:pointwisesplitminus}. Both expansions are displayed solved for
$\Delta\nrm{\eta_\pm}^2$, and comparing them shows
that they are term-by-term identical apart from the sign of the
$X$--$Y$ rank-four term, the bilinears being those of different
spinors.
\end{remark}

\subsection{The master identity}
\label{sec:master}

The two warp identities are statements about the same $\Delta$, and the
sector-pairing map of the lightcone expansion relates the two norms that appear
in them. That is all that is needed.

\begin{lemma}[The pairing and its norm]
\label{lem:pairing}
If $\eta_-$ solves $\nabla^{(-)}_i\eta_-=0$ then
$\eta_+:=\Gamma_+\Theta_-\eta_-$ solves $\nabla^{(+)}_i\eta_+=0$, and
\begin{equation}
\nrm{\Gamma_+\Theta_-\eta_-}^2 = 4\,\nrm{\Theta_-\eta_-}^2\,.
\label{eq:pairnorm}
\end{equation}
In particular $\eta_+$ vanishes if and only if $\Theta_-\eta_-$ does.
\end{lemma}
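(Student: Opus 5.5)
The lemma has two parts of different character, and I would treat them separately: the intertwining statement, which is differential and uses the lightcone integration of Section~\ref{sec:lightcone}, and the norm identity \eqref{eq:pairnorm}, which is purely algebraic in $\mathrm{Cl}(10,1)$.

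For the first part I would not recompute a commutator from scratch. Instead I would use the structure of the solution \eqref{eq:lightconesol}. The positive-chirality part of the eleven-dimensional Killing spinor is $\phi_+=\eta_++u\,\Gamma_+\Theta_-\eta_-$. The $i$-components of \eqref{eq:kse}, restricted to $r=0$ and projected onto the $+$ sector, say that $\nabla^{(+)}_i\phi_+=0$ for every value of $u$. Since $\nabla^{(+)}_i$ does not involve $u$, the coefficient of $u$ must vanish separately, giving $\nabla^{(+)}_i(\Gamma_+\Theta_-\eta_-)=0$. This is the content of the $u$-linear conditions (cc3), (cc4) collected in Appendix~\ref{app:integrability}.

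What has to be checked is that this follows from $\nabla^{(-)}_i\eta_-=0$ and the field equations alone, so that no independent input is being assumed. I would proceed as follows:
\begin{itemize}[nosep]
\item $\Gamma_+$ anticommutes with $\Gamma^{+-}$ up to the sign that exchanges its $\pm1$ eigenspaces, and commutes with every spatial $\Gamma_i$. Hence $\Gamma_+$ carries the $-$ sector to the $+$ sector and intertwines $\tn_i$ and the rank-three $X$-terms of \eqref{eq:Psi}.
\item The same map flips the signs of the $h$- and $Y$-terms, which carry $\Gamma^{+-}$.
\item It therefore suffices to show that $\tilde\nabla^{(+)}_i\Theta_--\Theta_-\nabla^{(-)}_i$ annihilates $\eta_-$ on shell, where $\tilde\nabla^{(+)}_i$ is $\nabla^{(+)}_i$ conjugated into the $-$ sector.
\item That operator is a Clifford polynomial in $\tn h$, $\tn Y$, $\tn X$ and quadratic flux terms. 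Using \eqref{eq:beta}, \eqref{eq:Rij}--\eqref{eq:divY} and $\mathrm{d}X=0$, I expect it to reduce to a combination of the residuals $E_{ij}$, $Fh$, $FX$, $FY$, $BX$, together with a term proportional to the integrability condition $\xi^-$ of \eqref{eq:ximinus}, acting on $\eta_-$.
\end{itemize}

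This reduction is the main obstacle. It is a long Clifford computation of exactly the type the paper delegates to Appendix~\ref{app:verify}. My first attempt would be to organise it by Clifford rank and match each rank against the off-shell forms of Propositions~\ref{prop:B1minus} and \ref{prop:B5}, rather than expanding blindly.

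The norm identity is straightforward. Because $\Theta_-$ contains only spatial gammas, $\chi:=\Theta_-\eta_-$ satisfies $\Gamma_-\chi=0$. Using the lightcone Clifford relation $\Gamma_+\Gamma_-+\Gamma_-\Gamma_+=2g_{+-}$ and the transpose of $\Gamma_+$ in the Dirac inner product, which is $\Gamma_-$ up to the normalisation fixed in Appendix~\ref{app:rep}, one gets
\[
\nrm{\Gamma_+\chi}^2=\ip{\chi}{\Gamma_+^{\,T}\Gamma_+\chi}=\ip{\chi}{\Gamma_-\Gamma_+\chi}\cdot(\text{const})=(\text{const})\,2\nrm{\chi}^2 .
\]
I would fix the constant to be $4$ by direct evaluation in the explicit real representation. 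This is precisely the normalisation-dependent factor flagged after \eqref{eq:gpfamily}.

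The final clause then follows at once, since the Dirac inner product on the $\mathrm{Spin}(9)$ spinors is positive definite: $\Gamma_+\Theta_-\eta_-=0$ if and only if $\nrm{\Theta_-\eta_-}=0$.
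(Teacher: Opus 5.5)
\textbf{The intertwining step names the wrong integrability condition.} Your argument for \eqref{eq:pairnorm} and for the final clause is the paper's. $\Gamma_+^{\,T}\Gamma_+$ acts as $4$ on $\Ker\Gamma_-$, which is a statement about the lightcone matrices alone: in Appendix~\ref{app:rep}, $\Gamma_+^{\,T}=\Gamma_{10}-\Gamma_0=2\Gamma_-$ and $\Gamma_-\Gamma_+=2$ on $\Ker\Gamma_-$.

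The gap is in the intertwining statement, at the step you call the main obstacle. The identity you expect to find is not the one that holds. Write $\Psi^{(+)}_a\Gamma_+=\Gamma_+\tilde\Psi_a$ and $O_a:=\tn_a\Theta_--\Theta_-\Psi^{(-)}_a+\tilde\Psi_a\Theta_-$. This operator carries a free vector index. Through $\tn_a\Theta_-$ it contains, with constant coefficients, the mixed-symmetry components of $\tn_aY_{bc}$ and $\tn_aX_{bcde}$, which neither $\mathrm{d}X=0$ nor \eqref{eq:divX}--\eqref{eq:divY} constrain.

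No combination of residuals plus a multiple of the index-free $\xi^-$ can reproduce these terms. An equivariant constant coefficient would have to be $\Gamma_a$, and $\Gamma_a\xi^-$ sees only $\beta$, the totally antisymmetric part of $\tn Y$. Put differently: modulo residuals, $\xi^-$ is $-\Gamma^{ab}\nabla^{(-)}_a\nabla^{(-)}_b$. That is only the $\Gamma^a$-trace of what you need, the $\mathbf{16}$ in \eqref{eq:lambda2S}, and the $\mathbf{128}$ is left unaccounted for. Matching against Proposition~\ref{prop:B5} does not help either, because that is the vector condition of the other sector, for $\nabla^{(+)}$ and $\xi_i$.

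The missing idea, which is the paper's proof, is that $O_a$ equals $\tfrac12\Gamma^b[\nabla^{(-)}_a,\nabla^{(-)}_b]$ modulo the field equations and the Bianchi identity. This is the single-$\Gamma$-trace integrability condition of $\nabla^{(-)}$ itself. Then $O_a\eta_-=0$ for a trivial reason: a $\nabla^{(-)}$-parallel spinor is annihilated by the curvature of $\nabla^{(-)}$. No separate appeal to $\xi^-\eta_-=0$, which is merely the trace of this statement, is needed.

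\textbf{The conjugation by $\Gamma_+$ is also stated wrongly.} $\Gamma_+$ anticommutes with the spatial $\Gamma_i$, since they are orthogonal generators of $\mathrm{Cl}(10,1)$; it does not commute with them. So conjugation by $\Gamma_+$ reverses the odd-rank terms, as the paper notes after \eqref{eq:ximinus}. Consequently the $Y$-terms of $\tilde\Psi_a$ agree with those of $\Psi^{(-)}_a$, while the $h$-term and all the $X$-terms (ranks three and five) differ:
$\tilde\Psi_a-\Psi^{(-)}_a=-\tfrac12h_a+\tfrac1{144}\Gamma_a{}^{jklm}X_{jklm}-\tfrac1{18}X_{ajkl}\Gamma^{jkl}$.
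Your first two bullets have the $X$- and $Y$-terms the wrong way round, which would derail the rank-by-rank bookkeeping you propose.

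Your opening route, reading off the $u$-linear part of $\nabla^{(+)}_i\phi_+=0$, is circular, as you recognise. It presupposes the eleven-dimensional Killing spinor that the lemma is meant to build from $\eta_-$.
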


\begin{proof}
The first statement is the $D=11$ form of the pairing used in the index-theoretic
counting \cite{11index,iiaindex}: writing
$\Psi^{(+)}_a\Gamma_+=\Gamma_+\tilde\Psi_a$ on the $\eta_-$ sector, one finds
that
$\tn_a\Theta_--\Theta_-\Psi^{(-)}_a+\tilde\Psi_a\Theta_-$ equals half the
$\Gamma^b$-contracted integrability condition of $\nabla^{(-)}_a\eta_-=0$ modulo
the field equations and the Bianchi identity, and therefore annihilates
$\eta_-$; the rearrangement is carried out in \textsc{Cadabra}2 with abstract
indices and with the residuals carried symbolically, and independently in the
explicit representation (Appendix~\ref{app:verify}). For
\eqref{eq:pairnorm}, $\Gamma_+^{\,T}\Gamma_+=4$ on the $\eta_-$ sector in the
conventions of Section~\ref{sec:lightcone}, which is a statement about the lightcone
gamma matrices alone. The normalisation $4$ is convention-dependent; nothing
below uses its value.
\end{proof}

\begin{theorem}[Master identity]
\label{thm:master}
Let $\Sect$ be compact and connected and let the horizon admit a non-trivial
$\eta_-$ Killing spinor, i.e.\ $N_->0$. Write $f:=\nrm{\eta_-}^2$; then $f>0$
everywhere, since $\eta_-$ is $\nabla^{(-)}$-parallel and a parallel section of
a vector bundle that vanishes at one point of a connected manifold vanishes
identically, by uniqueness for the linear parallel-transport ODE along curves.
At every point of $\Sect$
\begin{equation}
\big(\Delta+h^2\big)f + h^i\tn_i f \;=\; c \;:=\; \nrm{\eta_+}^2
 \;=\;\const\;\ge\;0\,,\qquad \eta_+:=\Gamma_+\Theta_-\eta_-\,,
\label{eq:master}
\end{equation}
and, in addition,
\begin{equation}
\tn^i\big(\tn_i f + h_i f\big) \;=\; 0\,.
\label{eq:divmaster}
\end{equation}
Eliminating $\tn^ih_i$ between \eqref{eq:master}, \eqref{eq:divmaster} and the
field equation \eqref{eq:divh} gives the equivalent second-order form
\begin{equation}
\tn^i\tn_i f \;=\; \Big(\tfrac13Y^2+\tfrac1{72}X^2-\Delta\Big)f - c\,,
\label{eq:masterelliptic}
\end{equation}
in which the twist enters only through $h^2$ inside the flux terms it was
eliminated against, and in which, finally, $\Delta\ge0$ pointwise.
\end{theorem}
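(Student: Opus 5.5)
The argument assembles results already established, in four steps.

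\emph{Positivity of $f$.} Since $\eta_-\neq0$ is $\nabla^{(-)}$-parallel and $\Sect$ is connected, uniqueness for parallel transport along curves shows that $\eta_-$ vanishes nowhere. Hence $f>0$.

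\emph{The master identity \eqref{eq:master}.} Theorem~\ref{thm:pointwiseminus} gives
\[
(\Delta+h^2)f+h^i\tn_i f=4\nrm{\Theta_-\eta_-}^2
\]
pointwise, with no compactness used. Lemma~\ref{lem:pairing} says that $\eta_+:=\Gamma_+\Theta_-\eta_-$ is a $\nabla^{(+)}$-parallel spinor with $\nrm{\eta_+}^2=4\nrm{\Theta_-\eta_-}^2$. (If the normalisation of $\Gamma_+$ differs, the constant $c$ is simply rescaled, which is harmless.) Since $\eta_+$ is $\nabla^{(+)}$-parallel on a compact section, \eqref{eq:kernels} makes $\nrm{\eta_+}$ constant. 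This holds also when $\eta_+=0$, in which case $c=0$. Therefore the right-hand side is a constant $c\ge0$, which is \eqref{eq:master}.

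\emph{The divergence identity \eqref{eq:divmaster}.} This is \eqref{eq:divminus}, obtained by integrating the Lichnerowicz divergence \eqref{eq:l2b} and using $\nabla^{(-)}\eta_-=0$.

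\emph{The elliptic form \eqref{eq:masterelliptic}.} Expand \eqref{eq:divmaster} as
\[
\tn^2 f+h^i\tn_i f+(\tn^i h_i)f=0 .
\]
Use \eqref{eq:master} to replace $h^i\tn_i f$ by $c-(\Delta+h^2)f$, and \eqref{eq:divh} to replace $\tn^i h_i$ by $2\Delta+h^2-\tfrac13Y^2-\tfrac1{72}X^2$. The $h^2f$ terms cancel, and one $\Delta f$ cancels against $-\Delta f$, leaving
\[
\tn^2 f=\bigl(\tfrac13Y^2+\tfrac1{72}X^2-\Delta\bigr)f-c .
\]
Each step is reversible given \eqref{eq:divh} and \eqref{eq:master}, so the two forms are equivalent.

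\emph{The sign $\Delta\ge0$.} The constant-norm spinor $\eta_+$ cannot be used directly, because it may vanish when $c=0$. Instead, use the existence of some nonzero $\nabla^{(+)}$-Killing spinor. By the counting $N=2N_-$ of \cite{11index}, $N_->0$ forces $N_+>0$. Applying \eqref{eq:pointwisenorm} to a nonzero parallel $\eta_+'$, which has nonzero constant norm on the compact $\Sect$, gives $\Delta=4\nrm{\Theta_+\eta_+'}^2/\nrm{\eta_+'}^2\ge0$.

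The main point of care is this last step. It relies on the quoted index result rather than on the pairing, since the pairing alone does not produce a nonzero $\eta_+$ when $c=0$.
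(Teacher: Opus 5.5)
Your proof is correct. For \eqref{eq:master}, \eqref{eq:divmaster} and \eqref{eq:masterelliptic} it follows the paper's argument step for step. One wording slip: \eqref{eq:divminus} needs no integration. For a parallel $\eta_-$ the right-hand side of \eqref{eq:l2b} vanishes pointwise; integration is used only to identify $\Ker\mathcal{D}^{(-)}$ with the parallel spinors.

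You differ from the paper only in the sign of $\Delta$.

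\emph{Your route.} You avoid a case split by importing $N=2N_-$ (that is, $N_+=N_-$) from the index theorem of \cite{11index}. This gives some nonzero $\nabla^{(+)}$-parallel spinor, to which you apply \eqref{eq:pointwisenorm}.

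\emph{The paper's route.} It splits on $c$. If $c>0$, the paired spinor $\eta_+=\Gamma_+\Theta_-\eta_-$ is itself nonzero with constant norm, and \eqref{eq:pointwisenorm} applies to it. If $c=0$, then $\Theta_-\eta_-\equiv0$, which is the flux-free case. In fact \eqref{eq:pointwisesplitminus} then reads $\Delta f=0$ directly, so $\Delta\equiv0$ because $f>0$; equivalently, $\xi^-\eta_-=0$ gives $\Delta=0$, as in Remark~\ref{rem:kernels}.

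So you are right that the pairing cannot supply a nonzero $\eta_+$ when $c=0$, but that does not push you to the index theorem: $c=0$ is the easy case. Your route buys uniformity, with no case split. The paper's route buys self-containment: the theorem uses nothing from the $+$ sector beyond the paired spinor and the constancy of its norm. That matches Remark~\ref{rem:hypothesis}, which uses $N=2N_-$ only to say that the hypothesis $N_->0$ is automatic.
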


\begin{proof}
Put $\eta_+=\Gamma_+\Theta_-\eta_-$, a Killing spinor of the $\eta_+$ sector by
Lemma~\ref{lem:pairing}; whether or not it vanishes, $\nrm{\eta_+}$ is constant
on a compact connected $\Sect$ by \eqref{eq:kernels}. Then \eqref{eq:master} is
\eqref{eq:pointwisenormminus} combined with \eqref{eq:pairnorm}, and
\eqref{eq:divmaster} is \eqref{eq:divminus}. For \eqref{eq:masterelliptic},
substitute $h^i\tn_if=c-(\Delta+h^2)f$ from \eqref{eq:master} into
\eqref{eq:divmaster} written as
$\tn^i\tn_if+h^i\tn_if+(\tn^ih_i)f=0$, and then eliminate $\tn^ih_i$ with
\eqref{eq:divh}; the terms in $h^2$ and $2\Delta$ cancel against
$(\Delta+h^2)f$, leaving \eqref{eq:masterelliptic}. Pointwise positivity of
$\Delta$ is \eqref{eq:pointwisenorm} applied to $\eta_+$ when $\eta_+\neq0$,
and is contained in the flux-free case otherwise.
\end{proof}

\begin{theorem}[Weighted warp--flux identity]
\label{thm:weighted}
Under the hypotheses of Theorem~\ref{thm:master}, with $f=\nrm{\eta_-}^2$ and
$c=\nrm{\eta_+}^2$,
\begin{equation}
\int_\Sect \Delta\,f \;=\; \int_\Sect\Big(\tfrac13Y^2+\tfrac1{72}X^2\Big)f
 \;-\; c\,\mathrm{Vol}(\Sect)\,,
\label{eq:weightedbound}
\end{equation}
and the identity is strict as an inequality,
$\int_\Sect\Delta f<\int_\Sect(\tfrac13Y^2+\tfrac1{72}X^2)f$, whenever the
horizon is not flux-free. Nothing is assumed about $\nrm{\eta_-}$: the weight
$f$ is a function, not a constant.
\end{theorem}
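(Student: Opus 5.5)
The plan is to integrate the elliptic form \eqref{eq:masterelliptic} of Theorem~\ref{thm:master} over $\Sect$. Since $\Sect$ is compact without boundary, $\int_\Sect\tn^i\tn_i f=0$ by the divergence theorem. The right-hand side then gives
\[
0=\int_\Sect\Big(\tfrac13Y^2+\tfrac1{72}X^2-\Delta\Big)f-c\,\mathrm{Vol}(\Sect),
\]
and this rearranges to \eqref{eq:weightedbound}. Equivalently, one can integrate \eqref{eq:divmaster} directly, $\int_\Sect\tn^i(\tn_if+h_if)=0$. Expanding it as $\int_\Sect\big(\tn^2 f+h^i\tn_if+(\tn^ih_i)f\big)=0$, drop the Laplacian term, replace $h^i\tn_if$ by $c-(\Delta+h^2)f$ using \eqref{eq:master}, and replace $\tn^ih_i$ using \eqref{eq:divh}. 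The $h^2f$ and $2\Delta f$ terms then cancel against $-(\Delta+h^2)f$, which is exactly the elimination already carried out in the proof of Theorem~\ref{thm:master}. Nothing about $\nrm{\eta_-}$ enters: $f$ is only required to be smooth, and the constancy of $c$ comes from \eqref{eq:kernels}.

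For the strict inequality, $c\ge0$ immediately gives the non-strict form. What is needed is $c>0$ whenever the flux is not identically zero. I will argue by contrapositive. If $c=0$, then $\nrm{\eta_+}^2=0$, so by Lemma~\ref{lem:pairing} $\Theta_-\eta_-=0$ at every point. The kernel theorem of \cite[\S5]{11index} then forces $X=Y=0$ and $h=0$ (and $\Delta=0$), so the horizon is flux-free. Hence a non-flux-free horizon has $c>0$, and since $\mathrm{Vol}(\Sect)>0$ the inequality is strict.

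The main obstacle is this last step, because it uses the external kernel result that $\Theta_-\eta_-\equiv0$ implies vanishing flux. If a self-contained route is wanted, I would try the following. With $c=0$, \eqref{eq:pointwisenormminus} gives $(\Delta+h^2)f+h^i\tn_if=0$, and \eqref{eq:weightedbound} becomes $\int_\Sect(Q-\Delta)f=0$. I would then combine this with the vanishing of the bilinear expansion \eqref{eq:pointwiseminusexplicit}. I expect this to be harder than simply quoting \cite[\S5]{11index}, so the plan is to cite that result, as the paper itself does for the tail of the rigidity chain.
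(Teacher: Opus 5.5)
Your proposal is correct and is essentially the paper's proof. The paper integrates \eqref{eq:masterelliptic} so that the divergence drops out. It then gets strictness from $c>0$, with $c=0$ forcing $\Theta_-\eta_-\equiv0$ and hence $X=Y=h=0$ via the kernel result of \cite[\S5]{11index}, which it quotes as Remark~\ref{rem:kernels}. The step you flag as the obstacle is short within the paper. With $\Theta_-\eta_-\equiv0$ both terms on the right of \eqref{eq:pointwisesplitminus} vanish, so $\Delta f\equiv0$. The $c=0$ case of \eqref{eq:weightedbound} then reads $\int_\Sect(\tfrac13Y^2+\tfrac1{72}X^2)f=0$, and since $f>0$ this forces $X=Y=0$.
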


\begin{proof}
Integrate \eqref{eq:masterelliptic} over the compact $\Sect$; the left-hand
side is a divergence and drops out. For strictness, $c>0$ gives it at once, and
$c=0$ means $\Theta_-\eta_-\equiv0$, which by Remark~\ref{rem:kernels} forces
$X=Y=h=0$, the flux-free case.
\end{proof}

\begin{corollary}[A weighted lower bound on the flux, with its equality case]
\label{cor:fluxlower}
Under the same hypotheses, with $Q=\tfrac13Y^2+\tfrac1{72}X^2$,
\begin{equation}
\int_\Sect Q\,f \;\ge\; c\,\mathrm{Vol}(\Sect)\,,
\label{eq:fluxlower}
\end{equation}
with equality if and only if $\Delta\equiv0$, that is --- by
Theorem~\ref{thm:spin7square} --- if and only if
$w_{\mathbf 7}=\mp\sigma(\mathcal{Y}_{\mathbf 7})$ at every point of $\Sect$.
On the saturation branch $h\equiv0$ the left-hand side is exactly
$2c\,\mathrm{Vol}(\Sect)$.
\end{corollary}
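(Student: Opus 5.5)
The inequality is the weighted identity of Theorem~\ref{thm:weighted} with the warp term dropped. Rearranging \eqref{eq:weightedbound} gives
\[
\int_\Sect Q\,f - c\,\mathrm{Vol}(\Sect) = \int_\Sect \Delta f .
\]
The integrand $\Delta f$ is pointwise non-negative. Indeed, $f>0$ everywhere by Theorem~\ref{thm:master}, and $\Delta\ge0$ pointwise, which is the final clause of that theorem (from \eqref{eq:pointwisenorm} applied to $\eta_+$, or the flux-free case). This gives \eqref{eq:fluxlower}.

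For the equality case, equality means $\int_\Sect\Delta f=0$ with a smooth non-negative integrand. Hence $\Delta f\equiv0$, and since $f$ is strictly positive, $\Delta\equiv0$. Conversely, $\Delta\equiv0$ makes the right-hand side vanish. The restatement of $\Delta\equiv0$ as the linear condition $w_{\mathbf 7}=\mp\sigma(\mathcal{Y}_{\mathbf 7})$ is the forward reference to Theorem~\ref{thm:spin7square}. That theorem gives $\Delta=\tfrac1{108}\lVert w_{\mathbf 7}\pm\sigma(\mathcal{Y}_{\mathbf 7})\rVert^2$ pointwise, so $\Delta$ vanishes at a point exactly when the norm does. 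The one thing to check is that this theorem needs only the hypotheses in force here: a nowhere-vanishing $\eta_-$ defining the $\mathrm{Spin}(7)$ structure, which holds because $f>0$. I will cite it as stated rather than reprove it.

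For the last sentence, on $h\equiv0$ I will not integrate anything new. Instead I evaluate the master identity \eqref{eq:master} pointwise, where the $h$ terms drop and leave $\Delta f=c$. I then combine this with the saturation relation \eqref{eq:pointwisebosonic}, $Q=2\Delta$ pointwise when $h\equiv0$. This gives $Qf=2\Delta f=2c$ pointwise, and integrating yields $\int_\Sect Qf=2c\,\mathrm{Vol}(\Sect)$. Alternatively, Proposition~\ref{prop:staticbosonic} makes $\Delta$ constant, so \eqref{eq:masterelliptic} gives the same result. As a consistency check, the weighted identity then reads $c\,\mathrm{Vol}=2c\,\mathrm{Vol}-c\,\mathrm{Vol}$, which is correct.

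I expect no real obstacle. The only delicate point is making sure that $\Delta\ge0$ is available unconditionally, including when $c=0$. In that case $\Theta_-\eta_-\equiv0$, and the kernel result cited in Theorem~\ref{thm:weighted} forces the flux-free case $\Delta=0$, so the bound holds there with equality $0\ge0$. This case is consistent with the equality clause.
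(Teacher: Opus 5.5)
Your proposal is correct and follows the paper's proof step for step: rearrange \eqref{eq:weightedbound}, use $\Delta\ge0$ and $f>0$ for the bound and its equality case, and on $h\equiv0$ combine $Q=2\Delta$ with $\Delta f=c$, which you read off \eqref{eq:master} pointwise rather than, as the paper does, through the constancy of $f$. One correction concerns your applicability check for Theorem~\ref{thm:spin7square}: the square is derived through the flux-only twist formula \eqref{eq:htoflux}, which on a compact section holds everywhere in the $\eta_+$ sector but in the $\eta_-$ sector only where $f$ is stationary (Theorem~\ref{thm:htoflux}), so for $c>0$ take the $\mathrm{Spin}(7)$ data from $\eta_+=\Gamma_+\Theta_-\eta_-$ with the upper sign, the case $c=0$ being flux-free and trivial.
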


\begin{proof}
\eqref{eq:weightedbound} reads $\int Qf=\int\Delta f+c\,\mathrm{Vol}$, and
$\Delta\ge0$ pointwise by \eqref{eq:pointwisenorm} --- or, without any
integral estimate, by the perfect square \eqref{eq:deltasquare} --- while
$f>0$. So $\int\Delta f\ge0$, with equality if and only if $\Delta\equiv0$,
and \eqref{eq:deltasquare} identifies that locus. On $h\equiv0$,
Proposition~\ref{prop:staticbosonic} gives $2\Delta=Q$ pointwise and
$\Delta=c/f$ is the constant $c$ divided by the constant $f$, whence
$\int Qf=2\int\Delta f=2c\,\mathrm{Vol}(\Sect)$.
\end{proof}

\begin{remark}
\label{rem:fluxlowerstatus}
\eqref{eq:fluxlower} is the inequality form of the weighted identity: the
flux, weighted by the spinor norm, is bounded below by the constant $c$ against
the volume, and the deficit is exactly $\int\Delta f$. What makes the equality
case sharp rather than tautological is that $\Delta\equiv0$ has been
identified algebraically, so the equality locus is a condition on the flux at
each point and not a condition on the solution as a whole. It does not,
however, force the flux to vanish: the zero locus of \eqref{eq:deltasquare} is
realised with $X^2,Y^2>0$. For $c>0$ the same inequality reads as a cap on the
entropy: dividing \eqref{eq:fluxlower} by $4cG^{(11)}$ and using
Corollary~\ref{cor:entropy} gives
$S_{\mathrm{BH}}\le(4cG^{(11)})^{-1}\int_\Sect Qf$, attained exactly on the
perfect-square locus $\Delta\equiv0$; Corollary~\ref{cor:smarr} supplies the
matching lower bound.
\end{remark}

\begin{remark}[What the weight does]
\label{rem:weight}
The $c=0$ case of \eqref{eq:weightedbound} is (5.11) of \cite{11index}. There
the identity is derived only after $\Theta_-\phi_-=0$ has been imposed; the
weight $\nrm{\phi_-}^2$ is carried along and is known to be positive --- by
their (5.7), $\tn f=-hf$ --- but not yet known to be constant, and only its
positivity is used, the conclusion being the vanishing of the flux. What is new
here is the $c\ne0$ form: the term $-c\,\mathrm{Vol}(\Sect)$, and validity on
every supersymmetric $M$-horizon with compact connected section under no
condition on $\Theta_-\eta_-$ whatever. It degenerates to the earlier statement
exactly when $c=0$. We have not found it stated in this
form in \cite{Mhorizons,11index,GGPreview,superalgebras,KL2013review,N4d11}.
\end{remark}

\begin{theorem}[Non-staticity balance]
\label{thm:balance}
Under the hypotheses of Theorem~\ref{thm:master} put
\begin{equation}
\mathcal{A}_i := h_i + \tn_i\log f\,,\qquad\text{so that}\qquad
f\mathcal{A}=\mathrm{d}f+fh=-V\,,
\label{eq:Adef}
\end{equation}
$V$ being the one-form of \eqref{eq:gpfamily}; thus $\mathcal{A}\equiv0$ is
exactly the static branch \eqref{eq:staticbranch}. Then, pointwise on $\Sect$
and for every $p\in\Real$,
\begin{equation}
\tn_i\big(f^p\mathcal{A}^i\big)
=(p-1)\,f^p\big(\lvert\mathcal{A}\rvert^2-h\!\cdot\!\mathcal{A}\big)\,,
\qquad
h\!\cdot\!\mathcal{A}=\frac{c}{f}-\Delta\,,
\label{eq:balancepointwise}
\end{equation}
and consequently, on a compact section and for every $p\in\Real$,
\begin{equation}
\int_\Sect f^p\Big(\lvert\mathcal{A}\rvert^2+\Delta\Big)
 \;=\; c\int_\Sect f^{p-1}\,.
\label{eq:balance}
\end{equation}
\end{theorem}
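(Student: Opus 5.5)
The whole statement should follow from the two identities of Theorem~\ref{thm:master}, namely \eqref{eq:master} and \eqref{eq:divmaster}, together with $f>0$, which makes $\log f$ and $f^p$ smooth for every real $p$. The identification $f\mathcal{A}=\mathrm{d}f+fh=-V$ is the fourth relation of \eqref{eq:gpfamily} read off directly, so only the two displayed formulae need proof.

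\emph{The algebraic relation.} I will divide \eqref{eq:master} by $f$. This gives $\Delta+h^2+h^i\tn_i\log f=c/f$, and $h^2+h^i\tn_i\log f$ is exactly $h\!\cdot\!\mathcal{A}$, so $h\!\cdot\!\mathcal{A}=c/f-\Delta$.

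\emph{The pointwise divergence law.} First I rewrite \eqref{eq:divmaster} as $\tn_i(f\mathcal{A}^i)=0$. I then write $f^p\mathcal{A}=f^{p-1}\,(f\mathcal{A})$ and apply the Leibniz rule:
\[
\tn_i(f^p\mathcal{A}^i)=(p-1)f^{p-2}\,\tn_if\,(f\mathcal{A}^i).
\]
Substituting $\tn_if=f(\mathcal{A}_i-h_i)$, which is the definition of $\mathcal{A}$ rearranged, gives
\[
\tn_i(f^p\mathcal{A}^i)=(p-1)f^p\big(\lvert\mathcal{A}\rvert^2-h\!\cdot\!\mathcal{A}\big),
\]
which is the first half of \eqref{eq:balancepointwise}.

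\emph{The integral law for $p\neq1$.} I integrate the pointwise law over the compact section, where the left-hand side drops out. This gives $(p-1)\int f^p(\lvert\mathcal{A}\rvert^2-c/f+\Delta)=0$. For $p\ne1$ I divide by $p-1$ and obtain \eqref{eq:balance}.

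\emph{The case $p=1$.} This is the step I expect to be the main obstacle, because the pointwise law degenerates there and carries no information. I plan to handle it directly. I write
\[
\int f\lvert\mathcal{A}\rvert^2=\int\mathcal{A}\!\cdot\!(f\mathcal{A})=\int h\!\cdot\!(f\mathcal{A})+\int\tn\log f\!\cdot\!(f\mathcal{A}).
\]
Integrating the last term by parts gives $-\int\log f\,\tn_i(f\mathcal{A}^i)$, which vanishes by \eqref{eq:divmaster}; here the smoothness of $\log f$ (from $f>0$) and the absence of boundary are both used. Hence $\int f\lvert\mathcal{A}\rvert^2=\int f\,h\!\cdot\!\mathcal{A}=\int(c-\Delta f)$, which is \eqref{eq:balance} at $p=1$.

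This integration-by-parts argument in fact works for every $p$, using $\tn(f^{p-1})$ in place of $\tn\log f$. So it could serve as a uniform proof, with the $p\ne1$ route kept as a cross-check.
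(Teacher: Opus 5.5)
Your proposal is correct and follows the paper's proof essentially step by step: dividing \eqref{eq:master} by $f$ for $h\!\cdot\!\mathcal{A}=c/f-\Delta$, then applying the Leibniz rule to $f^{p-1}(f\mathcal{A})$ with $f\mathcal{A}$ divergence free. For the integral law, the paper integrates by parts against a test function $g$, with $g=f^{p-1}/(p-1)$ for $p\neq1$ and $g=\log f$ for $p=1$; this is exactly the uniform integration-by-parts argument you describe at the end, so the only difference is that you also offer the direct integration of the pointwise law as an alternative for $p\neq1$.
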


\begin{proof}
The second relation of \eqref{eq:balancepointwise} is \eqref{eq:master}
rewritten with $\tn_if=f(\mathcal{A}_i-h_i)$: it turns
$(\Delta+h^2)f+h^i\tn_if$ into $\Delta f+f\,h\!\cdot\!\mathcal{A}$. For the
first, write $f^p\mathcal{A}^i=f^{p-1}\,(f\mathcal{A}^i)$ and use
\eqref{eq:divmaster}, which says precisely that $f\mathcal{A}$ is divergence
free:
$\tn_i(f^{p-1}f\mathcal{A}^i)=(p-1)f^{p-2}(\tn_if)(f\mathcal{A}^i)
=(p-1)f^{p}(\lvert\mathcal{A}\rvert^2-h\!\cdot\!\mathcal{A})$, again by
$\tn_if=f(\mathcal{A}_i-h_i)$.

For \eqref{eq:balance} the exponent $p=1$ has to be reached separately, since
the pointwise identity degenerates there. Both cases come from one line: for
any $g\in C^\infty(\Sect)$, integrating $\tn^i(g\,f\mathcal{A}_i)$ over the
compact $\Sect$ and using $\tn^i(f\mathcal{A}_i)=0$ gives
$\int_\Sect f\,\mathcal{A}\!\cdot\!\tn g=0$. Take $g=f^{p-1}/(p-1)$ for $p\ne1$
and $g=\log f$ for $p=1$; in either case
$\int_\Sect f^{p-1}\mathcal{A}\!\cdot\!\tn f=0$, which is
$\int_\Sect f^{p}(\lvert\mathcal{A}\rvert^2-h\!\cdot\!\mathcal{A})=0$, and
substituting $h\!\cdot\!\mathcal{A}=c/f-\Delta$ gives \eqref{eq:balance}.
\end{proof}

\begin{theorem}[Pointwise balance]
\label{thm:pwbalance}
Under the hypotheses of Theorem~\ref{thm:master}, and using in addition the
invariance $\mathcal{L}_Vf=0$, which is (6.10) of \cite{11index}, the balance holds
pointwise and without any integration:
\begin{equation}
f\Big(\lvert\mathcal{A}\rvert^2+\Delta\Big)\;=\;c
\qquad\text{on all of }\Sect\,,
\qquad\text{equivalently}\qquad
\Delta f+\frac{\lvert V\rvert^2}{f}\;=\;c\,.
\label{eq:pwbalance}
\end{equation}
\end{theorem}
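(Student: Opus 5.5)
The plan is to show that the pointwise statement is the integrand of \eqref{eq:balance} with the integration removed. The invariance $\mathcal{L}_Vf=0$ supplies exactly the pointwise fact that the integration was being used to average away. Concretely, the proof of Theorem~\ref{thm:balance} reduced everything to the vanishing of $\int_\Sect f^{p-1}\mathcal{A}\!\cdot\!\tn f$. I would instead show that the integrand $\mathcal{A}\!\cdot\!\tn f$ itself vanishes at every point, and then read off \eqref{eq:pwbalance} from the second relation of \eqref{eq:balancepointwise}.

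\emph{First step.} Since $f$ is a function on $\Sect$, $\mathcal{L}_Vf=V^i\tn_if$, with $V$ raised by the metric of $\Sect$. By \eqref{eq:Adef}, $V=-f\mathcal{A}$, and $\tn_if=f(\mathcal{A}_i-h_i)$ from the definition of $\mathcal{A}$. Together these give
\begin{equation*}
\mathcal{L}_Vf=-f\,\mathcal{A}^i\tn_if=-f^2\big(\lvert\mathcal{A}\rvert^2-h\!\cdot\!\mathcal{A}\big)\,.
\end{equation*}

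\emph{Second step.} By Theorem~\ref{thm:master}, $f>0$ everywhere. So the hypothesis $\mathcal{L}_Vf=0$, which is (6.10) of \cite{11index}, is equivalent pointwise to $\lvert\mathcal{A}\rvert^2=h\!\cdot\!\mathcal{A}$. This is the same combination whose $f^p$-weighted integral vanished in Theorem~\ref{thm:balance}, and it also makes the right-hand side of the first relation of \eqref{eq:balancepointwise} vanish for every $p$.

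\emph{Third step.} Substituting the second relation of \eqref{eq:balancepointwise}, $h\!\cdot\!\mathcal{A}=c/f-\Delta$, gives $\lvert\mathcal{A}\rvert^2=c/f-\Delta$, that is $f(\lvert\mathcal{A}\rvert^2+\Delta)=c$. For the equivalent form, $\lvert V\rvert^2=f^2\lvert\mathcal{A}\rvert^2$, so $f\lvert\mathcal{A}\rvert^2=\lvert V\rvert^2/f$. No compactness is used beyond what went into Theorem~\ref{thm:master}, namely the constancy of $c$ and the positivity of $f$.

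The only point needing care is the invariance input itself. One must check that (6.10) of \cite{11index}, stated there for the normalisation-dependent bilinear $V$, is the statement $V^i\tn_i\nrm{\eta_-}^2=0$ for our $V$ up to a constant factor. A rescaling of $V$, or a change of its sign, does not affect the vanishing of $\mathcal{L}_Vf$. If that citation were in doubt, I would first try to derive $\mathcal{L}_Vf=0$ directly: contract \eqref{eq:vectoridminus} with $V^i$, and use the first and third relations of \eqref{eq:gpfamily} together with the identity $h_iV^i=\Delta f-2c$ that they imply. Beyond that check, the argument is the two-line substitution above.
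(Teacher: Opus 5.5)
Your argument is correct and is the paper's own proof read in the opposite direction. The paper computes $f(\lvert\mathcal{A}\rvert^2+\Delta)-c=f(\lvert\mathcal{A}\rvert^2-h\!\cdot\!\mathcal{A})=-f^{-1}V\!\cdot\!\tn f$ and then invokes $\mathcal{L}_Vf=0$; that is your first two steps with the substitution from \eqref{eq:balancepointwise} made first. One caution about your fallback: contracting \eqref{eq:vectoridminus} with $V$ and using $h\!\cdot\!V=\Delta f-c$ (it is $c$, not $2c$, in the paper's normalisation) only reproduces the deficit identity \eqref{eq:pwdeficit}, not its vanishing, so $\mathcal{L}_Vf=0$ really must be imported as an independent input, as Remark~\ref{rem:pwbalance} stresses.
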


\begin{proof}
By the second relation of \eqref{eq:balancepointwise},
$f(\lvert\mathcal{A}\rvert^2+\Delta)-c
=f\big(\lvert\mathcal{A}\rvert^2-h\!\cdot\!\mathcal{A}\big)
=f\,\mathcal{A}\!\cdot\!(\mathcal{A}-h)
=f\,\mathcal{A}\!\cdot\!\tn\log f
=\mathcal{A}\!\cdot\!\tn f
=-\frac1f\,V\!\cdot\!\tn f\,,$
using $\mathcal{A}-h=\tn\log f$ and $f\mathcal{A}=-V$. The last expression is
$-f^{-1}\mathcal{L}_Vf$, which vanishes. The second form of
\eqref{eq:pwbalance} follows from $\lvert V\rvert^2=f^2\lvert\mathcal{A}\rvert^2$.
\end{proof}

\begin{remark}[Status of the pointwise form]
\label{rem:pwbalance}
Three comments. First, \eqref{eq:pwbalance} contains
\eqref{eq:balance}: multiplying by $f^{p-1}$ and integrating over $\Sect$
returns the whole weighted family at once, so on a compact section
Theorem~\ref{thm:balance} is a corollary of Theorem~\ref{thm:pwbalance} and not
an independent statement. What the integral form buys is a weaker hypothesis:
it needs only $\tn^i(f\mathcal{A}_i)=0$ and \eqref{eq:master}, not the
invariance $\mathcal{L}_Vf=0$. Second, the deficit is exactly the invariance:
on any data satisfying \eqref{eq:master} and \eqref{eq:divmaster},
\begin{equation}
f\Big(\lvert\mathcal{A}\rvert^2+\Delta\Big)-c
 \;=\;-\frac{1}{f}\,V\!\cdot\!\tn f\,,
\label{eq:pwdeficit}
\end{equation}
so \eqref{eq:pwbalance} holds if and only if $V$ preserves $f$, and the
identity is not available from \eqref{eq:master} and \eqref{eq:divmaster}
alone. Third, the constancy of the combination $\Delta f+\lvert V\rvert^2/f$ is
not special to supersymmetry: for a general extremal horizon carrying a Killing
field that preserves the horizon data it is a theorem of \cite{colling}, in the
bosonic variables $(\Gamma,K,F)$ that correspond here to $(f,-V,-\Delta)$. The
derivation above is independent of that one --- it uses only the bilinear
identity family \eqref{eq:gpfamily} --- and what supersymmetry adds is the
value of the constant, $c=\nrm{\eta_+}^2$, together with $\Delta\ge0$ from
Theorem~\ref{thm:master}, which is what makes
Corollary~\ref{cor:balance} an inequality with a sharp equality locus.
\end{remark}

\begin{corollary}[How far a supersymmetric $M$-horizon can be from static]
\label{cor:balance}
Under the same hypotheses, with $\Delta\ge0$ from Theorem~\ref{thm:master},
\begin{equation}
\int_\Sect\frac{\lvert V\rvert^2}{f} \;+\; \int_\Sect \Delta\,f
 \;=\; c\,\mathrm{Vol}(\Sect)\,,
\qquad\text{hence}\qquad
\int_\Sect\frac{\lvert V\rvert^2}{f}\;\le\;c\,\mathrm{Vol}(\Sect)\,,
\label{eq:Vbound}
\end{equation}
with equality if and only if $\Delta\equiv0$. By
Theorem~\ref{thm:pwbalance} both statements hold pointwise: at every point of
$\Sect$,
\begin{equation}
\frac{\lvert V\rvert^2}{f}\;\le\;c\,,
\label{eq:Vboundpointwise}
\end{equation}
with equality exactly at the points where $\Delta$ vanishes. In particular
$c=0$ forces $V\equiv0$ and $\Delta\equiv0$: a supersymmetric $M$-horizon with
$\nrm{\eta_+}=0$ lies on the static branch and has vanishing warp factor, and
in this pointwise form the conclusion uses neither compactness nor any
integration. Written in terms of $\mathcal{A}=h+\mathrm{d}\log f$ the
pointwise bound \eqref{eq:Vboundpointwise} is
$\lvert h+\mathrm{d}\log f\rvert^2\le c/f$: a speed limit on how fast the
twist can fail to be the gradient it is on the static branch, with the warp
factor and the spinor norm setting the ceiling.
\end{corollary}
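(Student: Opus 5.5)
The integrated statement follows from Theorem~\ref{thm:balance} with $p=1$, and the pointwise statement follows from Theorem~\ref{thm:pwbalance}; in each case $\Delta\ge0$ from Theorem~\ref{thm:master} converts the equality into an inequality.

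\emph{Integrated form.} With $p=1$, \eqref{eq:balance} reads
\[
\int_\Sect f\big(\lvert\mathcal{A}\rvert^2+\Delta\big)=c\,\mathrm{Vol}(\Sect).
\]
By \eqref{eq:Adef}, $f\mathcal{A}=-V$, so $\lvert V\rvert^2=f^2\lvert\mathcal{A}\rvert^2$ and $f\lvert\mathcal{A}\rvert^2=\lvert V\rvert^2/f$; here $f>0$ by Theorem~\ref{thm:master}. This is the first relation of \eqref{eq:Vbound}. Since $\Delta\ge0$ and $f>0$, the term $\int_\Sect\Delta f$ is non-negative, which gives the inequality. Equality forces $\int_\Sect\Delta f=0$. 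The integrand is continuous and non-negative and $f$ is strictly positive, so this happens exactly when $\Delta\equiv0$.

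\emph{Pointwise form.} Theorem~\ref{thm:pwbalance} gives $\Delta f+\lvert V\rvert^2/f=c$ at every point. It uses the invariance $\mathcal{L}_Vf=0$ quoted there, so that hypothesis is inherited. Hence $\lvert V\rvert^2/f=c-\Delta f\le c$, with equality at a point if and only if $\Delta f=0$ there, that is, if and only if $\Delta=0$ there, because $f>0$.

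\emph{The case $c=0$.} The pointwise identity becomes $\Delta f+\lvert V\rvert^2/f=0$, a sum of two non-negative terms. Each must vanish, so $V\equiv0$ and $\Delta\equiv0$ with no integration needed. The final rewriting is $\lvert V\rvert^2/f=f\lvert\mathcal{A}\rvert^2=f\,\lvert h+\mathrm{d}\log f\rvert^2$, so dividing by $f$ turns the pointwise bound into $\lvert h+\mathrm{d}\log f\rvert^2\le c/f$.

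\emph{Main point of care.} The claim that the pointwise conclusion ``uses neither compactness nor any integration'' needs checking against its inputs. The positivity $\Delta\ge0$ in Theorem~\ref{thm:master} came from \eqref{eq:pointwisenorm}, which needs $\nrm{\eta_+}$ constant. The constancy of $c$ likewise came from the compact-section maximum principle. So I would state explicitly that what is compactness-free is the final step from the pointwise balance. The balance law itself and $\Delta\ge0$ are taken as given under the hypotheses of Theorem~\ref{thm:master}. Alternatively, one can note that $\Delta\ge0$ also follows from the perfect square \eqref{eq:deltasquare}, which is purely algebraic.
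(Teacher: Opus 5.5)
Your argument is correct and is essentially the paper's proof: set $p=1$ in \eqref{eq:balance}, use $\lvert V\rvert^2=f^2\lvert\mathcal{A}\rvert^2$ and the two positivities $f>0$, $\Delta\ge0$, then apply the same positivities pointwise to \eqref{eq:pwbalance}. The paper reads off the $c=0$ case from the integrated identity, while you read it off the pointwise one. That inherits $\mathcal{L}_Vf=0$ but matches the statement's ``no integration'' clause. Your caveat that compactness still enters through the constancy of $c$ and through $\Delta\ge0$ is accurate.

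One correction to your aside: the perfect square \eqref{eq:deltasquare} is not a compactness-free route to $\Delta\ge0$. It is obtained by substituting the twist from \eqref{eq:htoflux}, which requires the norm of the spinor carrying $(Z,\varphi)$ to be stationary. For $\eta_+$ that is again the constancy supplied by \eqref{eq:kernels}, and for $\eta_-$ it holds only at critical points of $f$.
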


\begin{proof}
Put $p=1$ in \eqref{eq:balance} and use $\lvert V\rvert^2=f^2\lvert
\mathcal{A}\rvert^2$. Both integrands on the left are non-negative because
$f>0$ and $\Delta\ge0$, which gives the inequality, its equality case, and the
statement at $c=0$. For the pointwise statements, apply the same two
positivities to \eqref{eq:pwbalance} at each point.
\end{proof}

\begin{remark}[What the balance identity says, and what it does not]
\label{rem:balance}
Equation \eqref{eq:balance} is a conservation law rather than a bound: the
non-staticity $\lvert\mathcal{A}\rvert^2=\lvert V\rvert^2/f^2$ and the warp
factor are paid for out of the same constant $c=\nrm{\eta_+}^2$, in every
weight $f^p$ simultaneously. Once the invariance $\mathcal{L}_Vf=0$ is used the
whole family collapses to the single pointwise statement
\eqref{eq:pwbalance}, of which each weight is the $f^{p-1}$-moment; the
one-parameter freedom is therefore real only at the level of hypotheses, being
what survives when $\mathcal{L}_Vf=0$ is not assumed.

Two limits carry names. The static branch $V\equiv0$ is exactly the
equality case of $\int_\Sect\Delta f\le c\,\mathrm{Vol}(\Sect)$, so on it
\eqref{eq:balance} reduces to $\int_\Sect f^{p-1}(\Delta f-c)=0$, which is
\eqref{eq:staticbranch} integrated. At the other end, $c=0$ collapses the whole
family at once, and Corollary~\ref{cor:balance} reaches $V\equiv0$,
$\Delta\equiv0$ using only \eqref{eq:master}, \eqref{eq:divmaster} and
$\Delta\ge0$ --- no kernel theorem, and no maximum principle beyond the one
that makes $c$ constant. That is an independent route to the first half of
Theorem~\ref{thm:rigid}, which reaches $c=0$ from a hypothesis at one point;
the two together say that the single-point hypothesis and the global
$V\equiv0$, $\Delta\equiv0$ conclusion are separated by exactly the constancy
of $\nrm{\eta_+}$.

The identity is verified in Appendix~\ref{app:verify}: the pointwise form
\eqref{eq:balancepointwise} symbolically for seven exponents, including the
degenerate $p=1$, on two compact models --- flat $T^2$ and flat $T^3$ carrying
\eqref{eq:master} and \eqref{eq:divmaster} exactly and nothing else --- and
\eqref{eq:balance} by periodic quadrature at six exponents on each, with five
negative controls on each model and further checks of
Corollary~\ref{cor:balance} and its static limit; the pointwise identity
\eqref{eq:pwbalance} is verified in the same way on two further models, built
so that $V$ is divergence free \emph{and} tangent to the level sets of $f$,
together with \eqref{eq:pwdeficit} on all four models and a control in which
$V$ is divergence free but does not preserve $f$, so that
\eqref{eq:pwbalance} fails ($113$ checks). The
verification is a check of the identity on models, not a proof of it; the proof
is the one given above. The budget \eqref{eq:budget} that
\eqref{eq:pwbalance} forms with the perfect square is checked separately, in
exact rational arithmetic on flux sampled in both sectors: the square is
recomputed there against the spinorial $\Delta$, the substitution and its two
saturation loci are verified with the loci realised on explicit non-zero flux,
and the integrated form \eqref{eq:budgetint} is checked as a moment of
\eqref{eq:budget} against a finite positive measure, with four negative
controls --- the unweighted square, the opposite sector sign in $\sigma$, the
coefficient $1/107$, and the balance with the $\Delta$ term omitted
($150$ checks).
\end{remark}

\begin{remark}[Provenance]
\label{rem:provenance}
Neither \eqref{eq:master} nor \eqref{eq:divmaster} is new. Both are contained
in the identities (6.9) of \cite{11index}, the ones that make the three
bilinear Killing vectors of that reference close on $\mathfrak{sl}(2,\Real)$.
Writing $f=\nrm{\phi_-}^2$, their fourth member is
$V=-(fh+\mathrm{d}f)$ and their first and third combine into
$-2\nrm{\phi_+}^2-h^iV_i+\Delta f=0$; eliminating $V$ gives
$(\Delta+h^2)f+h^i\tn_if=2\nrm{\phi_+}^2$, which is \eqref{eq:master}, and
$\nrm{\phi_+}$ is constant by \cite[(4.8)]{11index}. Feeding the same
expression for $V$ into $\tn^iV_i=0$, which holds because $V$ is Killing
\cite[(6.8)]{11index}, gives \eqref{eq:divmaster}. The corresponding statement
in minimal $D=5$ supergravity, $\tn_i(h^2+2\Delta)=0$, is the starting point of
the case list of \cite[App.~D]{KayaniThesis}. The derivation above is
independent --- it uses only the two $\eta_-$ identities of
Section~\ref{sec:minusid} and the pairing, and never introduces $V$ --- and it
makes the constant manifestly a norm, which is what the corollaries need; but
the identities themselves should be attributed to \cite{11index}. Which of the
consequences drawn from them below are new, and against which sources, is
settled in one place: the status list of Section~\ref{sec:status}.
\end{remark}

\begin{remark}[On the hypothesis $N_->0$]
\label{rem:hypothesis}
The theorem uses an $\eta_-$ Killing spinor, not merely a supersymmetry. The
counting of \cite{11index} gives $N=2N_-$ for $M$-horizons, so $N_->0$ whenever
the horizon is supersymmetric at all; we state the hypothesis explicitly rather
than import that result, since it is the one place where the argument is not
self-contained. Nothing is needed from the $\eta_+$ sector beyond the fact that
$\Gamma_+\Theta_-\eta_-$ lies in it and that its norm is constant.
\end{remark}

\begin{corollary}[Saturation, supersymmetrically]
\label{cor:static}
Under the hypotheses of Theorem~\ref{thm:master}, if $h\equiv0$ on $\Sect$
then $\nrm{\eta_-}$ is constant and $\Delta$ is the constant
$c/\nrm{\eta_-}^2$.
\end{corollary}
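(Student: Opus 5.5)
With $h\equiv0$, Proposition~\ref{prop:staticbosonic} already makes $\Delta$ a constant, and it uses no supersymmetry. The plan is to feed this into the master identity \eqref{eq:master}, which then reads $\Delta f=c$ pointwise. Constancy of $f=\nrm{\eta_-}^2$ follows at once when $\Delta>0$, and the remaining work is the case $\Delta=0$.

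The steps, in order, are these.

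\emph{Step 1.} Substitute $h\equiv0$ into \eqref{eq:master}. The terms $h^2f$ and $h^i\tn_if$ drop out, leaving $\Delta f=c$ at every point.

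\emph{Step 2.} Invoke Proposition~\ref{prop:staticbosonic}: since $h\equiv0$, $\Delta$ is a constant on $\Sect$, with $\Delta\ge0$ by Theorem~\ref{thm:master}.

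\emph{Step 3, case $\Delta>0$.} Then $f=c/\Delta$ is constant, and dividing back gives $\Delta=c/f$.

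\emph{Step 4, case $\Delta=0$.} Here Step~1 only gives $c=0$ and says nothing about $f$. For this case I would use the divergence identity \eqref{eq:divmaster}, which with $h\equiv0$ becomes $\tn^i\tn_if=0$. On a compact connected $\Sect$ a harmonic function is constant: multiply by $f$, integrate by parts to get $\int_\Sect\lvert\tn f\rvert^2=0$. So $f$ is constant, and $\Delta=0=c/f$ holds trivially.

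There is a cleaner route that covers both cases at once. Use \eqref{eq:divmaster} first to get $f$ constant in all cases, then read $\Delta=c/f$ directly off \eqref{eq:master}. This avoids the case split and also recovers the constancy of $\Delta$ without going through Proposition~\ref{prop:staticbosonic}. I would present that version, noting in passing that it agrees with the bosonic proposition.

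The only real obstacle is remembering that \eqref{eq:master} alone cannot give constancy of $f$ when $c=0$. The fix is the divergence identity \eqref{eq:divmaster}, applied with compactness and connectedness of $\Sect$.
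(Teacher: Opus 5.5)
Your proposal is correct. The version you say you would present is exactly the paper's proof: apply \eqref{eq:divmaster} with $h\equiv0$ to get $\tn^i\tn_if=0$, conclude that $f$ is constant on the compact connected section, then read $\Delta=c/f$ off \eqref{eq:master} using $f>0$. The case-split route through Proposition~\ref{prop:staticbosonic} is also valid but unnecessary, and the paper itself notes that the constancy of $\Delta$ obtained here duplicates that bosonic proposition.
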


\begin{proof}
With $h\equiv0$, \eqref{eq:divmaster} reads $\tn^i\tn_if=0$, so $f$ is a
harmonic function on a compact connected manifold and is therefore constant.
Then \eqref{eq:master} reads $\Delta f=c$ with both $f$ and $c$ constant and
$f>0$, so $\Delta$ is the constant $c/f$.
\end{proof}

\noindent The conclusion $\Delta=\const$ duplicates
Proposition~\ref{prop:staticbosonic}, which needs no Killing spinor; the
supersymmetric route is recorded because it is two lines, because it exhibits
the constant as the ratio $\nrm{\eta_+}^2/\nrm{\eta_-}^2$, and because it
delivers as a by-product something the bosonic argument does not: on the
saturation locus the divergence identity degenerates to Laplace's equation and
forces $\nrm{\eta_-}$ to be constant. That is the one place in this paper where
the constancy discussed in Remark~\ref{rem:notconst} is a conclusion rather
than a hypothesis.

\begin{remark}[Two branches, and the names we give them]
\label{rem:static-terminology}
In the literature a near-horizon geometry is called static when $\mathrm{d}h=0$
with $h$ exact, which by \cite[(6.11)]{11index} is the branch $V=0$ and gives,
after $r\to\Delta r$, a \emph{warped} product $AdS_2\times_w\Sect$; these are
the horizons classified in \cite{staticM}, and $\Delta$ there need not be
constant. The hypothesis of Corollary~\ref{cor:static} is the strictly stronger
$h\equiv0$, which forces $\Delta$ to be constant and the product to be direct.
The two conditions are related, and the relation is exactly where the constancy
of $\nrm{\eta_-}$ enters. By Proposition~\ref{prop:conconx} the branch $V=0$ is
$\tn_i f=-h_if$, i.e.\ $h=-\mathrm{d}\log f$; \eqref{eq:master} then reads
$\Delta f=c$, so that
\begin{equation}
V\equiv0\quad\Longleftrightarrow\quad h=-\mathrm{d}\log\nrm{\eta_-}^2
\quad\text{and}\quad \nrm{\eta_-}^2=c/\Delta\,.
\label{eq:staticbranch}
\end{equation}
This is \cite[(6.11)]{11index} verbatim: there the branch $V=0$ is shown to
give $\Delta\nrm{\phi_-}^2=2\nrm{\phi_+}^2$ and $h=\Delta^{-1}\mathrm{d}\Delta$,
which is \eqref{eq:staticbranch} with $c=2\nrm{\phi_+}^2$; the identification of
that branch with the static, warped $AdS_2$ horizons is \cite{staticM}, and the
attribution is entered in the status list of Section~\ref{sec:status}. What the
variable $f=\nrm{\eta_-}^2$ buys is that $h=-\mathrm{d}\log f$ and $\Delta f=c$
become two readings of one equation, and it is the variable the rest of this
paper runs in. The warped case $h=\Delta^{-1}\mathrm{d}\Delta\neq0$ sits
\emph{inside} $V\equiv0$, and $h\equiv0$ is the sub-case of it on which the
warp factor is constant. To keep the two apart we fix the following names, used consistently from
here on: the \emph{static branch} is $V\equiv0$, equivalently
$h=-\mathrm{d}\log f$, the warped $AdS_2$ case of \cite{staticM,KimPark}, on
which $h$ need not vanish; the \emph{saturation branch} is the strictly smaller
$h\equiv0$, on which the product is direct and which is the saturation locus of
the mean-value bound \eqref{eq:boundeq}. On the static branch $\nrm{\eta_-}$ is
constant if and only if $\Delta$ is, if and only if the horizon is on the
saturation branch.
\end{remark}

\begin{proposition}[The static branch is a warped product]
\label{prop:staticwarped}
On the static branch $V\equiv0$, with $f=\nrm{\eta_-}^2$ and $c=\nrm{\eta_+}^2$:
\begin{enumerate}[label=(\roman*),leftmargin=2.2em]
\item $\Delta=c/f$ and $\mathrm{d}h=0$;
\item the substitution $r=f\rho$ turns the near-horizon metric \eqref{eq:gnc}
into
\begin{equation}
\label{eq:staticwarped}
\mathrm{d}s^2 = f\,\big(2\,\mathrm{d}u\,\mathrm{d}\rho
- c\rho^2\,\mathrm{d}u^2\big) + \gamma_{ij}\,\mathrm{d}y^i\mathrm{d}y^j ,
\end{equation}
an $AdS_2$ of radius $c^{-1/2}$ warped by $f$ over $\Sect$, with no cross term;
\item the twist divergence \eqref{eq:divh} reduces to the single semilinear
equation $\tn^2f=Qf-2c$, and $\mathcal{K}$ of \eqref{eq:Kdef} vanishes
identically, so the Komar charge \eqref{eq:komarcharge} is zero and
$\int_\Sect Qf=2c\,\mathrm{Vol}(\Sect)$;
\item $\Delta+h^2=c/f+\nrm{\tn f}^2/f^2$ is constant only where $f$ is.
\end{enumerate}
\end{proposition}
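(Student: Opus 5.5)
The plan is to work entirely from the static-branch characterisation \eqref{eq:staticbranch}: $V\equiv0$ means $\tn f=-hf$, i.e.\ $h=-\mathrm{d}\log f$. Substituting this into the master identity \eqref{eq:master} gives $(\Delta+h^2)f - h^2 f = c$, hence $\Delta f=c$ and $\Delta=c/f$. Since $h$ is exact, $\mathrm{d}h=0$. This proves (i).

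For (ii), substitute $r=f\rho$ into \eqref{eq:gnc}. Then $\mathrm{d}r=f\,\mathrm{d}\rho+\rho\,\mathrm{d}f$, and $rh=-\rho\,\mathrm{d}f$, so the $\rho\,\mathrm{d}f$ terms cancel and
\[
\bbe^-=f\,\mathrm{d}\rho-\tfrac12 f^2\rho^2\Delta\,\mathrm{d}u = f\,\mathrm{d}\rho-\tfrac12 cf\rho^2\,\mathrm{d}u .
\]
With $\bbe^+=\mathrm{d}u$ this gives $2\bbe^+\bbe^-=f(2\,\mathrm{d}u\,\mathrm{d}\rho-c\rho^2\mathrm{d}u^2)$, which is \eqref{eq:staticwarped}. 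The bracket is $AdS_2$ in Poincar\'e-type null coordinates with $R=-2c$, i.e.\ radius $c^{-1/2}$ (for $c>0$), and there is no cross term.

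For (iii), use \eqref{eq:masterelliptic}, which already reads $\tn^2f=(Q-\Delta)f-c$. With $\Delta f=c$ this becomes $\tn^2 f=Qf-2c$. Alternatively, compute \eqref{eq:divh} directly: $\tn^ih_i=-\tn^2f/f+\nrm{\tn f}^2/f^2$ and $h^2=\nrm{\tn f}^2/f^2$, so \eqref{eq:divh} gives $-\tn^2f/f=2c/f-Q$, the same equation. This consistency check is worth recording. Integrating over compact $\Sect$ kills the Laplacian and gives $\int Qf=2c\,\mathrm{Vol}(\Sect)$. The vanishing of $\mathcal{K}$ and of the Komar charge refers to \eqref{eq:Kdef} and \eqref{eq:komarcharge}, which are defined later. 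I expect $\mathcal{K}$ to be built from $V$ (the Komar integrand should involve $\lvert V\rvert^2/f$, or $V$ paired with flux), so it vanishes identically once $V\equiv0$. This is the step I expect to be the main obstacle, since it depends on a definition not yet given. I would verify that every term of $\mathcal{K}$ carries a factor $V$ or $f\mathcal{A}=-V$. If some term is instead $\mathrm{d}h$-valued, it vanishes by (i).

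For (iv), write $\Delta+h^2=c/f+\nrm{\tn f}^2/f^2$ from (i) and $h=-\mathrm{d}\log f$. Suppose this is constant on an open set where $f$ is not constant. Then (iii) and the constant-sum condition would have to hold together, and I would argue by contradiction: on the open set where $\tn f\ne0$, differentiate along $\tn f$ to get a first-order ODE for $f$ along gradient lines. The statement as written only needs the pointwise observation that the expression is a function of $f$ and $\lvert\tn f\rvert$ that reduces to the constant $c/f$ exactly where $\tn f=0$. I would present it that way. The strong reading, that constancy of $\Delta+h^2$ on compact $\Sect$ forces constant $f$, follows because at a maximum of $f$ the expression equals $c/f_{\max}$ and at a minimum it equals $c/f_{\min}$. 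For $c>0$ these agree only if $f$ is constant.
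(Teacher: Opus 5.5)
Your proposal is correct and follows the paper's proof step for step: (i) comes from \eqref{eq:master} with $h=-\mathrm{d}\log f$; (ii) is the same substitution $r=f\rho$; and (iii) puts $h=-\mathrm{d}\log f$ into \eqref{eq:divh} (equivalently \eqref{eq:masterelliptic}) and integrates. In (iii), $\mathcal{K}$ vanishes exactly as you guessed, since $K_i=2fh_i+2\tn_if=-2V_i$ by \eqref{eq:Kdef}.

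For (iv) the paper only reads the formula off (i), so your extremum argument on compact $\Sect$ is a genuine sharpening: the expression equals $c/f_{\max}$ and $c/f_{\min}$ at the extrema of $f$, and for $c=0$ a constant value must equal its value $0$ at any critical point. That is the argument to keep. The local gradient-line ODE you sketch would not give a contradiction, because $c/f+\lvert\tn f\rvert^2/f^2=\const$ has non-constant local solutions.
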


\begin{proof}
(i) is \eqref{eq:master} with $h=-\mathrm{d}\log f$, which cancels the gradient
term against $h^2f$, and $h$ exact. (ii) is a coordinate change: with
$\mathrm{d}r=f\,\mathrm{d}\rho+\rho\,\tn_if\,\mathrm{d}y^i$ the bracket
$\mathrm{d}r+rh_i\mathrm{d}y^i$ collapses to $f\,\mathrm{d}\rho$, and
$-\tfrac12r^2\Delta=-\tfrac12f\rho^2c$ by (i). (iii): substituting
$h=-\mathrm{d}\log f$ in \eqref{eq:divh} and multiplying by $-f$ gives the warp
equation, whose integral over the closed section is
$\int Qf=2c\,\mathrm{Vol}$; $K_i=2fh_i+2\tn_if=0$ is immediate. (iv) is (i).
Verified in Appendix~\ref{app:verify}, including the coordinate change on the
full eleven-dimensional metric.
\end{proof}

\begin{remark}
\label{rem:kimparkbranch}
Statement (ii) is the form in which the warped static horizons of
\cite{KimPark} are written, as characterised in \cite[fn.~4]{Mhorizons}, and
(iii) is the equation their warp factor satisfies; the branch conditions of Proposition~\ref{prop:conconx} therefore
reproduce that system rather than merely being consistent with it. Statement
(iv) is why \eqref{eq:deltaplush} does not apply to those solutions
(Remark~\ref{rem:whynotconst}), and $f$ constant returns the saturation branch.
The converse question --- whether a compact section carrying a non-constant
solution of $\tn^2f=Qf-2c$ with $Q$ realised by fluxes solving
\eqref{eq:Rij}, \eqref{eq:divX} and \eqref{eq:divY} exists in $D=11$ ---
is not settled here: the scalar equation alone has non-constant solutions on
any closed manifold --- given any positive non-constant $f$, set
$Q:=(\tn^2f+2c)/f$ --- and one such is exhibited explicitly; but that is not
the same as a supergravity solution, since $Q$ must in addition be realised by
fluxes solving \eqref{eq:Rij}, \eqref{eq:divX} and \eqref{eq:divY}.
\end{remark}

\begin{theorem}[Rigidity at a single point]
\label{thm:rigid}
Under the hypotheses of Theorem~\ref{thm:master}, if $\Delta$ and $h$ both
vanish at one point of $\Sect$, then $c=0$ and therefore
$\Theta_-\eta_-\equiv0$; consequently $\Delta\equiv0$, $h\equiv0$,
$X=Y=0$, $\Sect$ is Ricci-flat and the near-horizon geometry is locally
$\Real^{1,1}\times\Sect$. In four steps,
\begin{equation}
\Delta(p)=h(p)=0
\;\Longrightarrow\; c=\nrm{\eta_+}^2=0
\;\Longrightarrow\; \Theta_-\eta_-\equiv0
\;\Longrightarrow\; X=Y=h=\Delta=0
\;\Longrightarrow\; \tilde R_{ij}=0\,.
\label{eq:rigidchain}
\end{equation}
\end{theorem}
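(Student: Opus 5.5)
The plan is to follow the chain \eqref{eq:rigidchain} link by link, spending the real effort only on the first two; the remaining links are quoted.

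\emph{First link.} Evaluate the master identity \eqref{eq:master} at the point $p$. Since $\Delta(p)=0$ and $h(p)=0$, the factor $(\Delta+h^2)(p)$ vanishes. The gradient term $h^i\tn_if$ also vanishes at $p$, because it is linear in $h(p)$ and $f$ is smooth. Nothing is needed about $\tn f(p)$. Hence $c=0$ at $p$. By Theorem~\ref{thm:master}, $c=\nrm{\eta_+}^2$ is a constant on the compact connected $\Sect$ via \eqref{eq:kernels}, so $c=0$ everywhere.

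\emph{Second link.} By Lemma~\ref{lem:pairing}, $\nrm{\Gamma_+\Theta_-\eta_-}^2=4\nrm{\Theta_-\eta_-}^2$, so $c=0$ gives $\Theta_-\eta_-=0$ at every point. Equivalently, \eqref{eq:pointwisenormminus} has right-hand side $c/1$ up to that normalisation, and its equality clause gives the same conclusion. This is the step converting one point into a global spinorial statement. It is immediate here because all of the work is hidden in the constancy of $\nrm{\eta_+}$, i.e.\ in the Hopf maximum principle applied to \eqref{eq:maxprin}.

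\emph{Third link.} With $\Theta_-\eta_-\equiv0$, I would invoke the kernel theorem of \cite[\S5]{11index} (referred to as Remark~\ref{rem:kernels}), which gives $X=Y=h=0$ on a compact section. A self-contained route to part of this is also available. Corollary~\ref{cor:balance} at $c=0$ already gives $\Delta\equiv0$ and $V\equiv0$, so $h=-\mathrm{d}\log f$. Then the weighted identity \eqref{eq:weightedbound} gives $\int_\Sect Qf=0$ with $f>0$ and $Q\ge0$. Hence $Q=\tfrac13Y^2+\tfrac1{72}X^2\equiv0$, so $X=Y=0$. To get $h=0$, use \eqref{eq:divh} with $\Delta=X=Y=0$, which gives $\tn^ih_i=h^2$; integrating yields $\int h^2=0$, so $h\equiv0$. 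This alternative avoids the kernel theorem entirely, and I would present it as the main argument, citing \cite{11index} for comparison.

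\emph{Last link.} Substituting $h=X=Y=0$ into \eqref{eq:Rij} gives $\tilde R_{ij}=0$. With $\Delta=0$ and $h=0$, the metric \eqref{eq:gnc} becomes $2\,\mathrm{d}u\,\mathrm{d}r+\gamma$, which is locally $\Real^{1,1}\times\Sect$.

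The only delicate point I anticipate is in the first step: confirming that $f$ is finite and smooth at $p$, and that $c$ is genuinely constant even when $\eta_+$ might vanish. Both are supplied by Theorem~\ref{thm:master} (which gives $f>0$ smooth and $\nrm{\eta_+}$ constant whether or not it vanishes), so I expect no real obstacle.
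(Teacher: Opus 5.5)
Your first, second and fourth links coincide with the paper's proof. The difference is the third link. There the paper simply quotes Remark~\ref{rem:kernels} (the kernel theorem of \cite[\S5]{11index}), whereas you chain Corollary~\ref{cor:balance}, the weighted identity \eqref{eq:weightedbound} and an integration of \eqref{eq:divh}.

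As a proof of the theorem from the paper's stated results this is valid. However, your claim that it ``avoids the kernel theorem entirely'' is not right as written. Corollary~\ref{cor:balance} needs $\Delta\ge0$. In the case at hand $c=0$ means $\eta_+=\Gamma_+\Theta_-\eta_-=0$, so the positivity clause of Theorem~\ref{thm:master} cannot come from \eqref{eq:pointwisenorm}. The paper covers that case only by appeal to ``the flux-free case'', i.e.\ to Remark~\ref{rem:kernels} itself. The perfect square of Theorem~\ref{thm:spin7square} does not help either: with no $\eta_+$ available you would need its $\eta_-$ version, which rests on \eqref{eq:htoflux} and hence on $\tn f=0$.

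The repair costs one line, and you already have the ingredients. Put $\Theta_-\eta_-\equiv0$ into \eqref{eq:pointwisesplitminus} to get $\Delta f\equiv0$, hence $\Delta\equiv0$ with no sign argument; this is also the first step of Remark~\ref{rem:kernels}.

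After that repair, your route is the integrated form of the kernel argument rather than a new mechanism. With $c=0$ and $\Delta=0$, \eqref{eq:masterelliptic} is exactly the equation $\tn^2f=Qf$ of Remark~\ref{rem:kernels}, and your $\int_\Sect Qf=0$ is its integral. The only remaining difference is that you obtain $h\equiv0$ by integrating \eqref{eq:divh}, instead of from $\tn f=-hf$ and the constancy of $f$. Your balance step ($V\equiv0$, $h=-\mathrm{d}\log f$) is never used.

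If you want a route that needs no information about $\Delta$ at all, take Corollary~\ref{cor:bochnerint}. The identity $\int_\Sect(h^2+Q)f^2=2c\int_\Sect f$ follows from \eqref{eq:master}, \eqref{eq:divmaster} and \eqref{eq:divh} by integration by parts alone. So $c=0$ gives $h=X=Y=0$ at once, and then \eqref{eq:divh} gives $\Delta\equiv0$.

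The paper's citation buys brevity and a clean hand-off to a known result. Your version, once repaired, buys self-containment within the identities of Section~\ref{sec:warp}.
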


\begin{proof}
Evaluate \eqref{eq:master} at the point: every term on the left has a factor of
$\Delta$, of $h^2$ or of $h$, so the left-hand side vanishes and $c=0$. As $c$
is constant, $\nrm{\Theta_-\eta_-}^2=c/4$ vanishes identically by
\eqref{eq:pairnorm}. Remark~\ref{rem:kernels} then gives $h\equiv0$ and
$X=Y=0$, whence $\Delta\equiv0$ by \eqref{eq:pointwisebosonic} and
$\tilde R_{ij}=0$ by \eqref{eq:Rij}.
\end{proof}

\noindent Everything after the second sentence of that proof is \S5 of
\cite{11index}, where $\Theta_-\phi_-=0$ is shown to force
$\Delta=h=X=Y=0$ and the geometry to be locally $\Real^{1,1}\times\Sect$ with
$\Sect$ compact Ricci-flat, hence locally $S^1\times X^8$ with $X^8$ of special
holonomy --- the last step being the classical correspondence between parallel
spinors and reduced holonomy \cite{WangParallel}, which we quote and do not
re-derive. What is added here is the hypothesis under which that branch is
reached: not the vanishing of a spinor, which is not something one can inspect
in given horizon data, but the vanishing of $\Delta$ and $h$ at a single point
of $\Sect$. The bridge is that $c=4\nrm{\Theta_-\eta_-}^2$ is a
\emph{constant}, so a condition imposed at one point propagates.

\begin{corollary}[What follows if $\nrm{\eta_-}$ is constant]
\label{cor:normconst}
Under the hypotheses of Theorem~\ref{thm:master}, the following are equivalent:
$\nrm{\eta_-}$ is constant on $\Sect$; the Dirac current satisfies
$K_i+h_iK_+=0$; and the one-form $\tn_i\nrm{\eta_-}^2$ vanishes. When they
hold,
\begin{equation}
\Delta + h^2 \;=\; \frac{\nrm{\eta_+}^2}{\nrm{\eta_-}^2}\;=\;\const\;\ge\;0\,,
\qquad 0\le\Delta\le c'\,,\qquad 0\le h^2\le c'\,,
\label{eq:deltaplush}
\end{equation}
with $c':=c/\nrm{\eta_-}^2$, and $c'$ is pinned between the two flux invariants
of Theorem~\ref{thm:bound},
\begin{equation}
\tfrac12\Big(\tfrac13\overline{Y^2}+\tfrac1{72}\overline{X^2}\Big)
\;\le\; c' \;\le\;
\tfrac13\overline{Y^2}+\tfrac1{72}\overline{X^2}\,,
\label{eq:crange}
\end{equation}
the \emph{lower} bound being attained if and only if $h\equiv0$ and the
\emph{upper} if and only if $\Delta\equiv0$.
\end{corollary}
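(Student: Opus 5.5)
The plan is to treat the corollary as bookkeeping on top of Theorem~\ref{thm:master}, Theorem~\ref{thm:bound} and the pointwise positivity $\Delta\ge0$, together with the expression of the Dirac current through $f=\nrm{\eta_-}^2$.

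\emph{The equivalences.} That $\nrm{\eta_-}$ is constant and that the one-form $\tn_i\nrm{\eta_-}^2$ vanishes are the same statement on a connected $\Sect$. For the middle condition I would invoke Proposition~\ref{prop:conconx}, or more precisely the expression of the current it rests on. The component $K_+$ is a fixed multiple of $f$. The spatial components satisfy $K_i=2fh_i+2\tn_if$ up to that same normalisation, as in the proof of Proposition~\ref{prop:staticwarped}(iii). Equivalently, $K_i$ is $-2V_i$, with $V=-(fh+\mathrm{d}f)$ the fourth member of \eqref{eq:gpfamily}.

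In that normalisation $K_i+h_iK_+$ is a non-zero constant multiple of $\tn_if$. I expect the only real care to be needed here, in fixing the relative normalisation of $K_+$ and $K_i$, so that the $fh_i$ terms cancel exactly rather than leaving a multiple of $h_if$. I would check this against the lightcone convention $\Gamma_+^T\Gamma_+=4$ used in Lemma~\ref{lem:pairing}. Since $f>0$, the vanishing of $K_i+h_iK_+$ is then equivalent to $\mathrm{d}f=0$.

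\emph{The pointwise statements.} With $\tn f=0$, the gradient term of \eqref{eq:master} drops and $(\Delta+h^2)f=c$. Dividing by the constant $f>0$ gives $\Delta+h^2=c'$, a constant, and $c'\ge0$ because $c\ge0$. Theorem~\ref{thm:master} gives $\Delta\ge0$ pointwise, so $h^2=c'-\Delta\le c'$. Trivially $h^2\ge0$, so $\Delta=c'-h^2\le c'$. This yields \eqref{eq:deltaplush}.

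\emph{The range of $c'$.} I would average the pointwise relation $\Delta+h^2=c'$ over $\Sect$ and combine it with the mean-value identity \eqref{eq:integrated}. Write $\overline Q=\tfrac13\overline{Y^2}+\tfrac1{72}\overline{X^2}$, so that \eqref{eq:integrated} reads $2\overline\Delta+\overline{h^2}=\overline Q$. Subtracting $\overline\Delta+\overline{h^2}=c'$ gives $\overline\Delta=\overline Q-c'$. Subtracting $2(\overline\Delta+\overline{h^2})=2c'$ instead gives $\overline{h^2}=2c'-\overline Q$.

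The bounds then follow from positivity. From $\overline\Delta\ge0$ we get $c'\le\overline Q$. Because $\Delta$ is continuous and non-negative, equality holds iff $\Delta\equiv0$. From $\overline{h^2}\ge0$ we get $c'\ge\tfrac12\overline Q$, with equality iff $h\equiv0$, exactly as in the proof of Theorem~\ref{thm:bound}. This gives \eqref{eq:crange} with both equality cases.
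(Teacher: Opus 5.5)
Your proposal is correct and follows the paper's proof: the equivalence is exactly the identity $K_i+h_iK_+=2\tn_i\nrm{\eta_-}^2$ of Proposition~\ref{prop:conconx}. The normalisation you flag is already fixed there, with $K_i=-2V_i$ and $K_+=-2\nrm{\eta_-}^2$, so the $fh_i$ terms cancel as you expect. The pointwise range and the window \eqref{eq:crange} with both equality cases are obtained, as you do, by dropping the gradient term in \eqref{eq:master}, averaging $\Delta=c'-h^2$ against \eqref{eq:integrated}, and using $\Delta\ge0$ and $h^2\ge0$.
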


\begin{proof}
The equivalence is Proposition~\ref{prop:conconx}, which gives
$K_i+h_iK_+=2\tn_i\nrm{\eta_-}^2$ identically. Given it, the gradient term
drops out of \eqref{eq:master} and leaves $\Delta+h^2=c'$; the pointwise range
follows from $\Delta\ge0$ and $h^2\ge0$. For \eqref{eq:crange}, average
$\Delta=c'-h^2$ over $\Sect$ and substitute into \eqref{eq:integrated}: with
$F:=\tfrac13\overline{Y^2}+\tfrac1{72}\overline{X^2}$ one gets
$\overline{h^2}=2c'-F$ and $\overline{\Delta}=F-c'$, and both are non-negative,
with $\overline{h^2}=0$ iff $h\equiv0$ and $\overline{\Delta}=0$ iff
$\Delta\equiv0$ since $\Delta\ge0$ pointwise.
\end{proof}

\begin{remark}[Why the constancy fails in general]
\label{rem:whynotconst}
What would be needed can be said exactly. The system that
\eqref{eq:master} and \eqref{eq:divmaster} impose on $(f,h,\Delta)$ is
underdetermined: given any positive non-constant $f$ on $\Sect$, setting
$h=-\mathrm{d}\log f$ and $\Delta=c/f$ solves both exactly, for any constant
$c>0$, and the implied value of $\tfrac13Y^2+\tfrac1{72}X^2=2\Delta+h^2-\tn^ih_i$
is positive for $f$ close enough to a constant, so \eqref{eq:divh} does not
exclude it either. This is the branch \eqref{eq:staticbranch}. The identities
of this paper therefore cannot force $\nrm{\eta_-}$ to be constant, and no
independent argument can: the branch is inhabited. The warped $AdS_2$ solutions
of \cite{KimPark} are static $M$-horizons with $h^2$ non-constant
\cite[fn.~4]{Mhorizons}, and they are the electric, $SU(4)$-structure class of
the static classification \cite{staticM}; for them $h=\Delta^{-1}\mathrm{d}\Delta$
is non-zero and $f=\nrm{\eta_-}^2=c/\Delta$ is non-constant. Consequently
\eqref{eq:deltaplush} is genuinely conditional: it is false on that branch, and
the results of Corollary~\ref{cor:normconst} hold only under the constancy
hypothesis, equivalently $V=-fh$, which excludes that branch except in its
sub-case $h\equiv0$. A
one-dimensional model of the underdetermination is checked symbolically in
Appendix~\ref{app:verify}.
\end{remark}

\begin{remark}[Asymmetry of the two kernels]
\label{rem:kernels}
The kernels of $\Theta_+$ and $\Theta_-$ are not on the same footing. That a
non-trivial $\Ker\Theta_-$ forces the horizon to be flux-free is \S5 of
\cite{11index}, quoted here because the two corollaries above use it; in the
present notation the route is short. If $\Theta_-\eta_-=0$ then
$\xi^-\eta_-=0$ gives $\Delta=0$ and \eqref{eq:vectoridminus} gives
$\tn_i\nrm{\eta_-}^2=-h_i\nrm{\eta_-}^2$, whose divergence, after eliminating
$\tn^ih_i$ with \eqref{eq:divh}, is
$\tn^2\nrm{\eta_-}^2=(\tfrac13Y^2+\tfrac1{72}X^2)\nrm{\eta_-}^2\ge0$; on a
compact $\Sect$ the norm is then constant, the right-hand side vanishes and
$X=Y=h=0$. A non-trivial $\Ker\Theta_+$, by contrast, is not obstructed: it
says only $\Delta=0$, and \eqref{eq:vectorcond} is then satisfied identically.
This is as it must be, since $AdS_3$ horizons exist.
\end{remark}

\section{$\mathrm{Spin}(7)$ structure and the residual obstruction}
\label{sec:geom}

A nowhere-vanishing Killing spinor reduces the structure group of $\Sect$ to
$\mathrm{Spin}(7)$, and the identities of the previous section can then be
expanded on the spinor bilinears. This is the programme carried out for minimal
$D=5$ supergravity in Appendix~D of \cite{KayaniThesis}: expand the bosonic
identities on the bilinears, use the vector identity to express $h$ in terms of
the flux, and feed the result back into the warp identity. In $D=5$ the section
is three-dimensional and the chain closes onto a finite list of near-horizon
geometries. In $D=11$ the first two steps go through --- the twist is
determined algebraically by the flux, Theorem~\ref{thm:htoflux} --- and the
third does not: the expansion of $\Delta$ retains an irreducible
$\mathrm{Spin}(7)$ contraction of the four-form, Proposition~\ref{prop:noclosure}.
That is where the scalar mechanism behind the $D=5$ case list stops, and we say
why.

The rest of this section uses one piece of bookkeeping that the earlier sections
did not, so we set it out first, in the form in which it is actually used.

\subsection{How the \texorpdfstring{$\mathrm{Spin}(7)$}{Spin(7)} bookkeeping works}
\label{sec:spin7guide}

In $D=5$ the section is three-dimensional, the flux on it has three components,
and one simply writes them out. Here the section is nine-dimensional and the
flux at a point is $126+36=162$ numbers, so writing them out is not a method.
What replaces it is a sorting of those $162$ numbers into blocks that the
computation cannot mix, and the Killing spinor is what supplies the sorting.

At each point the spinor determines a unit vector $Z$ and a four-form $\varphi$
on the eight-plane $Z^\perp$ (Lemma~\ref{lem:spin7}). The subgroup of the
rotations of that eight-plane fixing $\varphi$ is $\mathrm{Spin}(7)$. Every
quantity we form below is built from the flux, the metric, $Z$ and $\varphi$
alone, so every quantity we form is $\mathrm{Spin}(7)$-invariant, and that
single observation is what makes the computation finite. Splitting the flux
along $Z$ and then splitting each piece into blocks that $\mathrm{Spin}(7)$
does not mix gives \eqref{eq:fluxsplit}--\eqref{eq:spin7split}; a block is
labelled by its dimension in boldface, which is the number of the $162$
components it holds:
\begin{center}
\begin{tabular}{@{}llll@{}}
\toprule
piece of the flux & lives in & blocks & components \\
\midrule
$w=\Pi X$ & $\Lambda^4(Z^\perp)$
  & $\mathbf 1\oplus\mathbf 7\oplus\mathbf{27}\oplus\mathbf{35}$ & $70$ \\
$\xi=i_ZX$ & $\Lambda^3(Z^\perp)$
  & $\mathbf 8\oplus\mathbf{48}$ & $56$ \\
$\mathcal{Y}=\Pi Y$ & $\Lambda^2(Z^\perp)$
  & $\mathbf 7\oplus\mathbf{21}$ & $28$ \\
$y=i_ZY$ & $Z^\perp$ & $\mathbf 8$ & $8$ \\
\bottomrule
\end{tabular}
\end{center}
The first two rows are the $126$ components of $X$ and the last two the $36$ of
$Y$. Three elementary facts are all that is used about this splitting.

\emph{The blocks are eigenspaces, so projecting onto one is a matrix.}
Contraction with $\varphi$ acts on each $\Lambda^p(Z^\perp)$ as a symmetric
operator with integer eigenvalues, and the blocks are its eigenspaces, with the
eigenvalues listed under \eqref{eq:spin7split}. ``The $\mathbf 7$ part of
$\mathcal{Y}$'' therefore means the image of $\mathcal{Y}$ under an explicit
spectral projector, and that is literally how it is computed in
Appendix~\ref{app:verify}: the projectors are built from the $\varphi$ of the
explicit representation and checked to be idempotent, complementary and of the
stated traces.

\emph{Equivariant maps preserve labels.} A linear map built from $\varphi$ alone
commutes with $\mathrm{Spin}(7)$, and such a map must kill any block whose label
does not occur in its target and send any block whose label occurs exactly once
to a multiple of the one copy there. That is Schur's lemma, and it is used only
in that form: it is why $\sigma$ of Theorem~\ref{thm:spin7square} annihilates
$\mathbf{21}$ and is a single number on $\mathbf 7$, the number being fixed by
evaluating on one element.

\emph{Invariants are square norms plus one pairing per repeated label.} An
invariant quadratic function of the flux can be a square norm of any block, or a
pairing of two blocks carrying the same label, and nothing else. Here
$\mathbf 7$ and $\mathbf 8$ each occur twice --- $\mathbf 7$ in $w$ and in
$\mathcal{Y}$, $\mathbf 8$ in $\xi$ and as $y$ --- and the other five labels
once, so there are nine square norms and exactly two pairings, eleven invariants
in total. Proposition~\ref{prop:spin7delta} is then the determination of eleven
rational coefficients rather than an expansion in $162$ variables, and this is
the step at which the $D=11$ computation becomes as finite as the $D=5$ one.

The payoff is visible in the table above once the results are in hand: the twist
reads only $(w_{\mathbf 1},\xi_{\mathbf 8},y)$, seventeen of the $162$
components; the warp factor reads only $(w_{\mathbf 7},\mathcal{Y}_{\mathbf 7})$,
fourteen; the remaining $131$ enter neither. A reader willing to take
\eqref{eq:spin7split} and the three facts above on trust can follow every
statement in this section. The first two are checked rather than assumed in the
verification --- the projectors with their idempotency, completeness and traces,
and the equivariance statements about $\sigma$, all in the explicit real
representation --- while the third is a proof, tested there only in the sense
that the over-determined linear system it predicts turns out to be consistent,
which it would not be if the eleven failed to span.

Finally, the symbols this section introduces, all of them defined where they
first appear and collected here for reference:
\begin{center}
\begin{tabular}{@{}lll@{}}
\toprule
symbol & what it is & defined in \\
\midrule
$Z$, $\varphi$ & unit vector and Cayley four-form of the spinor
  & \eqref{eq:Zphidef} \\
$Z^\perp$, $\Pi$ & the eight-plane orthogonal to $Z$, and its projector
  & \eqref{eq:fluxsplit} \\
$i_Z$ & contraction of a form with $Z$ & \eqref{eq:fluxsplit} \\
$w,\ \xi,\ \mathcal{Y},\ y$ & the four pieces of the flux $(X,Y)$
  & \eqref{eq:fluxsplit} \\
$\mathbf d$, $w_{\mathbf d}$ & a block and the part of $w$ in it, $\mathbf d$
  its dimension & \eqref{eq:spin7split} \\
$\lvert w_{\mathbf d}\rvert^2$ & its square norm, full contraction of indices
  & \eqref{eq:spin7split} \\
$\xi\lrcorner\varphi$ & the vector $\xi^{bcd}\varphi_{abcd}$
  & \eqref{eq:hsplit} \\
$\sigma$ & the one equivariant map $\Lambda^2(Z^\perp)\to\Lambda^4(Z^\perp)$
  built from $\varphi$ & \eqref{eq:sigmadef} \\
$W_{ij}$, $\tau_j$ & $X_i{}^{klm}\varphi_{jklm}$ and its $Z$-contraction
  & Proposition~\ref{prop:nablaZ} \\
$\boldsymbol\tau_{\mathbf 7},\ \boldsymbol\tau_{\mathbf 8},\
 \boldsymbol\tau_{\mathbf{48}}$ & the intrinsic torsion classes, unrelated to
 $\tau_j$ & Proposition~\ref{prop:torsion} \\
$s=\pm1$ & the sector sign, $+$ on $\eta_+$ & Lemma~\ref{lem:spin7} \\
\bottomrule
\end{tabular}
\end{center}
The upper sign in a formula carrying $\pm$ or $\mp$ always belongs to the
$\eta_+$ sector. The boldface $\boldsymbol\tau_{\mathbf d}$ of
Proposition~\ref{prop:torsion} and the vector $\tau_j$ of
Proposition~\ref{prop:nablaZ} are different objects that unavoidably share a
letter; they never occur in the same formula.

\subsection{The pointwise \texorpdfstring{$\mathrm{Spin}(7)$}{Spin(7)} structure of a Killing spinor}

\begin{lemma}[Bilinears of a $\mathrm{Spin}(9)$ Majorana spinor]
\label{lem:spin7}
Let $\eta$ be a non-zero real spinor of $\mathrm{Spin}(9)$ and set
\begin{equation}
Z_i = \nrm{\eta}^{-2}\ip{\eta}{\Gamma_i\eta}\,,\qquad
\varphi_{ijkl} = \nrm{\eta}^{-2}\ip{\eta}{\Gamma_{ijkl}\eta}\,.
\label{eq:Zphidef}
\end{equation}
Then, pointwise and with no field equation used,
\begin{gather}
Z^2 = 1\,,\qquad Z_i\Gamma^i\eta = \eta\,,\qquad
Z^i\varphi_{ijkl}=0\,,\qquad \varphi^2 = 336\,,
\label{eq:Zphi}\\
\nrm{\eta}^{-2}\ip{\eta}{\Gamma_{ijklm}\eta} = (Z\wedge\varphi)_{ijklm}
 = 5\,Z_{[i}\varphi_{jklm]}\,,\qquad
\nrm{\eta}^{-2}\ip{\eta}{\Gamma_{i_1\dots i_8}\eta}
 = s\,\epsilon_{i_1\dots i_8 j}Z^j\,,
\label{eq:Zphihigher}
\end{gather}
where $s=\pm1$ is the value of the volume element $\Gamma^{1\dots9}$ on the
sector concerned, $s=+1$ on $\eta_+$ and $s=-1$ on $\eta_-$. All remaining
bilinears vanish by anti-Hermiticity (Lemma~\ref{lem:ranks}) or are determined by
\eqref{eq:Zphihigher} through Hodge duality. Thus $Z$ is a unit vector field on
$\Sect$ and $\varphi$ is the Cayley four-form of the $\mathrm{Spin}(7)$
stabiliser of $\eta$, supported on the eight-plane $Z^\perp$.
\end{lemma}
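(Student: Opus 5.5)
The plan is to combine Fierz completeness with the transitivity of $\mathrm{Spin}(9)$ on $S^{15}$, so that every statement can be checked at one convenient spinor. Normalise $\nrm{\eta}=1$; the bilinears in \eqref{eq:Zphidef} scale homogeneously, so nothing is lost. By Lemma~\ref{lem:ranks} the only bilinears that can be non-zero have ranks $0,1,4,5,8,9$. The $16\times16$ real symmetric matrices have dimension $136=1+9+126$, and $\{1,\Gamma_i,\Gamma_{ijkl}\}$ is a basis of them: ranks $5,8,9$ are Hodge duals of ranks $4,1,0$ through the volume element $\Gamma^{1\dots9}$. The ranks-$5,8,9$ bilinears are therefore determined by $Z$ and $\varphi$, which is the final sentence of the lemma.

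The central step is a Fierz identity. Expanding the rank-one projector $\eta\eta^T$ in this basis gives
\[
\eta\eta^T=\tfrac1{16}\Big(1+Z_i\Gamma^i+\tfrac1{4!}\varphi_{ijkl}\Gamma^{ijkl}\Big).
\]
Applying both sides to $\eta$ produces $\eta=\tfrac1{16}\big(\eta+Z_i\Gamma^i\eta+\tfrac1{24}\varphi_{ijkl}\Gamma^{ijkl}\eta\big)$. Taking traces of $(\eta\eta^T)^2=\eta\eta^T$ gives $1=\tfrac1{16}(1+Z^2+\tfrac1{24}\varphi^2)$, which is one scalar relation between $Z^2$ and $\varphi^2$. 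This is not enough on its own.

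To obtain each equation separately I would use equivariance. The bilinears are $\mathrm{Spin}(9)$-equivariant, and $\mathrm{Spin}(9)$ acts transitively on the unit sphere $S^{15}$ of the spin representation with stabiliser $\mathrm{Spin}(7)$, as quoted from \cite{Harvey}. So any identity that is invariant and polynomial in the bilinears only needs to be checked at a single spinor $\eta_0$. At $\eta_0$ the stabiliser fixes $Z$. The vector $9$ decomposes under $\mathrm{Spin}(7)$ as $1\oplus8$, so $Z$ lies along the invariant line. In the same way $\Lambda^4(\Real^9)$ contains exactly one invariant, the Cayley form on $Z^\perp$, so $\varphi$ is a multiple of it and $Z^i\varphi_{ijkl}=0$. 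The constants are then fixed by computing in the explicit real representation of Appendix~\ref{app:rep}:
\begin{itemize}
\item $Z=e_9$ and $\Gamma_9\eta_0=\eta_0$, which gives $Z^2=1$ and $Z_i\Gamma^i\eta=\eta$;
\item $\varphi$ has $14$ non-zero components of modulus one up to permutation, so $\varphi^2=14\cdot24=336$;
\item the trace identity above then reads $16=1+1+14$, which is consistent and serves as a check.
\end{itemize}

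For the higher ranks, $Z_i\Gamma^i\eta=\eta$ gives $\ip{\eta}{\Gamma_{ijklm}\eta}$ directly. Write $\Gamma_{ijkl}=Z\!\!\!/\,\Gamma_{ijkl}$ acting on $\eta$ and expand $Z\!\!\!/\,\Gamma_{ijkl}=\Gamma_{Zijkl}+4Z_{[i}\Gamma_{jkl]}$; the rank-three part is anti-Hermitian and drops out. Equivalently, duality maps rank $5$ to rank $4$ through $\Gamma^{1\dots9}$. Either way the result is $5Z_{[i}\varphi_{jklm]}$ in the orientation convention used here. The rank-eight bilinear is the Hodge dual of $\Gamma_j$ times the volume element. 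The volume element acts as the scalar $s$ on the sector, since $\mathrm{Cl}(9)$ has two inequivalent irreducibles distinguished by the sign of $\Gamma^{1\dots9}$ and the lightcone projection selects opposite signs for $\eta_\pm$. This gives $s\,\epsilon_{i_1\dots i_8j}Z^j$.

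The main obstacle is bookkeeping of signs and orientation, not any structural step: the factor $5$ and its sign in the rank-five relation, and the identification $s=\pm1$ with the sectors. I would fix both by direct evaluation in the explicit representation at $\eta_0$, which transitivity then propagates to every spinor.
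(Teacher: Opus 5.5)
Your proposal is correct and is essentially the paper's proof. Both use the transitivity of $\mathrm{Spin}(9)$ on $S^{15}$ with stabiliser $\mathrm{Spin}(7)$ to reduce every assertion to a single spinor checked in the explicit representation; your branching argument for $Z^i\varphi_{ijkl}=0$ and the Fierz check $16=1+Z^2+\tfrac1{24}\varphi^2$ stand in for the paper's two elementary steps (Cauchy--Schwarz for $Z_i\Gamma^i\eta=\eta$, and $Z^i\Gamma_{ijkl}=Z_a\Gamma^a\Gamma_{jkl}-3Z_{[j}\Gamma_{kl]}$ with Lemma~\ref{lem:ranks}). One small refinement: expanding $Z_a\Gamma^a\Gamma_{ijkl}$ only gives the contraction $i_Z$ of the rank-five bilinear, whereas $\{Z_a\Gamma^a,\Gamma_{ijklm}\}=10\,Z_{[i}\Gamma_{jklm]}$ gives all of $Z\wedge\varphi$ with no orientation convention involved (only the rank-eight relation depends on $s$ and $\epsilon$), though your fallback to evaluation at $\eta_0$ already covers this.
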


\begin{proof}
$\mathrm{Spin}(9)$ acts transitively on the unit sphere $S^{15}$ of its
$16$-dimensional Majorana representation with stabiliser $\mathrm{Spin}(7)$, the
orbit map $\eta\mapsto Z$ being the octonionic Hopf fibration $S^{15}\to S^8$;
so all the assertions are $\mathrm{Spin}(9)$-invariant statements about a single
spinor and may be checked at one point of the sphere, which is done in the
explicit representation (Appendix~\ref{app:verify}). Two of them do not need
that: given $Z^2=1$, the operator $Z_i\Gamma^i$ squares to one, so
$\nrm{Z\!\cdot\!\Gamma\,\eta}=\nrm{\eta}$, while
$\ip{\eta}{Z\!\cdot\!\Gamma\,\eta}=Z^2\nrm{\eta}^2=\nrm{\eta}^2$; equality in
Cauchy--Schwarz forces $Z\!\cdot\!\Gamma\,\eta=\eta$. And
$Z^i\varphi_{ijkl}=0$ then follows from
$Z^i\Gamma_{ijkl}=Z\!\cdot\!\Gamma\,\Gamma_{jkl}-3Z_{[j}\Gamma_{kl]}$ together
with Lemma~\ref{lem:ranks}.
\end{proof}

Equation $Z^2=1$ is the $D=11$ analogue of the relation used at the
corresponding point of the $D=5$ analysis \cite{KayaniThesis}, where the
section is three-dimensional, $Z$ is again a unit vector, and there is no
four-form to accompany it. The presence of $\varphi$ is the whole difference
between the two computations.

\subsection{The twist one-form in terms of the flux}

\begin{theorem}[$\mathrm{Spin}(7)$ twist--flux identity]
\label{thm:htoflux}
Let $\eta_\pm$ be Killing spinors of the two sectors, let
$(Z_\pm,\varphi_\pm)$ be the associated data \eqref{eq:Zphidef} and let
$s=\pm1$ be the sector sign there. Then, with no hypothesis beyond the spatial
Killing spinor equation and no compactness,
\begin{equation}
h_i = -\tfrac23 Y_{ij}Z_\pm^j
 \mp\Big(\tfrac1{72}Z_{\pm i}\,X^{jklm}\varphi_{\pm\,jklm}
 - \tfrac1{18}Z_\pm^j X_{jklm}\varphi_{\pm\,i}{}^{klm}\Big)
 + 2s\,\tn_i\log\nrm{\eta_\pm}^2\,.
\label{eq:htofluxgen}
\end{equation}
Wherever the norm of the spinor is stationary the gradient term drops and the
twist is given by the flux alone,
\begin{equation}
h_i = -\tfrac23 Y_{ij}Z_\pm^j
 \mp\Big(\tfrac1{72}Z_{\pm i}\,X^{jklm}\varphi_{\pm\,jklm}
 - \tfrac1{18}Z_\pm^j X_{jklm}\varphi_{\pm\,i}{}^{klm}\Big)\,,
\label{eq:htoflux}
\end{equation}
and contracting with $Z_\pm^i$, in which the $Y$-term and the second $X$-term
both drop out,
\begin{equation}
h_i Z_\pm^i = \mp\tfrac1{72}\,X^{jklm}\varphi_{\pm\,jklm}\,.
\label{eq:hZ}
\end{equation}
In the $\eta_+$ sector on a compact $\Sect$ the reduction is automatic, since
$\nrm{\eta_+}$ is then constant by \eqref{eq:kernels}, so \eqref{eq:htoflux}
and \eqref{eq:hZ} hold throughout. In the $\eta_-$ sector the norm need not be
constant, and the hypothesis under which they hold throughout is the constancy
hypothesis of Corollary~\ref{cor:normconst}, which by
Remark~\ref{rem:whynotconst} fails on an inhabited branch; on that branch
\eqref{eq:htoflux} still holds at every point where $\nrm{\eta_-}$ is
stationary, and \eqref{eq:htofluxgen} everywhere.
The twist is therefore determined pointwise by the flux, the
$\mathrm{Spin}(7)$ data and the gradient of the spinor norm, with no derivative
of the flux and no integration constant, and algebraically by the flux and the
$\mathrm{Spin}(7)$ data alone wherever that gradient vanishes.
We are not aware of this covariant algebraic form appearing previously; see
Remark~\ref{rem:htoflux-attrib} for what is in the literature.
The warp factor is not: the companion expansion \eqref{eq:Deltaexplicit} of
$\Delta$ in the same data retains an irreducible $\mathrm{Spin}(7)$ contraction
of $X$ that is independent of $(\Delta,h^2,X^2,Y^2)$ and of $Z$, by
Proposition~\ref{prop:noclosure}.
\end{theorem}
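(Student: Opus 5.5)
The plan is to read the identity off the vector identities of the two sectors, \eqref{eq:vectorid} for $\eta_+$ and \eqref{eq:vectoridminus} for $\eta_-$, by expanding the bilinear $\ip{\eta_\pm}{\Gamma_i\Theta_\pm\eta_\pm}$ on the $\mathrm{Spin}(7)$ data of Lemma~\ref{lem:spin7}. Both vector identities are pointwise consequences of $\nabla^{(\pm)}_i\eta_\pm=0$ alone, so no compactness and no field equation enters, as the statement requires. I would write them uniformly as
\begin{equation*}
\tn_i\nrm{\eta_\pm}^2=\mp\tfrac12 h_i\nrm{\eta_\pm}^2+2\Big(\ip{\eta_\pm}{\Gamma_i\Theta_\pm\eta_\pm}-\tfrac14h_i\nrm{\eta_\pm}^2\Big),
\end{equation*}
which is just \eqref{eq:vectorid} and \eqref{eq:vectoridminus} with the $h$-part of $\Gamma_i\Theta_\pm$ split off.

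First I would expand $\Gamma_i\Theta_\pm$ in the Clifford basis. The $h$-term gives $\tfrac14 h_i$ plus a rank-two piece. The $Y$-term $\pm\tfrac1{12}Y_{jk}\Gamma_i\Gamma^{jk}$ gives a rank-three piece and the rank-one piece $\pm\tfrac16 Y_{ik}\Gamma^k$. The $X$-term $\tfrac1{288}X_{jklm}\Gamma_i\Gamma^{jklm}$ gives $\tfrac1{288}X^{jklm}\Gamma_{ijklm}$ plus a rank-three piece. By Lemma~\ref{lem:ranks} the rank-two and rank-three pieces drop from the bilinear. Lemma~\ref{lem:spin7} then evaluates what is left: $\ip{\eta}{\Gamma^k\eta}=\nrm{\eta}^2Z^k$ and $\ip{\eta}{\Gamma_{ijklm}\eta}=5\nrm{\eta}^2Z_{[i}\varphi_{jklm]}$.

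Next I would use $Z^i\varphi_{ijkl}=0$ to expand the antisymmetrisation contracted with $X^{jklm}$. This should give $X^{jklm}(Z\wedge\varphi)_{ijklm}=Z_iX\!\cdot\!\varphi-4Z^jX_{jklm}\varphi_i{}^{klm}$. Hence the non-$h$ part of the bilinear is
\begin{equation*}
\nrm{\eta}^2\Big[\pm\tfrac16Y_{ik}Z^k+\tfrac1{288}\big(Z_iX\!\cdot\!\varphi-4Z^jX_{jklm}\varphi_i{}^{klm}\big)\Big].
\end{equation*}
Dividing by $\nrm{\eta_\pm}^2$ and solving for $h_i$ multiplies the bracket by $\mp4$. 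This produces $-\tfrac23 Y_{ij}Z^j$, with the sign uniform because the $Y$-sign of $\Theta_\pm$ and the sector sign of the $h$-term flip together. It produces $\mp(\tfrac1{72}Z_iX\!\cdot\!\varphi-\tfrac1{18}Z^jX_{jklm}\varphi_i{}^{klm})$ for the $X$-terms, and $\pm2\,\tn_i\log\nrm{\eta_\pm}^2=2s\,\tn_i\log\nrm{\eta_\pm}^2$ for the gradient. That is \eqref{eq:htofluxgen}.

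The specialisations are then immediate. At a stationary point of the norm the gradient term drops, giving \eqref{eq:htoflux}. Contracting with $Z^i$ kills $Y_{ij}Z^iZ^j$ by antisymmetry and $Z^i\varphi_{iklm}$ by \eqref{eq:Zphi}, and $Z^2=1$ leaves \eqref{eq:hZ}. On a compact section $\nrm{\eta_+}$ is constant by \eqref{eq:kernels}, so the $\eta_+$ reduction is automatic. The $\eta_-$ caveat is Corollary~\ref{cor:normconst} together with Remark~\ref{rem:whynotconst}.

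The main obstacle is sign bookkeeping rather than structure. Three things must be tracked consistently: the ordering convention in $\Gamma_i\Gamma^{jklm}=\Gamma_i{}^{jklm}+4\delta_i^{[j}\Gamma^{klm]}$; the fact that $Z_\pm$ and $\varphi_\pm$ are defined from spinors of opposite volume-element sign $s$ (which I expect not to affect the rank-one and rank-five bilinears, but must check); and the relative factor between the two vector identities. I would fix all three by evaluating at one point of $S^{15}$ in the explicit representation, as in the proof of Lemma~\ref{lem:spin7}, checking both sectors separately.
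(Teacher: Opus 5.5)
Your route is the paper's: expand the two vector identities, drop the rank-two and rank-three pieces by Lemma~\ref{lem:ranks}, substitute the rank-one and rank-five bilinears of Lemma~\ref{lem:spin7}, expand $X^{jklm}(Z\wedge\varphi)_{ijklm}$, and divide by $\tfrac14\nrm{\eta_\pm}^2$; your final coefficients agree with \eqref{eq:htofluxgen}.

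There is one slip. The uniform rewriting you display should carry $\pm\tfrac12 h_i\nrm{\eta_\pm}^2$, not $\mp$. As written, the upper sign gives $\tn_i\nrm{\eta_+}^2=2\ip{\eta_+}{\Gamma_i\Theta_+\eta_+}-h_i\nrm{\eta_+}^2$, which contradicts \eqref{eq:vectorid}, and solving that equation literally would reverse every sign in the result. The factors you actually use next, $\mp4$ on the bracket and $2s$ on the gradient, are the ones the corrected equation gives, so the result stands.
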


\begin{proof}
Expand \eqref{eq:vectorid} and \eqref{eq:vectoridminus} using the
bilinear-visible parts of $\Gamma_i\Theta_\pm$ quoted in the proofs of
Theorem~\ref{thm:pointwise} and Proposition~\ref{prop:vectorminus}: both read
\begin{equation}
\tfrac14 h_i\nrm{\eta_\pm}^2 + \tfrac16 Y_{ij}\ip{\eta_\pm}{\Gamma^j\eta_\pm}
\pm\tfrac1{288}X^{jklm}\ip{\eta_\pm}{\Gamma_{ijklm}\eta_\pm}
= \tfrac{s}2\,\tn_i\nrm{\eta_\pm}^2\,,
\end{equation}
the upper sign belonging to the $\eta_+$ sector, the inhomogeneous term being
the one \eqref{eq:vectoridminus} carries in the $\eta_-$ sector and
\eqref{eq:vectorid} in the $\eta_+$ sector; setting it to zero is
\eqref{eq:vectorcond} and \eqref{eq:vectorcondminus} under constancy of the
norm. Substituting
\eqref{eq:Zphidef} and \eqref{eq:Zphihigher} and using
$X^{jklm}(Z\wedge\varphi)_{ijklm}
= Z_i X^{jklm}\varphi_{jklm} - 4Z^jX_{jklm}\varphi_i{}^{klm}$ gives
\eqref{eq:htofluxgen} after dividing by $\tfrac14\nrm{\eta_\pm}^2$; dropping
the gradient gives \eqref{eq:htoflux}, and contracting that with $Z^i_\pm$ and
using $Z^2_\pm=1$ and $Z_\pm^i\varphi_{\pm\,ijkl}=0$ gives \eqref{eq:hZ}.
\end{proof}

\begin{remark}[Provenance of \eqref{eq:htoflux}]
\label{rem:htoflux-attrib}
The input is not new: in the $\eta_+$ sector the displayed condition is (3.22)
of \cite{Mhorizons}, equivalently its (3.27) and (6.7) of \cite{11index},
where it appears in unevaluated bilinear form as the statement that
$\nrm{\phi_+}$ is constant. What is added here is its evaluation, which turns
an identity between bilinears into a covariant algebraic formula for $h$ in
terms of $(X,Y,Z,\varphi)$, and its $\eta_-$ counterpart, in which the same
evaluation applies to \eqref{eq:vectorcondminus} rather than to a constancy
statement. The nearest prior formula of the same shape is
\cite[(4.10)]{Mhorizons}, which solves the horizon Dirac equation for a
$\mathrm{Spin}(7)$ spinor and gives
$h=i_{e^\sharp}Z+\tfrac12\theta_\psi-\star_9(X\wedge\psi)
-\tfrac32(\star_9\mathrm{d}\star_9e^\sharp)e^\sharp$ with
$Z=Y-\mathrm{d}e^\sharp$ and $\theta_\psi$ the Lee form of the Cayley form, in
the notation of that reference; $(e^\sharp,\psi)$ there are our $(Z,\varphi)$.
That expression has the same flux content as \eqref{eq:htoflux} --- an $i_ZY$
term and a $\star_9(X\wedge\varphi)$ term --- but is \emph{differential},
carrying $\mathrm{d}e^\sharp$, the Lee form and
$\star_9\mathrm{d}\star_9e^\sharp$, because it solves the horizon Dirac
equation rather than the vector condition \eqref{eq:vectorcond}, which is
strictly stronger. What \eqref{eq:htoflux} adds is that on the KSE the
derivative terms are not needed at all. The nearest prior relation between the
two $\mathbf 7$'s of Theorem~\ref{thm:spin7square} is the accompanying
\cite[(4.11)]{Mhorizons}, which fixes the $\mathbf 7$ of $Z=Y-\mathrm{d}e^\sharp$
in terms of the $\mathbf 7$ of $X$ and of $\star_9\mathrm{d}\psi$; it too is
differential, it constrains the flux rather than evaluating $\Delta$, and it
does not exhibit a square.
There is a second, gauge-fixed ancestor: \cite[\S3.3.2, App.~A]{Mhorizons}
solves the full KSE linear system in an adapted $\mathrm{Spin}(7)$ gauge, in
which $\varphi_+$ is the standard Cayley form of $1+e_{1234}$, and expresses
part of the flux --- including the twist --- in terms of the geometry.
Equation~\eqref{eq:htoflux} is the gauge-covariant form of that step, valid in
either sector and written in the bilinears rather than in a chosen frame.
The $N=2$ analysis of \cite{N4d11} determines $h$ in the opposite
direction, as a differential expression in the geometry --- a Lie derivative
along the vector dual to the singlet direction plus the Lee form of the
$\mathrm{Spin}(7)$ four-form --- rather than algebraically in the flux.
\end{remark}

This is the $D=11$ counterpart of the step which in $D=5$ reads
$h_i=\tfrac12\epsilon_i{}^{\ell_1\ell_2}\tilde F_{\ell_1\ell_2}$
\cite{KayaniThesis}: there the section is three-dimensional, the only
available bilinear is $Z$, and $h$ is the Hodge dual of the flux outright. Here
$h$ is still determined algebraically by the flux, but through the
$\mathrm{Spin}(7)$ form $\varphi$ as well. Substituting \eqref{eq:Zphidef} and
\eqref{eq:Zphihigher} into \eqref{eq:pointwiseexplicit} and
\eqref{eq:pointwiseminusexplicit} likewise makes the warp identities explicit:
in either sector
\begin{align}
\Delta &= -\tfrac14h^2 + \tfrac1{864}X^2 + \tfrac1{18}Y^2
 + \tfrac{s}{20736}\,\epsilon_{ijklmnpqa}Z^a X^{ijkl}X^{mnpq}
 - \tfrac1{288}X^{ijkl}X_{ij}{}^{mn}\varphi_{klmn}
\nonumber\\
&\quad \mp\tfrac1{54}X^{ijkl}Y_i{}^{m}\varphi_{jklm}
 - \tfrac1{36}Y^{ij}Y^{kl}\varphi_{ijkl}\,,
\label{eq:Deltaexplicit}
\end{align}
with $(Z,\varphi,s)$ those of the sector used and the upper sign belonging to
$\eta_+$. These are the $D=11$ analogues of the formula for $\Delta$ obtained
in the $D=5$ classification.

Once $h$ and $\Delta$ have both been eliminated in favour of the flux and the
$\mathrm{Spin}(7)$ data, the integrability of the twist can be asked for: what
does $\mathrm{d}h$ add? The answer is one scalar, and it carries the sector
sign.

\begin{corollary}[Integrability of the twist, and what it pins in the $\eta_-$
sector]
\label{cor:minuspointwise}
At a point where \eqref{eq:htoflux} holds, differentiating it and reducing the
result with the Bianchi identity $\mathrm{d}X=0$ and the flux equations
\eqref{eq:divX}--\eqref{eq:divY} gives
\begin{equation}
\tn^ih_i \;=\; s\Big(2\Delta + h^2 - \tfrac13Y^2 - \tfrac1{72}X^2\Big)\,,
\label{eq:divhsector}
\end{equation}
with $\Delta$ as in \eqref{eq:Deltaexplicit}, and that scalar is the whole
pointwise content of the elimination: on the remaining components of
$(\tn X,\tn Y)$ nothing survives. In the $\eta_+$ sector
\eqref{eq:divhsector} is \eqref{eq:divh} itself. In the $\eta_-$ sector it
carries the opposite sign, so if $\nrm{\eta_-}$ is constant, so that
\eqref{eq:htoflux} holds throughout $\Sect$, then adding \eqref{eq:divhsector}
to \eqref{eq:divh} forces
\begin{equation}
2\Delta + h^2 \;=\; \tfrac13Y^2 + \tfrac1{72}X^2\,,\qquad \tn^ih_i=0
\label{eq:minuspinned}
\end{equation}
pointwise on $\Sect$, which \eqref{eq:integrated} gives only in the mean.
Writing $Q:=\tfrac13Y^2+\tfrac1{72}X^2$ and combining \eqref{eq:minuspinned}
with $\Delta+h^2=c'$ of \eqref{eq:deltaplush} determines both scalars
pointwise from the flux,
\begin{equation}
\Delta = Q - c'\,,\qquad h^2 = 2c' - Q\,,
\label{eq:deltahfromQ}
\end{equation}
so that $\Delta\ge0$ and $h^2\ge0$ read as the pointwise sandwich
$c'\le Q\le 2c'$, of which \eqref{eq:crange} is the average.
\end{corollary}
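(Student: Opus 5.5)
We plan to take the divergence of \eqref{eq:htoflux} directly and reduce it with the spinorial and bosonic identities already established. Write $h_i=-\tfrac23Y_{ij}Z^j\mp\big(\tfrac1{72}Z_iX^{jklm}\varphi_{jklm}-\tfrac1{18}Z^jX_{jklm}\varphi_i{}^{klm}\big)$ and apply $\tn^i$. Three kinds of terms arise: derivatives of the flux, derivatives of the bilinears $(Z,\varphi)$, and cross terms.

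\emph{Flux derivatives.} These will be handled with the Bianchi identity $\mathrm{d}X=0$ and the flux equations \eqref{eq:divX}, \eqref{eq:divY}. The term $\tn^iY_{ij}Z^j$ becomes the Chern--Simons contraction of \eqref{eq:divY} against $Z$. Contracted against $Z^j\varphi_i{}^{klm}$, $\tn^iX_{jklm}$ is antisymmetrised by $\mathrm{d}X=0$ into a divergence, which \eqref{eq:divX} replaces by $\beta+h\lrcorner X+\star(Y\wedge X)$. The remaining derivative of the singlet $X\cdot\varphi$ along $Z$ is the delicate piece.

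\emph{Bilinear derivatives.} The KSE $\nabla^{(\pm)}\eta_\pm=0$ gives $\tn_iZ_j$ and $\tn_i\varphi_{jklm}$ algebraically in the flux, by differentiating \eqref{eq:Zphidef} and using $\tn\eta=-\Psi\eta$. The normalisation $\nrm{\eta}^{-2}$ contributes $\tn\log\nrm\eta^2$, which is set to zero at the point by hypothesis. The results are bilinears $\ip{\eta}{[\Gamma_{(k)},\Psi_i]_\pm\eta}$, which we re-express through Lemma~\ref{lem:spin7}. After this step $\tn^ih_i$ is a quadratic expression in $(h,X,Y,\beta)$ contracted with $Z,\varphi,\epsilon$.

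\emph{Reduction.} First we substitute \eqref{eq:htoflux} for every undifferentiated $h$, leaving a quadratic $\mathrm{Spin}(7)$ invariant of the flux plus terms linear in $\beta$ and in the leftover $\tn X$ components. We then show that the residual derivative-dependent pieces cancel. The natural route is to notice that they are exactly the $\Gamma^{(2)},\Gamma^{(3)}$ pieces of the integrability condition $\xi^\pm\eta_\pm=0$, or equivalently of the contracted condition $\xi_i\eta_+=0$ of Proposition~\ref{prop:B5}, paired with $\eta$. Finally, by the invariant-counting of \S\ref{sec:spin7guide}, the quadratic remainder is fixed by its eleven coefficients. These we match against $s(2\Delta+h^2-\tfrac13Y^2-\tfrac1{72}X^2)$, using \eqref{eq:Deltaexplicit} for $\Delta$ and $h^2$ from \eqref{eq:htoflux}. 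The sign $s$ is expected to enter through the $\epsilon$-bilinear in \eqref{eq:Zphihigher} and through the sector-dependent $h$-sign in $\Psi^{(\pm)}$, the same mechanism as in Remark~\ref{rem:divhterm}.

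\emph{Consequences and main obstacle.} The consequences are short. For $s=+1$, \eqref{eq:divhsector} coincides with \eqref{eq:divh}. For $s=-1$ with constant $\nrm{\eta_-}$, adding the two gives $2\tn^ih_i=0$, and subtracting gives the pinned relation \eqref{eq:minuspinned}. Combining this with $\Delta+h^2=c'$ from Corollary~\ref{cor:normconst} and solving the resulting $2\times2$ linear system gives \eqref{eq:deltahfromQ}, and the sandwich follows from $\Delta,h^2\ge0$. The main obstacle is the claim that ``nothing else survives'': that every term involving $\tn X$, $\tn Y$ beyond those absorbed by the field equations and the integrability conditions cancels exactly. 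We would first attempt this by the Schur argument, decomposing the leftover $\tn X$ components into $\mathrm{Spin}(7)$ modules and showing that no scalar can be built linearly from them except via the divergences already used. If that fails, we would fall back on explicit-representation verification in the style of Appendix~\ref{app:verify}.
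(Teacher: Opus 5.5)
\textbf{The overall pipeline is right; the reduction mechanism is not.} You differentiate \eqref{eq:htoflux}, eliminate $\tn Z$ and $\tn\varphi$ with the spatial Killing spinor equation, reduce the part linear in $(\tn X,\tn Y)$ with $\mathrm{d}X=0$ and the flux equations, and match the quadratic remainder. That is the paper's pipeline. The paper does the reduction as a numerical row-space test of the $1458$ components of $(\tn X,\tn Y)$ against the $219$ constraint rows, with the multiplier exhibited. Your step fails in two places.

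First, \eqref{eq:divX} cannot contribute to a scalar. It is $\Lambda^3$-valued, and $\Lambda^3\Real^9=\mathbf 8\oplus\mathbf{48}\oplus\mathbf 7\oplus\mathbf{21}$ has no $\mathrm{Spin}(7)$ singlet. So there is no divergence into which $\mathrm{d}X=0$ could convert $Z^j\varphi^{iklm}\tn_iX_{jklm}$, and $\beta$ must not appear in $\tn^ih_i$ at all.

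Second, the integrability conditions cannot absorb anything. In $\ip{\eta}{\,\cdot\,\eta}$ the rank-two and rank-three pieces of $\xi^\pm$ and $\xi_i$ vanish identically by Lemma~\ref{lem:ranks}. Their surviving parts bring in $\Delta$, $\tn h$ and $\tn\Theta_+$. They also hold on shell only through the Einstein residual and, for $\xi^\pm$, the $Fh$ residual of \eqref{eq:divh} (Propositions~\ref{prop:B1}, \ref{prop:B1minus}). So even a working version would not be the reduction the statement asserts. In the $\eta_-$ sector it would feed \eqref{eq:divh} into the identity that is then to be added to \eqref{eq:divh}.

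\textbf{There are in fact no residual derivative terms.} The $\tn X$-terms of $\tn^ih_i$ are $\mp\tfrac1{72}Z^i\varphi^{jklm}\tn_iX_{jklm}\pm\tfrac1{18}Z^j\varphi^{iklm}\tn_iX_{jklm}$. Contracting $5\tn_{[i}X_{jklm]}$ with $Z^i\varphi^{jklm}$ gives $Z^i\varphi^{jklm}\tn_iX_{jklm}-4Z^j\varphi^{iklm}\tn_iX_{jklm}$. Hence their sum is $\mp\tfrac5{72}Z^i\varphi^{jklm}\tn_{[i}X_{jklm]}$, which vanishes by the Bianchi identity. The $Z$-derivative of $X\!\cdot\!\varphi$ that you call delicate is exactly what cancels the other term. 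The only surviving derivative is $-\tfrac23Z^j\tn^iY_{ij}$, which the $Z$-component of \eqref{eq:divY} turns into a multiple of $I_5$.

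This also corrects your Schur step. The invariant linear functionals on $(\tn X,\tn Y)$ form a three-dimensional space: $\mathbf1\otimes\mathbf1$ and $\mathbf8\otimes\mathbf8$ on $\tn X$, and $\mathbf8\otimes\mathbf8$ on $\tn Y$. The constraints supply only two: the $Z\wedge\varphi$ component of $\mathrm{d}X$ and the $Z$ component of \eqref{eq:divY}. So an invariant scalar outside their span does exist, and ``nothing else survives'' means its coefficient vanishes, which is the cancellation above. Done this way, the linear part becomes a short hand computation in place of the paper's rank test. What remains is your eleven-coefficient match of the quadratic remainder, which is set up correctly but still needs $\tn\varphi$ from the Killing spinor equation.

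\textbf{Two side remarks.} Differentiating needs the norm stationary on an open set, not just at a point: at an isolated critical point \eqref{eq:htofluxgen} leaves an extra $2s\tn^2\log\nrm{\eta}^2$. The $\eta_-$ conclusions do not need \eqref{eq:divhsector} at all. For constant $f$, \eqref{eq:divmaster} reads $f\tn^ih_i=0$, so \eqref{eq:divh} gives \eqref{eq:minuspinned} directly, and your arithmetic then gives \eqref{eq:deltahfromQ} and the sandwich.
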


\begin{proof}
Both \eqref{eq:htoflux} and \eqref{eq:Deltaexplicit} are algebraic in the flux
and the $\mathrm{Spin}(7)$ data, so $\tn^ih_i$ is linear in $(\tn X,\tn Y)$
once $\tn Z$ and $\tn\varphi$ are eliminated by the spatial Killing spinor
equation. That linear map is compared with the $126+84+9$ conditions carried by
$\mathrm{d}X=0$, \eqref{eq:divX} and \eqref{eq:divY} on the $9\times126+9\times36$
components of $(\tn X,\tn Y)$: its linear part lies in their row space, with
the multiplier exhibited, so \eqref{eq:divh} imposes no condition on the flux
derivative beyond those, and what is left of it is the scalar
\eqref{eq:divhsector}. The remaining pieces of $\tn h$ are invisible pointwise:
its symmetric part is matched against $\tilde R_{ij}$ by \eqref{eq:Rij}, and
the Ricci tensor of the section is unconstrained at a point, while
\eqref{eq:auxform} contains $\tn^k\beta_{ijk}$ and hence second derivatives of
$Y$, which are free at a point. The computation is in
Appendix~\ref{app:verify}. Given \eqref{eq:divhsector}, the $\eta_-$ statements
are arithmetic, and the last one is \eqref{eq:deltaplush} solved for $\Delta$
and $h^2$.
\end{proof}

\begin{remark}
\label{rem:minuspinned}
This sharpens Remark~\ref{rem:whynotconst} in the direction of that remark.
The constancy hypothesis of Corollary~\ref{cor:normconst} does not merely fail
to follow from the identities; where it holds it forces the horizon to be
divergence-free in the twist and pins $\Delta$ and $h^2$ to the flux invariant
$Q$ pointwise, so the constant-norm branch is thinner than
\eqref{eq:deltaplush} alone suggests. On the static branch of
\eqref{eq:staticbranch}, where $\nrm{\eta_-}$ is not constant,
\eqref{eq:minuspinned} fails and only \eqref{eq:htofluxgen} is available.
\end{remark}

\begin{proposition}[Failure of closure through the $D=5$-type scalar data]
\label{prop:noclosure}
Fix a sector, a point of $\Sect$ and a non-zero Killing spinor there, and with
it the data $(Z,\varphi,s)$ of \eqref{eq:Zphidef}. Regard $h$, given by
\eqref{eq:htoflux}, and $\Delta$, given by \eqref{eq:Deltaexplicit}, as
functions of the flux $(X,Y)$ alone, so that $h$ is linear and $\Delta$
quadratic in it. Let
\begin{equation}
\begin{aligned}
I_1 &= X^2, & I_2 &= Y^2, & I_3 &= (i_ZX)^2, & I_4 &= (i_ZY)^2,\\
I_5 &= \epsilon_{ijklmnpqa}Z^aX^{ijkl}X^{mnpq}, & I_6 &= h^2,
& I_7 &= (h\!\cdot\!Z)^2, & I_8 &= h^iY_{ij}Z^j,
\end{aligned}
\label{eq:phifreeinv}
\end{equation}
together with $I_9=h^iZ^mX_{mijk}Y^{jk}$, be scalars built from $X$, $Y$, $Z$,
the metric and the volume form without $\varphi$. These are the $\varphi$-free
data through which the $D=5$ chain closes: $I_1,\dots,I_8$ are quadratic in the
flux, like $\Delta$ itself, and $I_9$, which is cubic because $h$ is linear in
the flux, is included as the only $\varphi$-free scalar coupling $h$ to both $X$
and $Y$. Then $\Delta$ does not factor through the map
$I=(I_1,\dots,I_9)$: there is no function $F$, of any regularity, with
\begin{equation}
\Delta = F(I_1,\dots,I_9)
\end{equation}
on any neighbourhood of a generic point of flux space; in particular $\Delta$
is not determined by $(h^2,X^2,Y^2)$ on the allowed algebraic data, that is,
there are fluxes with the same $(h^2,X^2,Y^2)$ and different $\Delta$ in every
neighbourhood of a generic point. The $\mathrm{Spin}(7)$ contraction that \eqref{eq:Deltaexplicit}
carries beyond its $\varphi$-free terms is therefore not a function of these
$\varphi$-free invariants.

We do not claim that $I_1,\dots,I_9$ exhaust the $\varphi$-free invariants of
$(X,Y,Z)$ --- no basis of that invariant ring is constructed here --- and the
proposition should be read accordingly: it is the failure of closure through the
specified scalar data, which is the data the $D=5$ argument uses.
\end{proposition}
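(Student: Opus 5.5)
The plan is to exhibit, near a generic flux point, a direction along which all nine $\varphi$-free invariants are stationary while $\Delta$ moves, and then to upgrade that infinitesimal statement to a local one by the constant-rank theorem. Everything happens in flux space $\Real^{162}$, with $(Z,\varphi,s)$ frozen at the chosen point. The maps $I=(I_1,\dots,I_9)\colon\Real^{162}\to\Real^9$ and $\Delta\colon\Real^{162}\to\Real$ are polynomial: $I_1,\dots,I_8$ and $\Delta$ are quadratic and $I_9$ is cubic, with $h$ given by \eqref{eq:htoflux} and $\Delta$ by \eqref{eq:Deltaexplicit}. The claim that no $F$ exists with $\Delta=F\circ I$ near a generic point will follow once we show that $\mathrm{d}\Delta$ does not lie in the span of $\mathrm{d}I_1,\dots,\mathrm{d}I_9$ at a generic point.

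The upgrade from the infinitesimal statement runs as follows. The rank of the $9\times162$ Jacobian $J_I$ and the rank of the augmented $10\times162$ matrix $J_{(I,\Delta)}$ are both lower semicontinuous and polynomial in the flux. Each therefore attains its maximum $r$ and $r'$ on a Zariski-open dense set $U$. Suppose $r'=r+1$ at a single point. Then on $U$ the map $I$ has constant rank $r$, so its fibres near any point of $U$ are submanifolds of codimension $r$. Along such a fibre $\mathrm{d}\Delta$ is not annihilated by the tangent space, so $\Delta$ is non-constant on every fibre-neighbourhood. Hence two fluxes with equal $I$ and different $\Delta$ exist arbitrarily close to any point of $U$, and no function $F$ of any regularity can satisfy $\Delta=F\circ I$ there.

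The same argument with $I$ replaced by $(I_6,I_1,I_2)=(h^2,X^2,Y^2)$ gives the "in particular" clause. It follows at once, because those three invariants are among the nine, so equal $I$ implies equal $(h^2,X^2,Y^2)$.

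The key step, and the main obstacle, is certifying $r'=r+1$ at one point. The first thing I would try is a representation-theoretic witness rather than brute force. Pick a base flux $(X_0,Y_0)$ and a perturbation $\delta X$ lying in a $\mathrm{Spin}(7)$ block of $w=\Pi X$ that every $\mathrm{d}I_k$ fails to see, while $\mathrm{d}\Delta$ sees it. Since $\Delta$ contains $-\tfrac1{288}X^{ijkl}X_{ij}{}^{mn}\varphi_{klmn}$, which distinguishes the $\mathbf{27}$ and $\mathbf{35}$ blocks of $\Lambda^4(Z^\perp)$ by their different $\varphi$-eigenvalues, the natural choice is a rotation inside $w_{\mathbf{27}}\oplus w_{\mathbf{35}}$ that preserves $|w|^2$ but trades norm between the blocks.

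This rotation leaves all nine invariants fixed when $i_ZX$, $Y$ and $h$ are unchanged. The reason is that $h$ depends only on $(w_{\mathbf 1},\xi_{\mathbf 8},y)$ together with $Y$, so $h$ is untouched; $I_3$, $I_4$, $I_7$, $I_8$ do not involve $w$ at all; and $X^2$ changes only through $|w|^2$. Two terms still need to be controlled: $I_5$, which is $\epsilon$-paired and on $Z^\perp$ is essentially $w\wedge w$ up to $\xi$-terms, and $I_9$. For $I_5$ I would take both $w_{\mathbf{27}}$ and $w_{\mathbf{35}}$ components of a fixed duality type, using the split $\Lambda^4_\pm$ of $\Lambda^4(Z^\perp)$, so that $w\wedge w$ is again governed by $|w|^2$. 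For $I_9$ I would check directly that $Z^mX_{mijk}=\xi_{ijk}$ is unaffected.

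Whether $\mathbf{27}$ and $\mathbf{35}$ can both sit in the same self-duality half is exactly the point I am unsure of; the $\varphi$-eigenvalues listed under \eqref{eq:spin7split} decide it. If the clean witness fails, the fallback is the one the paper uses elsewhere. Evaluate $J_{(I,\Delta)}$ in exact rational arithmetic at a pseudo-random integer flux in the explicit representation of Appendix~\ref{app:rep}, and exhibit a vector $v$ with $J_Iv=0$ and $\mathrm{d}\Delta\cdot v\neq0$. Such a vector certifies $r'>r$ at that point, and hence on the dense open set $U$.
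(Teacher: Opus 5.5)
Your framework (maximal ranks on a Zariski-open dense set, constant rank of $I$ there, and a level set of $I$ along which $\Delta$ is non-constant) is the paper's. Your fallback is exactly its proof: the paper evaluates the Jacobians in exact rational arithmetic at one flux, in the representation of Appendix~\ref{app:rep}, and finds $\mathrm{rank}\{\mathrm{d}I_a\}=9$ and $\mathrm{rank}\{\mathrm{d}I_a,\mathrm{d}\Delta\}=10$.

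The representation-theoretic witness you try first does not work, and it fails on both points you were unsure of.
\begin{itemize}
\item \emph{$I_5$ moves.} The $\mathbf{27}$ and $\mathbf{35}$ have opposite duality: as recorded after \eqref{eq:spin7split}, $\mathbf 1,\mathbf 7,\mathbf{27}$ are self-dual and $\mathbf{35}$ is anti-self-dual. Since $Z^a\epsilon_{\cdots a}$ confines the other eight indices to $Z^\perp$, $I_5$ is exactly a multiple of $\lvert w_+\rvert^2-\lvert w_-\rvert^2$, with no $\xi$-terms. Trading norm between $w_{\mathbf{27}}$ and $w_{\mathbf{35}}$ at fixed $\lvert w\rvert^2$ therefore changes $I_5$.
\item \emph{$\Delta$ does not move.} You looked at $-\tfrac1{288}X^{ijkl}X_{ij}{}^{mn}\varphi_{klmn}$ alone. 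On $w$ its $\mathbf{27}$ and $\mathbf{35}$ parts are $-\tfrac1{432}\lvert w_{\mathbf{27}}\rvert^2$ and $0$, from the eigenvalues $1$ and $0$. These are cancelled exactly by $\tfrac1{864}X^2$ and by the $\epsilon$-term of \eqref{eq:Deltaexplicit}, which contributes $\pm\tfrac1{864}$ according to duality. In the $\eta_+$ sector the totals are $\tfrac1{864}+\tfrac1{864}-\tfrac1{432}=0$ on $\mathbf{27}$ and $\tfrac1{864}-\tfrac1{864}+0=0$ on $\mathbf{35}$.
\end{itemize}
This cancellation is the content of \eqref{eq:Deltasplit}: $\Delta$ depends on the flux only through $(w_{\mathbf 7},\mathcal{Y}_{\mathbf 7})$. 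So $\mathrm{d}\Delta$ annihilates all of $w_{\mathbf{27}}\oplus w_{\mathbf{35}}$.

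A clean witness does exist, inside the $\mathbf 7$ of $w$. The $I_a$ see $w_{\mathbf 7}$ only through $\lvert w_{\mathbf 7}\rvert^2$, in $I_1$ and $I_5$, and by \eqref{eq:hsplit} $h$ does not see it at all. Hence a variation $\delta w\in\mathbf 7$ with $\delta w\perp w_{\mathbf 7}$ kills every $\mathrm{d}I_a$. Meanwhile $\mathrm{d}\Delta(\delta w)=\pm\tfrac1{54}\langle\delta w,\sigma(\mathcal{Y}_{\mathbf 7})\rangle$. This is non-zero for suitable $\delta w$ whenever $\sigma(\mathcal{Y}_{\mathbf 7})$ is not parallel to $w_{\mathbf 7}$, which is a Zariski-open dense condition. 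This explains the rank jump the paper computes. It rests on Proposition~\ref{prop:spin7delta}, whose coefficients are themselves fixed by exact computation, so it is an interpretation rather than a shortcut.

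One repair is needed in your certificate either way. A vector $v$ with $J_Iv=0$ and $\mathrm{d}\Delta\cdot v\neq0$ at a point $p$ gives $r'=r+1$ only if $\mathrm{rank}\,J_I(p)$ is already the generic maximum $r$. For example, with $I=x^3$ and $\Delta=x$, such a $v$ exists at $x=0$, yet $\Delta=I^{1/3}$. The paper gets this for free because its rank is $9$, the number of $I_a$. You should check the same at your sample point, or intersect the open set above with the maximal-rank set.
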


\begin{proof}
The argument is the implicit function theorem, and the only input is a pair of
ranks. At a point $F_0$ of flux space compute the differentials of
$I_1,\dots,I_9$ and of $\Delta$ with respect to all $126+36$ independent
components of $(X,Y)$. There
\begin{equation}
\mathrm{rank}\{\mathrm{d}I_a\}=9,\qquad
\mathrm{rank}\{\mathrm{d}I_a,\mathrm{d}\Delta\}=10 .
\label{eq:ranks}
\end{equation}
The second equality says $\mathrm{d}\Delta\notin\mathrm{span}\{\mathrm{d}I_a\}$
at $F_0$, so there is a tangent vector $v$ with $\mathrm{d}I_a(v)=0$ for every
$a$ and $\mathrm{d}\Delta(v)\ne0$. Extend $v$ to a smooth vector field and let
$\gamma$ be its integral curve through $F_0$; since the first rank in
\eqref{eq:ranks} is maximal it is constant on a neighbourhood, so the common
level set $\{I=I(F_0)\}$ is there a submanifold of codimension nine, $\gamma$
may be taken inside it, and $\Delta(\gamma(t))$ is non-constant while every
$I_a(\gamma(t))$ is constant. A function $F$ with $\Delta=F\circ I$ would be
constant along $\gamma$. Hence none exists, with no regularity assumption used
beyond that $F$ be a function. Restricting to $\{I_1,I_2,I_6\}=\{X^2,Y^2,h^2\}$
gives ranks $3$ and $4$ and the same conclusion.

The ranks \eqref{eq:ranks} are computed in exact rational arithmetic at one
explicit rational $F_0$, in the explicit real $\mathrm{Cl}(9,0)$ representation
of Appendix~\ref{app:rep} ($13$ checks). Since the entries are polynomial in
the flux, the ranks are at least these on a Zariski-open dense subset, so the
conclusion holds at every generic point and not only at $F_0$; the sampling,
the stronger statement that not even a universal linear relation exists, and
the three negative controls are described in Appendix~\ref{app:verify}.
\end{proof}

\begin{remark}[What Proposition~\ref{prop:noclosure} does and does not range over]
\label{rem:noclosurescope}
The variation in the proof is over the flux $(X,Y)$ at one point of $\Sect$,
with the spinor and hence $(Z,\varphi,s)$ held fixed, and with $h$ and $\Delta$
given their on-shell expressions \eqref{eq:htoflux} and
\eqref{eq:Deltaexplicit}. Those two are the only field-equation input used.
The remaining horizon equations --- the Bianchi identity $\mathrm{d}X=0$ and
\eqref{eq:divX}--\eqref{eq:divY} --- constrain derivatives of the flux and not
its value at a point, so they do not restrict the variation; what they would
restrict is which curves $\gamma$ in flux space are realised by a
one-parameter family of horizons, and the proposition makes no claim about
that. Two readings must therefore be kept apart.
\begin{enumerate}
\item \emph{What is proved.} There is no function $F$ with $\Delta=F\circ I$
as an identity in the pointwise algebraic data. This is the statement the
$D=5$ chain needs and does not have here: that chain closes because $\Delta$
\emph{is} such a function of $(h^2,X^2,Y^2)$ there, so that
$\tn_i(h^2+2\Delta)=0$ can be integrated, and the failure of the same
factorisation is what removes the mechanism.
\item \emph{What is not proved.} It does not follow that two \emph{solutions}
of the full on-shell system exist with equal $I$ and different $\Delta$. Such
a pair would have to be exhibited, and none is exhibited here. The obstruction
is to the algebraic step of the $D=5$ argument, not a statement that the
on-shell solution space is large.
\end{enumerate}
The same distinction governs Corollary~\ref{cor:hidden} and is stated there in
the same terms. Nothing elsewhere in the paper uses the second reading.
\end{remark}

\subsection{The $\mathrm{Spin}(7)$-irreducible form of the warp identity}
\label{sec:spin7irr}

Proposition~\ref{prop:noclosure} says that the residual in
\eqref{eq:Deltaexplicit} is not a $\varphi$-free invariant in disguise. It does
not say what the residual is. Decomposing the flux into $\mathrm{Spin}(7)$
irreducibles answers that, and the answer is sharper than the obstruction
suggested: $\Delta$ turns out to be a perfect square, and to see only two of
the two $\mathbf 7$'s of the decomposition.

Fix a point and a sector, and with them $(Z,\varphi)$. Split the flux along
$T_p\Sect=\Real Z\oplus Z^\perp$,
\begin{equation}
X = w + Z\wedge\xi\,,\qquad Y = \mathcal{Y} + Z\wedge y\,,
\label{eq:fluxsplit}
\end{equation}
with $w=\Pi X\in\Lambda^4(Z^\perp)$, $\xi=i_ZX\in\Lambda^3(Z^\perp)$,
$\mathcal{Y}=\Pi Y\in\Lambda^2(Z^\perp)$ and $y=i_ZY\in Z^\perp$, where $\Pi$ is the
projection with $\Pi_{ij}=\delta_{ij}-Z_iZ_j$. Under the $\mathrm{Spin}(7)$
that stabilises $\eta$,
\begin{equation}
\Lambda^2(Z^\perp)=\mathbf{7}\oplus\mathbf{21}\,,\quad
\Lambda^3(Z^\perp)=\mathbf{8}\oplus\mathbf{48}\,,\quad
\Lambda^4(Z^\perp)=\mathbf{1}\oplus\mathbf{7}\oplus\mathbf{27}\oplus\mathbf{35}\,,
\label{eq:spin7split}
\end{equation}
the summands being the eigenspaces of the contraction with $\varphi$ ---
$b_{ab}\mapsto\tfrac12\varphi_{abcd}b^{cd}$ with eigenvalues $(-3,1)$,
$c_{abc}\mapsto\tfrac32\varphi_{[ab}{}^{de}c_{c]de}$ with eigenvalues $(-6,1)$,
and $u_{abcd}\mapsto\tfrac32\varphi_{[ab}{}^{ef}u_{cd]ef}$ with eigenvalues
$(-6,-3,1,0)$ on $\mathbf{1},\mathbf{7},\mathbf{27},\mathbf{35}$; the first
three of $\Lambda^4$ are self-dual and $\mathbf{35}$, the kernel of the
contraction, is anti-self-dual. We write
$\lvert w_{\mathbf{d}}\rvert^2$ for the square norm of the component of $w$ in
the module $\mathbf{d}$ with the full-contraction normalisation
$\lvert w\rvert^2=w^{abcd}w_{abcd}=\sum_{\mathbf{d}}\lvert w_{\mathbf{d}}\rvert^2$,
so that in particular
$\lvert w_{\mathbf{1}}\rvert^2=(w^{abcd}\varphi_{abcd})^2/336$, and likewise for
$\xi$ and $\mathcal{Y}$.

\begin{proposition}[$\mathrm{Spin}(7)$ expansion of $h$ and $\Delta$]
\label{prop:spin7delta}
In the notation \eqref{eq:fluxsplit}--\eqref{eq:spin7split}, with the upper
sign belonging to the $\eta_+$ sector throughout,
\begin{gather}
X^2 = \lvert w_{\mathbf{1}}\rvert^2+\lvert w_{\mathbf{7}}\rvert^2
 +\lvert w_{\mathbf{27}}\rvert^2+\lvert w_{\mathbf{35}}\rvert^2
 +4\lvert \xi_{\mathbf{8}}\rvert^2+4\lvert \xi_{\mathbf{48}}\rvert^2\,,\qquad
Y^2 = \lvert \mathcal{Y}_{\mathbf{7}}\rvert^2+\lvert \mathcal{Y}_{\mathbf{21}}\rvert^2+2\lvert y\rvert^2\,,
\label{eq:X2Y2split}\\
h\!\cdot\!Z = \mp\tfrac1{72}\,w^{abcd}\varphi_{abcd}\,,\qquad
\Pi h = \tfrac23\,y \pm \tfrac1{18}\,(\xi\lrcorner\varphi)\,,\qquad
(\xi\lrcorner\varphi)_a:=\xi^{bcd}\varphi_{abcd}\,,
\label{eq:hsplit}\\
h^2 = \tfrac7{108}\lvert w_{\mathbf{1}}\rvert^2
 + \tfrac7{54}\lvert \xi_{\mathbf{8}}\rvert^2
 + \tfrac49\lvert y\rvert^2
 \pm \tfrac2{27}\,y^a\xi^{bcd}\varphi_{abcd}\,,
\label{eq:h2split}\\
\Delta = \tfrac1{108}\lvert w_{\mathbf{7}}\rvert^2
 + \tfrac29\lvert \mathcal{Y}_{\mathbf{7}}\rvert^2
 \pm \tfrac1{54}\,\mathcal{Y}^{ab}w_a{}^{cde}\varphi_{bcde}\,.
\label{eq:Deltasplit}
\end{gather}
The twist therefore depends on the flux only through
$(w_{\mathbf{1}},\xi_{\mathbf{8}},y)$ and the warp factor only through
$(w_{\mathbf{7}},\mathcal{Y}_{\mathbf{7}})$: the two see disjoint modules, and the
$131$ components in $w_{\mathbf{27}}$, $w_{\mathbf{35}}$, $\xi_{\mathbf{48}}$
and $\mathcal{Y}_{\mathbf{21}}$ are invisible to both.
\end{proposition}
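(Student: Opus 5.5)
The plan is to substitute the splitting \eqref{eq:fluxsplit} into the already-established formulas \eqref{eq:htoflux}, \eqref{eq:hZ} and \eqref{eq:Deltaexplicit}, and then to use the three facts of Section~\ref{sec:spin7guide} to reduce each quadratic expression to a combination of the eleven invariants (nine square norms and two pairings). For \eqref{eq:X2Y2split} no representation theory is needed beyond orthogonality of the blocks. Expanding $X^2=(w+Z\wedge\xi)^2$ gives $|w|^2+4|\xi|^2$, because the cross terms vanish since $i_Zw=0$ and $|Z|=1$, and the factor $4$ counts the placements of $Z$ among four indices. Likewise $Y^2=|\mathcal{Y}|^2+2|y|^2$. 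Orthogonality of the eigenspaces of the symmetric $\varphi$-contraction then splits $|w|^2$, $|\xi|^2$ and $|\mathcal{Y}|^2$ into their module norms.

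For the twist I would project \eqref{eq:htoflux} along $Z$ and onto $Z^\perp$. The $Z$-component is \eqref{eq:hZ} with $X^{jklm}\varphi_{jklm}=w^{abcd}\varphi_{abcd}$, because $Z^i\varphi_{ijkl}=0$ kills the $Z\wedge\xi$ part. On $Z^\perp$ the term $-\tfrac23Y_{ij}Z^j$ gives $\tfrac23y$, with the sign fixed by $i_ZY=y$ and antisymmetry. The term $\pm\tfrac1{18}Z^jX_{jklm}\varphi_i{}^{klm}$ gives $\pm\tfrac1{18}\xi\lrcorner\varphi$. By Schur, $\xi\mapsto\xi\lrcorner\varphi$ is an equivariant map $\Lambda^3\to\Lambda^1=\mathbf 8$, so it kills $\mathbf{48}$, and $w^{abcd}\varphi_{abcd}$ sees only $w_{\mathbf 1}$.

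Then $h^2=(h\cdot Z)^2+|\Pi h|^2$. The first term gives $\tfrac{336}{72^2}|w_{\mathbf 1}|^2=\tfrac7{108}|w_{\mathbf 1}|^2$. The second requires $|\xi\lrcorner\varphi|^2=\lambda|\xi_{\mathbf 8}|^2$ for a single constant $\lambda$ (again Schur), and matching $\tfrac1{324}\lambda=\tfrac7{54}$ predicts $\lambda=42$. I would fix $\lambda$ by evaluating on one element, $\xi=i_v\varphi$, using the Cayley identity $\varphi_{abcd}\varphi^{ebcd}=42\delta_a^e$ on the eight-plane. The cross term is $2\cdot\tfrac23\cdot\tfrac1{18}=\tfrac2{27}$, which gives \eqref{eq:h2split}.

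For $\Delta$ I would insert the split into \eqref{eq:Deltaexplicit} term by term, using the $h^2$ just obtained. Each term becomes a combination of the eleven invariants. The $\epsilon Z XX$ term becomes $|w_{\mathbf 1}|^2+|w_{\mathbf 7}|^2+|w_{\mathbf{27}}|^2-|w_{\mathbf{35}}|^2$, up to the factor $s$, $8!$-type normalisations and sign, because the $\mathbf 1,\mathbf 7,\mathbf{27}$ blocks are self-dual and $\mathbf{35}$ is anti-self-dual in $Z^\perp$. The $XX\varphi$ term is read off from the $\Lambda^4$ eigenvalues $(-6,-3,1,0)$, together with a $\xi$-contribution read off from the $\Lambda^3$ eigenvalues $(-6,1)$. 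The $YY\varphi$ term uses the $\Lambda^2$ eigenvalues $(-3,1)$. The $XY\varphi$ term gives the two pairings: the $\mathbf 7$-pairing $\mathcal{Y}^{ab}w_a{}^{cde}\varphi_{bcde}$ and the $\mathbf 8$-pairing $y\cdot(\xi\lrcorner\varphi)$.

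The claim is then that, after collecting, every coefficient except those of $|w_{\mathbf 7}|^2$, $|\mathcal{Y}_{\mathbf 7}|^2$ and the $\mathbf 7$-pairing cancels. This cancellation is the main obstacle. It involves eleven rational identities with delicate normalisations, notably the $\epsilon$ convention with the sector sign $s$ and the orientation relation between $\varphi$ and $\star_8$ in each sector. I would first fix these constants by evaluating on one representative element per module in the adapted frame where $\varphi$ is the standard Cayley form, and then confirm the result against the exact-arithmetic check of Appendix~\ref{app:verify}.

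As a consistency check, the cancellation of $w_{\mathbf 1}$, $\xi_{\mathbf 8}$ and $y$ in $\Delta$ must reproduce the removal of $-\tfrac14h^2$, including its $\mathbf 8$-pairing. This is where I expect a sign error to show up first.
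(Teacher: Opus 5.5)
Your proposal is correct, but it fixes the coefficients by a different route from the paper.

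\emph{The paper's route.} It uses representation theory only to count the $\mathrm{Spin}(7)$-invariant quadratic forms in $(w,\xi,\mathcal{Y},y)$: nine square norms and two pairings. It then determines all eleven coefficients of each left-hand side at once, by evaluating on eighteen exact rational flux samples and solving the over-determined linear system. Since the evaluation matrix has rank eleven and the list is complete by Schur's lemma, the fit is a proof.

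\emph{Your route.} You derive each coefficient by hand. Schur handles the individual maps $w\mapsto w\cdot\varphi$ and $\xi\mapsto\xi\lrcorner\varphi$, the Cayley contraction $\varphi_{abcd}\varphi^{ebcd}=42\delta_a^e$ gives $\lambda=42$, and for $\Delta$ you use the eigenvalues listed under \eqref{eq:spin7split} together with the duality of the blocks.

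\emph{The bookkeeping closes with less effort than you fear.} Since $20736=24\cdot864$, the $\epsilon$-term is $\tfrac1{864}\langle w,s\star_8 w\rangle$, with $\star_8$ built from $\epsilon_{i_1\cdots i_8a}Z^a$ and only $w$ entering. The one convention-dependent input is the sign with which $s\star_8$ acts on $\mathbf 1\oplus\mathbf 7\oplus\mathbf{27}$. It is sector-independent, because the explicit $s$ is matched by the $s$-dependence of the duality of $\varphi$ inherited from $\Gamma^{1\cdots9}=s$. It must be $+1$: at $X=\varphi$, $Y=0$ the other terms of \eqref{eq:Deltaexplicit} sum to $-\tfrac7{18}$ against $\pm\tfrac7{18}$ from the $\epsilon$-term. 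The opposite sign would give $\Delta<0$ there, contradicting $\Delta\nrm{\eta_+}^2=4\nrm{\Theta_+\eta_+}^2$, which holds on the algebraic data once $h$ satisfies the vector condition.

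The $XX\varphi$ term is $-\tfrac1{432}\langle w,Aw\rangle-\tfrac1{216}\langle\xi,B\xi\rangle$ and the $YY\varphi$ term is $-\tfrac1{18}\langle\mathcal{Y},C\mathcal{Y}\rangle$, with $A,B,C$ the contraction operators of \eqref{eq:spin7split}. Collecting these with $\tfrac1{864}X^2$, $\tfrac1{18}Y^2$ and your $-\tfrac14h^2$ gives:
\begin{itemize}
\item in units of $\tfrac1{864}$: $1+1+12-14=0$ on $\mathbf 1$, $1+1+6=8$ on $\mathbf 7$, $1+1-2=0$ on $\mathbf{27}$, $1-1=0$ on $\mathbf{35}$;
\item in units of $\tfrac1{216}$: $1+6-7=0$ on $\xi_{\mathbf 8}$ and $1-1=0$ on $\xi_{\mathbf{48}}$;
\item in units of $\tfrac1{18}$: $1+3=4$ on $\mathcal{Y}_{\mathbf 7}$, $1-1=0$ on $\mathcal{Y}_{\mathbf{21}}$, $2-2=0$ on $y$.
\end{itemize}
The $XY\varphi$ term is $\pm\tfrac1{54}$ times each of the two pairings, and the $\mathbf 8$ pairing cancels against $-\tfrac14h^2$.

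\emph{What each buys.} Your route gives a hand-checkable proof that shows which term cancels which module, and it does not need completeness of the eleven-invariant list. The paper's route is independent of the eigenvalues, the duality assignment and that orientation sign, since $\Delta$ is evaluated directly from the spinorial formula.

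One caveat applies equally to both. $h$ is taken from \eqref{eq:htoflux}, which in the $\eta_-$ sector holds only where $\nrm{\eta_-}$ is stationary. Elsewhere the gradient term of \eqref{eq:htofluxgen} enters $-\tfrac14h^2$, and the formulas for $h$, $h^2$ and $\Delta$ acquire extra terms.
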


\begin{proof}
Every quantity on the left of
\eqref{eq:X2Y2split}--\eqref{eq:Deltasplit} is built from the flux, the metric,
$\varphi$ and $Z$ alone, and is quadratic in the flux data
$(w,\xi,\mathcal{Y},y)$; each is therefore a $\mathrm{Spin}(7)$-invariant
quadratic form in those data. The data decompose as
$w\in\mathbf 1\oplus\mathbf 7\oplus\mathbf{27}\oplus\mathbf{35}$,
$\xi\in\mathbf 8\oplus\mathbf{48}$,
$\mathcal{Y}\in\mathbf 7\oplus\mathbf{21}$ and $y\in\mathbf 8$, and every
module occurring is of real type, so by Schur's lemma the space of invariant
quadratic forms has dimension $\sum_{\mathbf d}m_{\mathbf d}(m_{\mathbf
d}+1)/2$ over the multiplicities $m_{\mathbf d}$ of the distinct modules. Here
$\mathbf 7$ and $\mathbf 8$ occur twice and $\mathbf 1$, $\mathbf{21}$,
$\mathbf{27}$, $\mathbf{35}$, $\mathbf{48}$ once, so the dimension is
$3+3+1+1+1+1+1=11$: the nine square norms
$\lvert w_{\mathbf 1}\rvert^2$, $\lvert w_{\mathbf 7}\rvert^2$,
$\lvert w_{\mathbf{27}}\rvert^2$, $\lvert w_{\mathbf{35}}\rvert^2$,
$\lvert \xi_{\mathbf 8}\rvert^2$, $\lvert \xi_{\mathbf{48}}\rvert^2$,
$\lvert \mathcal{Y}_{\mathbf 7}\rvert^2$,
$\lvert \mathcal{Y}_{\mathbf{21}}\rvert^2$, $\lvert y\rvert^2$, together with
the two cross-pairings $y^a\xi^{bcd}\varphi_{abcd}$ between the two copies of
$\mathbf 8$ and $\mathcal{Y}^{ab}w_a{}^{cde}\varphi_{bcde}$ between the two
copies of $\mathbf 7$. No further invariant of this degree exists, so each
left-hand side is a combination of these eleven with uniquely determined
coefficients. The coefficients are then computed by evaluating both sides on
eighteen exact rational flux samples and solving the resulting over-determined
linear system in exact rational arithmetic. The $18\times11$ evaluation matrix
of those samples has rank eleven, equal to the dimension of the space of
invariant quadratics just listed, so the coefficients are determined uniquely
rather than merely fitted; the system is consistent, the values are confirmed
on six further and larger samples that took no part in the solve, and the whole
computation is repeated on thirteen samples built from an independently
constructed spinor (Appendix~\ref{app:verify}). The proof is thus exact finite
linear algebra over a spanning set that representation theory guarantees is
complete.
\end{proof}

Because $\mathbf 7$ occurs in $\Lambda^2(Z^\perp)$ and again in
$\Lambda^4(Z^\perp)$, the two $\mathbf 7$'s of $\mathcal{Y}$ and of $w$ can be
compared, and there is essentially one way to compare them: the map $\sigma$
below, which is the only equivariant map between those two spaces that
$\varphi$ can build. The cross-pairing in \eqref{eq:Deltasplit} is that
comparison written out, and once it is recognised as such, \eqref{eq:Deltasplit}
collapses into a square.

\begin{theorem}[The warp factor is a perfect square]
\label{thm:spin7square}
Let $\sigma:\Lambda^2(Z^\perp)\to\Lambda^4(Z^\perp)$ be the equivariant map
\begin{equation}
(\sigma \mathcal{Y})_{abcd}=\mathcal{Y}_{[a}{}^{e}\varphi_{|e|bcd]}\,.
\label{eq:sigmadef}
\end{equation}
Then $\sigma$ annihilates $\mathbf{21}$, maps $\mathbf{7}\subset\Lambda^2$ onto
$\mathbf{7}\subset\Lambda^4$ with $\sigma^{T}\sigma=24$, and
\begin{equation}
\Delta \;=\; \tfrac1{108}\,\big\lVert\, w_{\mathbf{7}}
 \pm \sigma(\mathcal{Y}_{\mathbf{7}})\,\big\rVert^2\,,
\label{eq:deltasquare}
\end{equation}
the upper sign in the $\eta_+$ sector. Consequently
\begin{enumerate}
\item $\Delta\ge0$ pointwise, by an algebraic identity in the flux alone: this
re-proves the last clause of Theorem~\ref{thm:master}, which came from the
spinorial identity $\Delta\nrm{\eta_+}^2=4\nrm{\Theta_+\eta_+}^2$, without
using a second spinor;
\item $\Delta=0$ at a point if and only if
$w_{\mathbf{7}}=\mp\sigma(\mathcal{Y}_{\mathbf{7}})$ there, a linear condition on seven of
the $162$ components of $(X,Y)$; the flux is otherwise unconstrained, so
$\Delta\equiv0$ alone forces neither $X=0$ nor $Y=0$ --- it is the conjunction
with $h=0$ at one point that does, by Theorem~\ref{thm:rigid};
\item the sharp two-sided bound
$0\le\Delta\le\tfrac1{54}\lvert w_{\mathbf{7}}\rvert^2
 +\tfrac49\lvert \mathcal{Y}_{\mathbf{7}}\rvert^2$ holds, the upper bound being attained
exactly when $w_{\mathbf{7}}=\pm\sigma(\mathcal{Y}_{\mathbf{7}})$.
\end{enumerate}
\end{theorem}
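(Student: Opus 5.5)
The plan is to start from the expansion \eqref{eq:Deltasplit} of Proposition~\ref{prop:spin7delta}, which already expresses $\Delta$ as $\tfrac1{108}\lvert w_{\mathbf 7}\rvert^2+\tfrac29\lvert\mathcal Y_{\mathbf 7}\rvert^2\pm\tfrac1{54}\,\mathcal Y^{ab}w_a{}^{cde}\varphi_{bcde}$, and to show that it is a completed square. This requires three facts about $\sigma$. First, that it annihilates $\mathbf{21}$ and maps $\mathbf 7\subset\Lambda^2$ into $\mathbf 7\subset\Lambda^4$. Second, that $\sigma^T\sigma=24$ on $\mathbf 7$. Third, that the cross-pairing in \eqref{eq:Deltasplit} equals $\langle w_{\mathbf 7},\sigma\mathcal Y_{\mathbf 7}\rangle$ up to a fixed positive factor.

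\emph{Equivariance and image.} Since $\sigma$ is built from $\varphi$ and the metric on $Z^\perp$ alone, it commutes with $\mathrm{Spin}(7)$. By Schur's lemma, applied in the form recorded in Section~\ref{sec:spin7guide}, it must kill $\mathbf{21}$, because $\Lambda^4(Z^\perp)$ contains no $\mathbf{21}$. The same lemma sends $\mathbf 7$ to a multiple of the unique $\mathbf 7$ in $\Lambda^4$. To check that the multiple is nonzero and to fix its normalisation, I would evaluate $\sigma$ on one explicit element of $\mathbf 7\subset\Lambda^2$, namely a $-3$ eigenvector of $b\mapsto\tfrac12\varphi\cdot b$, in the standard Cayley form. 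Then $\sigma^T\sigma$ restricted to the irreducible $\mathbf 7$ is a scalar, and computing $\lVert\sigma b\rVert^2/\lVert b\rVert^2$ on that one element should give $24$. I would double-check this against the idempotent projectors of the verification appendix.

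\emph{Identifying the cross term.} The pairing $\mathcal Y^{ab}w_a{}^{cde}\varphi_{bcde}$ can be rewritten as $w^{acde}\,\mathcal Y_a{}^b\varphi_{bcde}$. With the antisymmetrisation in \eqref{eq:sigmadef}, this is $\langle w,\sigma\mathcal Y\rangle$ up to a sign. The sign comes from writing $\mathcal Y_a{}^b=-\mathcal Y^b{}_a$ and reordering indices, and I would fix it carefully. Because $\sigma\mathcal Y$ lies in $\mathbf 7\subset\Lambda^4$ and the decomposition is orthogonal, only $w_{\mathbf 7}$ and $\mathcal Y_{\mathbf 7}$ contribute, giving $\langle w_{\mathbf 7},\sigma\mathcal Y_{\mathbf 7}\rangle$. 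Expanding the right-hand side of \eqref{eq:deltasquare} then gives
\[
\tfrac1{108}\big(\lvert w_{\mathbf 7}\rvert^2\pm2\langle w_{\mathbf 7},\sigma\mathcal Y_{\mathbf 7}\rangle+24\lvert\mathcal Y_{\mathbf 7}\rvert^2\big).
\]
Here $\tfrac{24}{108}=\tfrac29$, which matches the middle coefficient of \eqref{eq:Deltasplit}. The cross coefficient $\tfrac{2}{108}=\tfrac1{54}$ matches provided the sign in the identification is $+$. The match of the $\tfrac29$ coefficient is a good consistency check on $\sigma^T\sigma=24$.

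\emph{Consequences.} Items 1 and 2 are immediate from \eqref{eq:deltasquare} and the injectivity of $\sigma$ on $\mathbf 7$. For item 3, I would apply $\lVert a+b\rVert^2\le2\lVert a\rVert^2+2\lVert b\rVert^2$ with $\lVert\sigma\mathcal Y_{\mathbf 7}\rVert^2=24\lvert\mathcal Y_{\mathbf 7}\rvert^2$. This gives $\tfrac2{108}\lvert w_{\mathbf 7}\rvert^2+\tfrac{48}{108}\lvert\mathcal Y_{\mathbf 7}\rvert^2$, which is the stated bound, with equality exactly when $w_{\mathbf 7}=\pm\sigma\mathcal Y_{\mathbf 7}$. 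The main obstacle is the bookkeeping of signs and normalisations: the sign linking the cross-pairing to $\langle w,\sigma\mathcal Y\rangle$, and the constant $24$. I would settle both by a single explicit evaluation in the representation of Appendix~\ref{app:rep}, since Schur's lemma reduces each to one number.
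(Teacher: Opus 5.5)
Your proposal is correct and follows the paper's proof essentially step for step. Both use Schur's lemma to get $\sigma(\mathbf{21})=0$ and $\sigma^{T}\sigma=\lambda$ on $\mathbf 7$, fix $\lambda=24$ by evaluating on one element, identify the cross term of \eqref{eq:Deltasplit} with $\langle w_{\mathbf 7},\sigma(\mathcal{Y}_{\mathbf 7})\rangle$, match coefficients, and close (iii) with Cauchy--Schwarz. The sign you hedge on is immediate: $w$ is totally antisymmetric, so the antisymmetriser in \eqref{eq:sigmadef} can be dropped, and $\mathcal{Y}^{ab}w_a{}^{cde}\varphi_{bcde}=\langle w,\sigma\mathcal{Y}\rangle$ holds exactly, with coefficient $+1$ and no transposition of $\mathcal{Y}$, as in \eqref{eq:crossterm}.
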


In words: the warp factor at a point is the squared length of one vector in a
seven-dimensional space, the vector being the $\mathbf 7$ of $w$ corrected by
the $\mathbf 7$ of $\mathcal{Y}$. Positivity of $\Delta$ is then visible rather
than derived; and because a squared length vanishes on a linear subspace and not
only at the origin, the square is degenerate, which is the content of (ii) and
(iii) and the reason no case list follows from it. What is new here is
therefore not the positivity of $\Delta$, which is known, but the
identification of $\Delta$ as the squared norm of one specific
$\mathrm{Spin}(7)$-equivariant mismatch between two flux modules, and the
explicit description of the kernel of that mismatch; see
Remark~\ref{rem:phisquare}.

\begin{proof}
Given \eqref{eq:Deltasplit}, the square is algebra, and the three inputs it
needs are representation-theoretic rather than numerical.

First, $\sigma$ is $\mathrm{Spin}(7)$-equivariant, being built from $\varphi$
alone. The module $\mathbf{21}$ does not occur in
$\Lambda^4(Z^\perp)=\mathbf{1}\oplus\mathbf{7}\oplus\mathbf{27}\oplus\mathbf{35}$,
so $\sigma$ annihilates it by Schur's lemma, and $\mathbf{7}$ occurs in
$\Lambda^4$ exactly once, so $\sigma|_{\mathbf 7}$ is a multiple of the unique
equivariant isomorphism onto that copy. The self-map
$\sigma^{T}\sigma$ of the absolutely irreducible $\mathbf{7}$ is therefore a
non-negative multiple $\lambda$ of the identity; evaluating on a single element
fixes $\lambda=24$.

Second, the cross term of \eqref{eq:Deltasplit} is that inner product and
nothing else: because $w$ is totally antisymmetric,
\begin{equation}
\mathcal{Y}^{ab}w_a{}^{cde}\varphi_{bcde}
= w^{abcd}\,\mathcal{Y}_{[a}{}^{e}\varphi_{|e|bcd]}
= \langle w,\sigma \mathcal{Y}\rangle
= \langle w_{\mathbf 7},\sigma(\mathcal{Y}_{\mathbf 7})\rangle\,,
\label{eq:crossterm}
\end{equation}
the last step because $\sigma$ kills $\mathbf{21}$ and lands in
$\mathbf{7}\subset\Lambda^4$, so only the $\mathbf{7}$ parts pair.

Third, with $\lambda=24$ the coefficients of \eqref{eq:Deltasplit} are exactly
critical:
$\tfrac29\lvert \mathcal{Y}_{\mathbf 7}\rvert^2
=\tfrac1{108}\lambda\lvert \mathcal{Y}_{\mathbf 7}\rvert^2
=\tfrac1{108}\lVert\sigma(\mathcal{Y}_{\mathbf 7})\rVert^2$ and
$\pm\tfrac1{54}\langle w_{\mathbf 7},\sigma(\mathcal{Y}_{\mathbf 7})\rangle
=\pm\tfrac2{108}\langle w_{\mathbf 7},\sigma(\mathcal{Y}_{\mathbf 7})\rangle$, so
\eqref{eq:Deltasplit} is $\tfrac1{108}$ times
$\lVert w_{\mathbf 7}\rVert^2\pm2\langle w_{\mathbf 7},\sigma(\mathcal{Y}_{\mathbf 7})\rangle
+\lVert\sigma(\mathcal{Y}_{\mathbf 7})\rVert^2$, which is \eqref{eq:deltasquare}. The
quadratic form on $\mathbf{7}\oplus\mathbf{7}$ is thus degenerate rather than
definite, which is items (i)--(iii): (ii) reads off the null space of the
square, and (iii) is the Cauchy--Schwarz bound for it.

The inputs are verified independently in the explicit real $\mathrm{Cl}(9,0)$
representation of Appendix~\ref{app:rep}, in exact rational arithmetic
($226$ checks): the contraction operators of
\eqref{eq:spin7split} and their spectral projectors, with idempotency,
completeness and traces $7,21$; $8,48$; $1,7,27,35$; the equivariance
statements $\sigma(\mathbf{21})=0$, $\sigma(\mathbf 7)\subset\mathbf 7$ and
$\sigma^{T}\sigma=24$; the contraction identity \eqref{eq:crossterm}; the
expansion \eqref{eq:Deltasplit} itself, whose coefficients are the unique
solution of an over-determined exact rational linear system in the invariants
and are confirmed on fresh samples that took no part in determining them ---
uniqueness of the solution being the representation-theoretic statement proved
with Proposition~\ref{prop:spin7delta}, not an inference from the samples; and
finally \eqref{eq:deltasquare} sample by sample in both sectors against
$\Delta$ computed from the spinorial identity, as a consistency check on an
identity already proved, with negative controls on the
relative sign, on the projection and on the coefficient. Item (ii) is also
checked constructively: a flux with $w_{\mathbf{7}}=\mp\sigma(\mathcal{Y}_{\mathbf{7}})$
and all other modules generic has $\Delta=0$ with $X^2>0$ and $Y^2>0$.
\end{proof}

\begin{remark}[What this does and does not remove]
\label{rem:spin7scope}
Theorem~\ref{thm:spin7square} is a pointwise algebraic statement about the
solution $(h,\Delta)$ of the two vector identities at a fixed point, in the
$\mathrm{Spin}(7)$ frame determined by the spinor at that point. It does not
close the chain of Remark~\ref{rem:caselist}, and it is not in tension with
Proposition~\ref{prop:noclosure}: $w_{\mathbf{7}}$ and $\mathcal{Y}_{\mathbf{7}}$ are
$\varphi$-dependent objects, and \eqref{eq:deltasquare} expresses $\Delta$
through them, not through the $\varphi$-free invariants
\eqref{eq:phifreeinv}. What it adds to Proposition~\ref{prop:noclosure} is the
identity of the residual --- it is carried by the two copies of $\mathbf{7}$
and by nothing else --- and the fact that, in that language, the residual is a
square rather than an indefinite remainder. The $(2,2)$-traceless part of $X$
left undetermined by the linear system of \cite[\S3.3.2]{Mhorizons} is, in this
decomposition, the $\mathbf{27}$: it enters neither $h$ nor $\Delta$, so it is
invisible to the algebraic projections through which the Killing spinor
equation determines those two. It is still present in the full
system --- in the remaining components of the Killing spinor equation, in the
Bianchi identity and in the Einstein equation, where by
Remark~\ref{rem:riccisource} it does appear. The same module is the one that
escapes on the compactification side \cite[(22)--(24)]{Tsimpis}, as
Remark~\ref{rem:lawless} discusses. The branching of a four-form into
$\mathbf 1\oplus\mathbf 7\oplus\mathbf{27}\oplus\mathbf{35}$ and its pairing
with intrinsic torsion are standard technique there and here
\cite[(43)--(50)]{Tsimpis}, and nothing is claimed for them; what is claimed is
the warp identity \eqref{eq:deltasquare} proved with them, which has no
counterpart in that setting, where the warp factor enters only differentially.
\end{remark}

\begin{remark}[Relation to the known scalar square $\Delta=4\Phi^2$]
\label{rem:phisquare}
That $\Delta$ is a square is not new. In \cite{Mhorizons} the same quantity
appears as $\Delta\nrm{\eta_+}^2=4\nrm{\Theta_+\eta_+}^2$, and in the
$SU(4)$-adapted spinorial gauge in which $\eta_+$ is a fixed combination of $1$
and $e_{1234}$ that identity evaluates to $\Theta_+\eta_+=i\Phi(1-e_{1234})$
and hence to
\begin{equation}
\Delta=4\Phi^2\,,
\label{eq:delta4phi}
\end{equation}
$\Phi$ being one component of the flux in that gauge
\cite[(3.24), (3.30)--(3.32)]{Mhorizons}. Positivity of $\Delta$ and its being
a perfect square are therefore known, and
Theorem~\ref{thm:spin7square}(i) re-proves rather than establishes them.

What \eqref{eq:deltasquare} adds is the invariant flux content of that square.
Comparing the two, in the conventions of \cite{Mhorizons},
\begin{equation}
\big\lVert w_{\mathbf 7}\pm\sigma(\mathcal{Y}_{\mathbf 7})\big\rVert^2
= 432\,\Phi^2\,,
\label{eq:phibridge}
\end{equation}
so $\lVert w_{\mathbf 7}\pm\sigma(\mathcal{Y}_{\mathbf 7})\rVert$ is the
$\mathrm{Spin}(7)$-covariant representative of the gauge-fixed scalar $\Phi$:
up to normalisation $\Phi$, which is defined only relative to the adapted gauge,
is the length of one $\mathrm{Spin}(7)$ vector assembled from the flux. Three things follow which
\eqref{eq:delta4phi} does not state. First, \emph{which} flux enters: by
Proposition~\ref{prop:spin7delta} only the two $\mathbf 7$'s do, so the $131$
components carried by
$\mathbf{21}$, $\mathbf{27}$, $\mathbf{35}$ and $\mathbf{48}$, and the
$\mathbf 1$, are invisible to the warp factor --- a statement about modules
that a component formula in a chosen gauge does not make. Second, the
covariant norm formula itself, which holds in either sector and in any frame,
with no adapted gauge chosen. Third, the zero locus: $\Phi=0$ characterises
$\Delta=0$ tautologically, whereas by Theorem~\ref{thm:spin7square}(ii) the
same locus is a seven-dimensional linear cancellation
$w_{\mathbf 7}=\mp\sigma(\mathcal{Y}_{\mathbf 7})$ \emph{between} two flux
modules, requiring the vanishing of neither $X$ nor $Y$, and it is that form of
the locus which makes the equality cases of Corollary~\ref{cor:fluxlower} and
Corollary~\ref{cor:budget} sharp. So far as we know the
$\mathrm{Spin}(7)$-irreducible identity \eqref{eq:deltasquare} has not
previously been stated; the scalar square that it refines has.
\end{remark}

\begin{remark}[Two distinct vanishing conditions]
\label{rem:hierarchy}
Side by side, the two places in this paper where $\Delta$ is
made to vanish are not of the same strength.

\emph{Pointwise and algebraic.} By \eqref{eq:deltasquare}, $\Delta(p)=0$ is the
single algebraic condition $w_{\mathbf 7}=\mp\sigma(\mathcal{Y}_{\mathbf 7})$ on the
flux at $p$. Its solution set is large: the seven conditions it imposes leave
the $\mathbf{1}$, $\mathbf{27}$ and $\mathbf{35}$ of $w$, the $\mathbf{8}$ and
$\mathbf{48}$ of $\xi$, the $\mathbf{21}$ of $\mathcal{Y}$ and $y$ entirely free, so
$X^2$ and $Y^2$ can be as large as one likes with $\Delta=0$; there is a
constructive witness in the proof of Theorem~\ref{thm:spin7square}. Nothing
global follows, and in particular $\Delta\equiv0$ on all of $\Sect$ still
leaves the flux unconstrained by this identity alone.

\emph{Pointwise but supersymmetrically global.} Adjoining $h(p)=0$ changes the
character of the statement completely. By Theorem~\ref{thm:rigid} the pair
$\Delta(p)=h(p)=0$ at one point forces $c=0$, hence $\Theta_-\eta_-\equiv0$,
hence $X=Y=h=\Delta=0$ everywhere and $\Sect$ Ricci-flat. The extra input is
not more algebra at $p$: it is the constancy of $\nrm{\eta_+}^2=c$, which is a
global consequence of the maximum principle in the $+$ sector, so that one
point determines a constant and the constant then propagates.

The contrast is the reason the second statement is a rigidity theorem and the
first is not: $\Delta=0$ is a condition on the flux at a point, whereas
$\Delta=h=0$ at a point is a condition on a global invariant.
\end{remark}

\begin{remark}[The Ricci source in the same language]
\label{rem:riccisource}
The flux source of the horizon Ricci equation \eqref{eq:Rij},
$S_{ij}=-\tfrac12Y_{ik}Y_j{}^k+\tfrac1{12}\delta_{ij}Y^2
+\tfrac1{12}X_{iklm}X_j{}^{klm}-\tfrac1{144}\delta_{ij}X^2$, distributes over
the same modules differently, which is what makes the system rigid enough to be
worth studying and too loose to close. Its mixed component $S_{(Za)}$ receives
nothing from any single module; its two scalars $S_{ZZ}$ and $\Pi^{ij}S_{ij}$
receive a contribution from every module; and its trace-free eight-dimensional
part $S^0_{ab}$ is fed by nine module pairs, none of them diagonal inside the
four-form: there the only sources are
$(w_{\mathbf{1}},w_{\mathbf{35}})$, $(w_{\mathbf{7}},w_{\mathbf{35}})$ and
$(w_{\mathbf{27}},w_{\mathbf{35}})$, the anti-self-dual $\mathbf{35}$ against
the self-dual part, because $X_{aklm}X_b{}^{klm}$ is pure trace on a four-form
of definite duality in eight dimensions and only the mixed term survives. A
purely self-dual four-form therefore leaves $S^0_{ab}=0$, while
$w_{\mathbf{1}}+w_{\mathbf{35}}$ alone makes it non-zero. The other six pairs
are $(\xi_{\mathbf{8}},\xi_{\mathbf{8}})$,
$(\xi_{\mathbf{8}},\xi_{\mathbf{48}})$,
$(\xi_{\mathbf{48}},\xi_{\mathbf{48}})$,
$(\mathcal{Y}_{\mathbf{7}},\mathcal{Y}_{\mathbf{21}})$,
$(\mathcal{Y}_{\mathbf{21}},\mathcal{Y}_{\mathbf{21}})$ and $(y,y)$;
$(\mathcal{Y}_{\mathbf{7}},\mathcal{Y}_{\mathbf{7}})$ is absent because
$\mathrm{Sym}^2(\mathbf{7})=\mathbf{1}\oplus\mathbf{27}$ carries no
$\mathbf{35}$. The divergence $\tn^ih_i=2\Delta+h^2-Q$, by
contrast, is fed at full rank by every summand of \eqref{eq:spin7split}: the
flux norms
in $Q$ restore precisely the $131$ components that $\Delta$ and $h$ separately
cannot see.
\end{remark}

\begin{remark}[The $D=11$ counterpart of the $D=5$ case list, and where it stops]
\label{rem:caselist}
In minimal $D=5$ supergravity the same chain terminates in a complete list:
$\Delta$ and $h$ are determined by the flux and the unit bilinear, and the
near-horizon geometries are $AdS_2\times S^3$ (the static and rotating BMPV
horizons), $AdS_3\times S^2$ (the black ring and black string), and
$\Real^{1,4}$, the last being the case $\Delta=0$, $h=0$ in which all fluxes
vanish and $\Sect=T^3$ \cite{KayaniThesis}. Theorem~\ref{thm:master} and
its corollaries are the $D=11$ shadow of that list. The case $\Delta=h=0$
survives verbatim as Theorem~\ref{thm:rigid}, with $\Real^{1,1}\times\Sect$
and $\Sect$ Ricci-flat in place of $\Real^{1,4}$ and $T^3$: this is the only
branch that is rigid. If $\nrm{\eta_-}$ is constant, so that
Corollary~\ref{cor:normconst} applies, the extremes of \eqref{eq:crange}
reproduce the other two in outline --- $c'$ minimal gives $h\equiv0$ and, by
Corollary~\ref{cor:static}, a local $AdS_2\times\Sect$; $c$ maximal gives
$\Delta\equiv0$ with $h^2$ the constant $c$, which is the branch containing the
$AdS_3$ horizons, since $\Delta\equiv0$ forces $\Theta_+\eta_+=0$ by
\eqref{eq:pointwisenorm} and so a non-trivial $\Ker\Theta_+$ --- but the
intermediate range $\tfrac12F<c'<F$ is not excluded by anything proved here,
and in it both $\Delta$ and $h^2$ are non-constant with $\Delta=c'-h^2$.
Without the constancy of $\nrm{\eta_-}$ even this much is unavailable, and the
$D=11$ list stops one step earlier than the $D=5$ one. The obstruction to going
further is Proposition~\ref{prop:noclosure}: the residual
\begin{equation}
\Delta + \tfrac14h^2 - \tfrac1{864}X^2 - \tfrac1{18}Y^2
\end{equation}
of \eqref{eq:Deltaexplicit} is a sum of contractions of the $\mathrm{Spin}(7)$
form $\varphi$ with the flux, it does not vanish, and
it is not expressible through the $\varphi$-free invariants \eqref{eq:phifreeinv}
either. In $D=5$ the corresponding residual is identically zero, because a
three-dimensional section carries no four-form bilinear, and it is exactly that
accident which closes the $D=5$ case list. The failure is anticipated, in
gauge-fixed form, in \cite[\S3.3.2]{Mhorizons}: solving the KSE linear system
there leaves the $(2,2)$-traceless part of $X$ undetermined. What
Proposition~\ref{prop:noclosure} adds is the invariant-theoretic form of that
observation --- $\Delta$ is not a function of the $\varphi$-free invariants
\eqref{eq:phifreeinv}, with a rank proof --- and
Proposition~\ref{prop:spin7delta} above adds which representations the residual
actually consists of. We therefore make no claim of a
classification in $D=11$; what we claim is that the failure is representational
and not a failure of algebraic manipulation.
\end{remark}

\subsection{The supersymmetric budget: non-staticity against the
\texorpdfstring{$\mathrm{Spin}(7)$}{Spin(7)} mismatch}
\label{sec:budget}

The perfect square and the balance law of Section~\ref{sec:warp} speak about
the same $\Delta$, one giving it algebraically in terms of the flux and the
other weighing it against the failure of staticity. Eliminating it between
them leaves a single pointwise equation in which no geometric quantity other
than the warp factor survives.

\begin{corollary}[The budget]
\label{cor:budget}
Let $\eta_-$ be a Killing spinor of the $-$ sector with $f=\nrm{\eta_-}^2$ and
$c=\nrm{\eta_+}^2$, let $(Z,\varphi)$ be the $\mathrm{Spin}(7)$ data of
Lemma~\ref{lem:spin7}, and assume in addition the invariance
$\mathcal{L}_Vf=0$ used in Theorem~\ref{thm:pwbalance}. Then at every point of
$\Sect$
\begin{equation}
\frac{\lvert V\rvert^2}{f}
 \;+\; \frac{f}{108}\,\big\lVert\,w_{\mathbf{7}}
 \pm\sigma(\mathcal{Y}_{\mathbf{7}})\,\big\rVert^2 \;=\; c\,,
\label{eq:budget}
\end{equation}
the sign being that of the sector whose spinor carries $(Z,\varphi)$. Both
terms are non-negative, so each is bounded by $c$, and
\begin{enumerate}
\item $\lvert V\rvert^2/f=c$ exactly where $w_{\mathbf{7}}=\mp\sigma(\mathcal{Y}_{\mathbf{7}})$,
a linear condition on seven of the $162$ flux components which does not force
the flux to vanish: the horizon is as far from static as supersymmetry allows
precisely on the zero locus of the square, not at zero flux;
\item $\tfrac{f}{108}\lVert w_{\mathbf{7}}\pm\sigma(\mathcal{Y}_{\mathbf{7}})\rVert^2=c$
exactly on the static branch $V\equiv0$, where the whole budget is spent on
the warp factor;
\item if $\Sect$ is in addition compact and connected and the hypotheses of
Theorem~\ref{thm:komar} hold, integration gives
\begin{equation}
8\pi G^{(11)}\lvert J[\mathcal{K}]\rvert
 \;=\;\int_\Sect\frac{\lvert V\rvert^2}{f}
 \;=\; c\,\mathrm{Vol}(\Sect)-\frac1{108}\int_\Sect f\,
 \big\lVert\,w_{\mathbf{7}}\pm\sigma(\mathcal{Y}_{\mathbf{7}})\,\big\rVert^2\,,
\label{eq:budgetint}
\end{equation}
so the rotation of the horizon is the part of the budget that the
$\mathrm{Spin}(7)$ mismatch of the flux leaves unspent.
\end{enumerate}
\end{corollary}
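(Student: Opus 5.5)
The plan is to substitute the perfect square of Theorem~\ref{thm:spin7square} for $\Delta$ in the pointwise balance law of Theorem~\ref{thm:pwbalance}. Under the stated hypotheses (compact connected section, $N_->0$, and the invariance $\mathcal{L}_Vf=0$), Theorem~\ref{thm:pwbalance} gives $\Delta f+\lvert V\rvert^2/f=c$ at every point. Theorem~\ref{thm:spin7square} gives $\Delta=\tfrac1{108}\lVert w_{\mathbf 7}\pm\sigma(\mathcal{Y}_{\mathbf 7})\rVert^2$ pointwise, in the $\mathrm{Spin}(7)$ frame of whichever sector's spinor is used. The only point to check is that this $\Delta$ is the same geometric warp factor in both sectors. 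It is: \eqref{eq:Deltaexplicit} is an expansion of the single function $\Delta$ in the data $(Z,\varphi,s)$ of the chosen sector, so either frame may be used and the sign is fixed by that choice. Multiplying the square by $f$ and inserting it into \eqref{eq:pwbalance} gives \eqref{eq:budget} directly.

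Items (i) and (ii) then follow from non-negativity. Since $f>0$ (Theorem~\ref{thm:master}), both summands of \eqref{eq:budget} are non-negative, so each is bounded by $c$.

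For (i), $\lvert V\rvert^2/f=c$ at a point if and only if the square vanishes there, which by Theorem~\ref{thm:spin7square}(ii) means $w_{\mathbf 7}=\mp\sigma(\mathcal{Y}_{\mathbf 7})$. The remark that the flux need not vanish is the constructive witness recorded in that theorem.

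For (ii), the square term equals $c$ if and only if $\lvert V\rvert^2=0$. Read globally, this is the static branch $V\equiv0$ of Remark~\ref{rem:static-terminology}. Conversely, on $V\equiv0$ we have $\Delta f=c$ by \eqref{eq:staticbranch}, which is the same statement.

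For (iii), integrate \eqref{eq:budget} over $\Sect$. This gives $\int_\Sect\lvert V\rvert^2/f=c\,\mathrm{Vol}(\Sect)-\tfrac1{108}\int_\Sect f\lVert\cdot\rVert^2$, which is Corollary~\ref{cor:balance} with the square substituted. The first equality in \eqref{eq:budgetint} should then come from the identification $8\pi G^{(11)}\lvert J\rvert=\int_\Sect\lvert V\rvert^2/f$ announced in \eqref{eq:intro-komarbalance}, i.e.\ from Theorem~\ref{thm:komar} together with the balance law. I would invoke that result as stated rather than re-derive the Komar reduction.

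The main obstacle is this Komar step, because it appears later in the paper than the statement. Within the portion proved so far, the substantive content is the pointwise substitution, which is immediate. If Theorem~\ref{thm:komar} expresses $J$ as a weighted flux integral, I would first combine it with \eqref{eq:weightedbound} and \eqref{eq:Vbound} to eliminate $\int\Delta f$, and then check that the orientation sign makes the result $-\int\lvert V\rvert^2/f$. Taking the absolute value then gives \eqref{eq:budgetint}.
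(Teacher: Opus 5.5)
Your proposal is correct and follows the paper's own proof. You substitute \eqref{eq:deltasquare} into the second form of \eqref{eq:pwbalance} and read (i) and (ii) off the two non-negative summands using $f>0$. For (iii) you integrate and invoke the Komar magnitude \eqref{eq:komarabs} of Corollary~\ref{cor:smarr}, which the paper derives exactly as you sketch, from Theorem~\ref{thm:komar} and \eqref{eq:Vbound}, so the forward reference is not circular.

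One step that neither you nor the paper spells out concerns the $\eta_-$ frame; the $\eta_+$ frame is safe, since $\nrm{\eta_+}$ is constant. In the $\eta_-$ frame, \eqref{eq:htofluxgen} shows that the right-hand side of \eqref{eq:htoflux} equals $h+2\,\mathrm{d}\log f$ rather than $h$. Through the $-\tfrac14h^2$ term of \eqref{eq:Deltaexplicit}, the square therefore equals $\Delta+f^{-2}V\!\cdot\!\tn f$, not $\Delta$. It is the assumed invariance $\mathcal{L}_Vf=0$, not \eqref{eq:Deltaexplicit} alone, that justifies using that frame away from the critical points of $f$.
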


\begin{proof}
Substitute \eqref{eq:deltasquare} for $\Delta$ in the second form of
\eqref{eq:pwbalance}. The equality cases are those of
Theorem~\ref{thm:spin7square}(ii) and of $\lvert V\rvert^2=0$, and $f>0$.
For \eqref{eq:budgetint} integrate \eqref{eq:budget} over $\Sect$ and use
the Komar magnitude \eqref{eq:komarabs} of Corollary~\ref{cor:smarr} below,
which is proved from Theorem~\ref{thm:komar} and \eqref{eq:Vbound} and does
not use the present corollary.
\end{proof}

\begin{remark}[What \eqref{eq:budget} does and does not add]
\label{rem:budget}
The equation is substitution: it adds no hypothesis to
Theorem~\ref{thm:pwbalance} and no input to Theorem~\ref{thm:spin7square}, and
it inherits both --- in particular the invariance $\mathcal{L}_Vf=0$, without
which only the weighted family of Theorem~\ref{thm:balance} is available. Its
interest is that the two sides of the horizon problem it joins are of quite
different natures. The constancy of $\Delta f+\lvert V\rvert^2/f$ is not
special to supersymmetry \cite{colling}; the square is not a general fact about
extremal horizons at all, but the $D=11$ statement that the geometric warp
function is a fixed quadratic expression in seven flux components. What
supersymmetry supplies to \eqref{eq:budget} is the value of the constant,
$c=\nrm{\eta_+}^2$, and it is that identification which turns a relation
between two unknown functions into a pointwise bound on each of them by a
number fixed by the $+$ sector. The vanishing locus in (i) is the same linear
subspace that appears in Theorem~\ref{thm:spin7square}(ii), and its persistence
here is the reason the bound $\lvert V\rvert^2/f\le c$ of
Corollary~\ref{cor:balance} is sharp: the equality case is not an
artefact of a degenerate flux.
\end{remark}

\subsection{The geometry of $Z$ and the intrinsic torsion}
\label{sec:ztwist}

The same decomposition determines the differential geometry of the unit vector
field $Z$ of Lemma~\ref{lem:spin7}, and with it the intrinsic torsion of the
$\mathrm{Spin}(7)$ structure. We record the results because they are what a
classification attempt would have to use, and because two of them are negative.

Two abbreviations are needed. $W$ is the flux contracted once with $\varphi$,
a two-index tensor, transverse in its second index, whose trace is the scalar
$X\!\cdot\!\varphi$ already met in \eqref{eq:hsplit} and whose $Z$-row is
$\tau$; it is not symmetric, and both of its parts occur below. In terms of the
blocks: since $\varphi$ has no leg along $Z$, contracting $W$ with $Z$ in its
first index gives $\tau_j=(i_ZX)^{klm}\varphi_{jklm}=(\xi\lrcorner\varphi)_j$,
the vector of \eqref{eq:hsplit}, so $\tau$ sees $\xi_{\mathbf 8}$ and nothing
else, while the trace $W^i{}_i=w^{abcd}\varphi_{abcd}$ sees $w_{\mathbf 1}$ and
nothing else. The abbreviations let \eqref{eq:nablaZ} be written without
projectors; the ranks of $X\mapsto W$ and $X\mapsto\tau$ are recorded in the
proof.

\begin{proposition}[Covariant derivative of $Z$]
\label{prop:nablaZ}
Let $\eta$ be a Killing spinor of the sector $s=\pm1$ and $(Z,\varphi)$ its
data. Write $W_{ij}=X_i{}^{klm}\varphi_{jklm}$, so that $W^i{}_i=X\!\cdot\!\varphi$
and $W_{ij}Z^j=0$, and $\tau_j=Z^iW_{ij}$. Then, using only the spatial Killing
spinor equation and the vector identity of that sector,
\begin{equation}
\tn_iZ_j = \tfrac{s}2h_iZ_j + \tfrac{s}3Y_{ij}
 + \tfrac{s}{12}Y^{kl}\varphi_{ijkl}
 + \tfrac1{144}\delta_{ij}\,X^{klmn}\varphi_{klmn}
 - \tfrac1{18}W_{ij} - \tfrac1{36}W_{ji}\,,
\label{eq:nablaZ}
\end{equation}
in which no derivative of the flux and no field equation appears. In particular
\begin{equation}
\tn^iZ_i = -\tfrac1{36}X^{klmn}\varphi_{klmn} = 2s\,h\!\cdot\!Z\,,\qquad
Z^i\tn_iZ_j = -s\big(h_j-(h\!\cdot\!Z)Z_j+Y_{jk}Z^k\big) = (i_Z\mathrm{d}Z)_j\,,
\label{eq:Zdivacc}
\end{equation}
and, pointwise and at a fixed point,
\begin{equation}
\begin{aligned}
\tn^iZ_i=0 &\iff X^{klmn}\varphi_{klmn}=0 \iff h\!\cdot\!Z=0\,,\\
Z \text{ geodesic} &\iff h_i-(h\!\cdot\!Z)Z_i = -Y_{ij}Z^j
 \iff 6s\,y = \tau\,,\\
\mathcal{L}_Zg=0 &\iff w_{\mathbf{1}}=w_{\mathbf{35}}=0 \text{ and } 6s\,y=\tau\,,\\
Z\wedge\mathrm{d}Z=0 &\iff \mathcal{Y}_{\mathbf{21}}=0 \text{ and }
 \Pi W_{[ij]} = -6s\,\mathcal{Y}_{\mathbf{7}}\,,\\
\tn_iZ_j=0 &\iff Z \text{ is Killing and } Z\wedge\mathrm{d}Z=0\\
 &\iff w_{\mathbf{1}}=w_{\mathbf{35}}=\mathcal{Y}_{\mathbf{21}}=0,\quad
 6s\,y=\tau \quad\text{and}\quad \Pi W_{[ij]}=-6s\,\mathcal{Y}_{\mathbf{7}}\,.
\end{aligned}
\label{eq:Zifftable}
\end{equation}
The four conditions in \eqref{eq:Zifftable} are the standard ones and are worth
naming. $\tn^iZ_i=0$ says the flow of $Z$ preserves volume; $Z$ geodesic says its
integral curves are geodesics of $\Sect$, that is $\tn_ZZ=0$; $\mathcal{L}_Zg=0$
says $Z$ is Killing; and $Z\wedge\mathrm{d}Z=0$ says the eight-plane distribution
$Z^\perp$ is integrable, by Frobenius, so that $\Sect$ is locally foliated by
hypersurfaces orthogonal to $Z$ --- the twist of $Z$ in the classical sense, and
the failure of $Z$ to be hypersurface-orthogonal is what \eqref{eq:Zifftable}
measures. The last line is the conjunction of the two preceding ones and says
$Z$ is parallel, in which case $\Sect$ is locally a metric product along $Z$.
Note that this use of ``twist'' is the one for the vector field $Z$ and is not
the twist one-form $h$ of \eqref{eq:gnc}; the two are unrelated objects.

The derivative $\tn_iZ_j$ is blind to $w_{\mathbf{27}}$ and to
$\xi_{\mathbf{48}}$:
adding either to the flux leaves $\tn_iZ_j$, and $h$, unchanged. The last line
of \eqref{eq:Zifftable} is a joint condition and must not be read term by term
in \eqref{eq:nablaZ}. The map $(X,Y)\mapsto\tn Z$ is onto the $72$-dimensional
space of tensors $T_{ij}$ with $T_{ij}Z^j=0$, so its kernel is
$162-72=90$-dimensional: the $\mathbf{27}$ and the $\mathbf{48}$, which are
invisible in any case, together with the eight directions on which $y$ is tuned
against $\tau$ and the seven on which $\mathcal{Y}_{\mathbf{7}}$ is tuned
against $\Pi W_{[ij]}$. Demanding instead that $w\in\mathbf{27}$ and
$\mathcal{Y}=0$ separately picks out only $83$ of those $90$ directions: the
$\mathbf{7}$ of $w$ and the $\mathbf{7}$ of $\mathcal{Y}$ may cancel each other
in $\tn Z$, and explicit configurations in which they do are exhibited in the
verification.
\end{proposition}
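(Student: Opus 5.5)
The plan is to differentiate the two bilinears of Lemma~\ref{lem:spin7} directly with the Killing spinor equation and then evaluate the resulting Clifford bilinears with Lemma~\ref{lem:spin7}. Since $Z_j=f^{-1}\ip{\eta}{\Gamma_j\eta}$ with $f=\nrm{\eta}^2$, we have
\[
\tn_iZ_j=-f^{-1}(\tn_i\log f)\ip{\eta}{\Gamma_j\eta}
-f^{-1}\ip{\eta}{(\Psi_i^{T}\Gamma_j+\Gamma_j\Psi_i)\eta}\,,
\]
using $\tn_i\eta=-\Psi_i\eta$. The gradient $\tn_i\log f$ is supplied by the vector identity of the sector: \eqref{eq:vectorid} in the $+$ sector and \eqref{eq:vectoridminus} in the $-$ sector. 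Equivalently, it is the relation used in the proof of Theorem~\ref{thm:htoflux}, solved for $\tn_i\log f$ rather than for $h$. No field equation enters, and no derivative of the flux. The operator $\Psi_i^T\Gamma_j+\Gamma_j\Psi_i$ is expanded in the Clifford basis, using \eqref{eq:Psi} and \eqref{eq:Psitranspose}. Ranks $2,3,6,7$ are discarded by Lemma~\ref{lem:ranks}. The surviving rank-$0,1,4,5,8$ bilinears are then replaced by $1$, $Z$, $\varphi$, $Z\wedge\varphi$ and $s\,\epsilon\cdot Z$ from \eqref{eq:Zphi}--\eqref{eq:Zphihigher}. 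The rank-$8$ and rank-$5$ pieces, contracted with $Y$ and $X$, reduce through identities such as $\epsilon_{ij\cdots}Z^a\Gamma$ and $i_Z\varphi=0$. This step should produce the terms $\tfrac{s}{3}Y_{ij}$, $\tfrac{s}{12}Y^{kl}\varphi_{ijkl}$, $\delta_{ij}X\!\cdot\!\varphi$ and $W_{ij}$, $W_{ji}$ with the stated coefficients. The $h_iZ_j$ term comes from combining the rank-zero $\mp\tfrac14h_i$ of $\Psi_i$ with the gradient term. The sector sign $s$ tracks the $\mp$ in \eqref{eq:Psi}.

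I expect the coefficient bookkeeping in this expansion to be the main obstacle, in particular the $X$-terms. There, $\Gamma_i{}^{jklm}X_{jklm}\Gamma_j$ and $X_{ijkl}\Gamma^{jkl}\Gamma_j$ generate rank-$4$ and rank-$6$ products whose survivors have to be identified as $W_{ij}$ versus $W_{ji}$. The same approach as Proposition~\ref{prop:spin7delta} avoids doing this by hand. The right-hand side is an equivariant linear function of $(X,Y,h)$ valued in tensors with $T_{ij}Z^j=0$; that $Z^j$-transversality follows from $Z^2=1$. By Schur's lemma the equivariant linear maps from the flux blocks into $\Real Z\otimes Z^\perp\oplus Z^\perp\otimes Z^\perp$ form a finite-dimensional space. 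So I would fix the ansatz spanned by the six displayed structures, plus any others Schur allows, and determine the coefficients by exact rational evaluation on samples in the explicit representation. This has to be done in both sectors.

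The corollaries are contractions of \eqref{eq:nablaZ}:
\begin{enumerate}
\item Tracing gives $\tn^iZ_i$, using $W^i{}_i=X\!\cdot\!\varphi$, $Y_{ij}\delta^{ij}=0$, $\varphi_{ii\cdots}=0$ and $h\!\cdot\!Z$ from \eqref{eq:hZ}.
\item Contracting with $Z^i$ gives the acceleration, using $Z^iW_{ij}=\tau_j$, $W_{ji}Z^i=0$ and \eqref{eq:hsplit} to trade $\tau$ for $\Pi h$.
\item The identity $i_Z\mathrm{d}Z=\tn_ZZ$ holds for any unit field, since $Z^j\tn_iZ_j=0$.
\end{enumerate}

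The table \eqref{eq:Zifftable} is then read off blockwise. The geodesic condition is $\Pi h=-y$, which by \eqref{eq:hsplit} is $6s\,y=\tau$. The symmetric part of $\Pi\tn Z\Pi$ involves only $W_{(ab)}$ and $\delta_{ab}$. By Schur, these see only $w_{\mathbf1}$ and $w_{\mathbf{35}}$, since $\mathrm{Sym}^2_0(\mathbf8)=\mathbf{35}$; the mixed part $Z\otimes Z^\perp$ reduces to the geodesic condition. The antisymmetric part sees $\mathcal Y_{\mathbf7},\mathcal Y_{\mathbf{21}}$ and $\Pi W_{[ab]}$ in $\mathbf7\oplus\mathbf{21}$, and $W_{[ab]}$ has no $\mathbf{21}$ part coming from $w$. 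The surjectivity onto the $72$-dimensional space and the kernel count $90$ are a rank computation. These are checked exactly in Appendix~\ref{app:verify}, and the blindness to $\mathbf{27}$ and $\mathbf{48}$ follows from Schur, since neither occurs in the target.
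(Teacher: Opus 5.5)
Your route is the paper's: the quotient rule on $Z_j=f^{-1}\ip{\eta}{\Gamma_j\eta}$ with $\tn_i\eta=-\Psi_i\eta$, Lemmas~\ref{lem:ranks} and~\ref{lem:spin7} for the surviving bilinears, then equivariance and exact ranks for \eqref{eq:Zifftable}. However, the step that is meant to produce the $h$-term fails as you describe it.

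The rank-zero part $\mp\tfrac14h_i$ of $\Psi_i$ is a multiple of the identity. It therefore contributes $\tfrac{s}{2}h_iZ_j$ to $f^{-1}\tn_i\ip{\eta}{\Gamma_j\eta}$ and exactly $\tfrac{s}{2}h_i$ to $\tn_i\log f$. If you supply the gradient from \eqref{eq:vectorid} or \eqref{eq:vectoridminus}, as you propose, the two cancel. This cancellation is the point the paper's proof makes. What you then obtain is the $h$-free identity $\tn_iZ_j=B_{ij}-v_iZ_j$. Here $B_{ij}$ denotes the five flux terms of \eqref{eq:nablaZ}, and $v_i=\tfrac{s}{3}Y_{ik}Z^k+\tfrac1{144}Z_iX\!\cdot\!\varphi-\tfrac1{36}\tau_i$ is the flux part of $\tn_i\log f$.

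The right-hand side of \eqref{eq:nablaZ} is exactly $f^{-1}\tn_i\ip{\eta}{\Gamma_j\eta}=\tn_iZ_j+(\tn_i\log f)Z_j$. Its $\tfrac{s}{2}h_iZ_j$ is the rank-zero term alone. It agrees with $-v_iZ_j$ only through the algebraic twist formula \eqref{eq:htoflux}, i.e.\ the vector identity in its stationary form. So the derivation needs $\tn\nrm{\eta}=0$ at the point, or else the term $-(\tn_i\log f)Z_j$ must be kept. Stationarity is automatic for $\eta_+$ when $\Sect$ is compact, and is an extra hypothesis for $\eta_-$.

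For the same reason, your claim that the right-hand side is $Z^j$-transverse ``by $Z^2=1$'' is false for independent $(X,Y,h)$. Contracting \eqref{eq:nablaZ} with $Z^j$ gives $\tfrac{s}{2}h_i+v_i=\tn_i\log f$. An ansatz valued in transverse tensors therefore cannot carry $h_iZ_j$ as a free structure. On the locus where \eqref{eq:htoflux} holds, $h_iZ_j$ is collinear with $v_iZ_j$, so your fitting scheme must commit to one side of that choice. You already invoke \eqref{eq:hZ} and \eqref{eq:hsplit} for the corollaries, and both presuppose stationarity. The repair is to state that hypothesis at the outset and derive the main formula under it.

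Smaller points.
\begin{itemize}
\item The fitting is not needed. In $\ip{\eta}{\Gamma_j\Psi_i\eta}$ only ranks $0,1,4$ survive, and no rank-eight bilinear arises. The products are short: $\Gamma_j\Gamma_{iklmn}X^{klmn}$ contributes $\delta_{ij}X\!\cdot\!\varphi-4W_{ji}$, and $\Gamma_j\Gamma^{klm}X_{iklm}$ contributes $W_{ij}$. This settles $W_{ij}$ against $W_{ji}$ directly.
\item In the geodesic line you have a sign slip. Since $Y_{ij}Z^j=-y_i$ for $y=i_ZY$, the condition $\tn_ZZ=0$ reads $\Pi h=y$, not $\Pi h=-y$. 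With $\Pi h=\tfrac23y+\tfrac{s}{18}\tau$ from \eqref{eq:hsplit}, this is $6s\,y=\tau$; your version would give $\tau=-30s\,y$.
\item Your blockwise Schur reading of \eqref{eq:Zifftable} is a fair alternative to the paper's rank computations ($51$ for $X\mapsto W$, $8$ for $X\mapsto\tau$, $72=44+28$ for the whole map). But Schur only says which modules may feed each block.
\begin{itemize}
\item The exact conditions $\mathcal{Y}_{\mathbf{21}}=0$ and $\Pi W_{[ab]}=-6s\,\mathcal{Y}_{\mathbf 7}$ need the eigenvalues $-6$ on $\mathbf 7$ and $+2$ on $\mathbf{21}$ of $b_{ab}\mapsto\varphi_{abcd}b^{cd}$. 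These make the antisymmetric transverse block equal to $-\tfrac{s}{6}\mathcal{Y}_{\mathbf 7}+\tfrac{s}{2}\mathcal{Y}_{\mathbf{21}}-\tfrac1{36}\Pi W_{[ab]}$.
\item That $w_{\mathbf 1}$, $w_{\mathbf 7}$ and $w_{\mathbf{35}}$ actually reach their blocks is what the surjectivity rank supplies.
\end{itemize}
\end{itemize}
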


\begin{proof}
Split $\Psi^{(\pm)}_i$ into the parts symmetric and antisymmetric under the
Majorana pairing (ranks $0,1,5$ and rank $3$ respectively) and apply the
quotient rule to $Z_j=\nrm{\eta}^{-2}\ip{\eta}{\Gamma_j\eta}$; the term in $h$,
being a multiple of the identity, cancels between the numerator and
$\tn_i\nrm{\eta}^2$, so \eqref{eq:nablaZ} carries $h$ only through the vector
identity that fixes it. Each equivalence in \eqref{eq:Zifftable} is then a rank
computation on the linear maps $X\mapsto W$, $X\mapsto\tau$ and
$Y\mapsto Y\!\cdot\!\varphi$: exactly, $\mathrm{rank}(X\mapsto W)=51=1+7+35+8$ and
$\mathrm{rank}(X\mapsto\tau)=8$; these maps are $\mathrm{Spin}(7)$-equivariant,
so their kernels are submodules, and the only submodule of $\Lambda^4(Z^\perp)$
of dimension $70-43=27$ is the $\mathbf{27}$ and the only submodule of
$\Lambda^3(Z^\perp)$ of dimension $56-8=48$ is the $\mathbf{48}$. The ranks
themselves are computed exactly in rational arithmetic on the explicit
$\varphi$ built from the spinor, so this half of the proof is a finite exact
linear-algebra computation. The
eigenvalues of the unnormalised contraction $b_{ab}\mapsto\varphi_{abcd}b^{cd}$ on
$\Lambda^2(Z^\perp)$, which is twice the operator of \eqref{eq:spin7split}, come out
as $-6$ on $\mathbf{7}$ and $+2$ on $\mathbf{21}$ in \emph{both} sectors. The last
line of \eqref{eq:Zifftable} needs one further step, since two modules of the
source feed the same module of the target: the rank of $(X,Y)\mapsto\tn Z$ is
computed exactly and equals $72=44+28$, the ranks of the Killing and the
hypersurface-orthogonality conditions, so the two together are equivalent to
$\tn Z=0$; a general element of the $90$-dimensional kernel is then constructed
in closed form, with $y$ tuned against $\tau$ and $\mathcal{Y}_{\mathbf{7}}$
against $\Pi W_{[ij]}$, and it has $\mathcal{Y}\ne0$ and $w\notin\mathbf{27}$.
Every equivalence is checked in both directions, with explicit flux
configurations realising each property while violating its neighbours
($216$ checks).
\end{proof}

\begin{proposition}[Intrinsic torsion, and two negative results]
\label{prop:torsion}
The reduction $SO(9)\to\mathrm{Spin}(7)$ defined by a Killing spinor has
intrinsic torsion in
$T^*\Sect\otimes\mathfrak{so}(9)/\mathfrak{spin}(7)=(\mathbf{1}+\mathbf{8})
\otimes(\mathbf{7}+\mathbf{8})$, of dimension $135$, splitting into the eight
modules $\mathbf{8}\oplus\mathbf{1}\oplus\mathbf{7}\oplus\mathbf{21}
\oplus\mathbf{35}$ from $\tn Z$ and
$\boldsymbol{\tau}_{\mathbf{7}}\oplus\boldsymbol{\tau}_{\mathbf{8}}
\oplus\boldsymbol{\tau}_{\mathbf{48}}$ from $\tn\varphi$, the last two being
the Fern\'andez classes of the $\mathrm{Spin}(7)$ structure on $Z^\perp$
\cite{Fernandez}. Then:
\begin{enumerate}
\item neither $h$ nor $\Delta$ appears in the torsion at all --- the torsion is
a function of $(X,Y)$ and $(Z,\varphi)$ alone, and shifting $h$ by an arbitrary
vector changes $\mathrm{d}\log\nrm{\eta}^2$ but no torsion class;
\item the singlet class is $\tn^iZ_i=2s\,h\!\cdot\!Z$ by \eqref{eq:Zdivacc}, so on
the saturation branch $h\equiv0$, which needs $\Sect$ compact, it vanishes: such horizons carry a
divergence-free $Z$. This is the torsion-class characterisation of the
saturation branch;
\item the flux-to-torsion map is surjective at a fixed spinor: its image is
all $135$ dimensions, so every one of the eight modules, the
$\boldsymbol{\tau}_{\mathbf{48}}$ included, is realised by explicit flux
configurations. This is an algebraic statement about the map, not about every
solution: it does not say that $\boldsymbol{\tau}_{\mathbf{48}}\ne0$ for a
given horizon. What it does say is that no restricted Fern\'andez class, and in
particular local conformal parallelism, is \emph{forced} by supersymmetry ---
generic flux has $\boldsymbol{\tau}_{\mathbf{48}}\ne0$; and this remains true
on the saturation branch, where only the singlet is switched off.
\end{enumerate}
\end{proposition}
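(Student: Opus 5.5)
The plan is to set up the representation theory first, then read off each item from the quotient-rule formulas for $\tn Z$ and $\tn\varphi$ already used in Proposition~\ref{prop:nablaZ}.

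\emph{Decomposition.} Under $\mathrm{Spin}(7)\subset\mathrm{Spin}(8)\subset\mathrm{Spin}(9)$ one has $\mathfrak{so}(9)=\mathfrak{so}(8)\oplus Z^\perp$ and $\mathfrak{so}(8)=\mathfrak{spin}(7)\oplus\mathbf 7$, so $\mathfrak{so}(9)/\mathfrak{spin}(7)=\mathbf 7\oplus\mathbf 8$. Here $Z^\perp$ is the spin representation $\mathbf 8$ of $\mathrm{Spin}(7)$, and the $\mathbf 7$ is the $\mathbf 7\subset\Lambda^2(Z^\perp)$ of \eqref{eq:spin7split}. Likewise $T^*\Sect=\mathbf 1\oplus\mathbf 8$. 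I would then tensor out the pieces:
\begin{itemize}
\item $\mathbf 1\otimes\mathbf 8=\mathbf 8$ is $Z^i\tn_iZ_j$.
\item $\mathbf 8\otimes\mathbf 8=\mathbf 1\oplus\mathbf 7\oplus\mathbf{21}\oplus\mathbf{35}$ is $\Pi\tn Z$. The singlet is its trace, the $\mathbf 7\oplus\mathbf{21}$ its antisymmetric part, and the $\mathbf{35}$ its symmetric traceless part.
\item $\mathbf 1\otimes\mathbf 7$ and $\mathbf 8\otimes\mathbf 7=\mathbf 8\oplus\mathbf{48}$ are the $\mathbf 7$-valued part of $\tn\varphi$, that is $\boldsymbol\tau_{\mathbf 7}$ along $Z$ and $\boldsymbol\tau_{\mathbf 8},\boldsymbol\tau_{\mathbf{48}}$ transversally.
\end{itemize}
This gives $8+1+7+21+35+7+8+48=135=9\cdot15$. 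The $\tn Z$ part is automatically transverse, since $Z^j\tn_iZ_j=0$ follows from $Z^2=1$.

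\emph{Item (i).} Write $Z$ and $\varphi$ as $\nrm{\eta}^{-2}$ times bilinears (Lemma~\ref{lem:spin7}) and differentiate using $\tn_i\eta=-\Psi_i\eta$. Exactly as in the proof of Proposition~\ref{prop:nablaZ}, the rank-zero term $\mp\tfrac14h_i$ of $\Psi_i$ is a multiple of the identity. Its contribution to the numerator therefore cancels against its contribution to $\tn_i\nrm{\eta}^2$. What remains is built from the $Y$- and $X$-terms of $\Psi_i$ contracted against bilinears, which by \eqref{eq:Zphi}--\eqref{eq:Zphihigher} are functions of $(Z,\varphi)$. $\Delta$ never occurs in $\Psi_i$. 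Hence $(\tn Z,\tn\varphi)$, and a fortiori their projections to the torsion, depend only on $(X,Y,Z,\varphi)$. The $h$ in \eqref{eq:nablaZ} appears only after the vector identity \eqref{eq:htofluxgen} is substituted, and that identity absorbs any shift of $h$ into $\mathrm{d}\log\nrm{\eta}^2$.

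\emph{Item (ii).} The singlet class is the trace of $\Pi\tn Z$. Since $Z^iZ^j\tn_iZ_j=0$, this trace equals $\tn^iZ_i$, which is $2s\,h\!\cdot\!Z$ by \eqref{eq:Zdivacc}. On $h\equiv0$ it therefore vanishes.

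\emph{Item (iii).} The map from flux to torsion is linear in the $162$ components and $\mathrm{Spin}(7)$-equivariant, so its image is a submodule of the $135$-dimensional target. Schur's lemma alone does not settle surjectivity, because $\mathbf 7$ and $\mathbf 8$ each occur twice in the target. The plan is to compute the $135\times162$ matrix exactly in the explicit representation of Appendix~\ref{app:rep}, with the $\tn\varphi$ part projected onto $\mathbf 7$ in its form indices, and verify that its rank is $135$. The $\tn Z$ block already has rank $72$ by Proposition~\ref{prop:nablaZ}, so only the remaining $63$ from $\tn\varphi$ must be shown independent of it. Having full rank, every module is realised, and I would exhibit a flux with $\boldsymbol\tau_{\mathbf{48}}\neq0$ explicitly. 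Since the singlet is the only class tied to $h$, the same rank count restricted to the codimension-one slice $X\!\cdot\!\varphi=0$ (equivalent to $h\!\cdot\!Z=0$) shows that only the singlet is lost on the saturation branch.

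I expect the main obstacle to be the $\tn\varphi$ part. Deriving the $\mathbf 7$-projection of $\tn_i\varphi$ needs rank-four bilinear identities, namely products of $\Gamma_{jklm}$ with the rank-one, three and five pieces of $\Psi_i$ reduced to $(Z,\varphi)$. Separating the two copies of $\mathbf 8$, from $\tn_ZZ$ and from $\boldsymbol\tau_{\mathbf 8}$, requires the full rank computation rather than a module-by-module check. I would attempt this numerically and exactly before seeking a closed form.
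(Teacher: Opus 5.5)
Your representation-theoretic set-up, items (i) and (ii), and the surjectivity half of (iii) are sound. The gap is in the last step of (iii), the claim about the saturation branch.

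Pointwise, $h\equiv0$ is not the codimension-one slice $X\!\cdot\!\varphi=0$. On that branch both spinor norms are constant ($\nrm{\eta_+}$ by compactness, $\nrm{\eta_-}$ by Corollary~\ref{cor:static}), so \eqref{eq:htoflux} applies in both sectors. By \eqref{eq:hsplit} the condition $h=0$ is then nine linear conditions on the flux: $w_{\mathbf 1}=0$ together with $\Pi h=\tfrac23y\pm\tfrac1{18}\,\xi\lrcorner\varphi=0$.

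Your remark that ``the singlet is the only class tied to $h$'' is also not right. By the second formula of \eqref{eq:Zdivacc}, the $\mathbf 8$-class $Z^i\tn_iZ_j=-s\big(h_j-(h\!\cdot\!Z)Z_j+Y_{jk}Z^k\big)$ carries $\Pi h$. The eight conditions you dropped act precisely on the two $\mathbf 8$-sources $(\xi_{\mathbf 8},y)$, forcing $y=\mp\tfrac1{12}\,\xi\lrcorner\varphi$.

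On the true saturation data $\{h=0\}$, a $153$-dimensional submodule that still contains the whole kernel $w_{\mathbf{27}}$, the image of the flux-to-torsion map therefore has dimension $153-27=126$, not $134$. Besides the singlet, an eight-dimensional combination of the two $\mathbf 8$'s is lost: $\tn_ZZ$ and $\boldsymbol\tau_{\mathbf 8}$ have both become equivariant images of the single module $\xi_{\mathbf 8}$, and so are proportional.

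A rank count on your slice leaves $y$ free, so it cannot decide the clause. Read as ``the image has codimension one'', the clause is false. What can be true, and must be checked separately, is that no module other than the singlet is \emph{forced to vanish}. For that, restrict the map to $\{h=0\}$ and verify that each of the seven non-singlet projections stays non-zero. For $\tn_ZZ$ this is immediate, since it becomes a non-zero multiple of $\xi\lrcorner\varphi$. For $\boldsymbol\tau_{\mathbf 8}$ it is the non-vanishing of the one scalar through which it depends on $\xi_{\mathbf 8}$ once $y$ is eliminated, and your computation never isolates that scalar.

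Separately, the paper avoids the obstacle you anticipate. It takes the torsion to be $\kappa_i$, the part of $-\Psi_i\eta$ orthogonal to $\eta$, split along $\eta^\perp=E_{\mathbf 8}\oplus E_{\mathbf 7}$ with $E_{\mathbf 8}=\{(Z_a-\Gamma_a)\eta\}$ and $E_{\mathbf 7}=\{F_{ab}\Gamma^{ab}\eta\}$. This is your $(\tn Z,\Pi\tn\varphi|_{\mathbf 7})$ in other coordinates, but it makes (i) a one-line projection and needs no rank-four bilinear identities. The paper also invokes $\mathrm{Spin}(9)$-transitivity on $S^{15}$ to turn a rank computation at one spinor into a statement at every spinor; you need that step too.
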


\begin{proof}
Since $\mathrm{Spin}(9)$ acts transitively on $S^{15}$ with stabiliser
$\mathrm{Spin}(7)$, the intrinsic torsion is the covariant derivative of the
unit spinor: $\kappa_i=\tn_i\eta-\tfrac12(\tn_i\log\nrm{\eta}^2)\eta$, the
projection of $-\Psi_i\eta$ orthogonal to $\eta$, a $9\times15$ array. The
decomposition $\Real^{16}=\langle\eta\rangle\oplus E_{\mathbf{8}}\oplus
E_{\mathbf{7}}$ with $E_{\mathbf{8}}=\{(Z_a-\Gamma_a)\eta\}$ and
$E_{\mathbf{7}}=\{F_{ab}\Gamma^{ab}\eta:F\in\Lambda^2_{\mathbf{7}}\}$ is
established by exact rational rank computations on the explicit matrices at one
spinor, as is the completeness statement that the flux map onto the eight
modules has the same rank, $135$, as the map onto $\kappa$ itself; since
$\mathrm{Spin}(9)$ acts transitively on $S^{15}$ and every map here is
equivariant, one spinor suffices, so these are proofs and not samples. Items
(i) and (iii) are exact rank computations of the same kind, and (ii) follows
from \eqref{eq:Zdivacc} ($226$ checks).
\end{proof}

\section{Consequences and open branches}
\label{sec:consequences}

Everything proved so far is pointwise, or a single integration of a pointwise
identity. This section draws the consequences in three groups, in that order.
First, what the master identity gives locally: the Dirac current is Killing,
the bound of Theorem~\ref{thm:bound} is saturated exactly on the product
horizons, and the $\mathfrak{sl}(2,\Real)$ algebra follows from four inputs of
which the master identity supplies one (Sections~\ref{sec:killingvec}
and~\ref{sec:sl2r}). Second, what it gives after integration over a compact
section: a Komar charge, a bound on the rotation, a volume and hence an entropy
formula, the total scalar curvature, and --- on the static branch --- a
conformal factor of positive scalar curvature built from the Killing spinor
(Sections~\ref{sec:charges} to~\ref{sec:psc}). Third, what it does not give,
which is as much of the picture as the first two: no second isometry from the
sector unit vectors, no hidden symmetry from any invariant form of the
$\mathrm{Spin}(7)$ structure, and no Bochner route to removing the constancy
hypothesis (Sections~\ref{sec:twosector} to~\ref{sec:bochner}). The three
negative results are established by exhibiting the obstruction in each case,
not by an absence of evidence, and each is a codimension count in the
$162$-dimensional flux space rather than a failure of technique.

\subsection{The Killing vector and the saturation theorem}
\label{sec:killingvec}

The pointwise statements acquire a spacetime reading through the Dirac current
of the Killing spinor, which we now record in our conventions; the identity it satisfies on the section is the
$D=11$ counterpart of the bilinear relations of \cite{iiaindex}.

For a commuting Majorana spinor $\epsilon$ of $\mathrm{Cl}(10,1)$ put
\begin{equation}
K_M \;=\; \epsilon^T C\,\Gamma_M\,\epsilon\,,\qquad C=\Gamma_0\,,
\label{eq:diraccurrent}
\end{equation}
which is non-trivial because $C$ is antisymmetric and $C\Gamma_M$ symmetric.

\begin{proposition}[The Dirac current is Killing]
\label{prop:killing}
If $\epsilon$ solves \eqref{eq:kse}, then $\nabla_{(M}K_{N)}=0$.
\end{proposition}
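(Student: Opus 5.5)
The plan is the standard eleven-dimensional argument: differentiate the bilinear, substitute the Killing spinor equation, and show that the flux contribution is antisymmetric in $(M,N)$. Write \eqref{eq:kse} as $\nabla_M\epsilon=-\mathcal{B}_M\epsilon$ with
\[
\mathcal{B}_M=-\tfrac1{288}\Gamma_M{}^{L_1L_2L_3L_4}F_{L_1L_2L_3L_4}+\tfrac1{36}F_{ML_1L_2L_3}\Gamma^{L_1L_2L_3}.
\]
Since $C=\Gamma_0$ is covariantly constant under the spin connection and $\Gamma_N$ is parallel,
\[
\nabla_MK_N=-\epsilon^T\big(\mathcal{B}_M^TC\Gamma_N+C\Gamma_N\mathcal{B}_M\big)\epsilon .
\]
Using $\Gamma_A^T C=-C\Gamma_A$ up to the sign fixed by the Majorana conventions, one has $\mathcal{B}_M^TC=C\,\overline{\mathcal{B}}_M$, where the bar reverses the sign of each product according to its rank. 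Writing $\overline{\mathcal{B}}_M$ explicitly, the derivative becomes $\epsilon^TC\,\mathcal{M}_{MN}\epsilon$ with
\[
\mathcal{M}_{MN}=-\big(\overline{\mathcal{B}}_M\Gamma_N+\Gamma_N\mathcal{B}_M\big).
\]

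The key observation is that for commuting Majorana $\epsilon$ the bilinear $\epsilon^TC\Gamma^{(k)}\epsilon$ survives only when $C\Gamma^{(k)}$ is symmetric. With $C$ antisymmetric in $\mathrm{Cl}(10,1)$, this happens exactly for $k=1,2,5$, together with their Hodge duals $10,9,6$. I will therefore expand $\mathcal{M}_{MN}$ in the Clifford basis, keep only ranks $1,2,5$, and discard everything else. This is the $D=11$ analogue of Lemma~\ref{lem:ranks}.

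For the $\Gamma_M{}^{(4)}F$ term, the products $\Gamma_M{}^{L_1\ldots L_4}\Gamma_N$ and $\Gamma_N\Gamma_M{}^{L_1\ldots L_4}$ produce ranks $4$ and $6$. Only the rank-$6$ part $\Gamma_{MN}{}^{L_1\ldots L_4}F_{L_1\ldots L_4}$ survives, and it is antisymmetric in $MN$. The $\delta_{MN}$-type pieces sit at rank $4$ and drop out. For the $F_{M}{}^{(3)}\Gamma^{(3)}$ term, the products with $\Gamma_N$ give ranks $2$ and $4$. The rank-$4$ pieces drop out, and the surviving rank-$2$ piece is $F_{MNL_1L_2}\Gamma^{L_1L_2}$, which is antisymmetric in $MN$ because $F$ is a form.

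The one delicate point is tracking whether the two orderings (from $\overline{\mathcal{B}}_M\Gamma_N$ and from $\Gamma_N\mathcal{B}_M$) add or cancel on the symmetric-in-$MN$ rank-$2$ and rank-$5$ candidates, for example $g_{MN}F\cdot\Gamma^{(4)}$ at rank $4$ or $\Gamma_{(M}{}^{L_1L_2L_3}F_{N)L_1L_2L_3}$. This requires getting the sign of $\overline{\mathcal{B}}$ right. I expect that at rank $1$ and rank $5$ every symmetric contribution cancels pairwise between the two orderings, leaving only the antisymmetric combinations just identified. Once that is settled, $\nabla_MK_N=\tfrac1{\ldots}\,\epsilon^TC(\Gamma_{MN}{}^{(4)}F+F_{MN}{}^{(2)}\Gamma_{(2)})\epsilon$ (with the appropriate constants) is antisymmetric, so $\nabla_{(M}K_{N)}=0$.

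The main obstacle is this sign bookkeeping under transposition with $C=\Gamma_0$ in signature $(10,1)$. I would settle it first by checking on a single rank, then confirm the result in the explicit representation of Appendix~\ref{app:rep}.
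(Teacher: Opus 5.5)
Your proposal is correct and is essentially the argument the paper invokes: differentiate $K_N$, substitute the gravitino equation, and observe that the only parts of $\overline{\mathcal{B}}_M\Gamma_N+\Gamma_N\mathcal{B}_M$ that survive the Majorana bilinear are antisymmetric in $(M,N)$. The paper states this as standard and confirms it on all $66$ index pairs.

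The ``delicate point'' you flag is not actually an obstacle. The products only produce Clifford ranks $2$, $4$ and $6$, so nothing at rank $1$ or $5$ ever arises. Rank $4$ dies in the bilinear whatever its coefficient. The surviving pieces are $\Gamma_{M}{}^{L_1\cdots L_4}{}_{N}F_{L_1\cdots L_4}$ at rank $6$ and $F_{MNL_1L_2}\Gamma^{L_1L_2}$ at rank $2$, and both are antisymmetric in $(M,N)$ by index structure alone. So no sign of $\overline{\mathcal{B}}_M$ needs to be tracked, and your third paragraph already completes the proof.
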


\noindent This is standard supergravity: $\nabla_MK_N$ is a bilinear in
$\Psi_M^TC\Gamma_N+C\Gamma_N\Psi_M$ with $\Psi_M$ the algebraic part of
\eqref{eq:kse}, and the part symmetric in $(M,N)$ and in the two spinor slots
vanishes as a Clifford identity. We do not reprove it; it
is confirmed for all $66$ index pairs on three independent sets
of pseudo-random horizon data (Appendix~\ref{app:verify}). No field equation
and no Bianchi identity enters. We
restate it only because everything below depends on the normalisation of
\eqref{eq:diraccurrent}.

\begin{proposition}[Components on the section]
\label{prop:conconx}
Set $r=u=0$ in \eqref{eq:lightconesol}, so that $\epsilon=\eta_++\eta_-$, and
write $\ip{v}{w}_\mp=v^TC\Gamma_\pm w$ for the pairing that
\eqref{eq:diraccurrent} induces on the $\eta_\mp$ sector --- so the $-$ pairing,
built from $C\Gamma_+$, is the one that acts on the $\eta_-$ sector, and the $+$
pairing, built from $C\Gamma_-$, acts on the $\eta_+$ sector. Then
$K_+=\ip{\eta_-}{\eta_-}_-$ and $K_-=\ip{\eta_+}{\eta_+}_+$, while the
components tangent to $\Sect$ receive no contribution from either sector
separately and are given, identically in the horizon data and in the spinor, by
\begin{equation}
K_i \;=\; 4\,\ip{\eta_-}{\Gamma_i\Theta_-\eta_-}\,.
\label{eq:conconx}
\end{equation}
Substituting \eqref{eq:vectorcondminus} into \eqref{eq:conconx} turns it into
\begin{equation}
K_i + h_i\,K_+ \;=\; 2\,\tn_i\nrm{\eta_-}^2\,,
\label{eq:conconxonshell}
\end{equation}
again an identity in the data, so that $K_i+h_iK_+=0$ if and only if
$\nrm{\eta_-}$ is constant.
\end{proposition}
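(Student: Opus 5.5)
The proof is a direct evaluation of the Dirac current \eqref{eq:diraccurrent} at $r=u=0$, followed by one substitution. First I will write $\epsilon=\eta_++\eta_-$ and expand $K_M=\epsilon^TC\Gamma_M\epsilon$ into its two diagonal pieces and the cross term $2\eta_+^TC\Gamma_M\eta_-$; the cross term appears with a factor $2$ because $C\Gamma_M$ is symmetric.

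\emph{Lightcone components.} For $M=\pm$ I use $\Gamma_\pm\epsilon_\pm=0$ together with the block structure of $\Gamma_\pm$. Then $C\Gamma_+$ pairs $\eta_-$ with $\eta_-$, and its pairings of $\eta_+$ with either sector vanish. Likewise $C\Gamma_-$ pairs $\eta_+$ with $\eta_+$ and nothing else. This gives $K_+=\ip{\eta_-}{\eta_-}_-$ and $K_-=\ip{\eta_+}{\eta_+}_+$ directly; it is a statement about lightcone gamma matrices alone.

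\emph{Tangent components: diagonal terms.} Since $\Gamma_i$ commutes with $\Gamma_{+-}$ up to sign, it preserves each sector. With $C=\Gamma_0$ written as a combination of $\Gamma_+$ and $\Gamma_-$, the chirality conditions kill the diagonal pieces $\eta_\pm^TC\Gamma_i\eta_\pm$. So only the cross term survives, $K_i=2\eta_+^TC\Gamma_i\eta_-$.

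\emph{Tangent components: cross term.} I insert the pairing $\eta_+=\Gamma_+\Theta_-\eta_-$ of Lemma~\ref{lem:pairing} and move $\Gamma_+$ across. This is the step I expect to be the main obstacle: the sign and normalisation bookkeeping has to produce exactly the factor $4$ in \eqref{eq:conconx}. The work splits as follows.
\begin{itemize}
\item The lightcone part $(C\Gamma_+)^T$ acting on the $\eta_-$ sector reproduces the $-$ pairing, carrying the same factor $4=\Gamma_+^T\Gamma_+$ that appears in \eqref{eq:pairnorm}.
\item The Clifford part is reorganised as $\Gamma_i\Theta_-$, using $\Theta_-^T=\Theta_+$ from \eqref{eq:transposes}. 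The sign reversal of the odd-rank terms under conjugation by $\Gamma_+$ is exactly what turns the transposed operator back into $\Theta_-$ on the correct side.
\end{itemize}
I would first fix these signs at the level of the $\mathrm{Cl}(10,1)\to\mathrm{Cl}(9)$ reduction used in Section~\ref{sec:lightcone}, and then confirm the coefficient on explicit random data, as the paper does elsewhere (Appendix~\ref{app:verify}).

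\emph{On-shell form.} Once \eqref{eq:conconx} holds, I identify the $-$ pairing with the Dirac inner product on $\eta_-$, so that $\ip{\eta_-}{\eta_-}_-=K_+$ is, in the paper's normalisation, the norm $\nrm{\eta_-}^2$. The same identification converts the bilinear in \eqref{eq:conconx} into the Dirac one, so $K_i=4\ip{\eta_-}{\Gamma_i\Theta_-\eta_-}$. Substituting \eqref{eq:vectorcondminus} then gives
\[
K_i=2\tn_i\nrm{\eta_-}^2+2h_i\nrm{\eta_-}^2 .
\]
To reach \eqref{eq:conconxonshell} as stated, with $h_iK_+$ on the left, this requires $K_+=2\nrm{\eta_-}^2$, i.e.\ the $-$ pairing normalised as twice the Dirac norm. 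That normalisation must be checked alongside the factor $4$ above. The final equivalence is then immediate: $K_i+h_iK_+$ vanishes if and only if $\tn_i\nrm{\eta_-}^2=0$, which on the connected $\Sect$ means $\nrm{\eta_-}$ is constant.
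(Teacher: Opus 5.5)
Your route is the paper's own: the block structure of $C\Gamma_\pm$ and $C\Gamma_i$ on the two sectors, the cross term evaluated with $\eta_+=\Gamma_+\Theta_-\eta_-$ and the factor $4$ traced to \eqref{eq:pairnorm}, then substitution of \eqref{eq:vectorcondminus}. The gap is in the last step: the normalisation you fix for $K_+$ has the wrong sign, and that sign is the whole content of \eqref{eq:conconxonshell}.

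You derive $K_i=2\tn_i\nrm{\eta_-}^2+2h_i\nrm{\eta_-}^2$ correctly. The identity $K_i+h_iK_+=2\tn_i\nrm{\eta_-}^2$ then needs $h_iK_+=-2h_i\nrm{\eta_-}^2$, i.e.\ $K_+=-2\nrm{\eta_-}^2$. With your value $K_+=+2\nrm{\eta_-}^2$ the left-hand side becomes $2\tn_i\nrm{\eta_-}^2+4h_i\nrm{\eta_-}^2$. So the check you defer would, by your own criterion, report that the proposition fails.

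Your earlier identification of $\ip{\eta_-}{\eta_-}_-$ with $\nrm{\eta_-}^2$ is wrong for the same reason. The $-$ pairing $v^TC\Gamma_+w$ is not the Dirac product, which already appears unchanged in \eqref{eq:conconx}. On the $\eta_-$ sector that pairing is negative definite.

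The correct value needs no random-data check. The chirality condition $\Gamma_-\eta_-=0$ says $\Gamma_{10}\eta_-=\Gamma_0\eta_-$. Hence $\Gamma_+\eta_-=2\Gamma_0\eta_-$ and $C\Gamma_+\eta_-=2\Gamma_0^2\eta_-=-2\eta_-$, so $K_+=-2\nrm{\eta_-}^2$. This is the normalisation the paper records right after the proposition ($K_+=-2\nrm{\eta_-}^2$, $K_i=-2V_i$, matching (6.9)$_4$ of \cite{11index}).

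The same relation settles the factor $4$ without the conjugation bookkeeping you anticipate:
\begin{itemize}
\item every term of $\Theta_-$ preserves $\ker\Gamma_-$, so $\eta_+=2\Gamma_0\Theta_-\eta_-$;
\item since $\Gamma_0^T\Gamma_0=1$, the cross term is $2\eta_+^TC\Gamma_i\eta_-=4(\Theta_-\eta_-)^T\Gamma_i\eta_-=4\ip{\eta_-}{\Gamma_i\Theta_-\eta_-}$;
\item this uses only $\Gamma_i^T=\Gamma_i$, with no appeal to $\Theta_-^T=\Theta_+$ or to a sign reversal under conjugation by $\Gamma_+$.
\end{itemize}
With $K_+=-2\nrm{\eta_-}^2$ the substitution closes: $K_i+h_iK_+=2\tn_i\nrm{\eta_-}^2+2h_i\nrm{\eta_-}^2-2h_i\nrm{\eta_-}^2=2\tn_i\nrm{\eta_-}^2$.
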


\begin{proof}
$\Gamma_-\eta_-=0$ and $\Gamma_+\eta_+=0$, so $C\Gamma_i$, which commutes with
neither projector, pairs the two sectors and annihilates each on its own, while
$C\Gamma_\pm$ pairs each sector with itself; the cross term is evaluated with
$\eta_+=\Gamma_+\Theta_-\eta_-$ from \eqref{eq:lightconesol}, giving the factor
$4$ of \eqref{eq:conconx} by \eqref{eq:pairnorm}. For the second statement,
substitute \eqref{eq:vectorcondminus} and $K_+=\ip{\eta_-}{\eta_-}_-$, whose
ratio to $\nrm{\eta_-}^2$ is fixed by the representation. Both statements are
checked in Appendix~\ref{app:verify}.
\end{proof}

\noindent Equation \eqref{eq:conconxonshell} is (6.9)$_4$ of \cite{11index},
$V+\nrm{\eta_-}^2h+\mathrm{d}\nrm{\eta_-}^2=0$, in the normalisation of
\eqref{eq:diraccurrent}, with $K_i=-2V_i$ and $K_+=-2\nrm{\eta_-}^2$; the
gradient term is present there as it is here. The same relation appears in the
IIA and $D=5$ analyses \cite{iiaindex,KayaniThesis} and, for heterotic
horizons, in \cite{hethorizons,hetdelpezzo}, where in those cases the norm
\emph{is} constant and the relation reduces to $V=-\nrm{\eta_-}^2h$. It is
stated here in our conventions because Theorem~\ref{thm:saturation} uses it,
and because it is what identifies the two branches of \cite[\S6.2]{11index}:
$V\equiv0$ is $h=-\mathrm{d}\log\nrm{\eta_-}^2$
(Remark~\ref{rem:static-terminology}), which contains $h\equiv0$ strictly.
Its immediate consequence is that on the saturation branch the isometry
generated by $K$ has no component along $\Sect$:
\begin{equation}
h\equiv0 \;\Longrightarrow\; K_i\equiv0 \quad\text{pointwise on }\Sect\,.
\label{eq:Kvanishes}
\end{equation}
Equation~\eqref{eq:Kvanishes} is pointwise and algebraic. The
classical route to the same conclusion --- $K_i$ is Killing on $\Sect$ and
exact, hence the Hessian of its potential vanishes, hence $\nrm{K}^2$ is
constant, hence zero because the potential has a critical point --- needs
$\Sect$ compact and connected, whereas \eqref{eq:conconxonshell} does not.
Both are verified in Appendix~\ref{app:verify}, the second including the
control that shows exactness is essential: a rotation of flat $\Real^3$ is
Killing with $\nrm{K}^2$ non-constant.

Collecting Theorem~\ref{thm:bound}, Proposition~\ref{prop:staticbosonic} and
Corollary~\ref{cor:static} gives the statement in its sharpest form.

\begin{theorem}[Saturation]
\label{thm:saturation}
Let $\Sect$ be compact, connected and without boundary and let the near-horizon
geometry admit a non-trivial $\eta_-$ Killing spinor, $N_->0$; then $\eta_-$ is
nowhere vanishing by Theorem~\ref{thm:master}. Then the mean-value
bound \eqref{eq:boundeq} is saturated if and only if $h\equiv0$, that is on
the saturation branch; and in that case $K_i\equiv0$, $\nrm{\eta_-}$ is constant, $\Delta$
is the non-negative constant
$c=\tfrac16\overline{Y^2}+\tfrac1{144}\overline{X^2}$, and the near-horizon
geometry is the product $AdS_2\times\Sect$ for $c>0$, of $AdS_2$ radius
$c^{-1/2}$, and $\Real^{1,1}\times\Sect$ for $c=0$.
\end{theorem}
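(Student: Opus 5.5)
The plan is to assemble the theorem from three earlier results, Theorem~\ref{thm:bound}, Proposition~\ref{prop:staticbosonic} and Corollary~\ref{cor:static}, together with the pointwise relation \eqref{eq:Kvanishes}. No new identity is needed; the only work is keeping the constants straight.

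The equivalence ``saturation of \eqref{eq:boundeq} $\iff h\equiv0$'' is exactly the equality clause of Theorem~\ref{thm:bound}. It uses only that $\Sect$ is compact without boundary, and no supersymmetry. Nowhere-vanishing of $\eta_-$ is the first assertion of Theorem~\ref{thm:master}, since a $\nabla^{(-)}$-parallel section that vanishes at one point of a connected manifold vanishes identically.

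Assume now $h\equiv0$. I would take the remaining conclusions in this order.
\begin{enumerate}
\item $K_i\equiv0$ is \eqref{eq:Kvanishes}. Constancy of $\nrm{\eta_-}$ then follows either from \eqref{eq:conconxonshell} with $K_i=h_i=0$, which gives $\tn_i\nrm{\eta_-}^2=0$, or from Corollary~\ref{cor:static}, where \eqref{eq:divmaster} reduces to Laplace's equation on a compact connected $\Sect$.
\item Constancy of $\Delta$, together with its value $\tfrac16\overline{Y^2}+\tfrac1{144}\overline{X^2}$, comes from Proposition~\ref{prop:staticbosonic}. There the $++$ equation \eqref{eq:auxEpp} forces $\mathrm{d}Y=0$ after integration, \eqref{eq:auxEpi} then gives $\tn\Delta=0$, and the mean is read off from \eqref{eq:pointwisebosonic}.
\item Non-negativity follows from $\Delta\ge0$ in Theorem~\ref{thm:master}, or directly from \eqref{eq:pointwisebosonic}.
\end{enumerate}
Corollary~\ref{cor:static} independently identifies the same constant as $\nrm{\eta_+}^2/\nrm{\eta_-}^2$. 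This is a useful consistency check, and I would record it.

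For the geometry, with $h\equiv0$ and $\Delta$ a constant I would substitute directly into \eqref{eq:gnc}. This gives $\mathrm{d}s^2=2\,\mathrm{d}u\,\mathrm{d}r-\Delta r^2\mathrm{d}u^2+\gamma_{ij}\mathrm{d}y^i\mathrm{d}y^j$, a direct product with no cross terms. The two-dimensional factor has constant Gaussian curvature $-\Delta$. When $\Delta>0$ it is therefore locally $AdS_2$ of radius $\Delta^{-1/2}$; when $\Delta=0$ it is flat $\Real^{1,1}$.

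The step I expect to need most care is not mathematical but notational. The letter $c$ in the statement denotes the value of $\Delta$, which is $\nrm{\eta_+}^2/\nrm{\eta_-}^2$. It is not the constant $c=\nrm{\eta_+}^2$ of Theorem~\ref{thm:master}. I would make this explicit at the start of the proof, so that the radius $c^{-1/2}$ is read as $\Delta^{-1/2}$. With that convention the two cases $c>0$ and $c=0$ coincide with $\nrm{\eta_+}\neq0$ and $\nrm{\eta_+}=0$, because $\nrm{\eta_-}$ is a positive constant. The second case is consistent with Theorem~\ref{thm:rigid}, which forces $X=Y=0$ there.
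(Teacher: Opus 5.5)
Your proposal is correct and follows the paper's proof essentially verbatim. You get saturation $\iff h\equiv0$ from Theorem~\ref{thm:bound}, constancy and value of $\Delta$ from Proposition~\ref{prop:staticbosonic}, constancy of $\nrm{\eta_-}$ from Corollary~\ref{cor:static}, $K_i\equiv0$ from \eqref{eq:Kvanishes}, and the constant curvature of the $(u,r)$ block. Your reading of $c$ as the value of $\Delta$, rather than $\nrm{\eta_+}^2$, is the one the paper's proof itself uses.

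Of your two routes to constancy of $\nrm{\eta_-}$, use the Laplace-equation one. With $h=0$, \eqref{eq:conconxonshell} reads $K_i=2\tn_i\nrm{\eta_-}^2$, so $K_i\equiv0$ is equivalent to the constancy you would be deriving from it, not independent of it. The paper's proof accordingly takes constancy first and $K_i\equiv0$ second.
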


\begin{proof}
Saturation $\Leftrightarrow h\equiv0$ is Theorem~\ref{thm:bound}, which uses no
supersymmetry, and $\Delta=c$ is Proposition~\ref{prop:staticbosonic}, which
uses none either. Given $h\equiv0$: $\nrm{\eta_-}$ is constant by
Corollary~\ref{cor:static} and $K_i\equiv0$ by \eqref{eq:Kvanishes}. The metric of
\eqref{eq:gnc} with $h=0$ has no mixed term between the $(u,r)$ plane and
$\Sect$, and its $(u,r)$ block
$2\,\mathrm{d}u\,\mathrm{d}r-\Delta r^2\mathrm{d}u^2$ has constant Ricci scalar
$-2\Delta$ and is Einstein, hence is $AdS_2$ of radius $\Delta^{-1/2}$ for
$\Delta>0$ and flat for $\Delta=0$. The last two statements are verified
symbolically in Appendix~\ref{app:verify}.
\end{proof}

\begin{remark}[Which parts are new, and which need supersymmetry]
\label{rem:twotier}
The geometric conclusions for $h\equiv0$ --- $\Delta$ constant, equal to
$\tfrac16Y^2+\tfrac1{144}X^2$, with $AdS_2\times\Sect$ for $\Delta>0$ and
$\Real^{1,1}\times\Sect$ for $\Delta=0$ --- are
\cite[\S2.1,\,\S3.3]{staticM}, and are to be read against the rest of that
paper, whose static $M$-horizons are the weaker class $\mathrm{d}h=0$; how the
two conditions are related, and why the saturation branch sits inside the
static one, is Remark~\ref{rem:static-terminology}. What the theorem adds is
the packaging, and it is also the answer to which parts need the Killing
spinor. For \emph{every} compact
near-horizon geometry the bound \eqref{eq:boundeq} holds and is saturated
exactly when $h\equiv0$, and there $\Delta$ is constant: Theorem~\ref{thm:bound}
and Proposition~\ref{prop:staticbosonic}, integrals of \eqref{eq:divh} and
\eqref{eq:auxEpp}. So $h\equiv0$ is not an assumption but a characterisation,
by a bosonic bound. Supersymmetry contributes only two statements, both about
the spinor rather than the geometry: $\nrm{\eta_-}$ is constant on the
saturation locus (Corollary~\ref{cor:static}), and the section-tangent Dirac
current vanishes there. Where supersymmetry does change the geometry is the
opposite extreme, Theorem~\ref{thm:rigid}, in which a condition at a single
point propagates over the whole section; nothing bosonic does that.
\end{remark}

\begin{remark}[Exact solutions on the saturation branch, and what they calibrate]
\label{rem:calibration}
The saturation branch is populated, and its members are rigid enough to
calibrate every normalisation used above against solutions whose flux
normalisation is fixed independently. Two suffice. The first,
\begin{equation}
\label{eq:calibE}
AdS_2\times S^3\times T^6 ,\qquad
F=\mathrm{vol}(AdS_2)\wedge\lambda\,(\omega_1+\omega_2+\omega_3),
\end{equation}
with $\omega_a$ the volume forms of the three $T^2$ factors, is the
near-horizon geometry of three M2-branes intersecting on a $0$-brane, the
three-charge black hole, whose harmonic superposition rule is
\cite{Mintersections}; in this form, with the flat six-space compactified, it
is \cite[\S6.3.1]{TCFH}, where the
Killing spinor equations are solved and the background is shown to preserve
$\tfrac14$ of the supersymmetry, the gravitino equation along the torus forcing
$Y$ to be proportional to a K\"ahler form. It is purely electric in
the sense of \eqref{eq:fluxsplit}, $X=0$, and has
\begin{equation}
\label{eq:calibEdata}
Y^2=6\lambda^2 ,\qquad \Delta=\lambda^2 ,\qquad
R_{S^3}^2=4/\lambda^2 ,\qquad \tilde R=\tfrac32\lambda^2 .
\end{equation}
The second,
\begin{equation}
\label{eq:calibM}
AdS_2\times S^4\times H^5/\Gamma ,\qquad F=\mu\,\mathrm{vol}(S^4),
\end{equation}
with $\Gamma$ chosen so that the hyperbolic factor is a compact quotient, is
purely magnetic, $Y=0$, with $X^2=24\mu^2$, $\Delta=\mu^2/6$,
$R^2_{S^4}=9/\mu^2$ and hyperbolic radius $\rho^2=24/\mu^2$. It is the $r=5$
member of the family $AdS_{7-r}\times H_r\times S^4$ of \cite{KehagiasRusso},
and it is \emph{not} supersymmetric: none of that family admits a Killing
spinor \cite[\S2.1,\,\S3.1]{KehagiasRusso}, consistently with the fact that a
supersymmetric magnetic horizon on this branch would have to be the direct
product of \cite{staticM}. Its role here is therefore that of a bosonic
control, and the two solutions test different things.

Both solve \eqref{eq:einstein}--\eqref{eq:formeq} with $h\equiv0$, so both
realise the equality case of Theorem~\ref{thm:bound}, and on both the purely
bosonic statements hold and are checked: the mean-value identity
$\Delta=\tfrac16Y^2+\tfrac1{144}X^2$, the horizon Einstein equation
\eqref{eq:Rij} and the total curvature \eqref{eq:Rtotal}, the last with its
left-hand side computed from the geometry of the section rather than from the
identity --- for \eqref{eq:calibE} a check on the coefficient $\tfrac7{12}$ and
for \eqref{eq:calibM} on the coefficient $\tfrac5{144}$. The spinorial
statements --- the master identity \eqref{eq:master}, the weighted identity
\eqref{eq:weightedbound}, the balance identity \eqref{eq:balance}, the Komar
charge \eqref{eq:komarcharge}, the rotation bound \eqref{eq:rotationbound} and
the volume formula \eqref{eq:entropy} --- apply as stated only to
\eqref{eq:calibE}. On \eqref{eq:calibM} they are checked in the degenerate form
they take when $f$ is constant and $c$ is \emph{defined} as $\Delta f$, which
tests the arithmetic of the coefficients and nothing about Killing spinors. On
both, the Komar charge and the rotation bound vanish identically, as they must,
since $K_i\equiv0$ on this branch.

The sharper calibration is of the $\tfrac1{108}$ of
Theorem~\ref{thm:spin7square} against the $\tfrac16$ of
Theorem~\ref{thm:saturation}, which come from different equations: the second
from the twist divergence \eqref{eq:divh}, the first from the Killing spinor
equation through \eqref{eq:Deltaexplicit}. On a purely electric configuration
with $h\equiv0$, \eqref{eq:hsplit} forces $y=0$, so the flux is carried by
$\mathcal{Y}$ alone and the two read
\begin{equation}
\label{eq:calibratio}
\Delta=\tfrac1{108}\lVert\sigma(\mathcal{Y}_{\mathbf 7})\rVert^2
      =\tfrac29\lVert\mathcal{Y}_{\mathbf 7}\rVert^2
\qquad\text{and}\qquad
\Delta=\tfrac16 Y^2
      =\tfrac16\big(\lVert\mathcal{Y}_{\mathbf 7}\rVert^2
                   +\lVert\mathcal{Y}_{\mathbf{21}}\rVert^2\big),
\end{equation}
the first because $\sigma$ annihilates the $\mathbf{21}$ and satisfies
$\sigma^{\mathsf T}\sigma=24$ on the $\mathbf 7$. They agree precisely on the
locus $\lVert\mathcal{Y}_{\mathbf{21}}\rVert^2
=\tfrac13\lVert\mathcal{Y}_{\mathbf 7}\rVert^2$, and nothing forces that ratio
unless every normalisation in the chain --- of the action, of the four-form, of
the bilinears and of the Cayley form --- is consistent. The flux of
\eqref{eq:calibE} is a sum of three orthogonal planes. Enumerating all $3360$
configurations of three disjoint coordinate planes, with all relative signs,
against the Cayley form built from the spinor --- a finite exhaustion in exact
rational arithmetic, in both sectors --- the ratio
$\lVert\mathcal{Y}_{\mathbf{21}}\rVert^2/
\lVert\mathcal{Y}_{\mathbf 7}\rVert^2$ takes only the four values
$\tfrac13,\tfrac75,3,11$; we do not claim this for triples of orthogonal planes
in general position. What the calibration needs is only that $\tfrac13$ is
attained, which it is, on the configurations adapted to the Cayley
structure, and there both sides of \eqref{eq:calibratio} return
$\Delta=\lambda^2$, the value \eqref{eq:calibEdata} that the field equations
give. That this is the value realised is not a choice of locus. The background
\eqref{eq:calibE} is supersymmetric, and its Killing spinor equation forces $Y$
to be proportional to a K\"ahler form of the torus \cite[\S6.3.1]{TCFH};
Theorem~\ref{thm:spin7square} therefore applies to it, with the
$\mathrm{Spin}(7)$ structure built from the same spinor, and the two
expressions for $\Delta$ have no option but to agree. The calibration is a test
of the normalisations, and it passes. The Freund--Rubin throats $AdS_4\times S^7$ and $AdS_7\times S^4$
\cite{FreundRubin,DuffPope} calibrate the conventions of
Section~\ref{sec:conventions-nh} one step further out, and are checked in
Appendix~\ref{app:verify}; they are not near-horizon geometries in the sense of
Section~\ref{sec:nh} --- the Poincar\'e horizon of $AdS_d$ has a degenerate
transverse metric and a non-compact section --- so no statement of this paper
is tested on them.
\end{remark}

\subsection{The \texorpdfstring{$\mathfrak{sl}(2,\Real)$}{sl(2,R)} algebra}
\label{sec:sl2r}

Section~\ref{sec:killingvec} produced an isometry of the section itself. The
rest of the symmetry of a near-horizon geometry sits in the lightcone
directions and arrives by a different route, which the master identity feeds
rather than replaces. The identities \eqref{eq:gpfamily} were quoted in the
introduction as the mechanism by which a near-horizon geometry acquires its
symmetry algebra.
Theorem~\ref{thm:master} reaches the second of them independently, and with the
gradient term carried, so we isolate exactly what that mechanism
needs. Write $f=\nrm{\eta_-}^2$ and let $c=\nrm{\eta_+}^2$ be the constant of
\eqref{eq:master}. The input is
\begin{equation}
\text{(V)}\quad V_i=-\big(fh_i+\tn_if\big)\,,\qquad\qquad
\text{(M)}\quad \big(\Delta+h^2\big)f+h^i\tn_if=c\,,
\label{eq:sl2inputa}
\end{equation}
\begin{equation}
\text{(S)}\quad \tn_{(i}V_{j)}=0\,,\qquad\qquad
\text{(L)}\quad \mathcal{L}_Vh=\mathcal{L}_V\Delta=\mathcal{L}_V\gamma
=\mathcal{L}_Vf=0\,,
\label{eq:sl2inputb}
\end{equation}
where (V) is \eqref{eq:conconxonshell} in the normalisation of
\eqref{eq:gpfamily} and (M) is \eqref{eq:master}. Of the remaining two, (S)
and the invariance of $h$ and $\Delta$ in (L) are (6.8) of \cite{11index},
$\mathcal{L}_V\gamma=0$ is the Killing property (S) itself, and
$\mathcal{L}_Vf=0$ is (6.10) there: the one-form bilinear $V$ is Killing on
$\Sect$ and preserves the horizon data.

Given \eqref{eq:sl2inputa}--\eqref{eq:sl2inputb}, the enhancement is
\cite[\S6]{11index}. The bilinears (6.4) of that reference are dual, in the
normalisation of \eqref{eq:master}, to the vector fields
\begin{equation}
K_1 = -uc\,\partial_u + rc\,\partial_r + V^i\partial_i\,,\qquad
K_2 = -c\,\partial_u\,,\qquad
K_3 = -u^2c\,\partial_u + \big(2f+2ruc\big)\partial_r + 2u\,V^i\partial_i\,,
\label{eq:sl2vectors}
\end{equation}
each of which is a Killing vector of \eqref{eq:gnc}, and they close on
\begin{equation}
[K_1,K_2]=c\,K_2\,,\qquad [K_2,K_3]=-2c\,K_1\,,\qquad [K_3,K_1]=c\,K_3\,,
\label{eq:sl2brackets}
\end{equation}
which for $c>0$ is $\mathfrak{sl}(2,\Real)$: in the basis $H=2c^{-1}K_1$,
$E=-c^{-2}K_2$, $F=K_3$ the brackets are $[H,E]=2E$, $[H,F]=-2F$, $[E,F]=H$.
Nothing in this is new --- \eqref{eq:sl2vectors} is (6.12) of \cite{11index}
and \eqref{eq:sl2brackets} its (6.13), and that the $K_a$ arise as Killing
spinor bilinears preserving the four-form is quoted, not re-derived. The reason
for restating it is the input list: the derivation of
$\mathcal{L}_{K_a}g=0$ and of \eqref{eq:sl2brackets} from \eqref{eq:gnc} uses
(V), (M), (S) and (L) and nothing else --- no field equation, no Killing spinor
equation, no property specific to $D=11$ --- and (M) is supplied by
Theorem~\ref{thm:master} with its gradient term intact. The enhancement
therefore does not require $\nrm{\eta_-}$ to be constant, the hypothesis of
Remark~\ref{rem:notconst}. Both statements are established by carrying out that
derivation with (V), (M), (S) and (L) as the only substitutions permitted, so
that the input list is a property of the derivation rather than an observation
about it; negative controls in which (V) or (M) is broken and the Killing
property fails accompany it (Appendix~\ref{app:verify}).

For $c=0$ the brackets vanish, but the case is empty rather than degenerate: by
Remark~\ref{rem:kernels} $c=0$ forces $X=Y=h=0$, whereupon
\eqref{eq:divmaster} makes $f$ constant on a compact connected $\Sect$, so
$V=0$ by (V), and $K_1=K_2=0$ with $K_3=2f\partial_r$. The geometry is
$\Real^{1,1}\times\Sect$ and the isometries reduce to the translations of the
flat factor.

\begin{remark}[The two branches]
\label{rem:sl2branches}
The dichotomy of \cite[\S6.2]{11index} is visible in \eqref{eq:sl2vectors}. If
$V\neq0$ the three fields are pointwise linearly independent --- $K_2$ supplies
$\partial_u$, $K_1$ and $K_3$ then supply $\partial_r$ and $V^i\partial_i$ ---
so the orbits are three-dimensional. If $V\equiv0$ the section components drop
out, the orbits are the two-dimensional surfaces of constant $y$, and the metric
induced on them, $2\,\mathrm{d}u\,\mathrm{d}r-r^2\Delta\,\mathrm{d}u^2$, has
Ricci scalar $-2\Delta$ with $\Delta=c/f$ by \eqref{eq:staticbranch}: the
near-horizon geometry is the warped product $AdS_2\times_w\Sect$, which is the
static branch of Remark~\ref{rem:static-terminology}. The direct product of
Theorem~\ref{thm:saturation} is the sub-case $h\equiv0$ of that branch, on
which the warp factor is constant.
\end{remark}

\subsection{Global charges: the Komar integral, the rotation bound and the
horizon volume}
\label{sec:charges}

With the algebra in hand the local group is closed, and what remains is what a
single integration adds. The identities of Section~\ref{sec:warp} are
local. Integrated against the
Killing vector of Section~\ref{sec:killingvec} they become statements about the
conserved charges of the horizon, and the weighted identity
\eqref{eq:weightedbound} then removes $\Delta$ from all of them, leaving
weighted flux integrals against the constant $c$. These are not expressions in
the flux alone: the warp factor $f$ survives as the weight and $c$ as the
normalisation, and it is the geometric quantity $\Delta$ that has been
eliminated. Throughout this subsection $\Sect$ is compact,
connected and without boundary, $N_->0$, $f=\nrm{\eta_-}^2$ and
$c=\nrm{\eta_+}^2$, and we write
\begin{equation}
Q := \tfrac13 Y^2 + \tfrac1{72}X^2 \;\ge\; 0
\label{eq:Qdef}
\end{equation}
for the flux combination of \eqref{eq:divh}.

Nothing in the construction itself is new. The Komar integral of a Killing
field \cite{Komar} and the identity relating its values at the horizon and at
infinity \cite{BCH,Smarr} are standard, and so is their use on horizons; what
the near-horizon limit changes is that the asymptotic data are gone, so only
the horizon integral survives and the charges have to be read off the
near-horizon geometry alone, as in \cite[\S2.4,\,\S5]{KL2013review} and
\cite{nhsymmetrythm1}. What supersymmetry adds here is that the integrand
collapses.

Let $V$ be the one-form bilinear of \eqref{eq:sl2inputa} and let
\begin{equation}
\mathcal{K} := K^i\partial_i\,,\qquad
K_i = -2V_i = 2fh_i + 2\tn_if\,,
\label{eq:Kdef}
\end{equation}
be the part of the Dirac current \eqref{eq:diraccurrent} tangent to $\Sect$,
extended over the near-horizon geometry independently of $r$ and $u$. By (S)
and (L) of \eqref{eq:sl2inputb}, $V$ is Killing on $\Sect$ and preserves $h$,
$\Delta$ and $\gamma$, which is exactly what \eqref{eq:gnc} needs of it:
$\mathcal{K}$ is a Killing vector of the eleven-dimensional metric. It is the
combination $K_1-(-uc\,\partial_u+rc\,\partial_r)$ of \eqref{eq:sl2vectors},
and it generates the rotations of the horizon.

\begin{proposition}[The Komar integrand collapses]
\label{prop:komarintegrand}
With the orientation $\epsilon_{-+i_1\cdots i_9}=\epsilon_{i_1\cdots i_9}$ of
Section~\ref{sec:conventions-nh}, the one-form dual to $\mathcal{K}$ is
$\mathcal{K}^\flat=\gamma_{ij}K^j\mathrm{d}y^i+r\,h(K)\,\mathrm{d}u$, so that
$(\mathrm{d}\mathcal{K}^\flat)_{ru}=h(K)$ and
\begin{equation}
\star\,\mathrm{d}\mathcal{K}^\flat\big|_\Sect
 \;=\; -\,h(K)\,\epsilon\,,\qquad
h(K)\;=\;h^iK_i\;=\;2c-2\Delta f\,.
\label{eq:komarintegrand}
\end{equation}
\end{proposition}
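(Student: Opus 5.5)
The plan is a direct computation in the Gaussian null frame \eqref{eq:gnc}, in three steps: lower the index of $\mathcal{K}$, take its exterior derivative and Hodge dual on $\Sect$, and evaluate the scalar $h(K)$ with the master identity.

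\emph{Step 1: the one-form.} I would lower the index of $\mathcal{K}=K^i\partial_i$ with \eqref{eq:gnc}. The coordinate field $\partial_i$ at fixed $(u,r)$ has $\bbe^+(\partial_i)=0$ and $\bbe^-(\partial_i)=r h_i$. The polarisation of $2\bbe^+\bbe^-$ is $\bbe^+\otimes\bbe^-+\bbe^-\otimes\bbe^+$, so
\[
g(\mathcal{K},\cdot)=\bbe^-(\mathcal{K})\,\bbe^+ +\gamma_{ij}K^j\mathrm{d}y^i
= r\,h(K)\,\mathrm{d}u+\gamma_{ij}K^j\mathrm{d}y^i .
\]
This is the first claim. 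Here $h(K)$ and $K^j$ are functions on $\Sect$ only, since $\mathcal{K}$ is extended independently of $(u,r)$.

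\emph{Step 2: the derivative and its dual.} Differentiating gives
\[
\mathrm{d}\mathcal{K}^\flat = h(K)\,\mathrm{d}r\wedge\mathrm{d}u + r\,\mathrm{d}(h(K))\wedge\mathrm{d}u + \mathrm{d}_\Sect(\gamma K)\,.
\]
The only component with both legs in the $(u,r)$ plane is therefore $(\mathrm{d}\mathcal{K}^\flat)_{ru}=h(K)$. Components with zero or one leg along $\Sect$ dualise to nine-forms carrying a $\mathrm{d}u$ or a $\mathrm{d}r$, so they vanish on pulling back to $\Sect=\{u=r=0\}$. Only the $(u,r)$ block survives.

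At $r=0$ I would convert that block to the frame. There $\bbe^+=\mathrm{d}u$ and $\bbe^-$ pulls back to $\mathrm{d}r$, so $\mathrm{d}r\wedge\mathrm{d}u=\bbe^-\wedge\bbe^+$ and the frame component is $(\mathrm{d}\mathcal{K}^\flat)_{-+}=h(K)$. Raising with $g^{+-}=1$ gives $(\mathrm{d}\mathcal{K}^\flat)^{-+}=(\mathrm{d}\mathcal{K}^\flat)_{+-}=-h(K)$. With the orientation $\epsilon_{-+i_1\cdots i_9}=\epsilon_{i_1\cdots i_9}$, the dual is
\[
(\star\,\mathrm{d}\mathcal{K}^\flat)_{i_1\cdots i_9}=\tfrac12\cdot 2\,(\mathrm{d}\mathcal{K}^\flat)^{-+}\epsilon_{-+i_1\cdots i_9}=-h(K)\,\epsilon_{i_1\cdots i_9}\,.
\]
This sign bookkeeping is the only delicate point. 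The factor $2$ from the two orderings, the sign of $g^{+-}$, and the identification $\mathrm{d}r\wedge\mathrm{d}u=\bbe^-\wedge\bbe^+$ must all be tracked in the paper's conventions, and I would check it once against the symbolic verification of Appendix~\ref{app:verify}.

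\emph{Step 3: the scalar.} From \eqref{eq:Kdef}, $K_i=2fh_i+2\tn_if$, so
\[
h(K)=2fh^2+2h^i\tn_if\,.
\]
The master identity \eqref{eq:master} reads $fh^2+h^i\tn_if=c-\Delta f$. Substituting gives $h(K)=2c-2\Delta f$, with no further input. I expect no obstacle beyond the orientation and sign conventions in Step 2.
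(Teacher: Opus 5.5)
Your proposal is correct and follows the paper's proof essentially step for step. You read $\mathcal{K}^\flat$ off \eqref{eq:gnc}, keep only the $\bbe^-\wedge\bbe^+$ block of $\mathrm{d}\mathcal{K}^\flat$ at $r=0$ (your sign bookkeeping reproduces $\star(\bbe^-\wedge\bbe^+)=-\epsilon$), and cancel the $h^2$ terms in $h(K)=2fh^2+2h^i\tn_if$ with \eqref{eq:master}. One wording slip: the components that vanish on pulling back to $\Sect$ are those with one or two legs along $\Sect$ (zero or one in the $(u,r)$ plane), not ``zero or one leg along $\Sect$''; your conclusion that only the $(u,r)$ block survives is nonetheless the right one.
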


\begin{proof}
The components of $\mathcal{K}^\flat$ are read off \eqref{eq:gnc}: the $\bbe^-$
component vanishes because $\mathcal{K}$ has no $\partial_r$ part, and the
$\bbe^+$ component is $r\,h(K)$ because $\bbe^-$ carries the term $rh$. Hence
$\mathrm{d}\mathcal{K}^\flat=\mathrm{d}_\Sect(K_i\bbe^i)+h(K)\,\bbe^-\wedge\bbe^+$
at $r=0$. The dual of a purely spatial two-form carries a $\bbe^+$ or a
$\bbe^-$ and pulls back to zero on $\Sect$, while
$\star(\bbe^-\wedge\bbe^+)=-\epsilon$ in the stated orientation; this fixes the
first equality. For the second, $h^iK_i=2fh^2+2h^i\tn_if$ by \eqref{eq:Kdef},
and \eqref{eq:master} in the form $h^i\tn_if=c-(\Delta+h^2)f$ cancels the
$h^2$ terms and leaves $2c-2\Delta f$. Both steps, including the Killing
property of $\mathcal{K}$ and the Hodge dual, are verified in
Appendix~\ref{app:verify}.
\end{proof}

\noindent The cancellation is the point: the twist enters $h(K)$ quadratically
and the master identity removes it exactly, so the Komar integrand of the
horizon rotation is $2c-2\Delta f$ --- the constant of \eqref{eq:master}
against the warp factor weighted by the spinor norm, and nothing else.

\begin{theorem}[The Komar charge of the horizon rotation]
\label{thm:komar}
With the Komar normalisation
$J[\mathcal{K}]=\tfrac1{16\pi G^{(11)}}\int_\Sect\star\,\mathrm{d}\mathcal{K}^\flat$,
\begin{equation}
J[\mathcal{K}] \;=\; \frac{1}{8\pi G^{(11)}}\int_\Sect\big(\Delta f-c\big)
 \;=\; -\,\frac{1}{8\pi G^{(11)}}
 \Big[\,2c\,\mathrm{Vol}(\Sect)-\int_\Sect Q\,f\,\Big]\,.
\label{eq:komarcharge}
\end{equation}
The overall sign here is conventional --- it is fixed by the orientation and by
the normalisation of the Dirac current, as set out in
Remark~\ref{rem:komarconv} --- and the convention-free form of the same
statement is the magnitude \eqref{eq:komarabs} below.
\end{theorem}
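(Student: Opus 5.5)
The plan is to integrate the pointwise statement of Proposition~\ref{prop:komarintegrand} and then remove $\Delta$ with the weighted identity of Theorem~\ref{thm:weighted}. Both equalities in \eqref{eq:komarcharge} should then follow by elimination.

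\emph{First equality.} By Proposition~\ref{prop:komarintegrand}, the pullback of $\star\,\mathrm{d}\mathcal{K}^\flat$ to $\Sect$ is $-h(K)\,\epsilon$, with $h(K)=2c-2\Delta f$. Inserting this into the Komar normalisation gives
\[
J[\mathcal{K}]=\frac{1}{16\pi G^{(11)}}\int_\Sect\big(2\Delta f-2c\big)
=\frac{1}{8\pi G^{(11)}}\int_\Sect\big(\Delta f-c\big).
\]
Before doing this I would make two checks. First, that the Komar integral is being taken over $\Sect=\{r=u=0\}$ itself, with the induced orientation $\epsilon$ fixed by $\epsilon_{-+i_1\cdots i_9}=\epsilon_{i_1\cdots i_9}$; the sign of the first equality depends entirely on this choice. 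Second, that $\mathcal{K}$ is Killing in eleven dimensions. That follows from (S) and (L) of \eqref{eq:sl2inputb}, which guarantees the integral is independent of the choice of cross-section. The independence is not needed for the algebra, but it is what makes $J$ a charge.

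\emph{Second equality.} Integrating $c$ gives $c\,\mathrm{Vol}(\Sect)$. For the other term, \eqref{eq:weightedbound} gives
\[
\int_\Sect\Delta f=\int_\Sect Qf-c\,\mathrm{Vol}(\Sect),
\]
so that
\[
\int_\Sect(\Delta f-c)=\int_\Sect Qf-2c\,\mathrm{Vol}(\Sect)=-\Big[2c\,\mathrm{Vol}(\Sect)-\int_\Sect Qf\Big].
\]
This is the second form. The hypotheses used are the standing ones of the subsection: compactness (needed to integrate \eqref{eq:masterelliptic}), connectedness (so that $c$ is constant), and $N_->0$ (so that $f>0$ and the master identity holds).

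\emph{Main obstacle.} The only delicate step is the sign and normalisation of the Hodge dual and of the restriction to $\Sect$, i.e.\ confirming $\star(\bbe^-\wedge\bbe^+)=-\epsilon$ and that the purely spatial part of $\mathrm{d}\mathcal{K}^\flat$ pulls back to zero. That is already the content of Proposition~\ref{prop:komarintegrand}. Since the theorem itself declares the overall sign conventional, I would not relitigate it, and would simply record that the magnitude is convention-free.
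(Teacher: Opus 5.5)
Your proposal is correct and is the paper's proof. You integrate the pointwise identity $\star\,\mathrm{d}\mathcal{K}^\flat|_\Sect=-h(K)\,\epsilon=(2\Delta f-2c)\,\epsilon$ of Proposition~\ref{prop:komarintegrand} for the first equality, then substitute $\int_\Sect\Delta f=\int_\Sect Qf-c\,\mathrm{Vol}(\Sect)$ from Theorem~\ref{thm:weighted} for the second, and your accounting of the hypotheses and of where the conventional sign enters matches the paper's.
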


\begin{proof}
Integrate \eqref{eq:komarintegrand} over $\Sect$ for the first equality, and
substitute $\int_\Sect\Delta f=\int_\Sect Qf-c\,\mathrm{Vol}(\Sect)$ from
Theorem~\ref{thm:weighted} for the second.
\end{proof}

\begin{remark}[What is convention and what is not]
\label{rem:komarconv}
Two signs in \eqref{eq:komarcharge} are conventional: the orientation
$\epsilon_{-+i_1\cdots i_9}=\epsilon_{i_1\cdots i_9}$, which fixes the sign of
$\star$, and the normalisation \eqref{eq:diraccurrent} of the Dirac current,
which fixes $K=-2V$. Reversing either flips the overall sign of
$J[\mathcal{K}]$; with the opposite orientation \eqref{eq:komarcharge} reads
$J=\tfrac1{8\pi G^{(11)}}\int_\Sect(c-\Delta f)$. What is convention-free is
the content: the charge is the difference of $2c\,\mathrm{Vol}(\Sect)$ and the
$f$-weighted flux integral, it vanishes precisely when $\int_\Sect Qf=2c\,
\mathrm{Vol}(\Sect)$, and it involves no undetermined function of the
geometry. Since angular momentum carries no invariant sign, the form of the
result that no choice can move is the magnitude
$8\pi G^{(11)}\lvert J[\mathcal{K}]\rvert=\int_\Sect\lvert V\rvert^2/f$ of
\eqref{eq:komarabs}, together with the statement that it vanishes exactly on
the static branch; the inequality $J\le0$ is that identity read with the
orientation fixed above.
\end{remark}

The same vector obeys a bound rather than an identity, and this is where the
balance identity of Theorem~\ref{thm:balance} enters. At $p=1$,
$\int_\Sect f^{-1}\lvert V\rvert^2+\int_\Sect\Delta f=c\,\mathrm{Vol}(\Sect)$,
and $\lvert K\rvert^2=4\lvert V\rvert^2$.

\begin{corollary}[The rotation of the horizon is bounded by the flux]
\label{cor:rotationbound}
\begin{equation}
\int_\Sect \frac{\lvert K\rvert^2}{f}
 \;=\; 8c\,\mathrm{Vol}(\Sect)
 - \int_\Sect\Big(\tfrac43Y^2+\tfrac1{18}X^2\Big)f\,,
\label{eq:rotationbound}
\end{equation}
and since the left-hand side is non-negative,
$\int_\Sect\big(\tfrac43Y^2+\tfrac1{18}X^2\big)f\le 8c\,\mathrm{Vol}(\Sect)$,
with equality if and only if $K\equiv0$, that is on the static branch
$V\equiv0$.
\end{corollary}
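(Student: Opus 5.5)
Take the $p=1$ member of the balance family \eqref{eq:balance}, eliminate $\int_\Sect\Delta f$ with the weighted identity \eqref{eq:weightedbound}, and convert $V$ to $K$ through $K_i=-2V_i$ from \eqref{eq:Kdef}. The rotation bound is then purely algebraic: it combines two integral identities that are already established.

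First, Theorem~\ref{thm:balance} at $p=1$, written with $\lvert V\rvert^2=f^2\lvert\mathcal{A}\rvert^2$ from \eqref{eq:Adef}, gives
\[
\int_\Sect\frac{\lvert V\rvert^2}{f}+\int_\Sect\Delta f=c\,\mathrm{Vol}(\Sect).
\]
This needs only \eqref{eq:master} and \eqref{eq:divmaster}, not the invariance $\mathcal{L}_Vf=0$, so the corollary inherits no hypothesis beyond those of Theorem~\ref{thm:master}.

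Second, substitute $\int_\Sect\Delta f=\int_\Sect Qf-c\,\mathrm{Vol}(\Sect)$ from \eqref{eq:weightedbound}. This yields
\[
\int_\Sect\frac{\lvert V\rvert^2}{f}=2c\,\mathrm{Vol}(\Sect)-\int_\Sect Qf.
\]
Multiplying by $4$ and using $\lvert K\rvert^2=4\lvert V\rvert^2$, the factor $4Q=\tfrac43Y^2+\tfrac1{18}X^2$ reproduces exactly the coefficients in \eqref{eq:rotationbound}.

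Third, the inequality follows because $f>0$ (Theorem~\ref{thm:master}), so the left-hand side is non-negative. Equality holds if and only if $\lvert K\rvert^2/f\equiv0$, that is $K\equiv0$, which is $V\equiv0$: the static branch of Remark~\ref{rem:static-terminology}. The consistency check is Proposition~\ref{prop:staticwarped}(iii), which independently gives $\int_\Sect Qf=2c\,\mathrm{Vol}(\Sect)$ on that branch.

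The only step that needs care is keeping the normalisations straight. The factor relating $K$ to $V$ must be the one in \eqref{eq:Kdef}, the $-2$ convention fixed by Proposition~\ref{prop:conconx}, and not the $K_+$-dependent form of \eqref{eq:conconxonshell}. Both identities must also use the same constant $c=\nrm{\eta_+}^2$. I would check this by confirming that the intermediate identity agrees with \eqref{eq:komarcharge}: the Komar integrand $h(K)=2c-2\Delta f$ integrates to the same $2c\,\mathrm{Vol}(\Sect)-\int_\Sect Qf$ combination.
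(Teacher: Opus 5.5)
Your proposal is correct and is essentially the paper's own proof: take the $p=1$ case of the balance identity, eliminate $\int_\Sect\Delta f$ with the weighted identity, multiply by $4$ using $\lvert K\rvert^2=4\lvert V\rvert^2$, and read off the equality case from $f>0$ and $K=-2V$. One small correction to your side check: $\int_\Sect h(K)$ equals twice $2c\,\mathrm{Vol}(\Sect)-\int_\Sect Qf$, not that combination itself; it is $-8\pi G^{(11)}J$ in \eqref{eq:komarcharge} that equals $2c\,\mathrm{Vol}(\Sect)-\int_\Sect Qf$ exactly.
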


\begin{proof}
Multiply the $p=1$ case of \eqref{eq:balance} by $4$, use
$\lvert K\rvert^2=4\lvert V\rvert^2$, and eliminate $\int_\Sect\Delta f$ with
Theorem~\ref{thm:weighted}; the coefficients follow from $4Q$. Equality forces
$\int f^{-1}\lvert K\rvert^2=0$ and hence $K\equiv0$, which is $V\equiv0$ by
\eqref{eq:Kdef}.
\end{proof}

Finally, for $c>0$ the same identity represents the volume of the section, and
with it the entropy, as a weighted integral of flux and warp factor, without
any assumption on $\nrm{\eta_-}$.

\begin{corollary}[A weighted flux-and-warp representation of the horizon volume]
\label{cor:entropy}
If $c>0$ then
\begin{equation}
\mathrm{Vol}(\Sect) \;=\; \frac1c\int_\Sect\big(Q-\Delta\big)f\,,
\qquad\text{so}\qquad
S_{\mathrm{BH}} \;=\; \frac{1}{4cG^{(11)}}
 \int_\Sect\Big(\tfrac13Y^2+\tfrac1{72}X^2-\Delta\Big)\nrm{\eta_-}^2\,.
\label{eq:entropy}
\end{equation}
In particular $\int_\Sect(Q-\Delta)f>0$ on any such horizon. This is a
weighted flux-and-warp representation of the entropy, not a determination of
the entropy by the flux: the right-hand side still carries the warp factor $f$
as a weight and the geometric $\Delta$, and what the identity supplies is that
the combination is fixed by the supersymmetry constant $c$.
\end{corollary}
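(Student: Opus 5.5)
The plan is to obtain the volume formula by rearranging the weighted identity of Theorem~\ref{thm:weighted}, and then to read off the entropy from the Bekenstein--Hawking formula. First I will quote \eqref{eq:weightedbound},
\[
\int_\Sect\Delta f=\int_\Sect Qf-c\,\mathrm{Vol}(\Sect),
\]
which holds under the standing hypotheses of this subsection: compact connected $\Sect$ with $N_->0$. No constancy of $\nrm{\eta_-}$ enters, because that identity is the integral of the elliptic form \eqref{eq:masterelliptic}, and its left-hand side is a divergence. Moving terms across gives $c\,\mathrm{Vol}(\Sect)=\int_\Sect(Q-\Delta)f$. Dividing by $c$, which is legitimate exactly because $c>0$ is assumed, gives the first equality of \eqref{eq:entropy}.

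For the entropy I will use $S_{\mathrm{BH}}=\mathrm{Area}/(4G^{(11)})$, where the horizon area is the volume of the spatial section $\Sect$ in the induced metric $\gamma$. This identification should be checked against \eqref{eq:gnc}: at $r=u=0$ the section is a cross-section of the Killing horizon, and its induced metric is $\delta_{ij}\bbe^i\bbe^j$. Substituting the volume formula and writing $f=\nrm{\eta_-}^2$ and $Q=\tfrac13Y^2+\tfrac1{72}X^2$ then gives the displayed expression.

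Positivity is immediate. Since $\mathrm{Vol}(\Sect)>0$ and $c>0$, we get $\int_\Sect(Q-\Delta)f=c\,\mathrm{Vol}(\Sect)>0$.

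There is no real obstacle here. The only points needing care are two bookkeeping choices: the normalisation of $G^{(11)}$ in the area law, consistent with the action of Section~\ref{sec:conventions}, and the fact that the horizon area is computed in the section metric rather than in any rescaled frame. I will also note, as the statement does, that the formula is a representation rather than a determination. Both $f$ and $\Delta$ remain in the integrand, and Remark~\ref{rem:fluxlowerstatus} can then be cited for the inequality form $S_{\mathrm{BH}}\le(4cG^{(11)})^{-1}\int_\Sect Qf$, using $\Delta\ge0$ from Theorem~\ref{thm:master}.
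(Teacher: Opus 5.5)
Your proposal is correct and is essentially the paper's proof: solve the weighted identity \eqref{eq:weightedbound} for $c\,\mathrm{Vol}(\Sect)$, divide by $c>0$, and read positivity off $\mathrm{Vol}(\Sect)>0$. For the entropy you use the same identification $\mathrm{Vol}(\Sect)=4G^{(11)}S_{\mathrm{BH}}$ that the paper uses in Corollary~\ref{cor:smarr}.
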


\begin{proof}
Solve \eqref{eq:weightedbound} for $c\,\mathrm{Vol}(\Sect)$ and divide by
$c>0$; positivity is $\mathrm{Vol}(\Sect)>0$.
\end{proof}

The three statements are not independent, and the elimination worth naming is
the one against the balance identity: it says that the Komar charge of the
horizon rotation is nothing but the integrated failure of staticity.
Eliminating $\int_\Sect\Delta f$ against Theorem~\ref{thm:weighted} instead
leaves a linear relation among the rotation, the entropy and the weighted flux
integral, in the manner of a Smarr formula.

\begin{corollary}[The Komar charge is the integrated non-staticity; a
Smarr-type relation]
\label{cor:smarr}
Under the hypotheses of Theorem~\ref{thm:komar}, and using in addition the
$p=1$ balance \eqref{eq:Vbound},
\begin{equation}
8\pi G^{(11)}\,\lvert J[\mathcal{K}]\rvert
 \;=\;\int_\Sect\frac{\lvert V\rvert^2}{f}\,,
\label{eq:komarabs}
\end{equation}
so that $J[\mathcal{K}]=0$ if and only if $V\equiv0$, that is exactly on the
static branch. With the orientation and Dirac-current normalisation of
Remark~\ref{rem:komarconv} the signed form of \eqref{eq:komarabs} is
\begin{equation}
8\pi G^{(11)}J[\mathcal{K}]\;=\;-\int_\Sect\frac{\lvert V\rvert^2}{f}
 \;\le\;0\,.
\label{eq:komarsign}
\end{equation}
Eliminating $\int_\Sect\Delta f$ against Theorem~\ref{thm:weighted} gives
instead
\begin{equation}
8\pi G^{(11)}J[\mathcal{K}]
 \;=\;\int_\Sect Qf-2c\,\mathrm{Vol}(\Sect)
 \;=\;\int_\Sect Qf-8cG^{(11)}S_{\mathrm{BH}}\,,
\qquad Q=\tfrac13Y^2+\tfrac1{72}X^2\,.
\label{eq:smarr}
\end{equation}
Consequently, for $c>0$,
\begin{equation}
\frac{1}{8cG^{(11)}}\int_\Sect Qf\;\le\;S_{\mathrm{BH}}\;\le\;
\frac{1}{4cG^{(11)}}\int_\Sect Qf\,,
\label{eq:entropysandwich}
\end{equation}
the upper bound being saturated exactly when $\Delta\equiv0$ and the lower
exactly when $V\equiv0$.
\end{corollary}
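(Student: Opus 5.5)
The plan is to obtain all three displays by combining results already in hand: the Komar charge of Theorem~\ref{thm:komar}, the $p=1$ balance \eqref{eq:Vbound}, the weighted identity of Theorem~\ref{thm:weighted}, and the identification of $\mathrm{Vol}(\Sect)$ with the entropy.

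\emph{Step 1: the first form of the Komar charge.} I would start from $8\pi G^{(11)}J[\mathcal{K}]=\int_\Sect(\Delta f-c)$, which is the first equality of \eqref{eq:komarcharge}.

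\emph{Step 2: the signed form and the magnitude.} The $p=1$ case of \eqref{eq:balance}, written as \eqref{eq:Vbound}, gives
\[
\int_\Sect\Delta f-c\,\mathrm{Vol}(\Sect)=-\int_\Sect\frac{\lvert V\rvert^2}{f}.
\]
This identity uses only \eqref{eq:master} and \eqref{eq:divmaster}, not $\mathcal{L}_Vf=0$. Substituting it into Step 1 yields the signed form \eqref{eq:komarsign}. Since the right-hand side is non-positive, taking absolute values gives \eqref{eq:komarabs}.

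For the vanishing statement, $f>0$ by Theorem~\ref{thm:master}. So $J=0$ if and only if $\int\lvert V\rvert^2/f=0$, which holds if and only if $V\equiv0$ by continuity. That is the static branch of Remark~\ref{rem:static-terminology}.

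\emph{Step 3: the Smarr-type relation.} Substituting $\int\Delta f=\int Qf-c\,\mathrm{Vol}(\Sect)$ from \eqref{eq:weightedbound} into Step 1 gives the first equality of \eqref{eq:smarr}. For the second, I need $\mathrm{Vol}(\Sect)=4G^{(11)}S_{\mathrm{BH}}$. This is the Bekenstein--Hawking area law for the horizon section and is the normalisation already used in Corollary~\ref{cor:entropy}. With it, $2c\,\mathrm{Vol}(\Sect)=8cG^{(11)}S_{\mathrm{BH}}$.

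\emph{Step 4: the entropy sandwich.} Assume $c>0$. For the upper bound I would use \eqref{eq:fluxlower}, $\int Qf\ge c\,\mathrm{Vol}(\Sect)=4cG^{(11)}S_{\mathrm{BH}}$. Equality holds exactly when $\Delta\equiv0$, as shown in Corollary~\ref{cor:fluxlower}.

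For the lower bound I would combine \eqref{eq:smarr} with \eqref{eq:komarsign}:
\[
\int_\Sect Qf-8cG^{(11)}S_{\mathrm{BH}}=-\int_\Sect\frac{\lvert V\rvert^2}{f}\le0,
\]
which rearranges to $S_{\mathrm{BH}}\ge(8cG^{(11)})^{-1}\int_\Sect Qf$. Equality holds exactly when $V\equiv0$, by the same positivity argument as in Step 2.

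\emph{Main obstacle.} There is no analytic difficulty here; the argument is elimination between proved identities. The only delicate point is bookkeeping the sign and orientation conventions of Remark~\ref{rem:komarconv}. I would check that the sign in \eqref{eq:komarsign} matches the first equality of \eqref{eq:komarcharge} as stated, rather than its reversed-orientation form. I would also confirm that the entropy normalisation $S_{\mathrm{BH}}=\mathrm{Vol}(\Sect)/4G^{(11)}$ is the one consistent with Corollary~\ref{cor:entropy}.
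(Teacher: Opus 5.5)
Your proposal is correct and follows the paper's proof. You substitute the $p=1$ balance \eqref{eq:Vbound} into the first equality of \eqref{eq:komarcharge} to get the signed and absolute forms, use \eqref{eq:weightedbound} together with $\mathrm{Vol}(\Sect)=4G^{(11)}S_{\mathrm{BH}}$ for \eqref{eq:smarr}, and obtain the sandwich from \eqref{eq:fluxlower} and from \eqref{eq:smarr} combined with \eqref{eq:komarsign}, with the same equality cases.
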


\begin{proof}
For \eqref{eq:komarsign} substitute $\int_\Sect\Delta f
=c\,\mathrm{Vol}(\Sect)-\int_\Sect f^{-1}\lvert V\rvert^2$ from
\eqref{eq:Vbound} into $8\pi G^{(11)}J=\int_\Sect\Delta f
-c\,\mathrm{Vol}(\Sect)$; the sign and its equality case are those of
$\int_\Sect f^{-1}\lvert V\rvert^2$, and $f>0$. Taking absolute values gives
\eqref{eq:komarabs}, which is therefore independent of the two conventions of
Remark~\ref{rem:komarconv}. The first equality of \eqref{eq:smarr} is
\eqref{eq:komarcharge}; the second is
$\mathrm{Vol}(\Sect)=4G^{(11)}S_{\mathrm{BH}}$. For
\eqref{eq:entropysandwich}, $\Delta\ge0$ and \eqref{eq:weightedbound} give
$\int_\Sect Qf\ge c\,\mathrm{Vol}(\Sect)$, which is
\eqref{eq:fluxlower}, while \eqref{eq:smarr} and \eqref{eq:komarsign} give
$\int_\Sect Qf\le2c\,\mathrm{Vol}(\Sect)$; divide by $4cG^{(11)}$ and by
$8cG^{(11)}$ respectively.
\end{proof}

\begin{remark}[What is new in \eqref{eq:smarr} and \eqref{eq:komarsign}]
\label{rem:smarr}
Neither statement adds an input. \eqref{eq:smarr} is the rearrangement that
puts $J$, $S_{\mathrm{BH}}$ and $\int_\Sect Qf$ in one line, in the manner of
a Smarr formula but with every term an integral over the section and no
asymptotic region entering; what plays the role of the mass term is the entropy
against the constant $c=\nrm{\eta_+}^2$. \eqref{eq:komarsign} is the reading
of that line which the balance identity forces: the Komar charge is minus the
total non-staticity, so in the orientation of Remark~\ref{rem:komarconv} it
cannot be positive, and it vanishes exactly on the static branch. By
Theorem~\ref{thm:pwbalance} the statement is pointwise as well --- the Komar
integrand $2(\Delta f-c)$ equals $-2\lvert V\rvert^2/f$ at every point --- so
what \eqref{eq:komarsign} integrates is a sign that already holds
densitywise. Both are void under $\nrm{\eta_-}=\const$, where
$J$ collapses to $\int_\Sect h^2$ and the balance identity is trivial. The
sandwich \eqref{eq:entropysandwich} traps the entropy between two multiples of
the same flux integral, differing by a factor of two; the two saturation loci
are disjoint unless $c=0$, in which case $\Delta\equiv0$, $V\equiv0$ and both
bounds degenerate.
\end{remark}

\begin{remark}[Why the gradient term is what makes these close]
\label{rem:chargesstatus}
Attribution for the three statements is settled in Section~\ref{sec:status};
what is isolated here is which input does the work. The master identity
with its gradient term intact turns $h(K)$ into $2c-2\Delta f$ pointwise, and
the weighted identity then eliminates $\Delta$ altogether. Both steps are void
if $\nrm{\eta_-}$ is assumed constant: in that case
$K\equiv-2V\equiv-2\nrm{\eta_-}^2h$, and the charge collapses to $\int h^2$,
which is \eqref{eq:integrated} again. Beyond that the three equations carry no
independent geometric input; they are algebraic consequences of
Theorems~\ref{thm:master}, \ref{thm:weighted} and \ref{thm:balance}.
\end{remark}

\subsection{The total scalar curvature of the section}
\label{sec:totalR}

Every charge of the last subsection carries the Killing spinor inside it. The
trace of the Einstein equation admits the same treatment --- integrate a
pointwise identity over the compact section --- and needs no supersymmetry at
all, which makes it the one integrated statement here that holds on every
near-horizon geometry. Integrating the horizon equations over a compact section
is the standard device of the subject, used in vacuum to exclude static
degenerate horizons \cite{CRT} and used for the horizon topology theorem of
\cite[\S3.1]{KL2013review}; what
is new below is the $D=11$ form, in which the flux is carried explicitly and
the deficit turns out to be the $\mathrm{Spin}(7)$ residual.

\begin{proposition}[Total scalar curvature]
\label{prop:totalR}
On any near-horizon geometry \eqref{eq:gnc}--\eqref{eq:fluxdecomp} of $D=11$
supergravity, pointwise on $\Sect$,
\begin{equation}
\tilde R \;=\; -2\Delta - \tfrac12h^2
 + \tfrac7{12}Y^2 + \tfrac5{144}X^2\,,
\label{eq:Rpointwise}
\end{equation}
and if $\Sect$ is compact and without boundary,
\begin{equation}
\int_\Sect \tilde R \;=\; \int_\Sect\Big(\tfrac5{12}Y^2+\tfrac1{36}X^2
 -\Delta\Big)\,.
\label{eq:Rtotal}
\end{equation}
\end{proposition}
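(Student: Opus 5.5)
The plan is to start from the trace of the horizon Einstein equation \eqref{eq:Rij}, which the paper already records right after Proposition~\ref{prop:auxiliary}:
\[
\tilde R = -\tn^ih_i + \tfrac12 h^2 + \tfrac14 Y^2 + \tfrac1{48}X^2 .
\]
This is a contraction with $\delta^{ij}$ in nine dimensions. The $Y$ terms give $-\tfrac12Y^2+\tfrac9{12}Y^2=\tfrac14Y^2$, and the $X$ terms give $\tfrac1{12}X^2-\tfrac9{144}X^2=\tfrac1{48}X^2$. I would re-check these two coefficients rather than quote them, since the whole statement hinges on them. No supersymmetry enters at any point.

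Next, eliminate the divergence of the twist using the field equation \eqref{eq:divh}, that is, substitute $\tn^ih_i=2\Delta+h^2-\tfrac13Y^2-\tfrac1{72}X^2$. Collecting terms:
\begin{itemize}
\item the $h^2$ coefficient is $-1+\tfrac12=-\tfrac12$;
\item the $Y^2$ coefficient is $\tfrac13+\tfrac14=\tfrac7{12}$;
\item the $X^2$ coefficient is $\tfrac1{72}+\tfrac1{48}=\tfrac{2+3}{144}=\tfrac5{144}$;
\item the only $\Delta$ term is $-2\Delta$.
\end{itemize}
This gives \eqref{eq:Rpointwise}. It is purely algebraic and holds on any near-horizon geometry satisfying \eqref{eq:Rij} and \eqref{eq:divh}.

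For the integrated form \eqref{eq:Rtotal}, the pointwise formula still contains $-\tfrac12h^2$, which has no sign to exploit. Instead I would use the mean-value identity \eqref{eq:integrated} of Theorem~\ref{thm:bound}, in the integrated form $\int_\Sect h^2=\int_\Sect\big(\tfrac13Y^2+\tfrac1{72}X^2-2\Delta\big)$. Substituting $-\tfrac12\int h^2$ gives:
\begin{itemize}
\item $Y^2$ coefficient $\tfrac7{12}-\tfrac16=\tfrac5{12}$;
\item $X^2$ coefficient $\tfrac5{144}-\tfrac1{144}=\tfrac1{36}$;
\item $\Delta$ coefficient $-2+1=-1$.
\end{itemize}
This is exactly \eqref{eq:Rtotal}. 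Equivalently, one can integrate the trace formula directly, so that $\int\tn^ih_i=0$, which gives $\int\tilde R=\int(\tfrac12h^2+\tfrac14Y^2+\tfrac1{48}X^2)$, and then eliminate $\int h^2$ the same way. I would carry out this second route as an arithmetic cross-check.

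There is no conceptual obstacle. The only real risk is a coefficient or sign error in the traced Einstein equation, in particular whether the trace of $\tn_{(i}h_{j)}$ enters with the sign shown, and whether the $\delta_{ij}$ terms are traced with dimension nine. I would guard against this by checking \eqref{eq:Rpointwise} on the calibration backgrounds of Remark~\ref{rem:calibration}. For \eqref{eq:calibE}, $h=0$, $Y^2=6\lambda^2$ and $\Delta=\lambda^2$ give $\tilde R=-2\lambda^2+\tfrac72\lambda^2=\tfrac32\lambda^2$, which matches \eqref{eq:calibEdata}. That background tests the $\tfrac7{12}$, and the magnetic one tests the $\tfrac5{144}$.
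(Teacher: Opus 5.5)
Your proposal is correct and follows the paper's own proof: trace \eqref{eq:Rij}, substitute \eqref{eq:divh} for $\tn^ih_i$ to get \eqref{eq:Rpointwise}, then integrate and eliminate $\int_\Sect h^2$ using \eqref{eq:integrated}. Your coefficients, and your calibration check against \eqref{eq:calibEdata}, are right.
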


\begin{proof}
Tracing \eqref{eq:Rij} gives
$\tilde R=-\tn^ih_i+\tfrac12h^2+\tfrac14Y^2+\tfrac1{48}X^2$, and substituting
\eqref{eq:divh} for $\tn^ih_i$ gives \eqref{eq:Rpointwise}. Integrating
\eqref{eq:Rpointwise} and eliminating $\int_\Sect h^2$ with the integrated
identity \eqref{eq:integrated} gives \eqref{eq:Rtotal}. Both steps, the trace
with its contraction identities on explicit tensors and the elimination with
the field-equation residuals carried, are verified in
Appendix~\ref{app:verify}.
\end{proof}

\begin{corollary}[The warp factor obstructs positive total curvature]
\label{cor:Rbound}
On a supersymmetric horizon with $N_->0$, $\Delta\ge0$ by
Theorem~\ref{thm:pointwise} and hence
\begin{equation}
\int_\Sect \tilde R \;\le\; \int_\Sect\Big(\tfrac5{12}Y^2+\tfrac1{36}X^2\Big)\,,
\label{eq:Rbound}
\end{equation}
with equality if and only if $\Delta\equiv0$; and by
Theorem~\ref{thm:spin7square} the deficit is the $\mathrm{Spin}(7)$
obstruction itself,
\begin{equation}
\int_\Sect \tilde R \;=\; \int_\Sect\Big(\tfrac5{12}Y^2+\tfrac1{36}X^2
 -\tfrac1{108}\big\lVert w_{\mathbf 7}\pm\sigma(\mathcal{Y}_{\mathbf 7})
 \big\rVert^2\Big)\,.
\label{eq:Rspin7}
\end{equation}
\end{corollary}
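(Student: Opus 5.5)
The corollary follows by combining three results already in hand: the integrated identity \eqref{eq:Rtotal} of Proposition~\ref{prop:totalR}, the pointwise positivity of $\Delta$, and the perfect square of Theorem~\ref{thm:spin7square}.

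The first step is to establish $\Delta\ge0$ pointwise. For $N_->0$ on a compact connected section, I would take this from the last clause of Theorem~\ref{thm:master}. If $\eta_+=\Gamma_+\Theta_-\eta_-\neq0$, it comes from \eqref{eq:pointwisenorm} applied to that spinor, since $\nrm{\eta_+}$ is then a non-zero constant. If $\eta_+=0$, then $c=0$ and Remark~\ref{rem:kernels} makes the horizon flux-free with $\Delta\equiv0$. Alternatively, positivity can be read off directly from \eqref{eq:deltasquare} evaluated with the $(Z,\varphi)$ of $\eta_-$, which is nowhere vanishing. With $\Delta\ge0$ in place, \eqref{eq:Rtotal} gives $\int_\Sect\tilde R=\int_\Sect(\tfrac5{12}Y^2+\tfrac1{36}X^2)-\int_\Sect\Delta$ with a non-negative subtracted term, and this is \eqref{eq:Rbound}.

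For the equality case, the deficit $\int_\Sect\Delta$ is the integral of a smooth non-negative function. It vanishes if and only if $\Delta\equiv0$, by continuity on a compact manifold with a positive volume form. The converse direction is immediate.

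For \eqref{eq:Rspin7}, I would substitute \eqref{eq:deltasquare} for $\Delta$ pointwise under the integral. I would use the $\mathrm{Spin}(7)$ data of the $\eta_-$ spinor, which is defined globally because $f>0$, and take the corresponding sector sign. The one point to check is that the integrand is a well-defined smooth function on $\Sect$. This holds because \eqref{eq:deltasquare} is an identity at each point, so the right-hand side equals $\Delta$, which is smooth, even though $w_{\mathbf 7}$ and $\mathcal{Y}_{\mathbf 7}$ are built from the frame-dependent data $(Z,\varphi)$. I expect no real obstacle here. The only care needed is choosing a sector globally: the $\eta_-$ data avoid any issue of $\eta_+$ vanishing, and the $\eta_+$ data would work equally well when $c>0$.
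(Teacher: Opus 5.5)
Your proposal is correct and follows the paper's own argument. Integrating \eqref{eq:Rtotal} with $\Delta\ge0$ gives \eqref{eq:Rbound} together with its equality case, and substituting \eqref{eq:deltasquare} under the integral gives \eqref{eq:Rspin7}; taking $\Delta\ge0$ from the last clause of Theorem~\ref{thm:master}, so that the case $\eta_+=0$ is covered, is the right care. The point that actually needs checking is not smoothness, since $(Z,\varphi)$ are global tensors once $f>0$, but the choice of sector. With the $\eta_-$ data, Proposition~\ref{prop:spin7delta} eliminates $h$ through the algebraic form \eqref{eq:htoflux}, so pointwise $\Delta$ differs from the square by $-f^{-2}V^i\tn_if$. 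This term vanishes by the quoted invariance $\mathcal{L}_Vf=0$, and it integrates to zero in any case because $\tn^iV_i=0$; using the $\eta_+$ data when $c>0$ avoids the issue entirely.
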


\begin{remark}[What \eqref{eq:Rtotal} does and does not say]
\label{rem:Rtotal}
Equation \eqref{eq:Rtotal} is bosonic: it holds on every near-horizon geometry
with compact section, supersymmetric or not, and it is the integrated form of
\eqref{eq:Rpointwise}, in which the twist appears with a negative sign and the
flux with a positive one. It does not by itself put $\Sect$ in the positive
scalar curvature class --- $\int\tilde R>0$ does not imply $\tilde R>0$
pointwise, still less the existence of a positive scalar curvature metric in
the conformal class --- which is why Theorem~\ref{thm:psc} is proved by a
different route and only on the static branch. What \eqref{eq:Rspin7} adds is
that the sole obstruction to the flux bound \eqref{eq:Rbound} being an equality
is the residual $\mathrm{Spin}(7)$ combination of
Theorem~\ref{thm:spin7square}: the same object that obstructs the case list
obstructs the saturation of the curvature bound.
\end{remark}

\subsection{Positive scalar curvature on the static branch}
\label{sec:psc}

The integrated identities of the last two subsections constrain the flux and the
volume but say nothing about the topology of $\Sect$. On the static branch
\eqref{eq:staticbranch}, where $h=-\mathrm{d}\log f$ and $\Delta f=c$, the same
data do reach the topology, because they give a conformal factor with a sign. Write
$Q=\tfrac13Y^2+\tfrac1{72}X^2\ge0$ and $\omega=\mathrm{d}\log f$.

\begin{theorem}[The section is conformally of positive scalar curvature]
\label{thm:psc}
Let $\Sect$ be compact and connected, $N_->0$, and let the horizon lie on the
static branch with $c>0$. Then
\begin{equation}
-\tfrac{32}7\tn^i\tn_i\big(f^{1/4}\big) + R\,f^{1/4}
= f^{1/4}\Big(\tfrac67\lvert\omega\rvert^2 + \tfrac67\,h\!\cdot\!\omega
 + \tfrac5{14}h^2 + \tfrac{17}{28}Q + \tfrac27\frac{c}{f}
 + \tfrac1{96}X^2\Big) \;\ge\; \tfrac27\frac{c}{f}\,f^{1/4} \;>\;0\,,
\label{eq:pscbracket}
\end{equation}
because the quadratic form
$\bigl(\begin{smallmatrix}6/7&3/7\\3/7&5/14\end{smallmatrix}\bigr)$ in
$(\lvert\omega\rvert,\lvert h\rvert)$ is positive definite and the remaining
coefficients are positive. Since $\tfrac{4(n-1)}{n-2}=\tfrac{32}7$ and
$\phi^{4/(n-2)}=\phi^{4/7}$ in $n=9$, the conformal metric
$\tilde g = f^{1/7}g$ has strictly positive scalar curvature. Consequently
$\Sect$ is a closed spin nine-manifold admitting positive scalar curvature, so
$\alpha(\Sect)=0$ in $KO^{-9}(\mathrm{pt})=\mathbb{Z}/2$ \cite{Lich,Hitchin};
$\Sect$ is not enlargeable, in particular not $T^9$ and carrying no metric of
non-positive sectional curvature \cite{GromovLawson}; and $\Sect$ is not one of
the exotic nine-spheres with $\alpha\ne0$ \cite{Hitchin}.
\end{theorem}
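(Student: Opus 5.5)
The plan is to compute the conformal Laplacian of $\phi:=f^{1/4}$ directly. Every second-order quantity gets eliminated with identities already proved: $\tn^2 f$ through the second-order master identity \eqref{eq:masterelliptic}, and $\tilde R$ through the pointwise trace identity \eqref{eq:Rpointwise}. The static-branch relations \eqref{eq:staticbranch}, namely $h=-\omega$ and $\Delta=c/f$, are then used only to reorganise the zeroth-order terms. No new input from the Killing spinor equation should be needed beyond Theorem~\ref{thm:master} and Proposition~\ref{prop:totalR}.

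\emph{First step.} With $\omega=\mathrm{d}\log f$, the chain rule gives
$\tn^2(f^{1/4})=f^{1/4}\big(\tfrac14\,\tn^2f/f-\tfrac3{16}\lvert\omega\rvert^2\big)$.
Substituting \eqref{eq:masterelliptic} in the form $\tn^2f/f=Q-\Delta-c/f$ gives
$-\tfrac{32}7\tn^2\phi=\phi\big(-\tfrac87Q+\tfrac87\Delta+\tfrac87\tfrac cf+\tfrac67\lvert\omega\rvert^2\big)$.

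\emph{Second step.} I rewrite \eqref{eq:Rpointwise} in terms of $Q$ using
$\tfrac7{12}Y^2+\tfrac5{144}X^2=\tfrac74Q+\tfrac1{96}X^2$, so that
$\tilde R=-2\Delta-\tfrac12h^2+\tfrac74Q+\tfrac1{96}X^2$.
Adding the two expressions gives $Q$ the coefficient $\tfrac74-\tfrac87=\tfrac{17}{28}$ and $\Delta$ the coefficient $-\tfrac67$. Inserting $\Delta=c/f$ turns $\tfrac87\tfrac cf-\tfrac67\tfrac cf$ into $\tfrac27\tfrac cf$. What remains is $\tfrac67\lvert\omega\rvert^2-\tfrac12h^2$.

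\emph{Third step.} On the static branch $h\cdot\omega=-h^2$, so $-\tfrac12h^2=\tfrac5{14}h^2+\tfrac67h\cdot\omega$. This yields exactly the bracket in \eqref{eq:pscbracket}.

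\emph{Positivity.} By Cauchy--Schwarz the first three terms of the bracket are bounded below by the quadratic form with matrix $\bigl(\begin{smallmatrix}6/7&3/7\\3/7&5/14\end{smallmatrix}\bigr)$ in $(\lvert\omega\rvert,\lvert h\rvert)$. Its determinant is $\tfrac{30}{98}-\tfrac{18}{98}>0$ and its trace is positive, so it is positive definite. Directly on the branch these three terms equal $\tfrac12\lvert\omega\rvert^2$, which gives the same conclusion. The terms in $Q$ and $X^2$ are non-negative, and $\tfrac27 c/f>0$ because $c>0$ and $f>0$ (Theorem~\ref{thm:master}). Compactness makes $f$ bounded, so the right-hand side is bounded below by a positive constant times $\phi$.

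\emph{Conformal change and topology.} In dimension $n=9$ the conformal Laplacian is $-\tfrac{4(n-1)}{n-2}\tn^2+R=-\tfrac{32}7\tn^2+R$. For $\tilde g=\phi^{4/7}g=f^{1/7}g$ the standard formula $\tilde R=\phi^{-11/7}\big(-\tfrac{32}7\tn^2\phi+R\phi\big)$ then gives $\tilde R>0$ everywhere. The section is spin because it carries the nowhere-vanishing spinor $\eta_-$, in fact a $\mathrm{Spin}(7)$-structure. The topological consequences are then quoted: the Lichnerowicz--Hitchin vanishing of $\alpha$, and the Gromov--Lawson obstruction for enlargeable manifolds.

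The main obstacle is coefficient bookkeeping, chiefly the split of the trace of \eqref{eq:Rij} into $Q$ plus a pure $X^2$ remainder and the sign conventions of $\tn^2$. I would check this first against the calibration data of Remark~\ref{rem:calibration}, where $f$ is constant and both sides reduce to $\tilde R+\tfrac87(\Delta-Q)+\tfrac87 c/f$.
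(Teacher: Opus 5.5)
Your proof is correct and follows essentially the paper's route: the chain rule for $\tn^2 f^{1/4}$, then the branch trace identity $\tn^2 f=f(Q-2\Delta)$ (your use of \eqref{eq:masterelliptic} together with $\Delta=c/f$), then the traced Einstein equation combined with \eqref{eq:divh} for $R$, followed by collecting terms and applying the $n=9$ conformal law. One harmless arithmetic slip: on the branch $h=-\omega$, the first three terms of the bracket sum to $\tfrac5{14}\lvert\omega\rvert^2$, not $\tfrac12\lvert\omega\rvert^2$, and this is still non-negative.
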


\begin{proof}
The conformal transformation law in $n=9$ is derived rather than quoted, in the
form $R$ of $e^{2\sigma}\delta$ equals
$e^{-2\sigma}(-16\tn^2\sigma-56\lvert\tn\sigma\rvert^2)$, whence
$\tilde R=\phi^{-(n+2)/(n-2)}(-\tfrac{4(n-1)}{n-2}\tn^2\phi+R\phi)$. On the
branch, $R=-\tn^ih_i+\tfrac12h^2+\tfrac14Y^2+\tfrac1{48}X^2$ from the trace of
\eqref{eq:Rij}, and $\tn^2(f^{s})=f^{s}(s(s-1)\lvert\omega\rvert^2
+s\,\tn^2f/f)$ with $\tn^2f=f(Q-2\Delta)$, the trace identity of the branch.
Substituting and collecting gives \eqref{eq:pscbracket}. The exponent
$s=\tfrac14$ is not forced. Writing the bracket as
$A\lvert\omega\rvert^2+2B\lvert\omega\rvert\lvert h\rvert+Ch^2$ with
$A=as(1-s)$, $B=2-as$, $C=\tfrac32-as$ and $a=\tfrac{32}7$, positive
definiteness is $A>0$ together with $B^2-4AC<0$, and
$B^2-4AC=-\tfrac1{16}(14x^3-101x^2+160x-64)$ with $x=\tfrac{32s}7$; the
admissible window is therefore $\tfrac7{32}<s<s^*$, where $s^*=\tfrac7{32}x^*$
and $x^*\approx1.41114$ is the unique root of that cubic in $(1,\tfrac32)$, so
$s^*\approx0.30869$. Thus $s=\tfrac14$ sits inside the window, $s=\tfrac5{16}$
just outside, and $s=0,\tfrac12,1$ all fail. The closed forms, the
discriminant, and each endpoint and failure are checked ($104$ checks); the
window is proved by the discriminant and not by the sampled exponents.
\end{proof}

\begin{remark}[What is known here, and what is not]
\label{rem:pscattrib}
That horizon sections carry metrics of positive scalar curvature is known in
much greater generality: it follows from the dominant energy condition by the
Galloway--Schoen argument \cite{GallowaySchoen}, and the topological
consequences quoted are the standard ones \cite{Lich,Hitchin,GromovLawson}. What
is added here is explicit and modest: the conformal factor is $f^{1/7}$ with
$f=\nrm{\eta_-}^2$, i.e.\ it is built from the Killing spinor rather than from a
solution of an eigenvalue problem, and the bracket \eqref{eq:pscbracket} is an
explicit positive combination of $(\lvert\omega\rvert^2,h^2,Q,c/f,X^2)$ with a
quantitative lower bound $\tfrac27c/f$. The restriction is also weaker than it
looks: by \cite{Stolz}, for simply connected spin $\Sect$ of dimension at least
five $\alpha(\Sect)=0$ already implies that positive scalar curvature exists, so
the content is carried by $\pi_1$ and by the $\alpha$-invariant, not by the
curvature statement alone. At $c=0$ with zero flux and $h=0$ the bracket
degenerates to zero, consistently with the Ricci-flat conclusion of
Theorem~\ref{thm:rigid}: no positive scalar curvature is claimed there.
\end{remark}

\begin{proposition}[Critical points of $f$ on the static branch]
\label{prop:fcrit}
On the static branch, $h=0$ exactly on the critical set of $f$, and
$\mathrm{Crit}(f)=\mathrm{Crit}(\Delta)$ since
$\tn_i\Delta=-(c/f^2)\tn_if$. At every critical point $p$,
\begin{equation}
\mathrm{Hess}\,f(p) = f\,(\mathrm{Ric}-F)(p)\,,\qquad
F_{ij}:=\tfrac12Y_{ik}Y_j{}^k+\tfrac1{12}X_{iklm}X_j{}^{klm}\ \text{(so }
\mathrm{tr}\,F=\tfrac14Y^2+\tfrac1{48}X^2),
\label{eq:hessf}
\end{equation}
and $X^{jklm}\varphi_{jklm}=0$ there, together with
$Y_{ij}Z^j=\tfrac1{12}Z^jX_{jklm}\varphi_i{}^{klm}$, by \eqref{eq:htoflux} and
\eqref{eq:hZ}, which are available at such a point although
$\nrm{\eta_-}^2=f$ is not constant on this branch: the gradient term of
\eqref{eq:htofluxgen} vanishes exactly where $f$ is stationary. Critical points
exist by compactness, so these are unconditional. Moreover $R>0$ at every minimum of $f$ when $c>0$: no horizon of the static
branch has a scalar-flat, in particular flat, section. The flat
models with $f$ non-constant that solve the scalar sector alone --- for instance
$f=2+\cos x_1$ on $T^9$ with $c=1$ --- are excluded by the trace of
\eqref{eq:Rij}, which forces $X^2<0$ at the minimum of $f$.
\end{proposition}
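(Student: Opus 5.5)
The plan is to work entirely on the static branch \eqref{eq:staticbranch}, where $h=-\mathrm{d}\log f$ and $\Delta=c/f$, and to use the trace identity $\tn^2f=f(Q-2\Delta)$ of Proposition~\ref{prop:staticwarped}(iii). The first claims are immediate. Since $f>0$, $h_i=-\tn_if/f$ vanishes exactly where $\mathrm{d}f$ does. Since $\Delta=c/f$, we have $\tn_i\Delta=-(c/f^2)\tn_if$, so $\mathrm{Crit}(f)=\mathrm{Crit}(\Delta)$ (when $c=0$ both sides of this formula vanish, so that case is read off the first equality only).

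For the Hessian I would differentiate $h_j=-\tn_jf/f$ to get
\[
\tn_ih_j=-\frac{\tn_i\tn_jf}{f}+\frac{\tn_if\,\tn_jf}{f^2}.
\]
At a critical point this reduces to $\tn_ih_j=-\mathrm{Hess}f_{ij}/f$. I would substitute this, together with $h=0$, into \eqref{eq:Rij} and solve for $\mathrm{Hess}\,f=f(\tilde R_{ij}-S_{ij})$, with $S_{ij}$ the flux source of Remark~\ref{rem:riccisource}. The step I expect to need care is identifying this with the stated $F_{ij}$. The $\delta_{ij}$ pieces $\tfrac1{12}Y^2-\tfrac1{144}X^2$ of $S_{ij}$ and the relative sign of the $YY$ term have to be reconciled with the convention behind $F$; taking traces and using \eqref{eq:divh} at the point should fix the bookkeeping. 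I would check the result against $\mathrm{tr}\,\mathrm{Hess}f=f(Q-2\Delta)$ as a consistency test. If the stated $F$ turns out not to match literally, I would record the Hessian as $f(\tilde R-S)(p)$, which is the content actually used below.

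For the $\mathrm{Spin}(7)$ statements I would apply \eqref{eq:htofluxgen} in the $\eta_-$ sector. At $p$ the gradient term $2s\tn\log f$ vanishes, so \eqref{eq:htoflux} and \eqref{eq:hZ} hold there. Then \eqref{eq:hZ} with $h(p)=0$ gives $X\!\cdot\!\varphi=0$. Substituting this back into \eqref{eq:htoflux} with $h=0$ removes the $Z_iX\!\cdot\!\varphi$ term and leaves
\[
\tfrac23Y_{ij}Z^j=\pm\tfrac1{18}Z^jX_{jklm}\varphi_i{}^{klm}.
\]
With the lower, $\eta_-$, sign this is $Y_{ij}Z^j=\tfrac1{12}Z^jX_{jklm}\varphi_i{}^{klm}$, where I would still verify the sign convention carefully. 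Existence of critical points follows from compactness, since $f$ attains its extrema.

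For positivity at a minimum $p$ with $c>0$, I would use $\tn^2f(p)\ge0$ in the trace identity to get $Q(p)\ge2\Delta(p)=2c/f(p)>0$. Then \eqref{eq:Rpointwise} with $h=0$ gives
\[
\tilde R=-2\Delta+\tfrac74\cdot\tfrac13Y^2+\tfrac52\cdot\tfrac1{72}X^2\ge-2\Delta+\tfrac74Q\ge\tfrac32\Delta>0.
\]
For the flat model, take $\tilde R\equiv0$ and $Y=0$. The trace of \eqref{eq:Rij} evaluated at the minimum of $f$ then forces $\tfrac5{144}X^2=2\Delta>0$ against the inequality just derived, which is the stated contradiction, rendered as "$X^2<0$" after the elimination. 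This exclusion is a direct computation, and I expect the only genuine obstacle to be the normalisation of $F$ in the Hessian formula.
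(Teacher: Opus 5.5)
Your argument follows the route the statement itself indicates: $h=-\mathrm{d}\log f$ and $\Delta=c/f$ for the critical sets, \eqref{eq:Rij} at a zero of $h$ for the Hessian, the gradient-free form of \eqref{eq:htofluxgen} for the $\mathrm{Spin}(7)$ relations, and $\tn^2f(p)\ge0$ in $\tn^2f=f(Q-2\Delta)$ for positivity. Your positivity step, $\tilde R(p)\ge-2\Delta+\tfrac74Q\ge\tfrac32\Delta>0$, is complete.

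\textbf{The step that fails: the sign in the $\mathrm{Spin}(7)$ relation.} Your display $\tfrac23Y_{ij}Z^j=\pm\tfrac1{18}Z^jX_{jklm}\varphi_i{}^{klm}$ is right. But its lower sign gives $Y_{ij}Z^j=-\tfrac1{12}Z^jX_{jklm}\varphi_i{}^{klm}$, not $+\tfrac1{12}$. The sign in the statement is the upper one, so the stated relation holds for the data $(Z_+,\varphi_+)$ of $\eta_+=\Gamma_+\Theta_-\eta_-$. For that spinor you need no argument about $\nrm{\eta_-}$. When $c>0$ its squared norm is the constant $c$, so \eqref{eq:htoflux} holds on all of $\Sect$, and $h(p)=0$ alone gives $X\!\cdot\!\varphi_+=0$ and the stated relation. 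With the $\eta_-$ data, which the remark about the gradient of $\nrm{\eta_-}^2$ refers to, you get $X\!\cdot\!\varphi_-=0$ and the relation with $-\tfrac1{12}$. So either switch to the $\eta_+$ data or carry the sector sign. At $c=0$ the flux vanishes and both versions are empty.

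\textbf{The Hessian: your suspicion is correct.} State it as a finding, not a contingency. Your computation gives $\mathrm{Hess}f(p)=f(\tilde R-S)(p)$, where
$S_{ij}=-\tfrac12Y_{ik}Y_j{}^k+\tfrac1{12}\delta_{ij}Y^2+\tfrac1{12}X_{iklm}X_j{}^{klm}-\tfrac1{144}\delta_{ij}X^2$.
Its trace, $\tfrac14Y^2+\tfrac1{48}X^2$, is exactly the trace the statement attributes to $F$. The displayed $F_{ij}$, however, has the opposite sign on the $YY$ term and no $\delta_{ij}$ terms. Its trace is $\tfrac12Y^2+\tfrac1{12}X^2$, and it fails your own consistency test: at $p$, $\tilde R-\mathrm{tr}\,F=-2\Delta+\tfrac1{12}Y^2-\tfrac7{144}X^2\neq Q-2\Delta$. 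So \eqref{eq:hessf} is true when $F$ is read as the full Ricci source $S$ of Remark~\ref{rem:riccisource}, and false for the literal $F$.

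\textbf{Two smaller repairs.}

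At $c=0$ the formula $\tn\Delta=-(c/f^2)\tn f$ is empty, and the first claim does not give $\mathrm{Crit}(f)=\mathrm{Crit}(\Delta)$. Instead, $\tn^2f=Qf\ge0$ integrates to zero over the compact $\Sect$. So $f$ is harmonic, hence constant, and both critical sets are all of $\Sect$.

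For the flat model you need not impose $Y=0$, since $\tilde R(p)\ge\tfrac32\Delta(p)>0$ already contradicts $\tilde R\equiv0$. The statement's inequality $X^2<0$ comes from taking the trace of \eqref{eq:Rij} directly. Use $\tilde R=0$, $h(p)=0$ and $\tn^ih_i(p)=-\tn^2f(p)/f(p)=-1$ at $x_1=\pi$. This gives $\tfrac14Y^2+\tfrac1{48}X^2=-1$.
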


\subsection{No second Killing field from the sector unit vectors}
\label{sec:twosector}

Theorem~\ref{thm:psc} is the last of the positive consequences. The three
subsections that follow record what the same data do not give, and the
obstruction in each case is exhibited rather than merely reported.
Proposition~\ref{prop:killing} produced one Killing field, and
Section~\ref{sec:sl2r} the $\mathfrak{sl}(2,\Real)$ that acts on the lightcone
directions. Whether a horizon carries a \emph{second} isometry of
the section is a different question, and the one on which the classification
programme turns: with two commuting rotational isometries a nine-dimensional
section would be within reach of toric methods, as it is in five dimensions
\cite{KunduriLucietti,KL2013review}. In the gauged $D=5$ theory a second rotational Killing field can
be built by hand out of the near-horizon data, as the algebraic combination
$\nrm{\eta_-}^2(h_YN-h_NY)$ of two horizon vectors with function coefficients
\cite{5dsecond}. The obvious $D=11$ analogue is to use the two sector unit
vectors of Lemma~\ref{lem:spin7}: each Killing spinor $\eta_\pm$ has its own
$Z^{(\pm)}_i=\nrm{\eta_\pm}^{-2}\ip{\eta_\pm}{\Gamma_i\eta_\pm}$, and both are
unit and, at a generic point, independent. This subsection shows that the
analogue fails, and identifies exactly what obstructs it. The scope should be
fixed before the statement: what is excluded is a Killing vector built
pointwise from the two sector unit vectors with function coefficients, which is
the construction the $D=5$ argument uses. \emph{No claim is made, here or
anywhere in this paper, that a supersymmetric $M$-horizon admits no second
isometry.} A Killing field not of that algebraic form is untouched by the
computation below.

\begin{proposition}[The diagonal bilinears carry no Killing vector]
\label{prop:twosector}
Let $\eta_+$ and $\eta_-$ be Killing spinors of the two sectors on a horizon
section, with $Z^{(\pm)}$ their unit vectors, and let
\begin{equation}
S^{(s)}_{ij}\;:=\;\tn_{(i}Z^{(s)}_{j)}
 \;=\; \tfrac{s}2\,h_{(i}Z^{(s)}_{j)}
 + \tfrac1{144}\delta_{ij}\,X\!\cdot\!\varphi^{(s)}
 - \tfrac1{12}W^{(s)}_{(ij)}\,,\qquad
W^{(s)}_{ij}=X_i{}^{klm}\varphi^{(s)}_{jklm}\,,
\label{eq:Ssector}
\end{equation}
be the symmetric part of \eqref{eq:nablaZ} in each sector. At a point where
$Z^{(+)}$ and $Z^{(-)}$ are independent, write $\Pi$ for the orthogonal
projector onto $\{Z^{(+)},Z^{(-)}\}^\perp$, a seven-plane, and
$T^{(s)}:=\Pi S^{(s)}\Pi$. Then a vector field
\begin{equation}
U_i \;=\; a\,Z^{(+)}_i + b\,Z^{(-)}_i\,,\qquad a,b\in C^\infty(\Sect)\,,
\label{eq:Uansatz}
\end{equation}
satisfies $\tn_{(i}U_{j)}=0$ only if
\begin{equation}
a\,T^{(+)} + b\,T^{(-)} \;=\; 0\,,
\label{eq:twosectorobstruction}
\end{equation}
which is $28$ conditions on the two functions. On near-horizon data
$T^{(+)}$ and $T^{(-)}$ are linearly independent, so
\eqref{eq:twosectorobstruction} forces $a=b=0$ and $U\equiv0$: no Killing
vector of the form \eqref{eq:Uansatz} exists at a generic point. The same
holds with the conformal relaxation $\tn_{(i}U_{j)}=\lambda\gamma_{ij}$, and
for each sector separately, $U=aZ^{(\pm)}$.
\end{proposition}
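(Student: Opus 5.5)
Expand the Killing condition directly on the ansatz. For $U=aZ^{(+)}+bZ^{(-)}$,
\[
\tn_{(i}U_{j)} \;=\; a\,S^{(+)}_{ij}+b\,S^{(-)}_{ij}
 +\big(\tn_{(i}a\big)Z^{(+)}_{j)}+\big(\tn_{(i}b\big)Z^{(-)}_{j)}\,,
\]
and every gradient term has at least one leg along $Z^{(+)}$ or $Z^{(-)}$. Compress with $\Pi$ on both sides. This annihilates the derivative terms and leaves $a\,T^{(+)}+b\,T^{(-)}=0$, which is \eqref{eq:twosectorobstruction}.

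The count is $\dim\mathrm{Sym}^2(\Real^7)=28$. For the conformal relaxation, the compression gives $aT^{(+)}+bT^{(-)}=\lambda\Pi$. Taking the $\Pi$-trace-free part removes $\lambda$ and leaves $27$ conditions, which the same argument handles. For a single sector, set $b=0$ and project with the eight-plane projector onto $(Z^{(\pm)})^\perp$. Since $Z^{(s)}$ is unit, the $h$-term $\tfrac{s}2h_{(i}Z_{j)}$ drops as well, so $aT=0$ with $T=\Pi(\tfrac1{144}\delta\,X\!\cdot\!\varphi-\tfrac1{12}W_{(ij)})\Pi$. By Proposition~\ref{prop:nablaZ}, this is nonzero unless $w_{\mathbf 1}=w_{\mathbf{35}}=0$.

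The substance is the independence claim: $T^{(+)}$ and $T^{(-)}$ are linearly independent on generic near-horizon data. By \eqref{eq:Ssector}, both are algebraic in $(X,h,Z^{(\pm)},\varphi^{(\pm)})$, so the claim is pointwise. I would set it up as follows. Fix $\eta_-$ at a point and take $\eta_+=\Gamma_+\Theta_-\eta_-$ from Lemma~\ref{lem:pairing}, so that $Z^{(+)},\varphi^{(+)}$ are themselves determined by the flux. Let $h$ be given by \eqref{eq:htofluxgen}, keeping a free gradient of $\log f$, or by \eqref{eq:htoflux} at a stationary point. Then $T^{(\pm)}$ become explicit rational functions of the $162$ flux components.

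Linear independence fails only on the zero set of all $2\times2$ minors of the $28\times2$ matrix formed by the two columns $\mathrm{vec}\,T^{(\pm)}$. That set is Zariski closed. It therefore suffices to exhibit one rational flux point at which the matrix has rank $2$ and $Z^{(+)},Z^{(-)}$ are independent. I would do this in exact arithmetic in the explicit representation of Appendix~\ref{app:rep}, in the style of Proposition~\ref{prop:noclosure}, adding negative controls in which the flux is restricted to the $\mathbf{27}$, where $T^{(\pm)}$ should degenerate.

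The main obstacle is making this genericity honest. The flux at a point is constrained by nothing else pointwise (Remark~\ref{rem:noclosurescope}), but the two sectors are coupled through the pairing, and $Z^{(+)}$ could in principle coincide with $Z^{(-)}$ identically. So the first thing to check is that the witness point has independent $Z^{(\pm)}$; only then does the minor computation apply. If a representation-theoretic argument is wanted instead of a witness, I would first try the following. Under the common $\mathrm{Spin}(6)$-type stabiliser of the pair, $T^{(+)}$ and $T^{(-)}$ read the $\mathbf{35}$-type projections of $X$ with respect to two different Cayley forms, and two distinct equivariant maps of $X$ cannot be proportional on an open set. I expect this to be harder to make rigorous than the exact rank witness.
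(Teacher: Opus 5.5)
Your reduction step matches the paper's and is correct. The gradient terms lie in $Z^{(+)}\!\odot\Real^9+Z^{(-)}\!\odot\Real^9$, and compressing with $\Pi$ removes them, leaving \eqref{eq:twosectorobstruction}. Your single-sector argument is a clean substitute for the paper's extra column: the trace of the compressed $S$ is $-\tfrac1{36}X\!\cdot\!\varphi$, and the $\mathbf{35}$ of $\Pi W\Pi$ supplies the rest.

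The gap is in the independence step, which is the real content of the proposition. Note that $\Pi$ also kills $\tfrac{s}2h_{(i}Z^{(s)}_{j)}$. So $T^{(\pm)}$ depend only on $X$, $Z^{(\pm)}$ and $\varphi^{(\pm)}$, and in your parametrization $h$ enters solely through $\eta_+=\Gamma_+\Theta_-\eta_-$.

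You then treat the flux as pointwise unconstrained, citing Remark~\ref{rem:noclosurescope}. Through the free $\tn\log f$ in \eqref{eq:htofluxgen} you effectively treat $h$ the same way. But that remark concerns a single sector. With both sectors present there is an algebraic coupling you never impose. On a compact section $\nrm{\eta_+}$ is constant, so $\eta_+$ satisfies \eqref{eq:vectorcond}: $h$ must \emph{also} equal the $\eta_+$-sector expression \eqref{eq:htoflux}.

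For $\eta_+=\Gamma_+\Theta_-\eta_-$ this is nine cubic equations in $(X,Y,h)$, with no derivative data in them. They are not identities: at $X=Y=0$ they reduce to $h^2h_i=0$. The same holds if you take $h$ from \eqref{eq:htoflux} at a stationary point instead.

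So near-horizon data fill a proper subvariety of the space your rational witness is drawn from. A point off it where the $2\times2$ minors are non-zero proves nothing on it, since the minors may vanish identically there. This is exactly the caveat in the paper's proof: on a proper subvariety ``a statement about generic points of the larger family would carry no weight''.

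To close the gap, the witness must itself satisfy \eqref{eq:vectorcond} for the paired spinor. The Zariski-openness argument then covers only the component of that set through the witness. The paper does this in two stages.
\begin{enumerate}
\item It fixes $\eta_\pm$ independently and solves each sector's vector identity for $h$. Agreement of the two formulas is then nine \emph{linear} conditions on $(X,Y)$. This gives a $153$-dimensional family with exact rational points, on which $\mathrm{rank}[T^{(+)},T^{(-)}]=2$.
\item It then imposes the pairing, which cuts out a nonlinear subvariety with no convenient rational points. Witnesses there come from a Gauss--Newton solve, and the rank is read off a singular-value gap.
\end{enumerate}
For that reason your ``one exact rational flux point'' in the paired setting is not available as stated. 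The conformal relaxation likewise needs $\mathrm{rank}[T^{(+)},T^{(-)},\Pi]=3$ checked on the same constrained set, not just the rank-two statement.
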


\begin{proof}
Both sectors obey \eqref{eq:nablaZ} with their own $(Z,\varphi)$, the same $h$
and the same $X$, so
$\tn_{(i}U_{j)}=aS^{(+)}_{ij}+bS^{(-)}_{ij}+\tn_{(i}a\,Z^{(+)}_{j)}
+\tn_{(i}b\,Z^{(-)}_{j)}$, a pointwise linear system of $45$ equations. The
gradients are unconstrained, but the two terms carrying them lie in
$Z^{(+)}\!\odot\Real^9+Z^{(-)}\!\odot\Real^9$, of dimension $17$ when the
sector vectors are independent; the quotient of $\mathrm{Sym}^2\Real^9$ by that
subspace is $\mathrm{Sym}^2$ of the orthogonal seven-plane, and projecting with
$\Pi$ is exactly the quotient map. That is
\eqref{eq:twosectorobstruction}. Independence of $T^{(+)}$ and $T^{(-)}$ is a
rank statement about the $28\times2$ matrix they form, and is verified in exact
rational arithmetic (Appendix~\ref{app:verify}), in the following form. In the
verification that matrix is listed over all $45$ symmetric pairs of $\Real^9$
rather than over the $28$ of the seven-plane; the extra rows vanish identically
because $T^{(\pm)}=\Pi S^{(\pm)}\Pi$, so the rank is the same. The two sector
vector identities \eqref{eq:vectorcond}--\eqref{eq:vectorcondminus} each
determine $h$ from the flux and the spinor of that sector; requiring the two to
agree is nine linear conditions on the $162$ components of $(X,Y)$, leaving a
$153$-dimensional family, and on exact rational points of it the $45\times20$
system has rank $19$, its one-dimensional kernel being the trivial direction
$(\tn a,\tn b)=\lambda(Z^{(-)},-Z^{(+)})$ that reflects the symmetry of
$Z^{(+)}\!\odot Z^{(-)}$ rather than any solution, and $\mathrm{rank}\,
[T^{(+)},T^{(-)}]=2$. Imposing in addition the pairing
$\eta_+=\Gamma_+\Theta_-\eta_-$ of \eqref{eq:lightconesol}, which cuts a proper
subvariety on which a statement about generic points of the larger family would
carry no weight, gives the same two ranks; those points are produced by a
Gauss--Newton solve and the arithmetic there is floating point, with the
residuals of both vector identities below $10^{-11}$ and the rank read off a
singular-value gap of fourteen orders of magnitude. Both relaxations are the
same computation with one extra column, respectively $\delta_{ij}$ and the
single-sector truncation ($85$ checks).
\end{proof}

\begin{remark}[What the obstruction is, and that it is not vacuous]
\label{rem:twosector}
The count is worth isolating. In five dimensions the section is
three-dimensional, the analogous projection leaves $6-5=1$ condition, and
supersymmetry supplies it; in eleven dimensions it leaves $28$, and there is
nothing to supply them. The failure is therefore one of dimension rather than
of any special feature of the $D=11$ flux. It is also not an artefact of the
linear algebra: on the degenerate configuration $X=0$ with $Y$ annihilating
both sector vectors, the vector identities force $h=0$, both $S^{(s)}$ vanish
identically by \eqref{eq:Ssector}, and every constant $(a,b)$ does give a
Killing vector --- the same code returns the two-parameter family. What
Proposition~\ref{prop:twosector} says is that the locus
\eqref{eq:twosectorobstruction} where such a vector could exist is cut out by
$28$ equations on data that generically violate them.

The scope is narrow and is meant to be. A pointwise failure of one algebraic
ansatz is not the absence of every additional Killing vector: the section may
carry isometries that are not built from $Z^{(+)}$ and $Z^{(-)}$ at all, and
nothing above excludes them. What the proposition removes is the one candidate
that the lower-dimensional literature makes natural.

Nothing here bears on the cross-sector bilinear. The Killing field of
Proposition~\ref{prop:killing} has section components
$K_i=4\ip{\eta_-}{\Gamma_i\Theta_-\eta_-}$, which pair the two sectors rather
than either with itself, and it is Killing unconditionally. The negative result
is about the diagonal bilinears $Z^{(\pm)}$, which are the ones a
classification would want to use, precisely because they are the vectors the
$\mathrm{Spin}(7)$ structure of Section~\ref{sec:geom} is built on.
\end{remark}

\subsection{No hidden symmetry from the Killing spinor}
\label{sec:hidden}

The obstruction just exhibited is to an extra isometry. A Killing vector is not
the only source of first integrals, so the same structure has to be tested
against the weaker notions before the section can be called rigid. A
Killing--Yano
tensor, a $p$-form $k$ whose covariant derivative $\tn_ik_{j_1\dots j_p}$ is
totally antisymmetric, gives a conserved quantity for the geodesic flow that no
isometry accounts for, and it is such a tensor, rather than any extra
isometry, that makes the Kerr metric and its higher-dimensional
relatives integrable \cite{FKK}. Its closed conformal counterpart, a $p$-form
with
$\tn_ik_{j_1\dots j_p}=\delta_{i[j_1}v_{j_2\dots j_p]}$, is related to it by
the Hodge star: $k$ is Killing--Yano if and only if $\star k$ is closed
conformal Killing--Yano, in any dimension \cite{Semmelmann}. The question of the present subsection is whether an
M-horizon carries such a tensor built from its own supersymmetry.

The $\mathrm{Spin}(7)$ structure of Lemma~\ref{lem:spin7} supplies exactly six
invariant forms on $\Sect$, namely
\begin{equation}
1,\qquad Z,\qquad \varphi,\qquad \star\varphi=Z\wedge\varphi,\qquad
\star Z,\qquad \mathrm{vol}\,,
\label{eq:invforms}
\end{equation}
since $\Lambda^\bullet\Real^9$ has no other $\mathrm{Spin}(7)$ singlet, and the
first and last are parallel. This is verified rather than quoted: the
stabiliser of $(Z,\varphi)$ in $\mathfrak{so}(9)$ is computed to be
$21$-dimensional, and its invariants are counted degree by degree, giving one
in each of the degrees $0,1,4,5,8,9$ and none in the others. There is in
particular no invariant two-form, so the antisymmetric tensor that integrates
the Kerr family has no analogue here even before the derivative conditions are
imposed. Duality pairs the remaining four two by two, so
the whole question is settled by $Z$ and $\varphi$. Both are answered by the
same pointwise computation, and both answers are negative in the strongest
sense available: the tensor is a hidden symmetry only when it is parallel, and
it is parallel only on a locus of codimension $135$ in the pointwise flux data.

\begin{theorem}[Invariant hidden symmetry forces a parallel structure]
\label{thm:hidden}
Fix a point of $\Sect$ and a Killing spinor of the sector $s=\pm1$, with data
$(Z,\varphi)$ and flux $(X,Y)\in\Lambda^4\Real^9\oplus\Lambda^2\Real^9$, of
dimension $126+36=162$; neither $\tn Z$ nor $\tn\varphi$ depends on $h$, whose
term in $\Psi^{(s)}_i$ is a multiple of the identity and cancels in the
quotient rule. Then, as conditions on $(X,Y)$,
\begin{enumerate}
\item $Z$ is a Killing--Yano one-form, that is a Killing vector, iff $44$
linear conditions hold, leaving a $118$-dimensional set;
\item $Z$ is closed conformal Killing--Yano iff $\tn_iZ_j=0$, which is $72$
conditions, leaving the $90$-dimensional set of
Proposition~\ref{prop:nablaZ}. The conformal factor is forced to vanish
without any integration: $\tn_iZ_j=\lambda\,\delta_{ij}$ contracted with $Z^j$
gives $\lambda Z_i=\tfrac12\tn_i\nrm{Z}^2=0$, and $\nrm{Z}=1$. Being Killing is
strictly weaker than being closed conformal, $44<72$;
\item $\varphi$ is Killing--Yano iff $\tn_i\varphi_{jklm}=0$, which is $135$
conditions --- the full intrinsic torsion of
Proposition~\ref{prop:torsion} --- leaving the $27$-dimensional set
\begin{equation}
Y=0\,,\qquad i_ZX=0\,,\qquad W_{ij}=X_i{}^{klm}\varphi_{jklm}=0\,,
\label{eq:hiddenlocus}
\end{equation}
that is $X=w\in\mathbf{27}$: the one module of the flux that the intrinsic
torsion cannot see. On that locus $\tn_iZ_j=0$ as well, so the whole
$\mathrm{Spin}(7)$ structure is parallel;
\item by duality, $\star Z$ is closed conformal Killing--Yano iff $Z$ is
Killing, $\star Z$ is Killing--Yano iff $Z$ is parallel, and
$\star\varphi=Z\wedge\varphi$ is closed conformal Killing--Yano iff $\varphi$
is parallel.
\end{enumerate}
In particular, if the section carries an invariant Killing--Yano four-form at
every point, then $\Sect$ is locally the Riemannian product $\Real\times N^8$
with $\mathrm{Hol}(N^8)\subseteq\mathrm{Spin}(7)$, and the flux is reduced to
its $\mathbf{27}$.
\end{theorem}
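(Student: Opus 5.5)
The plan is to reduce every item to a pointwise linear-algebra statement about the maps $(X,Y)\mapsto\tn Z$ and $(X,Y)\mapsto\tn\varphi$ at fixed spinor, and then to read off kernels by $\mathrm{Spin}(7)$ equivariance. First, the preliminary claim that $h$ drops out: in the quotient-rule computation of Proposition~\ref{prop:nablaZ}, the rank-zero term $\mp\tfrac14h_i$ of $\Psi^{(s)}_i$ multiplies $\eta$ and contributes identically to $\tn_i\ip{\eta}{\Gamma\cdots\eta}$ and to $\tn_i\nrm{\eta}^2$, so it cancels in $Z$ and in $\varphi$. The explicit $h$ in \eqref{eq:nablaZ} arises only after the vector identity has been used to rewrite a flux term, and I will undo that substitution, so that $\tn Z$ and $\tn\varphi$ become linear in $(X,Y)$ alone.

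For (i), the Killing--Yano condition on a one-form is $\tn_{(i}Z_{j)}=0$. I will read off its rank from the proof of Proposition~\ref{prop:nablaZ}, where the Killing conditions were found to have rank $44$, giving codimension $44$ and hence $162-44=118$. For (ii), I will first show that a closed conformal Killing--Yano one-form has $\tn_iZ_j=\lambda\delta_{ij}$, and contracting with $Z^j$ and using $\nrm{Z}=1$ gives $\lambda=0$, exactly as indicated in the statement. The set is then the $90$-dimensional kernel of the surjection onto the $72$-dimensional space in Proposition~\ref{prop:nablaZ}, and $44<72$ because the Killing conditions are the symmetric part of the parallel ones. Item (iv) follows from the Semmelmann duality $k$ Killing--Yano $\iff\star k$ closed conformal Killing--Yano, applied to $Z$ and $\varphi$ with $\star\star=1$ in dimension nine. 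The only extra input is that a closed conformal Killing--Yano one-form of unit length is parallel, which is (ii), so that ``$\star Z$ Killing--Yano'' is equivalent to ``$Z$ parallel''.

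The main step is (iii). I would show that a Killing--Yano four-form $\varphi$ built from a $\mathrm{Spin}(7)$ spinor must be parallel. The intended route is to decompose $\tn\varphi$, which lies in $T^*\otimes\Lambda^4$, into the image of the intrinsic torsion of Proposition~\ref{prop:torsion}. Total antisymmetry of $\tn_i\varphi_{jklm}$ means that $\tn\varphi$ lies in $\Lambda^5$, and I would check that the intersection of the torsion image (the $135$-dimensional space $(\mathbf1+\mathbf8)\otimes(\mathbf7+\mathbf8)$ acting on $\varphi$) with $\Lambda^5$ is zero. I expect this to be the hard part. I would first try to prove it equivariantly: the torsion image branches into $\mathbf8\oplus\mathbf1\oplus\mathbf7\oplus\mathbf{21}\oplus\mathbf{35}\oplus\mathbf7\oplus\mathbf8\oplus\mathbf{48}$, whereas $\Lambda^5\Real^9\cong\Lambda^4$ contains modules with repeated labels, so Schur alone does not settle it. The fallback, in keeping with the rest of the paper, is an exact rank computation in the explicit representation: the composite map from flux to the non-totally-antisymmetric part of $\tn\varphi$ has full rank $135$, which by surjectivity (Proposition~\ref{prop:torsion}(iii)) equals the rank of the full torsion map. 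Given that, the kernel has dimension $162-135=27$, and since it is an invariant submodule it must be the unique $\mathbf{27}$. I will identify it with \eqref{eq:hiddenlocus} via Proposition~\ref{prop:nablaZ}, where $W$ has kernel $\mathbf{27}$ on $\Lambda^4(Z^\perp)$, $i_ZX=0$ kills $\xi$, and $Y=0$.

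On that locus, \eqref{eq:nablaZ} gives $\tn Z=0$, since every term is built from $Y$, $W$ or $X\cdot\varphi$. The final assertion then follows: if $Z$ and $\varphi$ are parallel everywhere, de Rham splitting along the parallel unit vector gives $\Real\times N^8$ locally, and a parallel Cayley form on $N^8$ forces holonomy contained in $\mathrm{Spin}(7)$.
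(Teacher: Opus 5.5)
Your proposal is correct and is essentially the paper's proof. Both treat $(X,Y)\mapsto\tn Z$ and $(X,Y)\mapsto\tn\varphi$ as $h$-independent linear maps at a fixed spinor, obtain (i)--(iii) as rank computations ($44$, $72$, $135$), use the elementary $\lambda Z_i=0$ argument for (ii), and use Semmelmann duality for (iv); your condition that the torsion image meets $\Lambda^5$ trivially is exactly the paper's statement that the Killing--Yano and parallel ranks coincide at $135$. The one point to make explicit is that a rank computed at a single spinor covers every spinor only via $\mathrm{Spin}(9)$-transitivity on $S^{15}$, and the computation must be run in each sector $s=\pm1$ separately, which the paper does.
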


\begin{proof}
The maps $(X,Y)\mapsto\tn_iZ_j$ and $(X,Y)\mapsto\tn_i\varphi_{jklm}$ are
computed from the spatial Killing spinor equation by the quotient rule, as in
Proposition~\ref{prop:nablaZ}, in the explicit real representation of
Appendix~\ref{app:rep}; both are linear in the flux, and each of the four
conditions --- totally antisymmetric derivative, derivative of the form
$\delta_{i[j}v_{\dots]}$, symmetrised derivative zero, derivative zero --- is
the projection of the same map onto a submodule of the target, so each is a
rank computation. The ranks are $44$, $72$, $72$, $135$ and $135$ for the
Killing, closed conformal, parallel, Killing--Yano and parallel conditions
respectively, in both sectors and for several spinors, with singular-value gaps
of thirteen orders of magnitude or more; the equality of the last two, and of
the second and third, is what statements (ii) and (iii) assert. The kernel in
(iii) is read off in the form \eqref{eq:hiddenlocus} and its dimension, $27$,
agrees with $162-135$ and with the surjectivity of the flux-to-torsion map of
Proposition~\ref{prop:torsion}. The elementary argument given in (ii) is
independent of the rank computation and confirms it. Statement (iv) is the
duality $\star$ of \cite{Semmelmann}, which exchanges the two conditions in
complementary degree in every dimension; the computation reproduces it in the
dimensions where the tensors can be written out, and is not what establishes
it. The
computation cross-checks itself against two identities it was not built to
satisfy, $Z^j\tn_iZ_j=0$ and
$Z^j\tn_i\varphi_{jklm}=-(\tn_iZ^j)\varphi_{jklm}$, and carries four negative
controls: that $\varphi$ is not Killing--Yano at generic flux, that $Y=0$ alone
does not make it parallel, that a random subspace of the same dimension is not
the parallel locus, and that a non-unit vector can be closed conformal without
being parallel; the census of invariant forms is part of the same script
($115$ checks).
\end{proof}

\begin{corollary}[The cost of a hidden symmetry]
\label{cor:hidden}
Away from a locus of codimension $135$ in the $162$ pointwise flux directions
--- counted in the algebraic data $(X,Y)$ at a point, before the horizon field
equations and Bianchi identities are imposed on them ---
an M-horizon admits no Killing--Yano or closed conformal Killing--Yano tensor
among the forms its Killing spinor defines. So the forms the Killing spinor defines do not
generically furnish hidden symmetries, and a classification cannot be based on
them. This is a statement about those six forms only: a Killing--Yano tensor of
some other origin, one emerging only after the full differential system is
imposed, a generalised Killing--Yano or conformal Killing tensor, or a
structure present on a special on-shell subfamily, is untouched by the count
(Remark~\ref{rem:hidden}).
\end{corollary}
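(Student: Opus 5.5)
The plan is to read the corollary as a pointwise union statement and obtain it from Theorem~\ref{thm:hidden}, item by item. At a fixed point and fixed Killing spinor of sector $s$, the six invariant forms \eqref{eq:invforms} are the only candidates. The census in the preamble of Section~\ref{sec:hidden} shows that there is no further $\mathrm{Spin}(7)$ singlet in $\Lambda^\bullet\Real^9$. The forms $1$ and $\mathrm{vol}$ are parallel, hence trivially Killing--Yano and closed conformal, and they carry no first integral beyond the metric. They are therefore set aside as the trivial members, exactly as in the Kerr discussion, where a hidden symmetry means one not accounted for by the metric. The remaining four forms are handled in dual pairs via \cite{Semmelmann}, using item (iv) of Theorem~\ref{thm:hidden}: the conditions on $\star Z$ and $\star\varphi$ are literally the conditions on $Z$ and $\varphi$ with the Killing--Yano and closed-conformal roles exchanged. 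This reduces everything to the four linear loci of items (i)--(iii): $Z$ Killing, $Z$ parallel, $\varphi$ Killing--Yano, and $\varphi$ parallel.

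Next I would assemble the codimensions. Item (iii) gives the $\varphi$ and $\star\varphi$ loci, and the rank equality $135=135$ there (Killing--Yano equals parallel) is the key input. Both loci are the single $27$-dimensional space \eqref{eq:hiddenlocus}, so they have codimension $135$. Since that locus also forces $\tn Z=0$, it lies inside the loci for $Z$ and $\star Z$ as well. The complement of $\{X=w\in\mathbf{27},\,Y=0\}$ is then exactly the set where $\varphi$ and $\star\varphi$, the genuinely higher-rank candidates, fail to be Killing--Yano or closed conformal. Because the conditions are linear in $(X,Y)$, this is a statement about a proper linear subspace, and ``generic'' in the corollary means its complement. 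The caveat about field equations is then just Remark~\ref{rem:noclosurescope} transported verbatim: only the algebraic flux at a point varies, and $h$ drops out by the quotient-rule cancellation already noted in the theorem.

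The main obstacle is the pair $Z,\star Z$. Items (i) and (ii) give their loci codimension only $44$ (Killing) and $72$ (parallel), so a literal union over all four forms has codimension $44$, not $135$. I would not hide this. The first thing I would try is to show that the degree-one case is not a hidden symmetry at all: a Killing--Yano one-form is a Killing vector, and $\star Z$ closed conformal is the same condition by (iv). Its exclusion is already the content of Proposition~\ref{prop:twosector} in its single-sector form $U=aZ^{(\pm)}$, which fails at generic points. The honest residual case is $\star Z$ Killing--Yano, equivalently $Z$ parallel, at codimension $72$. If no argument lifts this to $135$, I would state the corollary's $135$ as the codimension for $\varphi$ and $\star\varphi$. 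I would record $44$ and $72$ for $Z$ and $\star Z$ separately, rather than absorb them, since the theorem's own ranks forbid a uniform $135$ over all four forms.
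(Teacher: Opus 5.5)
Your conclusion is correct, and the defect it exposes is in the statement rather than in your argument. The paper gives no separate proof. The corollary is read off Theorem~\ref{thm:hidden} along the lines of your first two paragraphs: $135$ is taken from item~(iii) for $\varphi$ and from item~(iv) for $\star\varphi$. The cases $Z$ and $\star Z$ are disposed of by the sentence in the preamble of Section~\ref{sec:hidden} that ``the tensor is a hidden symmetry only when it is parallel, and it is parallel only on a locus of codimension $135$''. That sentence conflates the parallel locus of $Z$ (codimension $72$) with that of $\varphi$ (codimension $135$).

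By items~(i), (ii) and (iv), $Z$ is Killing--Yano, equivalently $\star Z$ is closed conformal, on a $118$-dimensional set. Likewise $Z$ is closed conformal, equivalently $\star Z$ is Killing--Yano, on a $90$-dimensional set. Both sets strictly contain \eqref{eq:hiddenlocus}. A concrete witness is pure $\xi_{\mathbf{48}}$ flux. It lies in the kernel of $(X,Y)\mapsto\tn Z$ (Proposition~\ref{prop:nablaZ}) but not in the kernel of the flux-to-torsion map, which is exactly $w_{\mathbf{27}}$ (Remark~\ref{rem:lawless}). There $Z$ and $\star Z$ are parallel, hence both Killing--Yano and closed conformal, while $\varphi$ is not parallel.

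So a uniform codimension $135$ is contradicted by the theorem it is drawn from. What is supported is your version: $135$ for $\varphi,\star\varphi$, and $72$ and $44$ for the two $Z$-conditions, hence an exceptional set of codimension $44$ overall. That set is still a proper linear subspace, so the qualitative ``not generically'' survives.

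Two points in your write-up need tightening.

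First, Proposition~\ref{prop:twosector} cannot ``exclude'' the degree-one case. It is itself a generic-point statement and does not shrink the $118$-dimensional Killing locus. Nor can any reading of ``hidden'' single out $135$. When $\varphi$ is parallel, its Killing tensor $\varphi_{iklm}\varphi_j{}^{klm}$ is $\mathrm{Spin}(7)$-invariant and annihilates $Z$, so it is a multiple of $\gamma_{ij}-Z_iZ_j$. That is built from the metric and the (then parallel) vector $Z$, exactly as for $\star Z$. A criterion that discards $Z,\star Z$ as reducible therefore discards $\varphi,\star\varphi$ too. Stating the three codimensions separately is the right repair, not a fallback.

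Second, your reduction to ``four loci'' silently equates $\varphi$ closed conformal (by duality, $\star\varphi$ Killing--Yano) with $\varphi$ parallel. Item~(iii) concerns only $\varphi$ Killing--Yano and item~(iv) only $\star\varphi$ closed conformal, so this case is missing from the paper's argument as well. It closes pointwise:
\begin{itemize}
\item The Levi-Civita connection differs from a $\mathrm{Spin}(7)$-connection by an $\mathfrak{so}(9)$-valued one-form, so $\tn_X\varphi$ lies in $\mathfrak{so}(9)\cdot\varphi=\Lambda^4_{\mathbf 7}(Z^\perp)\oplus Z\wedge\Lambda^3_{\mathbf 8}(Z^\perp)$.
\item Suppose $\tn_X\varphi=X^\flat\wedge\theta$ for all $X$, and write $\theta=\theta_0+Z\wedge\theta_1$ with $\theta_0,\theta_1$ on $Z^\perp$. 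Taking $X=Z$ gives $\theta_0=b\lrcorner\varphi$.
\item For $X=e_a\perp Z$, the $\Lambda^4(Z^\perp)$ part $e_a\wedge(b\lrcorner\varphi)$ must lie in $\Lambda^4_{\mathbf 7}$, which is orthogonal to $\varphi$. Its pairing with $\varphi$ is a non-zero multiple of $b_a$, so $b=0$.
\item The remaining part requires $e_a\wedge\theta_1\in\Lambda^3_{\mathbf 8}$ for every $a$. Every $u\lrcorner(v\lrcorner\varphi)$ has rank $0$ or $6$. But writing $\theta_1=\sum_k\lambda_kf_{2k-1}\wedge f_{2k}$ with $\lambda_1\neq0$, contracting $f_1\wedge\theta_1$ with $f_3$ (or $f_3\wedge\theta_1$ with $f_1$ when $\theta_1$ has rank two) gives a two-form of rank $2$. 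Hence $\theta_1=0$.
\end{itemize}
So $\theta=0$, and \eqref{eq:hiddenlocus} is indeed the whole exceptional set for $\varphi$ and $\star\varphi$.
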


\begin{remark}[Contrast with $D=5$, and what remains open]
\label{rem:hidden}
In five dimensions the section is three-dimensional, a two-form dualises to a
vector, and the Killing--Yano question collapses to a question about vector
fields that the companion paper \cite{5dsecond} can answer in the affirmative
on part of the moduli space. Nothing of the kind survives in nine dimensions,
and Theorem~\ref{thm:hidden} says that the collapse is not merely unavailable
but that its conclusion fails. What the theorem does not settle is the
existence of a Killing--Yano form \emph{not} built from $(Z,\varphi)$: that is
a differential equation on a nine-manifold, not a pointwise condition on the
flux, and it lies outside the pointwise methods of this paper. The literature
bounds it without closing it. On a compact manifold a non-parallel closed
conformal Killing--Yano form is a special Killing form, hence corresponds to a
parallel form on the metric cone \cite{Semmelmann}, and the holonomy of a cone
reduces only in B\"ar's list \cite{Baer}: Sasakian, $3$-Sasakian, nearly
K\"ahler and weak $G_2$, of dimensions $2k+1$, $4k+3$, $6$ and $7$. In
dimension nine the last three are excluded by arithmetic, leaving the round
$S^9$ and the Sasakian case, on neither of which the known supersymmetric
solutions sit. We record this as a direction rather than a result, since the
cone argument needs compactness and a completeness that the section is not
known to have. What the theorem leaves free is again the $\mathbf{27}$, the
module that also escapes determination in the $\mathrm{Spin}(7)$
compactification problem \cite{Tsimpis}; see Remark~\ref{rem:lawless}.
\end{remark}

\begin{remark}[Non-closure as a structural statement]
\label{rem:lawless}
Theorem~\ref{thm:hidden} and the Bochner obstruction of
Section~\ref{sec:bochner} point the same way, and the direction is this. In
$D=5$ the moduli of a supersymmetric horizon are rigid enough
that hidden symmetry and a transport argument close the classification. In
$D=11$ the same objects exist but are inert: the Cayley form is a hidden
symmetry only where the structure is parallel, the Bochner integrand does not
have a sign, and the second isometry that the $D=5$ construction would build
from the sector unit vectors is obstructed by $28$ equations
(Remark~\ref{rem:twosector}), whatever other isometries the section may
carry. None of these is a failure of technique; each is
a codimension count in a flux space of dimension $162$, and each says that the
pointwise data of an eleven-dimensional horizon are less constrained than their
five-dimensional counterpart --- a statement about that data, not a
demonstration that the on-shell moduli space is larger. The $\mathbf{27}$ is the sharpest form of this:
it is invisible to $\Delta$, to $h$, to $\tn Z$ and to the entire intrinsic
torsion --- the kernel of the map from the flux to the torsion is exactly
$w_{\mathbf{27}}$, every other module injecting --- and enters only through the
Einstein equation, so a horizon may carry an arbitrary amount of it without any
of the geometric structures of this paper noticing. Concretely: two horizon
data sets differing only in $w_{\mathbf{27}}$ have the same $\Delta$, the same
$h$, the same $\tn Z$ and the same intrinsic torsion in every Fern\'andez
class, and their Ricci tensors differ only in the scalars, through
$\lvert w_{\mathbf{27}}\rvert^2$ in $Q$, and in the two off-diagonal channels
$(w_{\mathbf{27}},w_{\mathbf{35}})$ of $S^0_{ab}$ and
$(w_{\mathbf{27}},\xi_{\mathbf{48}})$ of $S_{(Za)}$
(Remark~\ref{rem:riccisource}). If $w_{\mathbf{35}}$ and
$\xi_{\mathbf{48}}$ both vanish, the difference is a scalar. The same module is the residue on the compactification side: for
$M$-theory on an eight-manifold with $\mathcal{N}=1$ supersymmetry, the
intrinsic torsion and every irreducible flux component \emph{except} one
$\mathbf{27}$ are determined by the warp factor and a bilinear one-form
\cite{Tsimpis}. The settings are different --- a warped compactification there,
a horizon section here --- and so are the statements, but the module that
escapes is the same one, which suggests that its invisibility is a feature of
the $\mathrm{Spin}(7)$ structure rather than of either problem.
\end{remark}

\subsection{A Bochner identity for $\nrm{\eta_-}^2$, and why it does not close}
\label{sec:bochner}

The two obstructions above are to structure the section might have carried. The
third is to a hypothesis this paper carries. Some of the statements above ---
those collected in
Corollary~\ref{cor:normconst} --- assume that $\nrm{\eta_-}^2$ is
constant rather than proving it. This section sets out the
attempt to remove it, and the reason the attempt fails; the
computation is recorded because the failure is structural rather than
technical. Bochner and maximum-principle arguments on the horizon section are the usual
tool for such a hypothesis --- they are how staticity is reached in
\cite{staticM} and how the vacuum degenerate horizons of \cite{CRT} are
excluded --- so the natural attempt is the corresponding one here. The route to
removing the constancy hypothesis is a Bochner argument:
differentiate the master identity, compute
$\tfrac12\tn^i\tn_i\lvert\tn f\rvert^2$ with the horizon Ricci tensor, eliminate
$\tn^i\tn_if$, and hope for a sum of squares whose integral forces
$\mathrm{d}f=0$. That computation can be done in closed form. It does not close,
and the reason is specific enough to record.

\begin{proposition}[Bochner identity on the horizon section]
\label{prop:bochner}
Let $\Sect$ be compact and connected, $N_->0$ and $f=\nrm{\eta_-}^2$. Put
\begin{equation}
U^i = \tfrac12\tn^i\lvert\tn f\rvert^2 - (\tn^2f)\tn^if
 + (h\!\cdot\!\tn f)\tn^if - \tfrac12h^i\lvert\tn f\rvert^2\,.
\label{eq:bochnerflux}
\end{equation}
Then, on shell,
\begin{align}
\tn_iU^i &= \lvert\mathrm{Hess}\,f\rvert^2 - \tfrac32(\tn^2f)^2
 + \tfrac12f^2(\tn^ih_i)^2
 + \Big[\tfrac14Y^2\lvert\tn f\rvert^2
 - \tfrac12Y_{ik}Y_j{}^k\tn^if\tn^jf\Big]
\nonumber\\
&\quad + \tfrac1{12}X_{iklm}X_j{}^{klm}\tn^if\tn^jf
 - \big(\Delta+\tfrac12h^2\big)\lvert\tn f\rvert^2\,.
\label{eq:bochner}
\end{align}
No derivative of the flux occurs: $X$ and $Y$ enter only through $X^2$, $Y^2$
and the two contractions with $\tn f$. Both flux brackets are non-negative
pointwise, the first because the eigenvalues of $Y_{ik}Y_j{}^k$ are doubly
degenerate and hence at most $\tfrac12Y^2$, the second manifestly.
\end{proposition}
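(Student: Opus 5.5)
The plan is a direct computation of $\tn_iU^i$ from four ingredients: the Riemannian Bochner formula, the traced field equations \eqref{eq:Rij} and \eqref{eq:divh}, and the divergence identity \eqref{eq:divmaster}. The master identity \eqref{eq:master} itself is not needed; only its divergence companion is used.

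\emph{Step 1: the terms of $U$ one at a time.} I would take the divergence of each of the four terms of \eqref{eq:bochnerflux} separately.
\begin{itemize}
\item The first term gives
\[
\tfrac12\tn^2\lvert\tn f\rvert^2=\lvert\mathrm{Hess}\,f\rvert^2+\tn f\!\cdot\!\tn(\tn^2f)+\tilde R_{ij}\tn^if\tn^jf .
\]
\item The second term gives $-\tn f\!\cdot\!\tn(\tn^2f)-(\tn^2f)^2$, so the third-derivative terms cancel.
\item The third term gives
\[
(\tn_ih_j)\tn^if\tn^jf+h^j\,\mathrm{Hess}_{ij}\tn^if+(h\!\cdot\!\tn f)\tn^2f .
\]
\item The fourth term gives $-\tfrac12(\tn^ih_i)\lvert\tn f\rvert^2-h^i\,\mathrm{Hess}_{ij}\tn^jf$, and its Hessian piece cancels the middle term of the previous line.
\end{itemize}
This is why the flux vector was chosen in this form. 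After these cancellations the divergence is
\[
\lvert\mathrm{Hess}\,f\rvert^2+\tilde R(\tn f,\tn f)+\tn_{(i}h_{j)}\tn^if\tn^jf-(\tn^2f)^2+(h\!\cdot\!\tn f)\tn^2f-\tfrac12(\tn^ih_i)\lvert\tn f\rvert^2 .
\]

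\emph{Step 2: insert the field equations.} Substituting \eqref{eq:Rij} for $\tilde R(\tn f,\tn f)$, its $-\tn_{(i}h_{j)}$ term cancels the symmetric-derivative term exactly, which is the second structural cancellation. It leaves
\[
\tfrac12(h\!\cdot\!\tn f)^2,\qquad -\tfrac12 Y_{ik}Y_j{}^k\tn^if\tn^jf,\qquad \tfrac1{12}X_{iklm}X_j{}^{klm}\tn^if\tn^jf,
\]
together with the trace terms $\big(\tfrac1{12}Y^2-\tfrac1{144}X^2\big)\lvert\tn f\rvert^2$. Substituting \eqref{eq:divh} into $-\tfrac12(\tn^ih_i)\lvert\tn f\rvert^2$ produces
\[
-\big(\Delta+\tfrac12h^2\big)\lvert\tn f\rvert^2+\big(\tfrac16Y^2+\tfrac1{144}X^2\big)\lvert\tn f\rvert^2 .
\]
The $X^2$ trace terms then cancel, and the $Y^2$ coefficient becomes $\tfrac1{12}+\tfrac16=\tfrac14$. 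This already yields both flux brackets and the $-(\Delta+\tfrac12h^2)\lvert\tn f\rvert^2$ term of \eqref{eq:bochner}.

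\emph{Step 3: the remaining scalars (the main point).} What is left is $-(\tn^2f)^2+(h\!\cdot\!\tn f)\tn^2f+\tfrac12(h\!\cdot\!\tn f)^2$. This is the step I expect to need care, since it is where supersymmetry enters. Writing \eqref{eq:divmaster} as
\[
h\!\cdot\!\tn f=-\tn^2f-f\,\tn^ih_i
\]
and substituting, the cross terms $f(\tn^ih_i)\tn^2f$ cancel. The result is $-\tfrac32(\tn^2f)^2+\tfrac12f^2(\tn^ih_i)^2$, as claimed. Throughout, no derivative of the flux appears, because only \eqref{eq:Rij} and \eqref{eq:divh} were used and neither contains one.

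\emph{Step 4: positivity of the brackets.} For the $Y$ bracket, $M=YY^T=-Y^2$ (matrix square of the antisymmetric $Y$) has eigenvalues $\mu_a^2$, each appearing twice, where the $\pm i\mu_a$ are the eigenvalues of $Y$. Hence $\mathrm{tr}\,M=Y^2=2\sum_a\mu_a^2$, and so $\lambda_{\max}(M)\le\tfrac12Y^2$. This gives $\tfrac14Y^2\lvert v\rvert^2-\tfrac12 M(v,v)\ge0$ for every vector $v$. The $X$ bracket equals $\tfrac1{12}\lvert i_{\tn f}X\rvert^2\ge0$.
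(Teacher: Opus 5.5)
Your computation is correct and is, in hand-written form, the paper's first route: the Bochner formula together with the divergences of $(\tn^2f)\tn^if$ and the drift terms, then \eqref{eq:Rij} and \eqref{eq:divh} (so that the $X^2$ traces cancel and the $Y^2$ coefficient becomes $\tfrac14$), and finally $\tn^2f+h\!\cdot\!\tn f+(\tn^ih_i)f=0$ to reduce the scalar terms. The paper establishes the identity by symbolic computation, cross-checked against an independent reversed-drift derivation. You are also right that only the divergence identity \eqref{eq:divmaster} is used, not \eqref{eq:master} itself.
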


\begin{proof}
The identity \eqref{eq:bochner} is assembled by symbolic computation rather
than by hand, and the proof is that computation; it is carried out twice, by
two routes sharing no intermediate step --- the standard Bochner formula
combined with the divergence identities for $(\tn^2f)\tn^if$ and for the drift
term, and a reversed-drift Bochner formula in which no derivative of $h$
appears anywhere --- and the two agree term by term. The trace terms in $X^2$ cancel
identically between the Ricci equation \eqref{eq:Rij} and \eqref{eq:divh}, and
the surviving coefficient of $\lvert\tn f\rvert^2$ is
$\tfrac14Y^2-\Delta-\tfrac12h^2$. The reduction
$-(\tn^2f)^2+(\tn^2f)(h\!\cdot\!\tn f)+\tfrac12(h\!\cdot\!\tn f)^2
=-\tfrac32(\tn^2f)^2+\tfrac12f^2(\tn^ih_i)^2$ uses the master identity in the
form $\tn^2f+h\!\cdot\!\tn f+(\tn^ih_i)f=0$
($63$ checks).
\end{proof}

\begin{corollary}[Two weighted integral bounds]
\label{cor:bochnerint}
On the same hypotheses,
\begin{equation}
\begin{gathered}
\int_\Sect\Big(h^2+\tfrac13Y^2+\tfrac1{72}X^2\Big)f^2 = 2c\int_\Sect f\,,
\qquad\text{hence}\\[2pt]
\int_\Sect h^2f^2 \le 2c\int_\Sect f\,,\qquad
\int_\Sect\Big(\tfrac13Y^2+\tfrac1{72}X^2\Big)f^2\le 2c\int_\Sect f\,,
\end{gathered}
\label{eq:bochnerint}
\end{equation}
the two summands being separately non-negative. At $c=0$ this returns the
flux-free conclusion of Theorem~\ref{thm:rigid} by a second route.
\end{corollary}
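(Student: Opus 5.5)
The plan is to derive the identity in \eqref{eq:bochnerint} by pure integration by parts against the weight $f$. It uses only three inputs: the master identity \eqref{eq:master}, the divergence identity \eqref{eq:divmaster}, and the twist-divergence field equation \eqref{eq:divh}. The Bochner formula \eqref{eq:bochner} itself should not be needed for this corollary; what matters is that all three identities are available, unreduced, on a compact section without boundary.

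\emph{Step one.} Multiply \eqref{eq:divmaster} by $f$ and integrate over $\Sect$. Integrating by parts gives $\int_\Sect\lvert\tn f\rvert^2=-\int_\Sect f\,h\!\cdot\!\tn f$.

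\emph{Step two.} Multiply \eqref{eq:master} by $f$ and integrate, which gives $\int_\Sect(\Delta+h^2)f^2+\int_\Sect f\,h\!\cdot\!\tn f=c\int_\Sect f$. Substituting step one yields $\int_\Sect\lvert\tn f\rvert^2=\int_\Sect(\Delta+h^2)f^2-c\int_\Sect f$.

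\emph{Step three.} Multiply \eqref{eq:divh} by $f^2$ and integrate. The left side becomes $\int_\Sect f^2\,\tn^ih_i=-2\int_\Sect f\,h\!\cdot\!\tn f=2\int_\Sect\lvert\tn f\rvert^2$, using step one again. Hence $\int_\Sect(2\Delta+h^2-Q)f^2=2\int_\Sect\lvert\tn f\rvert^2$ with $Q=\tfrac13Y^2+\tfrac1{72}X^2$.

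Inserting step two into step three, the $2\Delta f^2$ terms cancel and we are left with $\int_\Sect(h^2+Q)f^2=2c\int_\Sect f$, which is the identity. The two inequalities follow because $h^2f^2\ge0$ and $Qf^2\ge0$ pointwise. For $c=0$, the identity forces $\int_\Sect(h^2+Q)f^2=0$; since $f>0$ by Theorem~\ref{thm:master}, this gives $h\equiv0$ and $X=Y=0$, and then $\Delta\equiv0$ by \eqref{eq:pointwisebosonic}. This recovers the conclusion of Theorem~\ref{thm:rigid} without the kernel theorem.

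The only delicate point is bookkeeping rather than analysis. One has to choose the weights $f$ and $f^2$ so that the two boundary-free integrations produce the same cross term $\int_\Sect f\,h\!\cdot\!\tn f$. Once that choice is fixed, the elimination of $\int_\Sect\lvert\tn f\rvert^2$ and of $\Delta$ is linear algebra on three scalar integrals. I would check the signs of the two integrations by parts against the static-branch model of Remark~\ref{rem:whynotconst}: there $h=-\mathrm{d}\log f$ and $\Delta f=c$, and the identity should reduce to the integral of $\tn^2f=Qf-2c$ multiplied by $f$.
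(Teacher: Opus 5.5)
Your derivation is correct. The three integrations by parts have the right signs, and the shared cross term $\int_\Sect f\,h\!\cdot\!\tn f=-\int_\Sect\lvert\tn f\rvert^2$ is exactly what allows $\int_\Sect\lvert\tn f\rvert^2$ and $\int_\Sect\Delta f^2$ to be eliminated. Your static-branch sanity check also comes out as you predict: there $h^2f^2=\lvert\tn f\rvert^2$, and the identity is $f$ times $\tn^2f=Qf-2c$, integrated. The $c=0$ argument is sound because $f>0$. To match Theorem~\ref{thm:rigid} in full, add $\tilde R_{ij}=0$ from \eqref{eq:Rij} once $h=X=Y=\Delta=0$.

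As for the comparison: the paper gives no separate proof of this corollary. It records it directly after the Bochner identity \eqref{eq:bochner}, which suggests it arises as a by-product of that computation. You show, correctly, that none of \eqref{eq:bochner} is needed: no Ricci equation, no Hessian term, only \eqref{eq:master}, \eqref{eq:divmaster} and \eqref{eq:divh}.

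A compact way to see your route is as the $f$-moment of the elliptic form \eqref{eq:masterelliptic}. Multiplying it by $f$ and integrating gives $-\int_\Sect\lvert\tn f\rvert^2=\int_\Sect(Q-\Delta)f^2-c\int_\Sect f$. Adding your step two then yields the identity at once. The zeroth moment of the same equation is \eqref{eq:weightedbound}. This reading explains why the $\Delta$ terms cancel, and it shows the two weighted identities form one family. The paper's placement instead ties the bounds to the Bochner route, which this corollary does not actually use.
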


\begin{remark}[Status of the Bochner route]
\label{rem:bochnernegative}
Integrating \eqref{eq:bochner} over a compact $\Sect$ kills the left-hand side,
so a rigidity theorem would follow if the right-hand side were pointwise
non-negative. It is not. The obstruction is exactly
\begin{equation}
O = \tfrac32(\tn^2f)^2 + \big(\Delta+\tfrac12h^2\big)\lvert\tn f\rvert^2\,,
\end{equation}
in the sense that deleting $O$ from \eqref{eq:bochner} leaves a manifest sum of
squares; and $O$ cannot be absorbed, because the right-hand side of
\eqref{eq:bochner} takes both signs on pointwise data satisfying every algebraic
constraint the horizon system imposes at a point --- $c\ge0$ and the trace
constraint --- with explicit rational witnesses of each sign. At generic flux
the $-\tfrac32(\tn^2f)^2$ term dominates and the integrand is negative; on the
slice $\tn^2f=0$ the flux squares dominate and it is positive. Nor is the
failure an artefact of the measure: weighting the integral by any power $f^p$
adds to the integrand the single term $p\,f^{-1}\tn f\!\cdot\!U$ with $U$ as in
\eqref{eq:bochnerflux}, and explicit rational witnesses of each sign exist for
every real $p$, on data obeying $\mathcal{L}_Vf=0$ as well. So no pointwise
regrouping of \eqref{eq:bochner} into a sum of squares exists, and this route to
$\mathrm{d}f=0$ is closed --- as it must be, since by
Remark~\ref{rem:whynotconst} the conclusion is false.
\end{remark}
\subsection{Discussion}
\label{sec:discussion}

The arguments above are best read by separating what they consume from what
they deliver. On the input side the reduction of the near-horizon KSEs to
\eqref{eq:indepKSE} is bookkept rather than merely performed: \eqref{eq:B1} and
\eqref{eq:B5} name the field equation and Bianchi identity spent at each step,
and Lemma~\ref{lem:ranks}, the anti-Hermiticity count, shows that the two
Lichnerowicz-type identities are identities between bilinears and not between
operators (Proposition~\ref{prop:lich}). Nothing is lost by that here, since
only the bilinear is ever used, but the distinction has to be respected by
anyone reusing the identity where the relevant bilinear is a different one.

The master identity \eqref{eq:master} costs less than one might expect. Its
ingredients are two vector identities, two warp identities, the constancy of
$\nrm{\eta_+}$, and the fact that the pairing
$\eta_+=\Gamma_+\Theta_-\eta_-$ lands in the opposite sector. It does not use
the Atiyah--Singer index theorem \cite{AtiyahSinger} on which the counting of
\cite{11index} rests, it does not use the classification of $\mathrm{Spin}(7)$
structures, and it asks nothing of $\Sect$ beyond compactness and
connectedness. That cheapness is also the reason it is not new: the same
relation follows from (6.9) of \cite{11index}, recorded there in the service of
the $\mathfrak{sl}(2,\Real)$ algebra, so the derivation given above is an
independent route to a known fact (Remark~\ref{rem:provenance}). The
corresponding caution attaches to the geometric half of the $h\equiv0$
statement, which is \cite[\S2.1,\,\S3.3]{staticM} and uses no spinor at all.

The warp--flux relation is used through two different faces, and both are
needed. The bosonic bound \eqref{eq:boundeq} is unconditional and sharp but
only in the mean. Its supersymmetric partner \eqref{eq:pointwise} is exact and
pointwise, but carries no sign on its own: in isolation it pairs $\Theta_+$
with its flux reversal rather than with itself, and either sign occurs. It is
the vector identity \eqref{eq:vectorid} together with the constancy of
$\nrm{\eta_+}$ that kills the cross term and converts it into the norm form
\eqref{eq:pointwisenorm}, from which $\Delta\ge0$. Thereafter the norm form
supplies the sign and the flux expansion \eqref{eq:pointwiseexplicit} supplies
the contact with the invariants bounded by Theorem~\ref{thm:bound}; the vacuum
rigidity statement is the degenerate case where the two operators coincide
(Corollary~\ref{cor:vacspec}).

The consequences sort by what they charge for. A first group needs the
constant $c=\nrm{\eta_+}^2$ and nothing else: the weighted identity
\eqref{eq:weightedbound} with its inequality form
(Corollary~\ref{cor:fluxlower}), the single-point rigidity of
Theorem~\ref{thm:rigid}, and the spinorial companion of the $h\equiv0$ case
(Corollary~\ref{cor:static}). Only the second of these says something new about
the geometry, and even there the tail of the argument after $\Theta_-\eta_-=0$
is \cite[\S5]{11index}; what the theorem contributes is the bridge to that
condition from the vanishing of $\Delta$ and $h$ at one point. A second group
exists only because the gradient term was kept. Integrating against the horizon
Killing vector and then eliminating the warp factor by the weighted identity
produces the Komar charge \eqref{eq:komarcharge}, the rotation bound
\eqref{eq:rotationbound}, the volume and entropy \eqref{eq:entropy}, the
balance family of Theorem~\ref{thm:balance} and the Smarr-type relation with
its sign (Corollary~\ref{cor:smarr}) --- each a weighted flux integral against
$c$ rather than an expression in the flux alone, the warp factor surviving as
the weight --- and the
conformal factor of Theorem~\ref{thm:psc}, which lives precisely on the branch
where the constancy of $\nrm{\eta_-}$ fails. The member of this group we would
single out is not the Smarr relation, which is elimination among the others,
but the reading it forces on the Komar charge: $8\pi G^{(11)}\lvert J\rvert
=\int_\Sect\lvert V\rvert^2/f$, the horizon rotation as the integrated
failure of staticity. Drop the gradient and each of
these becomes vacuous rather than weaker. Beside them, and independent of all
of it, sits the bosonic curvature identity \eqref{eq:Rtotal}. A third group is
algebraic and needs no integration at all: the $\mathrm{Spin}(7)$ decomposition
collapses $\Delta$ onto two $\mathbf 7$-modules and exhibits it as a perfect
square (Theorem~\ref{thm:spin7square}), which is what makes the equality locus
of Corollary~\ref{cor:fluxlower} a linear condition on the flux instead of a
positivity accident. The third group meets the second in
Corollary~\ref{cor:budget}: substituting the square into the pointwise balance
removes $\Delta$ from the budget and leaves
$\lvert V\rvert^2/f+(f/108)\lVert w_{\mathbf 7}\pm\sigma(\mathcal{Y}_{\mathbf 7})
\rVert^2=c$, in which the deviation from staticity and the $\mathrm{Spin}(7)$
mismatch of the flux divide one constant between them. That equation adds no
hypothesis to either factor, and is the most compact statement we can make of
what the identity chain controls. The fourth group --- the reduction to $\Delta+h^2=\const$,
the pointwise range \eqref{eq:deltaplush} and the window \eqref{eq:crange} for
$c'$ --- charges a hypothesis that is false in general
(Remark~\ref{rem:whynotconst}), and is flagged conditional wherever it appears.
It is the audit of that hypothesis, rather than any single consequence of it,
that we would put forward as the paper's main contribution.

The three negative statements --- Theorem~\ref{thm:hidden},
Proposition~\ref{prop:twosector} and Remark~\ref{rem:bochnernegative} --- are
of a different type, and Remark~\ref{rem:lawless} reads them together. Two
things about them belong here. They are codimension counts at a generic point
of the flux space rather than reports of an argument that was tried and failed,
and where the count is what carries the statement it is done in exact rational
arithmetic, with the floating-point steps confined to producing sample points
and separated in Appendix~\ref{app:verify}. And the third has to come out the
way it does: a definite sign for the Bochner integrand would prove that
$\nrm{\eta_-}$ is constant, which is false.

Several continuations are left open, and the question underneath the
conditional group is not one of them: $\nrm{\eta_-}$ is \emph{not} forced to be
constant, since static $M$-horizons with non-constant warp factor exist
\cite{KimPark,Mhorizons}. What survives is narrower. Those solutions are
electric, $X=0$, whereas the magnetic static class of \cite{staticM} carries a
$\mathrm{Spin}(7)$ structure and appears to force the product
$\Real^{1,1}\times\Sect$; whether a \emph{magnetic} static horizon with
non-constant warp exists would decide whether the failure of constancy is
confined to the electric branch. Next, the interior $\tfrac12F<c'<F$ of
\eqref{eq:crange}, where $\Delta$ and $h^2$ are both non-constant, is neither
realised nor excluded so far as we know, and either outcome would be
informative, since the maximally supersymmetric $M$-horizons ($AdS_2\times
S^7$, $AdS_3\times S^7/\mathbb{Z}_k$, $AdS_4\times S^7$ and the brane throats)
all sit at an endpoint. The Kim--Park horizons neither confirm nor contradict
that pattern: \eqref{eq:crange} presupposes the constancy of $\nrm{\eta_-}$ and
does not apply to them. The obvious family to test is empty. On covariantly
constant product data --- $\Sect$ a product of space forms, $X$ and $Y$
proportional to volume forms of factors, $h$ parallel --- every configuration
solving \eqref{eq:Rij}--\eqref{eq:divY} lands at an endpoint: $c'=\tfrac12F$
when $h\equiv0$, and $c'=F$ when $h\ne0$, the flatness of the circle carrying
$h$ forcing $h^2=F$. The obstruction there is not curvature but the mixed
Bianchi identity. Precisely the configurations that the Einstein equation alone
would place strictly inside are those carrying a two-form whose support misses
the direction of $h$; for them $\beta=-h\wedge Y\ne0$, and the three terms of
\eqref{eq:divX} occupy pairwise distinct index supports, so nothing can cancel
it. Moving $h$ into the support of $Y$ removes $\beta$ and returns the
configuration to the upper endpoint. A third continuation is the residue of
Remark~\ref{rem:caselist}: the $\mathrm{Spin}(7)$ four-form contractions that
obstruct a $D=11$ case list are themselves constrained by the $\mathrm{Spin}(9)$
orbit structure of $\eta_\pm$ and by the $\varphi$-dependent decomposition of
$X$ into $\mathrm{Spin}(7)$ irreducibles, neither of which we have exploited.
The fourth is what the external calibration of Remark~\ref{rem:calibration}
leaves behind. The M2- and M5-brane throats do not themselves supply
near-horizon data in the sense used here --- the Poincar\'e horizon of $AdS_d$
has a degenerate transverse metric and a non-compact section --- so the
invariants $(X^2,Y^2)$ and the constant $c$ are matched instead on two exact
in-class solutions, and that calibration does not decide whether a compact
quotient carrying Freund--Rubin flux realises the intermediate window of
\eqref{eq:crange}: the covariantly constant such quotients are excluded above,
but data with non-constant $X$, $Y$ or $h$ are untouched.

\subsubsection*{Logical status of the results}
\label{sec:status}

Finally, a list of what is proved, under what hypotheses, and what is not
claimed. This list is the single place where attribution is settled; the remarks
in the body point here rather than re-litigate it. Throughout, $\Sect$ compact
means compact, connected and without boundary, and \emph{new so far as we know}
means that we have not found the statement in
\cite{11index,staticM,Mhorizons,KimPark,GGPreview,KL2013review,N4d11}, the
sources audited for this purpose. That phrase records the result of a search
and is not a priority claim; where a statement is standard, or is a
rearrangement of statements already listed, the entry says so instead. The list
is long because the accounting is done statement by statement, not because the
paper claims a corresponding number of independent discoveries: the
contributions are the two the introduction names as the mathematical core ---
single-point rigidity, and the perfect square together with the non-closure
obstruction it explains --- and their corollaries, which is what the global charges, the
balance identities, the entropy representation and the curvature statements
are.

\medskip
\begin{itemize}\setlength{\itemsep}{3pt}

\item Reduction to \eqref{eq:indepKSE}, Propositions~\ref{prop:B1}, \ref{prop:B5}, \ref{prop:B1minus}. \emph{Hypotheses:} field equations and Bianchi identity. Proved; the residual each condition is proportional to is recorded off-shell.

\item Rank lemma (anti-Hermitian ranks), Lemma~\ref{lem:ranks}. \emph{Hypotheses:} none. Proved; algebraic, verified on all $512$ basis elements.

\item Lichnerowicz residuals, Proposition~\ref{prop:lich}. \emph{Hypotheses:} on-shell. Proved as a statement about the Dirac current only. Not an operator identity, and we do not claim it as one.

\item Mean-value bound, Theorem~\ref{thm:bound}. \emph{Hypotheses:} $\Sect$ compact. Proved; no supersymmetry used. \emph{Not claimed as new}: it is \eqref{eq:divh} integrated, the same step is used in \cite[\S5]{11index}, and the saturation case is \cite{staticM}.

\item $h\equiv0\Rightarrow\mathrm{d}Y=0$ and $\Delta$ constant, bosonically, Proposition~\ref{prop:staticbosonic}. \emph{Hypotheses:} $\Sect$ compact. Proved from \eqref{eq:auxEpp} and \eqref{eq:auxEpi}; no supersymmetry. \emph{Not new}: this is \cite[\S2.1,\,\S3.3]{staticM}, with the same proof and no restriction on $Y$ (Remark~\ref{rem:staticbosonic-attrib}); the magnetic case $Y=0$ recurs in \cite[\S5.2.1]{Mhorizons}. The only addition is that no Killing spinor is used.

\item $\Delta\ge0$ pointwise, Theorem~\ref{thm:pointwise}. \emph{Hypotheses:} $\Sect$ compact, $N_+>0$ --- supplied by $N_->0$ and the pairing whenever $\eta_+:=\Gamma_+\Theta_-\eta_-\ne0$, the case $\eta_+=0$ being flux-free, which is how Theorem~\ref{thm:master} obtains the same conclusion. Proved (= (3.23) of \cite{Mhorizons}).

\item \eqref{eq:pointwise} is unsigned in isolation. \emph{Hypotheses:} algebraic data. Proved by explicit witnesses of either sign; makes no claim about which survive on-shell.

\item $\eta_-$ identities, Proposition~\ref{prop:vectorminus}, Theorem~\ref{thm:pointwiseminus}. \emph{Hypotheses:} $\Sect$ compact, $N_->0$. Proved.

\item Master identity $(\Delta+h^2)f+h^i\tn_if=c$ and $\tn^i(\tn_if+h_if)=0$, with $f=\nrm{\eta_-}^2$, Theorem~\ref{thm:master}. \emph{Hypotheses:} $\Sect$ compact and connected, $N_->0$. Proved, and \emph{not new}: a consequence of (6.9) of \cite{11index} (Remark~\ref{rem:provenance}); re-derived independently, without the one-form $V$. The hypothesis $N_->0$ is automatic given $N=2N_-$ of \cite{11index}, which we quote rather than prove.

\item Weighted bound $\int\Delta f=\int(\tfrac13Y^2+\tfrac1{72}X^2)f-c\,\mathrm{Vol}$, \eqref{eq:weightedbound}. \emph{Hypotheses:} as above. Proved; \emph{new}, and unconditional --- it is the form of the bound that survives without any constancy assumption on $\nrm{\eta_-}$. Its $c=0$ specialisation, and that of the elliptic form \eqref{eq:masterelliptic}, are \cite[(5.10),(5.11)]{11index}.

\item Weighted lower bound $\int Qf\ge c\,\mathrm{Vol}$ with equality exactly on $\{\Delta\equiv0\}=\{w_{\mathbf 7}=\mp\sigma(\mathcal{Y}_{\mathbf 7})\}$, Corollary~\ref{cor:fluxlower}. \emph{Hypotheses:} as above. Proved; \emph{new} so far as we know, and it is the inequality form the weighted identity was asked for. The equality case is sharp only because $\Delta\equiv0$ is characterised algebraically by Theorem~\ref{thm:spin7square}.
\item Non-staticity balance $\int_\Sect f^p(\lvert\mathcal{A}\rvert^2+\Delta)=c\int_\Sect f^{p-1}$ for every $p\in\Real$, Theorem~\ref{thm:balance}, and $\int_\Sect\lvert V\rvert^2/f\le c\,\mathrm{Vol}(\Sect)$ with equality exactly on $\Delta\equiv0$, Corollary~\ref{cor:balance}. \emph{Hypotheses:} as above. Proved. Both ingredients are prior --- $\tn^i(f\mathcal{A}_i)=0$ is the Killing property of $V$, \cite[(6.8)]{11index}, and the member $p=1$, $c=0$ is \cite[(5.11)]{11index}. Adding the invariance $\mathcal{L}_Vf=0$, which is \cite[(6.10)]{11index}, sharpens the family to the pointwise identity $f(\lvert\mathcal{A}\rvert^2+\Delta)=c$ (Theorem~\ref{thm:pwbalance}), of which every weight is a moment; the constancy of that combination for a general extremal horizon with a Killing field preserving the data is a theorem of \cite{colling}, and what is supersymmetric here is the value $c=\nrm{\eta_+}^2$ and the positivity $\Delta\ge0$, which turn it into the sharp pointwise bound $\lvert V\rvert^2/f\le c$ of Corollary~\ref{cor:balance}. The integral family remains the form valid without $\mathcal{L}_Vf=0$ (Remark~\ref{rem:pwbalance}).

\item $\Delta+h^2=c'$ constant, and $0\le\Delta,h^2\le c'$, \eqref{eq:deltaplush}. \emph{Hypotheses:} as above, \emph{plus} $\nrm{\eta_-}$ constant (equivalently $K_i+h_iK_+=0$, equivalently $V=-fh$). \emph{Conditional}, and \emph{false in general}: it fails on the Kim--Park branch \cite{KimPark,Mhorizons}, which lies inside $V\equiv0$, \eqref{eq:staticbranch}, and meets the constancy hypothesis only where $h\equiv0$. The hypothesis is not made in \cite{11index}, whose (6.9)$_4$ carries the gradient term explicitly, and Remark~\ref{rem:whynotconst} shows the identities of this paper cannot supply it.

\item $h\equiv0\Rightarrow\nrm{\eta_-}$ constant, Corollary~\ref{cor:static}. \emph{Hypotheses:} $\Sect$ compact and connected, $N_->0$, $h\equiv0$. Proved; a two-line consequence of \eqref{eq:master} once $\Delta$ is known to be constant, and the accompanying $\Delta=\const$ is Proposition~\ref{prop:staticbosonic}, which is \cite{staticM} and needs no spinor. Note $h\equiv0$ is strictly stronger than staticity in the sense of \cite{staticM,11index}, and gives a direct rather than a warped product (Remark~\ref{rem:static-terminology}).

\item The static branch $V\equiv0$ is the warped $AdS_2$ class, Proposition~\ref{prop:staticwarped}. \emph{Hypotheses:} $\Sect$ compact and connected, $N_->0$, $V\equiv0$. Proved. \emph{Not new} as geometry: the warped-product form and the warp equation are \cite{staticM} and, as characterised in \cite[fn.~4]{Mhorizons}, \cite{KimPark}, reached there from $\mathrm{d}h=0$. What is added is that the branch condition $V\equiv0$ alone delivers them, with the coordinate change explicit, and that $\Delta+h^2$ is non-constant on the branch --- which is what removes \eqref{eq:deltaplush} and \eqref{eq:crange} from it (Remark~\ref{rem:kimparkbranch}).

\item Single-point rigidity, Theorem~\ref{thm:rigid}. \emph{Hypotheses:} $\Sect$ compact and connected, $N_->0$, and $\Delta=h=0$ at one point. Proved, and the only geometric statement here that is both new and genuinely supersymmetric --- though a short one: its second half, the chain from $\Theta_-\eta_-=0$ to $\Delta=h=X=Y=0$, is \cite[\S5]{11index}, and what is new is the bridge from the single point to $c=0$.

\item $\tfrac12F\le c'\le F$, \eqref{eq:crange}. \emph{Hypotheses:} as for \eqref{eq:deltaplush}, including the constancy of $\nrm{\eta_-}$. \emph{Conditional}; both endpoints characterised.

\item $\nabla_{(M}K_{N)}=0$, Proposition~\ref{prop:killing}. \emph{Hypotheses:} Killing spinor equation alone. Proved, and standard; quoted rather than rederived, and confirmed off-shell for all $66$ index pairs on pseudo-random data.

\item $K_i=4\ip{\eta_-}{\Gamma_i\Theta_-\eta_-}$, Proposition~\ref{prop:conconx}. \emph{Hypotheses:} none (an identity in the data). Proved; the form $K_i+h_iK_+=2\tn_i\nrm{\eta_-}^2$ is (6.9)$_4$ of \cite{11index}, and $K_i+h_iK_+=0$ holds precisely when $\nrm{\eta_-}$ is constant.

\item Saturation theorem, Theorem~\ref{thm:saturation}. \emph{Hypotheses:} $\Sect$ compact, $N_->0$. Proved. Everything in it about the geometry is bosonic (Remark~\ref{rem:twotier}) and is in \cite{staticM}; what is added is that the hypothesis $h\equiv0$ is equivalent to saturation of \eqref{eq:boundeq}, together with the two spinorial statements.

\item Twist one-form determined algebraically by the flux, Theorem~\ref{thm:htoflux}. \emph{Hypotheses:} for the general form \eqref{eq:htofluxgen}, which carries the gradient of the spinor norm, the spatial Killing spinor equation alone, pointwise, $\eta_-$ nowhere vanishing, and no compactness; for the algebraic form \eqref{eq:htoflux} and its contraction \eqref{eq:hZ}, in addition that the norm is stationary, which on a compact $\Sect$ is automatic in the $\eta_+$ sector and in the $\eta_-$ sector is the constancy hypothesis of Corollary~\ref{cor:normconst}. Proved; \emph{new so far as we know} in this covariant algebraic form. The closest prior statements are \cite[(4.10)]{Mhorizons}, which carries derivative terms, and \cite[(3.22)]{Mhorizons}, of which \eqref{eq:htoflux} is the evaluation on the $\mathrm{Spin}(7)$ data; the $D=5$ analogue is \cite[App.~D]{KayaniThesis}.

\item Integrability of the twist, Corollary~\ref{cor:minuspointwise}. \emph{Hypotheses:} as for \eqref{eq:htoflux}. Proved, the rank statement by numerical linear algebra on the $1458$ components of $(\tn X,\tn Y)$ against the $219$ conditions the flux equations carry, with the singular-value gap and an explicit multiplier recorded; \emph{new so far as we know}. The content is negative in the $\eta_+$ sector, where \eqref{eq:divhsector} reproduces \eqref{eq:divh} and nothing is added, and positive in the $\eta_-$ sector, where the opposite sign pins $2\Delta+h^2$ to $\tfrac13Y^2+\tfrac1{72}X^2$ pointwise.

\item Non-closure of the $D=11$ chain, Proposition~\ref{prop:noclosure}. \emph{Hypotheses:} as above. Proved as a negative result by a gradient-rank computation; \emph{new so far as we know}. It says that $\Delta$ does not factor through the $\varphi$-free scalar invariants, in particular is not determined by $(h^2,X^2,Y^2)$ on the allowed algebraic data, and it is the reason the scalar closure mechanism of the $D=5$ case list has no
$D=11$ counterpart on the pointwise algebraic data. It is not a statement that
no $D=11$ classification exists; see Remark~\ref{rem:noclosurescope}.

\item $\mathrm{Spin}(7)$-irreducible form of $\Delta$, Proposition~\ref{prop:spin7delta}. \emph{Hypotheses:} pointwise, $N_\pm>0$ in the relevant sector. Proved, by exact rational linear algebra over the complete list of quadratic $\mathrm{Spin}(7)$ invariants, the $18\times11$ evaluation matrix having rank eleven so that the coefficients are determined rather than fitted; \emph{we have not found it stated previously}. The decomposition technique is standard \cite{Tsimpis} and is not claimed. Only the $\mathbf 7$ of $w=X|_{Z^\perp}$ and the $\mathbf 7$ of $\mathcal{Y}=Y|_{Z^\perp}$ enter: the $\mathbf{27}$ and $\mathbf{35}$ of $w$ and the $\mathbf{21}$ of $\mathcal{Y}$ drop out identically.

\item Perfect square $\Delta=\tfrac1{108}\lVert w_{\mathbf 7}\pm\sigma(\mathcal{Y}_{\mathbf 7})\rVert^2$, Theorem~\ref{thm:spin7square}. \emph{Hypotheses:} as above; no locus restriction --- by Proposition~\ref{prop:spin7delta} only the two $\mathbf 7$-modules enter $\Delta$, so the identity holds at every point. Proved. That $\Delta$ is a scalar square is \emph{known}: it is $\Delta=4\Phi^2$ in the $SU(4)$-adapted gauge of \cite[(3.32)]{Mhorizons}. \emph{New} so far as we know is the $\mathrm{Spin}(7)$-irreducible form --- that only the two $\mathbf 7$-modules enter, the covariant norm formula, and the characterisation of the zero locus as the linear cancellation $w_{\mathbf 7}=\mp\sigma(\mathcal{Y}_{\mathbf 7})$ rather than as $\Phi=0$; see Remark~\ref{rem:phisquare} and \eqref{eq:phibridge}. It gives \emph{no} positivity theorem beyond Theorem~\ref{thm:pointwise}: the square is degenerate, and a witness with $\Delta=0$, $X^2>0$, $Y^2>0$ is exhibited.

\item Covariant derivative of $Z$ and the intrinsic torsion, Propositions~\ref{prop:nablaZ} and \ref{prop:torsion}. \emph{Hypotheses:} spatial Killing spinor equation alone. Proved. Two of the statements are negative: $Z$ is generically neither Killing, geodesic nor hypersurface orthogonal, and the $\mathbf{27}$ and $\mathbf{48}$ do not enter its twist. The parallel locus is $90$-dimensional, larger than a term-by-term reading of $\tn Z$ gives, because the $\mathbf 7$-parts of $w$ and $\mathcal{Y}$ can cancel each other. On the saturation branch only the singlet of the torsion is forced to vanish, so that branch is \emph{not} a torsion class.

\item No invariant hidden symmetry, Theorem~\ref{thm:hidden} and Corollary~\ref{cor:hidden}. \emph{Hypotheses:} spatial Killing spinor equation alone, pointwise. Proved; \emph{new so far as we know}. Of the six $\mathrm{Spin}(7)$-invariant forms on $\Sect$, $Z$ is closed conformal Killing--Yano only where it is parallel and Killing--Yano only where it is Killing, and the Cayley form is Killing--Yano only where it is parallel, which happens exactly where the flux reduces to its $\mathbf{27}$ --- codimension $135$ in the $162$ pointwise flux directions. The question for forms \emph{not} built from $(Z,\varphi)$ is a differential equation on a nine-manifold and is left open; what the literature gives on it is recorded in Remark~\ref{rem:hidden} and is quoted, not proved.

\item Bochner identity for $f$, Proposition~\ref{prop:bochner} and Corollary~\ref{cor:bochnerint}. \emph{Hypotheses:} $\Sect$ compact, $N_->0$. Proved as an identity. The technique is standard on horizon sections \cite{staticM,CRT}, and the identity is what it produces in $D=11$; the flux-explicit form \eqref{eq:bochner}, in which no derivative of the flux survives, is \emph{new} so far as we know. It does \emph{not} close to a rigidity theorem, and the obstructing term is exhibited taking either sign; recorded as a negative result with its reason, not as a gap. The zero set of the obstruction is mapped (Appendix~\ref{app:verify}) and contains no horizons the identity has not already reached, and neither the invariance $\mathcal{L}_Vf=0$ nor any weight $f^p$ in the measure restores the sign, so the non-closure is structural rather than a defect of the regrouping.

\item Critical points of $f$ and the topology of $\Sect$, Proposition~\ref{prop:fcrit}. \emph{Hypotheses:} as above. The Hessian identity is proved; the Morse and Poincar\'e--Hopf routes are \emph{empty} in $D=11$ because $\dim\Sect=9$ is odd, so $\chi(\Sect)=0$ for every closed nine-manifold. No topological conclusion is claimed from them.

\item The budget $\lvert V\rvert^2/f+(f/108)\lVert w_{\mathbf 7}\pm\sigma(\mathcal{Y}_{\mathbf 7})\rVert^2=c$, Corollary~\ref{cor:budget}. \emph{Hypotheses:} those of Theorem~\ref{thm:pwbalance}, that is Theorem~\ref{thm:master} together with the invariance $\mathcal{L}_Vf=0$ of \cite[(6.10)]{11index}; compactness is needed only for the integrated form \eqref{eq:budgetint}. Proved; it adds no input, being the substitution of \eqref{eq:deltasquare} into \eqref{eq:pwbalance}. \emph{New} so far as we know, but the novelty is the combination and not either factor: the constancy of $\Delta f+\lvert V\rvert^2/f$ is a general extremal-horizon statement \cite{colling}, and what is $D=11$ and supersymmetric is the identification of the constant as $c=\nrm{\eta_+}^2$ and the replacement of $\Delta$ by an explicit quadratic form in seven flux components. The two saturation loci are those of Theorem~\ref{thm:spin7square}(ii) and of $V\equiv0$, and both occur at non-zero flux.

\item Komar charge of the horizon rotation, Theorem~\ref{thm:komar}, and the rotation bound, Corollary~\ref{cor:rotationbound}. \emph{Hypotheses:} $\Sect$ compact and connected, $N_->0$. Proved; \emph{new} so far as we know. The Komar reduction to $\int_\Sect h(K)$ is the $D=11$ transcription of the $D=5$ argument of \cite{KayaniThesis}, and the Dirac-current relation \eqref{eq:conconxonshell} is (6.9)$_4$ of \cite{11index}; neither is new. What is new is that the master identity closes the integrand to $2c-2\Delta f$ and the weighted identity then removes $\Delta$, leaving a weighted flux integral against the constant $c$ --- the warp factor remains as the weight, so this is a reduction to flux and $c$, not an expression in the flux alone. Both are void if $\nrm{\eta_-}$ is assumed constant.

\item Weighted flux-and-warp representation of the horizon volume and entropy, Corollary~\ref{cor:entropy}. \emph{Hypotheses:} as above with $c>0$. Proved; \emph{new} so far as we know, and a restatement of \eqref{eq:weightedbound} rather than an independent computation.

\item Smarr-type relation, the sign of the Komar rotation and the entropy sandwich, Corollary~\ref{cor:smarr}. \emph{Hypotheses:} those of Theorem~\ref{thm:komar}, with $c>0$ for \eqref{eq:entropysandwich}. Proved; it adds no input, being elimination between Theorems~\ref{thm:komar}, \ref{thm:weighted} and Corollary~\ref{cor:balance}, and its content is that the three named charges close among themselves. The Smarr form of asymptotically flat black-hole mechanics is classical \cite{BCH,Smarr}; what is stated here is a horizon-local relation with no asymptotic term, and the sign $J\le0$ with equality on the static branch, which we have not found stated. Void if $\nrm{\eta_-}$ is assumed constant, for the same reason as Theorem~\ref{thm:komar}.

\item Total scalar curvature of $\Sect$, Proposition~\ref{prop:totalR} and Corollary~\ref{cor:Rbound}. \emph{Hypotheses:} the bosonic near-horizon system with $\Sect$ compact; no supersymmetry for \eqref{eq:Rpointwise} and \eqref{eq:Rtotal}, and $\Delta\ge0$ for the bound. Proved. The method is standard --- trace the horizon Einstein equation and integrate over the compact section, as in \cite{CRT} and \cite[\S3.1]{KL2013review} --- and the identity is a rediscovery in other variables to the extent that the same manipulation in the vacuum case is classical; the $D=11$ flux-explicit form \eqref{eq:Rtotal}, its bound \eqref{eq:Rbound} and the identification \eqref{eq:Rspin7} of the deficit with the $\mathrm{Spin}(7)$ residual of Theorem~\ref{thm:spin7square} are \emph{new} so far as we know. It gives no positive-scalar-curvature conclusion --- see Remark~\ref{rem:Rtotal}.

\item Positive scalar curvature of $f^{1/7}\gamma$ on the static branch, Theorem~\ref{thm:psc}. \emph{Hypotheses:} $\Sect$ compact and connected, $N_->0$, static branch \eqref{eq:staticbranch}, $c>0$. Proved; \emph{new so far as we know}, though positive scalar curvature itself is known in far greater generality (Remark~\ref{rem:pscattrib}). It is what the critical-point analysis yields in place of a Morse-theoretic count, and it puts $\Sect$ in the class to which \cite{GromovLawson,Lich} applies.

\item $\mathcal{L}_V$ of the horizon data $=0$, with $V$ the section component of $K$. \emph{Established in} \cite[(6.8)]{11index}, where $\tn_{(i}V_{j)}=0$ and $\mathcal{L}_Vh=\mathcal{L}_V\Delta=\mathcal{L}_VY=\mathcal{L}_VX=0$; quoted, not re-derived here.

\item Is $\nrm{\eta_-}$ necessarily constant? \emph{Resolved in the negative}: by \eqref{eq:staticbranch} the question is whether a static $M$-horizon must have constant warp factor, and it need not --- the warped $AdS_2$ solutions of \cite{KimPark} are static $M$-horizons with $h^2$ non-constant \cite[fn.~4]{Mhorizons}. Everything conditional above is therefore restricted to horizons with $\nrm{\eta_-}$ constant, equivalently $V=-\nrm{\eta_-}^2h$; on the branch $V\equiv0$ that holds only in the sub-case $h\equiv0$.

\item Is $\tfrac12F<c'<F$ realised? \emph{Open}; not excluded, and no example is known to us.

\item $\Delta$ a function of $(h^2,X^2,Y^2)$. \emph{False} in $D=11$; disproved by a gradient-rank computation, Proposition~\ref{prop:noclosure}.

\item A finite list of $D=11$ near-horizon geometries. \emph{Not claimed}; the $D=5$ chain we follow closes, ours does not. For the $\mathcal{N}=4$ case see \cite{N4d11}, and for Killing horizons away from the near-horizon limit \cite{killinghorizons11}.

\item $N=2N_-$. \emph{Quoted} from \cite{11index}; not re-derived here. The $\mathfrak{sl}(2,\Real)$ enhancement is also \cite{11index}, and is quoted in Section~\ref{sec:sl2r}; what is checked here is that it follows from the identity family \eqref{eq:sl2inputa}--\eqref{eq:sl2inputb} alone, with (M) carrying its gradient term. That the $K_a$ are spinor bilinears preserving $F$ is quoted from
\cite[\S2.2]{GauntlettPakis} and \cite[(6.8)]{11index}: the Killing property of
$K$ together with $\mathrm{d}\omega=i_KF$ and
$\mathrm{d}\Sigma=i_K\star F-\omega\wedge F$ gives
$\mathcal{L}_KF=\mathcal{L}_K\star F=0$, and nothing of that chain is
re-derived here.

\end{itemize}

\appendix

\noindent The appendices carry the material the body uses but does not display, in the
order in which the body reaches for it. Appendix~\ref{app:rep} fixes the explicit
$\mathrm{Cl}(9,0)$ and $\mathrm{Cl}(10,1)$ matrices in which every numerical
statement of the paper is evaluated. Appendix~\ref{app:gnc} derives the
near-horizon system of Section~\ref{sec:conventions-nh} from the eleven-dimensional
field equations in the Gaussian null frame. Appendix~\ref{app:integrability}
records the off-shell integrability conditions behind Section~\ref{sec:kse} with
every field-equation residual carried, and settles what that pair of conditions
sees and what it misses. Appendix~\ref{app:verify} states the verification
protocol and what is checked under it.

\section{An explicit real representation}
\label{app:rep}

Every numerical statement in this paper --- the rank computations of
Section~\ref{sec:geom}, the Clifford expansions of Section~\ref{sec:warp} and the
evaluations recorded in Appendix~\ref{app:verify} --- is made in one fixed
realisation of the Clifford algebra, chosen so that its entries are exact
integers and no adapted frame is needed anywhere.

Regard $\Real^{16}=(\Real^2)^{\otimes4}$ and consider the $256$ words
$w=m_1\otimes m_2\otimes m_3\otimes m_4$ with each
$m\in\{\mathbb{1},\sigma_1,\sigma_3,\varepsilon\}$,
$\varepsilon=\left(\begin{smallmatrix}0&1\\-1&0\end{smallmatrix}\right)$. A word
is real symmetric with $w^2=\mathbb{1}$ exactly when the number of $\varepsilon$
factors is even, which leaves $136$ candidates. An exhaustive depth-first search
through these in a fixed order returns nine pairwise anticommuting words; nine
is the maximum possible, since at most $2n+1$ Pauli words on $n$ qubits can
pairwise anticommute, and $2\cdot4+1=9$ is exactly what $\mathrm{Cl}(9,0)$
requires. This furnishes the representation used in Lemma~\ref{lem:ranks}.

Lifting to $\mathrm{Cl}(10,1)$ on $\Real^{32}$ with signature
$(-,+,\dots,+)$,
\begin{equation}
\Gamma_i = \sigma_1\otimes g_i\ (i=1,\dots,9)\,,\qquad
\Gamma_0 = \varepsilon\otimes\mathbb{1}\,,\qquad
\Gamma_{10} = \sigma_3\otimes\mathbb{1}\,,
\end{equation}
and setting $\Gamma_+=\Gamma_{10}+\Gamma_0$,
$\Gamma_-=\tfrac12(\Gamma_{10}-\Gamma_0)$, one has
$\Gamma_\pm^2=0$ and $\{\Gamma_+,\Gamma_-\}=2$, and the projectors
$\tfrac12\Gamma_\mp\Gamma_\pm$ split $\Real^{32}$ into the two chirality sectors
of real dimension $16$ each, as used in \eqref{eq:lightconesol}.

\section{The Gaussian null frame computation}
\label{app:gnc}

Section~\ref{sec:nh} states the near-horizon system and the rest of the paper
runs on it; this appendix is where it comes from. The field equations
\eqref{eq:Rij}--\eqref{eq:auxform} are those
of \cite{Mhorizons}; we re-derive rather than quote them only to have one set
of conventions and a machine-checked audit trail, reducing
\eqref{eq:einstein}--\eqref{eq:formeq} in the frame \eqref{eq:gnc} directly,
without passing through coordinate expressions. This appendix records the
scheme that the verification of Appendix~\ref{app:verify} implements.

\medskip\noindent\textbf{Structure functions, connection and curvature.}
The commutators of the dual frame \eqref{eq:dualframe} are read off from
\eqref{eq:gnc}; writing $[E_A,E_B]=C^C{}_{AB}E_C$, the components not already
present on $\Sect$ are
\begin{equation}
C^-{}_{-+}=r\Delta\,,\quad
C^-{}_{-i}=-h_i\,,\quad
C^-{}_{+i}=\tfrac12r^2(\Delta h_i-\tn_i\Delta)\,,\quad
C^-{}_{ij}=-r\,(\mathrm{d}h)_{ij}\,,
\label{eq:structfns}
\end{equation}
together with the structure functions $C^k{}_{ij}$ of the frame of $\Sect$.
Since the frame metric is constant, metricity and vanishing torsion give the
usual $\omega_{ABC}=-\tfrac12(C_{ABC}+C_{BCA}-C_{CAB})$, which on
\eqref{eq:structfns} returns \eqref{eq:spinconn}. The curvature is then
assembled from the split $\omega=\tilde\omega+\kappa$, with $\tilde\omega$ the
connection of $\Sect$, in the covariantised form in which the $\tilde\omega$
cross-terms have been absorbed into $\tn$ and $\tilde R$, so that no bare
$\tilde\omega$ survives.

Every quantity here is a polynomial in $r$ of degree at most two with
coefficients built from $\Delta$, $h$, $\mathrm{d}h$ and the curvature of
$\Sect$, so the computation is arithmetic on such polynomials: products, frame
derivatives --- $E_-$ differentiates in $r$, $E_+$ adds
$\tfrac12r^2\Delta\,\partial_r$ and $E_i$ acts by $\tn_i-r\,h_i\partial_r$ ---
and extraction of a fixed power of $r$. The flux side is handled by storing
\eqref{eq:fluxdecomp} as a rule on ordered frame slots, from which the stress
tensor, $\nabla^MF_{MNPQ}$ and $\mathrm{d}F$ are assembled. The Ricci tensor
follows by contraction: $R_{--}$ and $R_{-i}$ vanish identically, $R_{+-}$ and
$R_{ij}$ sit at order $r^0$, and $R_{+i}$, $R_{++}$ first appear at orders $r$
and $r^2$.

\medskip\noindent\textbf{The Chern--Simons source.}
The term $-\tfrac1{1152}\epsilon_{NPQ}{}^{M_1\cdots M_8}
F_{M_1M_2M_3M_4}F_{M_5M_6M_7M_8}$ of \eqref{eq:formeq} involves a rank-eleven
epsilon with eight contracted indices, and is evaluated by exhaustive
enumeration in exact rational arithmetic rather than symbolically. For fixed
free slots $NPQ$, one runs over all assignments of the eight contracted indices
to frame directions; each of the two flux factors is non-zero only for the
seventeen slot patterns permitted by \eqref{eq:fluxdecomp}, so the enumeration
is over $17\times17$ pairs. For each pair the eleven slots of the epsilon are
sorted into canonical order to fix its sign, the surviving spatial indices are
relabelled canonically, and the resulting monomials in $Y$, $\beta$, $X$ and
$\epsilon_{i_1\cdots i_9}$ are accumulated with exact rational coefficients.
The enumeration is finite and complete, so the outcome is a derivation and not a
sampling: it yields a contribution to the $+-i$ equation at order $r^0$
quadratic in $X$, one to the $ijk$ equation at order $r^0$ of the form $YX$, one
to the $+ij$ equation at order $r$ of the form $\beta X$, and nothing at all in
the $-ij$ equation. These are the epsilon terms in
\eqref{eq:divX}, \eqref{eq:divY} and \eqref{eq:auxform}.

\section{Off-shell integrability conditions}
\label{app:integrability}

The lightcone reduction of Section~\ref{sec:lightcone} leaves the independent
conditions \eqref{eq:indepKSE} together with three operators whose vanishing on
the Killing spinor is implied on shell. This appendix displays them, identifies
which field equation each of their terms is proportional to, and then settles
what the pair of contracted conditions sees and what it misses. The vector
operator produced by the mixed components of the substitution
\eqref{eq:lightconesol} into \eqref{eq:kse} is
\begin{align}
\xi_i \;=\;& \tn_i\Theta_+ - \Theta_+\Psi^{(+)}_i
 + \Big(-\tfrac34 h_i - \tfrac16 Y_{ij}\Gamma^j - \tfrac1{36}X_{ijkl}\Gamma^{jkl}
 + \tfrac1{24}Y_{kl}\Gamma_i{}^{kl}
 + \tfrac1{288}X_{jklm}\Gamma_i{}^{jklm}\Big)\Theta_+
\nonumber\\
&{}-\tfrac14\tn_ih_j\Gamma^j + \tfrac14\tn_jh_i\Gamma^j
 + \tfrac1{72}\beta_{jkl}\Gamma_i{}^{jkl} - \tfrac1{12}\beta_{ijk}\Gamma^{jk}\,.
\label{eq:xii}
\end{align}
The order-$r^2$ condition of the $+$ component is $\zeta\phi_+=0$ with
\begin{equation}
\zeta \;=\; \Big(-\tfrac14\tn_ih_j\Gamma^{ij}
 - \tfrac1{24}\beta_{ijk}\Gamma^{ijk}\Big)\Theta_+
 + \tfrac14\big(\Delta h_i - \tn_i\Delta\big)\Gamma^i\,,
\label{eq:zeta}
\end{equation}
the condition (cc2) of \cite{Mhorizons}; it is determined by the others up to
explicit $\Delta$-terms, since on the $\eta_+$ sector
\begin{equation}
\Gamma_-\zeta \;=\;
\big(\xi^- - \Gamma_-\Theta_+\Gamma_+\Theta_-\big)\Gamma_-\Theta_+
+ \tfrac14\big(\Delta h_i - \tn_i\Delta\big)\Gamma_-\Gamma^i
+ \tfrac12\Delta\,\Gamma_-\Theta_+\,,
\label{eq:zetadecomp}
\end{equation}
which exhibits it as $\xi^-$ evaluated on the $\eta_-$ spinor
$\Gamma_-\Theta_+\phi_+$. We do not claim it is thereby redundant: that would
require $\Gamma_-\Theta_+\phi_+$ to solve the $\eta_-$ equations, which is not
established here.

The two $u$-linear conditions (cc3), (cc4) of \cite{Mhorizons} are not
displayed separately: they are the images of the $u$-independent conditions
under $\eta_+\mapsto\Gamma_+\Theta_-\eta_-$ and carry no additional content.

\begin{proposition}[Scalar integrability condition]
\label{prop:B1}
Off shell,
\begin{equation}
-\tfrac14\tilde R + \Gamma^{ij}\tn_i\Psi^{(+)}_j - \Gamma^{ij}\Psi^{(+)}_j\Psi^{(+)}_i - \xi
\;=\;
\tfrac14 Fh - \tfrac14 E^i{}_i - \tfrac5{12}FY^i\Gamma_i
+ \tfrac1{72}BX^{ijklm}\Gamma_{ijklm} - \tfrac1{72}FX^{ijk}\Gamma_{ijk}\,.
\label{eq:B1}
\end{equation}
In particular the left-hand side vanishes on shell, so
$\Gamma^{ij}\nabla^{(+)}_i\nabla^{(+)}_j\eta_+=\xi\,\eta_+$ and
$\xi\eta_+=0$ is implied by \eqref{eq:indepKSE}.
\end{proposition}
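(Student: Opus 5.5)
The plan is to establish \eqref{eq:B1} as an off-shell Clifford-algebra identity. The left-hand side is expanded in the rank basis of $\mathrm{Cl}(9,0)$ with $\Psi^{(+)}_i$ taken from \eqref{eq:Psi} and $\xi$ from \eqref{eq:xi}, and the result is shown to equal the stated combination of residuals.

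Three pieces are computed in turn.
\begin{enumerate}
\item The curvature term comes from $\Gamma^{ij}\tn_i\tn_j=-\tfrac14\tilde R$, acting on spinors. This is the source of the $-\tfrac14\tilde R$ on the left. Its role is to be traded against $E^i{}_i$ through the trace of \eqref{eq:Rij}, $\tilde R=-\tn^ih_i+\tfrac12h^2+\tfrac14Y^2+\tfrac1{48}X^2+E^i{}_i$.
\item The derivative term is $\Gamma^{ij}\tn_i\Psi^{(+)}_j$, which is linear in $\tn h$, $\tn Y$ and $\tn X$. I would sort it by rank:
\begin{itemize}
\item the rank-two part in $\tn_ih_j$ is cancelled by the $-\tfrac14\tn_ih_j\Gamma^{ij}$ of $\xi$;
\item the rank-zero part produces $\tn^ih_i$, to be replaced via \eqref{eq:divh} at the cost of $Fh$;
\item the $\tn Y$ terms split into a rank-one divergence, giving $FY$, and a rank-three $\mathrm{d}Y$ piece, which recombines with $\tfrac1{72}\beta\Gamma^{(3)}$ using \eqref{eq:beta};
\item the $\tn X$ terms split into a rank-five $\mathrm{d}X$ piece, giving $BX$, and a rank-three divergence, giving $FX$ via \eqref{eq:divX}.
\end{itemize}
\item The quadratic term $\Gamma^{ij}\Psi^{(+)}_j\Psi^{(+)}_i$ is compared with the quadratic part $2(\tfrac14h\!\cdot\!\Gamma-\tfrac1{288}X\!\cdot\!\Gamma+\tfrac1{12}Y\!\cdot\!\Gamma)\Theta_+$ of $\xi$ together with $\tfrac12\Delta$. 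The claim to verify is that every $hh$, $hY$, $hX$, $YY$, $XY$ and $XX$ product cancels or is absorbed into the scalar $h^2$, $Y^2$, $X^2$ terms that the substitutions above generate. The $\Delta$ terms should cancel exactly between $\xi$ and the $2\Delta$ of \eqref{eq:divh}.
\end{enumerate}

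The main obstacle is step 3. The rank-four, rank-six and rank-eight products of two $X$'s, and the Chern--Simons epsilon terms that \eqref{eq:divX} and \eqref{eq:divY} introduce, must cancel. The epsilon terms have to be converted to Clifford elements using the volume element $\Gamma^{1\cdots9}$, whose sign is fixed on the $\eta_+$ sector. I would carry this out by symbolic Clifford reduction (Cadabra-style), keeping $E$, $Fh$, $FX$, $FY$ and $BX$ as free symbols so the identity is exact off shell. I would then cross-check each rank in the explicit representation of Appendix~\ref{app:rep}, on random integer data with independent $\tn h$, $\tn Y$ and $\tn X$.

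Once \eqref{eq:B1} holds, the consequence is immediate. On shell every residual vanishes, so $\Gamma^{ij}\nabla^{(+)}_i\nabla^{(+)}_j=\xi$ as operators on $\eta_+$. Applying this to a solution of \eqref{eq:indepKSE} then gives $\xi\eta_+=0$.
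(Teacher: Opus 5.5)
Your plan is the paper's own. There, \eqref{eq:B1} is established by direct expansion in $\mathrm{Cl}(9,0)$ with $E_{ij}$, $Fh$, $FX_{ijk}$, $FY_i$, $BX_{ijklm}$ carried as free symbols in \textsc{Cadabra}2, repeated in the explicit representation of Appendix~\ref{app:rep}, and with the $\epsilon$-terms dualised using the value of $\Gamma^{1\cdots9}$ on the $\eta_+$ sector, just as you propose. There is one bookkeeping slip: $\Gamma^{ij}\tn_i\Psi^{(+)}_j$ has no rank-zero part, since its $h$-term is $-\tfrac14\tn_ih_j\Gamma^{ij}$ and its $Y$- and $X$-terms give ranks $1,3$ and $3,5$; so the only $\tn^ih_i$ is the one your step~1 takes from the trace of \eqref{eq:Rij}, which alone produces the $+\tfrac14 Fh$ on the right, and it must not be counted again in step~2.
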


\begin{proposition}[Vector integrability condition]
\label{prop:B5}
Off shell,
\begin{align}
&-\tfrac12\Big(-\tfrac12\tilde R_{ij}\Gamma^j + \Gamma^j\tn_i\Psi^{(+)}_j
 - \Gamma^j\tn_j\Psi^{(+)}_i + \Gamma^j\Psi^{(+)}_i\Psi^{(+)}_j
 - \Gamma^j\Psi^{(+)}_j\Psi^{(+)}_i\Big) + \xi_i
\nonumber\\
&\qquad=\;
\tfrac14 E_i{}^j\Gamma_j
- \tfrac5{288}BX_i{}^{jklm}\Gamma_{jklm}
+ \tfrac1{24}FY^j\Gamma_{ij}
- \tfrac1{12}FY_i
\nonumber\\
&\qquad\qquad{}
+ \tfrac1{576}BX^{jklmn}\Gamma_{ijklmn}
- \tfrac1{144}FX^{jkl}\Gamma_{ijkl}
+ \tfrac1{24}FX_i{}^{jk}\Gamma_{jk}\,,
\label{eq:B5}
\end{align}
so on shell $\xi_i\eta_+=0$ is likewise implied by \eqref{eq:indepKSE}.
\end{proposition}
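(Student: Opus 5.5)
The plan is to prove \eqref{eq:B5} as a polynomial identity in the Clifford algebra whose coefficients are built from the horizon data, organised exactly as for Proposition~\ref{prop:B1} but with a free index. The left-hand side is, up to the factor $-\tfrac12$, the contracted curvature $\Gamma^j[\nabla^{(+)}_j,\nabla^{(+)}_i]$ written out. So the first step is to expand
\begin{equation*}
\Gamma^j\big(\nabla^{(+)}_j\nabla^{(+)}_i-\nabla^{(+)}_i\nabla^{(+)}_j\big)
=\Gamma^j[\tn_j,\tn_i]+\Gamma^j\big(\tn_j\Psi^{(+)}_i-\tn_i\Psi^{(+)}_j\big)
+\Gamma^j\big(\Psi^{(+)}_j\Psi^{(+)}_i-\Psi^{(+)}_i\Psi^{(+)}_j\big).
\end{equation*}
For the spinor curvature term I will use the standard contraction $\Gamma^j\tfrac14\tilde R_{jikl}\Gamma^{kl}=-\tfrac12\tilde R_{ij}\Gamma^j$, together with the first Bianchi identity. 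This reproduces the $-\tfrac12\tilde R_{ij}\Gamma^j$ term of \eqref{eq:B5} and shows that the bracket in \eqref{eq:B5} is precisely this commutator.

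Next I add $\xi_i$ from \eqref{eq:xii} and sort all terms by their field content. There are four groups.

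\emph{(a) Terms linear in derivatives.} These are $\tn h$, $\tn Y$ and $\tn X$. The $\tn h$ pieces from $\tn_j\Psi^{(+)}_i-\tn_i\Psi^{(+)}_j$ must cancel against $-\tfrac14\tn_ih_j\Gamma^j+\tfrac14\tn_jh_i\Gamma^j$ and against the $h$-part of $\tn_i\Theta_+$. The $\tn Y$ and $\tn X$ pieces are to be rewritten with \eqref{eq:beta}, since $\beta=\mathrm{d}Y-h\wedge Y$. What remains will be grouped into:
\begin{itemize}
\item the divergences $\tn^jY_{ij}$ and $\tn^lX_{ijkl}$, which become $FY_i$ and $FX_{ijk}$ after subtracting the right-hand sides of \eqref{eq:divY} and \eqref{eq:divX};
\item the antisymmetrised $\tn_{[i}X_{jklm]}$, which is $BX$.
\end{itemize}
Each Clifford rank is treated separately, using $\Gamma^j\Gamma_{k_1\dots k_p}=\Gamma^j{}_{k_1\dots k_p}+p\,\delta^j_{[k_1}\Gamma_{k_2\dots k_p]}$.

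\emph{(b) Terms linear in $h$ times flux.} These come from $\Psi\Psi$ and from $\Theta_+\Psi_i$. I expect them to cancel identically, apart from those absorbed into $\beta$ in step (a).

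\emph{(c) Terms bilinear in $YY$.} These should assemble into the $Y$-part of $\tfrac14E_i{}^j\Gamma_j$, using \eqref{eq:Rij}.

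\emph{(d) Terms bilinear in $XX$ and $XY$.} These produce Clifford elements up to rank $9$. Elements of rank above $4.5$ are to be dualised with $\Gamma_{i_1\dots i_k}=\pm\tfrac1{(9-k)!}\epsilon_{i_1\dots i_k}{}^{j_1\dots j_{9-k}}\Gamma_{j_1\dots j_{9-k}}$, with the sector sign fixed as in Remark~\ref{rem:divhterm}. After dualisation they should reduce to three things: the $X$-part of $E_{ij}$, and the Chern--Simons terms of \eqref{eq:divX} and \eqref{eq:divY} that complete $FX$ and $FY$.

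\emph{The scalar pieces.} The $h^2$ and $\Delta$ scalars must cancel, with no leftover multiple of $Fh$. This differs from \eqref{eq:B1}, where an $Fh$ term survives, so the check is worth making explicitly. I will confirm it by verifying that no $\tn^ih_i$ survives after step (a).

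The main obstacle is step (d). The $XX$ and $XY$ products generate many index-overlap patterns, and their reduction to $E_{ij}$ plus the $\epsilon XX$ and $\epsilon YX$ terms depends delicately on the dualisation sign convention. I would not do this by hand. Instead I would follow the practice of Appendix~\ref{app:verify}: first, an abstract-index reduction with the residual symbols $E_{ij},FY_i,FX_{ijk},BX_{ijklm}$ carried; second, an independent check in the explicit representation of Appendix~\ref{app:rep} on pseudo-random integer data, with $\tn h$, $\tn X$ and $\tn Y$ independent and unconstrained. Agreement of the two, term by term across the seven structures on the right of \eqref{eq:B5}, establishes the identity.

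The on-shell corollary is then immediate. Every residual vanishes, so
\[
-\tfrac12\Gamma^j[\nabla^{(+)}_j,\nabla^{(+)}_i]\eta_++\xi_i\eta_+=0,
\]
and the commutator annihilates any $\nabla^{(+)}$-parallel $\eta_+$. Hence $\xi_i\eta_+=0$.
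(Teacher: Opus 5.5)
Your route is the paper's: \eqref{eq:B5} is proved there by direct expansion in $\mathrm{Cl}(9,0)$, with the residuals $E_{ij}$, $FX_{ijk}$, $FY_i$, $BX_{ijklm}$ carried symbolically using abstract indices in \textsc{Cadabra}2. The expansion is then repeated in the explicit representation of Appendix~\ref{app:rep}, and the on-shell corollary is read off as you do. Your first step, however, has a sign slip, and if you use it as written the expansion does not close.

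\emph{The sign slip.} The $\Psi$-terms of the bracket in \eqref{eq:B5} are $\Gamma^j\tn_i\Psi^{(+)}_j-\Gamma^j\tn_j\Psi^{(+)}_i+\Gamma^j[\Psi^{(+)}_i,\Psi^{(+)}_j]$. These are the $\Psi$-terms of $\Gamma^j[\nabla^{(+)}_i,\nabla^{(+)}_j]$, and minus those of the $\Gamma^j[\nabla^{(+)}_j,\nabla^{(+)}_i]$ you expand, whatever curvature convention is used. The convention itself is fixed by \eqref{eq:B1}. Its first three terms are $\Gamma^{ij}\nabla^{(+)}_i\nabla^{(+)}_j$ written out, which requires $\Gamma^{ij}\tn_i\tn_j=-\tfrac14\tilde R$, equivalently $\Gamma^j[\tn_i,\tn_j]=-\tfrac12\tilde R_{ij}\Gamma^j$. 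Your contraction should therefore read $\Gamma^j[\tn_j,\tn_i]=+\tfrac12\tilde R_{ij}\Gamma^j$. Your two slips cancel in the curvature term but not in the $\Psi$-terms, so the bracket is $\Gamma^j[\nabla^{(+)}_i,\nabla^{(+)}_j]=-\Gamma^j[\nabla^{(+)}_j,\nabla^{(+)}_i]$, not ``precisely this commutator''.

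\emph{Why it matters.} With your expression substituted, the $\tn h$ terms leave $-\tfrac14(\mathrm{d}h)_{ij}\Gamma^j$, which is not on the right of \eqref{eq:B5}. With the correct expression they cancel, but only once you include the $-\tn_{(i}h_{j)}$ that $\tilde R_{ij}$ carries through \eqref{eq:Rij}. Your step (a) leaves that contribution out.

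\emph{A smaller correction.} The $h$-times-$X$ terms of your group (b) do not cancel. They supply the $h^lX_{ijkl}$ of \eqref{eq:divX} inside the two $FX$ terms on the right, since $\beta$ absorbs only the $h\wedge Y$ pieces.

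None of this affects the on-shell conclusion, because either sign of the commutator annihilates a $\nabla^{(+)}$-parallel $\eta_+$.
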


\begin{proposition}[Scalar integrability condition of the $\eta_-$ sector]
\label{prop:B1minus}
Off shell,
\begin{multline}
-\tfrac14\tilde R + \Gamma^{ij}\tn_i\Psi^{(-)}_j
 - \Gamma^{ij}\Psi^{(-)}_j\Psi^{(-)}_i + \xi^-\\
\;=\;
\tfrac14 Fh - \tfrac14 E^i{}_i + \tfrac5{12}FY^i\Gamma_i
+ \tfrac1{72}BX^{ijklm}\Gamma_{ijklm} - \tfrac1{72}FX^{ijk}\Gamma_{ijk}\,.
\label{eq:B1minus}
\end{multline}
In particular $\xi^-\eta_-=0$ is implied on shell by
$\nabla^{(-)}_i\eta_-=0$.
\end{proposition}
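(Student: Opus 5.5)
The plan is to mirror the proof of Proposition~\ref{prop:B1}, now in the $\eta_-$ sector. I would expand the curvature-type combination $\Gamma^{ij}\nabla^{(-)}_i\nabla^{(-)}_j=-\tfrac14\tilde R+\Gamma^{ij}\tn_i\Psi^{(-)}_j-\Gamma^{ij}\Psi^{(-)}_j\Psi^{(-)}_i$ (using $\Gamma^{ij}\tn_i\tn_j=-\tfrac14\tilde R$ on spinors) and add $\xi^-$ of \eqref{eq:ximinus}. Both pieces are finite Clifford polynomials in $(h,\tn h,\Delta,Y,\tn Y,X,\tn X)$ and are to be reduced to the standard rank basis. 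I would substitute $\Psi^{(-)}_i$ from \eqref{eq:Psi} with the lower sign, and $\Theta_-$ from \eqref{eq:Theta}.

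The next step is to sort terms by their field content. The linear-derivative terms $\Gamma^{ij}\tn_i\Psi^{(-)}_j$ generate $\tn^ih_i$ and $\Gamma^{ij}\tn_ih_j$. They also generate rank-one $\tn^jY_{ij}\Gamma^i$, rank-three $\tn_{[i}Y_{jk]}\Gamma^{ijk}$, rank-three $\tn^lX_{ijkl}\Gamma^{ijk}$ and rank-five $\tn_{[i}X_{jklm]}\Gamma^{ijklm}$.

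Each group is then eliminated against a field equation. The rank-five term is $BX$ by definition. The rank-one $\tn Y$ term is traded for $FY$ via \eqref{eq:divY}, which pairs with the $XX\Gamma_{(1)}$ Chern--Simons piece after dualising $\Gamma^{(8)}\to\epsilon\Gamma^{(1)}$ (sign $s=-1$ on $\eta_-$). The rank-three $\tn X$ term becomes $FX$ via \eqref{eq:divX}; its $\beta$ part combines with the $\tfrac1{24}\beta\Gamma^{(3)}$ in $\xi^-$ and with the $\tn Y$ rank-three term. The $h\wedge Y$ piece of $\beta$ must be cancelled by the $hY\Gamma^{(3)}$ products coming from $\Psi\Psi$ and from $\Theta_-$. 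The scalar $\tn^ih_i$ combines with $-\tfrac12\Delta$ and with the $h^2,Y^2,X^2$ scalars from the quadratic terms to give $\tfrac14Fh$ through \eqref{eq:divh}. The remaining scalar $\tilde R$ combined with the quadratic trace pieces gives $-\tfrac14E^i{}_i$ through the trace of \eqref{eq:Rij}.

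The main obstacle is the quadratic bookkeeping, namely showing that $-\Gamma^{ij}\Psi^{(-)}_j\Psi^{(-)}_i$ plus $2(\cdots)\Theta_-$ leaves no residual at ranks $2,4,6,8$ that fails to match the Chern--Simons duals. The $\Gamma^{ij}\tn_ih_j$ terms must also cancel between $\Gamma^{ij}\tn_i\Psi_j$ and the $-\tfrac14\tn_ih_j\Gamma^{ij}$ of $\xi^-$. Here I would lean on the $+$ sector: the map flipping the signs of the $h$ and $Y$ terms, together with $\Delta\to-\Delta$, takes $\Psi^{(+)}\to\Psi^{(-)}$ and, up to the $\beta$ coefficient, $-\xi\to\xi^-$. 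Applying it to \eqref{eq:B1} predicts \eqref{eq:B1minus} with only the $FY$ sign flipped. The one term not covered by this symmetry, the $\beta\Gamma^{(3)}$ coefficient ($\tfrac1{24}$ versus $\tfrac1{72}$), is exactly what absorbs the $\tn Y$ rank-three term that the $h$ sign change shifts. I would verify that shift by hand.

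The final step is to confirm the whole identity symbolically with the residuals carried, as in Appendix~\ref{app:verify}. On shell the right-hand side of \eqref{eq:B1minus} vanishes. Then $\Gamma^{ij}\nabla^{(-)}_i\nabla^{(-)}_j\eta_-=-\xi^-\eta_-$, so $\nabla^{(-)}\eta_-=0$ implies $\xi^-\eta_-=0$.
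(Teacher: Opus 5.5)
Your overall route is the paper's. You expand $-\tfrac14\tilde R+\Gamma^{ij}\tn_i\Psi^{(-)}_j-\Gamma^{ij}\Psi^{(-)}_j\Psi^{(-)}_i+\xi^-$ in $\mathrm{Cl}(9,0)$ with the residuals carried, check it symbolically, and read off $\xi^-\eta_-=0$ from $\Gamma^{ij}\nabla^{(-)}_i\nabla^{(-)}_j\eta_-=-\xi^-\eta_-$. Your linear-term bookkeeping is also right. The $\mathrm{d}Y$ pieces from \eqref{eq:divX}, from $\Gamma^{ij}\tn_i\Psi^{(-)}_j$ and from $\xi^-$ cancel as $-\tfrac1{72}-\tfrac1{36}+\tfrac1{24}=0$, and the rank-one $\tn Y$ terms give $+\tfrac5{12}FY^i\Gamma_i$.

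What fails as stated is the sector shortcut you lean on for the quadratic terms. Under $(h,Y)\mapsto(-h,-Y)$ one does have $\Psi^{(+)}_i\mapsto\Psi^{(-)}_i$. However, $\Delta$ must be held fixed, since $-\xi$ and $\xi^-$ both contain $-\tfrac12\Delta$. Writing $\xi'$ for the image of $\xi$, one finds
\[
\xi^-+\xi'=\tfrac1{36}\big((\mathrm{d}Y)_{ijk}+h^lX_{ijkl}\big)\Gamma^{ijk}\,,
\]
so more than the $\beta$ coefficient is left uncovered. The $h\wedge Y$ terms do cancel against the anticommutator of $\tfrac14h_i\Gamma^i$ with $\tfrac1{12}Y_{ij}\Gamma^{ij}$. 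But the commutator of $\tfrac14h_i\Gamma^i$ with $\tfrac1{288}X_{ijkl}\Gamma^{ijkl}$ leaves an $h^lX_{ijkl}$ term. Nor are the residuals covariant under the substitution. The combination $\tfrac14Fh-\tfrac14E^i{}_i$ is, but
$FX'_{ijk}=FX_{ijk}+2(\mathrm{d}Y)_{ijk}+2h^lX_{ijkl}+\tfrac1{24}\epsilon_{ijk}{}^{lmnpqs}Y_{lm}X_{npqs}$ and
$FY'_i=-FY_i+\tfrac1{576}\epsilon_i{}^{jklmnpqs}X_{jklm}X_{npqs}$.
So transforming \eqref{eq:B1} does not by itself predict \eqref{eq:B1minus} with only the $FY$ sign changed.

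Corrected, the shortcut does close, and it is more economical than a fresh expansion. The extra term above exactly cancels the non-Chern--Simons part of the shift in $-\tfrac1{72}FX'\Gamma^{(3)}$. Evaluated on the $\eta_+$ representation, the left-hand side of \eqref{eq:B1minus} then equals its right-hand side plus two leftovers:
$-\tfrac5{6912}\epsilon_i{}^{jklmnpqs}X_{jklm}X_{npqs}\Gamma^i$ and $-\tfrac1{1728}\epsilon_{ijk}{}^{lmnpqs}Y_{lm}X_{npqs}\Gamma^{ijk}$.
These are exactly twice the $\eta_+$-duals of the rank-eight ($XX$) and rank-six ($YX$) parts of that left-hand side, which \eqref{eq:B1} already determines. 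They therefore disappear when those parts are dualised with the $\eta_-$ sign of the volume element. This is the structural content of the paper's remark that, with the $\eta_+$ signs, the $\eta_-$ residual leaves exactly two terms, of ranks eight and six; the paper itself simply expands directly.

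There are two smaller slips. First, $\Gamma^{ij}\tn_i\Psi^{(-)}_j$ produces no $\tn^ih_i$, only $\tfrac14\Gamma^{ij}\tn_ih_j$, which $\xi^-$ cancels; the $\tn^ih_i$ inside $\tfrac14Fh$ cancels the one inside $-\tfrac14E^i{}_i$. Second, $\Gamma^{ij}\Psi^{(-)}_j\Psi^{(-)}_i$ has no $hY$ or $hX$ terms, because the scalar $h$-part of $\Psi^{(-)}_i$ pairs symmetrically against $\Gamma^{ij}$. So the $h\wedge Y$ of $\beta$ and the $h^lX_{ijkl}$ of \eqref{eq:divX} are cancelled by the $\Theta_-$ product alone. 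Since your final step is a full symbolic check, the proof stands once the shortcut is treated only as a heuristic.
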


\noindent Note the sign with which $\xi^-$ enters \eqref{eq:B1minus} against
that of $\xi$ in \eqref{eq:B1}, and the sign of the $FY$ term. The latter has a
representation-theoretic origin: the volume element $\Gamma^{1\cdots9}$ of
$\Sect$ acts as $+1$ on the $\eta_+$ sector and as $-1$ on the $\eta_-$ sector,
the two chirality halves restricting to the two inequivalent
$\mathrm{Cl}(9,0)$ representations, so every Chern--Simons contribution --- that
is, every term reached by dualising an $\epsilon$ --- reverses sign between the
two computations. With the $\eta_+$ signs the $\eta_-$ residual fails to close,
leaving exactly two terms, of ranks eight and six.

All three propositions are established by direct expansion in
$\mathrm{Cl}(9,0)$ with
the residuals $E_{ij}$, $Fh$, $FX_{ijk}$, $FY_i$, $BX_{ijklm}$ carried symbolically
throughout; see Appendix~\ref{app:verify}. Equations
\eqref{eq:B1}, \eqref{eq:B5} and \eqref{eq:B1minus} are stronger than the
statements they are used for: they identify \emph{which} field equation each term is proportional to,
rather than merely asserting that the difference vanishes when all of them hold.

\medskip
The three propositions say which field equation sits in which term. The
following says that nothing else does, and that between them the two contracted
conditions see the whole field-equation set. Under $\mathrm{Spin}(9)$ the space
in which the supercovariant curvature acts decomposes as
\begin{equation}
\Lambda^2\otimes S \;=\; \mathbf{16}\oplus\mathbf{128}\oplus\mathbf{432}\,,
\qquad 36\times16=576\,,
\label{eq:lambda2S}
\end{equation}
the $\mathbf{16}$ being the double $\Gamma$-trace $\Gamma^{ij}R_{ij}$, the
$\mathbf{128}$ the single $\Gamma$-trace with its own trace removed, and the
$\mathbf{432}$ the $\Gamma$-traceless remainder. Propositions~\ref{prop:B1} and
\ref{prop:B5} are precisely the first two.

\begin{theorem}[Minimality and completeness of the contracted conditions]
\label{thm:minimality}
Write $\mathcal{R}_1$ and $\mathcal{R}_{5\,i}$ for the right-hand sides of
\eqref{eq:B1} and \eqref{eq:B5}, as functions of the five residuals
$E_{ij}$, $Fh$, $FX_{ijk}$, $FY_i$, $BX_{ijklm}$ of
\eqref{eq:Rij}--\eqref{eq:divY} and of $\mathrm{d}X=0$. Then:
\begin{enumerate}
\item[(i)] Each residual occupies a definite set of Clifford ranks, namely
\begin{center}
\begin{tabular}{lcc}
\hline
residual & ranks in $\mathcal{R}_1$ & ranks in $\mathcal{R}_{5\,i}$\\
\hline
$E^i{}_i$ & $0$ & $1$\\
$E_{ij}$ trace-free & --- & $1$\\
$Fh$ & $0$ & ---\\
$FY_i$ & $1$ & $0,2$\\
$FX_{ijk}$ & $3$ & $2,4$\\
$BX_{ijklm}$ & $4$ & $3,4$\\
\hline
\end{tabular}
\end{center}
\item[(ii)] The linear map
$(E_{ij},Fh,FX_{ijk},FY_i,BX_{ijklm})\mapsto(\mathcal{R}_1,\mathcal{R}_{5\,i})$
is injective, and its left inverse reads each residual off one of those ranks:
$E_{ij}$ and $FY_i$ from the rank-one and rank-zero parts of
$\mathcal{R}_{5\,i}$, then $Fh$ from the rank-zero part of $\mathcal{R}_1$
together with $E^i{}_i$, $FX_{ijk}$ from its rank-three part, and $BX$ from its
rank-four part dualised with the sector sign. Consequently the pair of
contracted integrability conditions is not merely implied by the near-horizon
field equations and $\mathrm{d}X=0$: it is \emph{equivalent} to them.
\item[(iii)] The two conditions are independent. $\mathcal{R}_1$ is blind to
the trace-free part of $E_{ij}$ and $\mathcal{R}_{5\,i}$ is blind to $Fh$, so
neither can be dropped. The $\eta_-$ condition \eqref{eq:B1minus} differs from
\eqref{eq:B1} only in the sign of its $FY$ term and imposes no further bosonic
constraint.
\item[(iv)] The Weyl tensor of $\Sect$ enters only the $\mathbf{432}$. On
$\mathrm{Weyl}(\Real^9)$, of dimension $495=666-126-45$, the map
$W\mapsto(W_{ijkl}\Gamma^{kl}\eta_A)_{A=1}^{N}$ has rank $327$, $468$ and $495$
for $N=1,2,3$ pointwise independent spinors. So wherever a Killing spinor is
non-zero the $\mathbf{432}$ is a genuinely new algebraic condition, not implied
by the field equations; it leaves $168$ components of the Weyl tensor free for
$N=1$ and $27$ for $N=2$, and for $N\geq3$ it determines the Weyl tensor of the
section algebraically in terms of the flux.
\end{enumerate}
\end{theorem}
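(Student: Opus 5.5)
The plan is to treat $(\mathcal{R}_1,\mathcal{R}_{5\,i})$ as an explicit linear map of the residuals, read directly off \eqref{eq:B1} and \eqref{eq:B5}. Items (i)--(iii) are then inspection plus a triangular left inverse. Item (iv) is a separate exact rank computation in the representation of Appendix~\ref{app:rep}.

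\textbf{Item (i).} List the Clifford rank carried by each residual's term in \eqref{eq:B1} and \eqref{eq:B5}.

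\begin{itemize}
\item $E^i{}_i$ and $Fh$ appear as scalars in $\mathcal{R}_1$.
\item $FY^i\Gamma_i$ has rank one.
\item $FX^{ijk}\Gamma_{ijk}$ has rank three.
\item $BX^{ijklm}\Gamma_{ijklm}$ has rank five. By Hodge duality in $\mathrm{Cl}(9,0)$, using the volume element $s$, it is rank four; this is why the table lists $4$ and why the sector sign enters the dualisation.
\item In $\mathcal{R}_{5\,i}$, $E_i{}^j\Gamma_j$ has rank one, $FY$ has ranks $0$ and $2$, and $FX$ has ranks $2$ and $4$ (the latter $\Gamma_{ijkl}$).
\item $BX$ appears through $\Gamma_{jklm}$ at rank four, and through $\Gamma_{ijklmn}$ at rank six, i.e.\ rank three after dualisation.
\end{itemize}

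\textbf{Item (ii).} I will build the left inverse explicitly, in order.

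\begin{itemize}
\item The rank-zero part of $\mathcal{R}_{5\,i}$ is $-\tfrac1{12}FY_i$, so $FY$ is recovered.
\item The rank-one part is then $\tfrac14E_i{}^j\Gamma_j$ alone, giving all of $E_{ij}$.
\item In $\mathcal{R}_1$, the scalar $\tfrac14Fh-\tfrac14E^i{}_i$ gives $Fh$.
\item The rank-three part gives $FX$.
\item The rank-four part gives $BX$.
\end{itemize}

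Rank components are obtained by the normalised trace $\mathrm{tr}(\Gamma_{(k)}^T\,\cdot)/16$, so each extraction is a linear projection. Injectivity follows, and with it equivalence to the field equations.

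\textbf{Item (iii).} This is read from the table. $\mathcal{R}_1$ has no slot for the trace-free part of $E$, and $\mathcal{R}_5$ contains no $Fh$. For the $\eta_-$ condition, comparing \eqref{eq:B1minus} with \eqref{eq:B1} shows the only difference is the sign of $FY$. Hence its vanishing set coincides with that of \eqref{eq:B1}.

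\textbf{Item (iv).} This is the main obstacle, and it is genuinely computational.

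First, I argue that the Weyl tensor lies outside the $\mathbf{16}\oplus\mathbf{128}$. The $\Gamma$-traces of $W_{ijkl}\Gamma^{kl}$ involve only Ricci contractions, which vanish, together with the antisymmetrised $W_{[ijk]l}$ terms, which vanish by the Bianchi identity.

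Next, I count: $\dim\mathrm{Weyl}(\Real^9)=\tfrac{81\cdot80}{12}-45=495$.

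Then I construct a basis of Weyl tensors by projecting algebraic curvature tensors, and evaluate the $16N\times495$ matrix $W\mapsto(W_{ijkl}\Gamma^{kl}\eta_A)$ in exact rational arithmetic. The first choice is $N$ random rational spinors, for which the ranks should come out as $327$, $468$ and $495$.

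The risk is genericity. The rank is lower semicontinuous, so one rational sample gives a lower bound valid at generic points. For $N=1$, transitivity of $\mathrm{Spin}(9)$ on $S^{15}$ makes the rank independent of the spinor. For $N=2,3$, I would add a few independent samples, and the stability of the rank across them supports maximality. The free counts $168=495-327$ and $27=495-468$ follow.
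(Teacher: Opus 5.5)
Your route is the paper's. Items (i)--(iii) are read off the displayed right-hand sides of \eqref{eq:B1} and \eqref{eq:B5} with the trace pairing on the blades of rank at most four, the volume element supplying the sector sign for the rank-five and rank-six $BX$ terms. Your left inverse is the same triangular one the paper uses. Item (iv) is an exact rank computation, with $\mathrm{Spin}(9)$-transitivity on $S^{15}$ settling $N=1$. Your argument that the Weyl part has vanishing $\Gamma$-traces, by the first Bianchi identity and tracelessness, and hence lies in the $\mathbf{432}$, is correct. It is an analytic substitute for something the paper leaves to computation.

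There is one concrete slip in (iv) that must be fixed before the computation can return the stated numbers. The element $W_{ijkl}\Gamma^{kl}\eta_A$ carries the free antisymmetric pair $ij$. So the target for each spinor is $\Lambda^2\otimes S$, of dimension $36\times16=576$, and the matrix is $576N\times495$, not $16N\times495$. With $16N$ rows its rank could not exceed $48$, so the computation as you describe it could never produce $327$, $468$ or $495$.

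On genericity, your reading is right but worth stating precisely. For $N=2$ a sampled rank of $468$ is only a lower bound on the generic rank; an upper bound would need a separate argument. For $N=3$ the value $495$ is sharp because it equals the number of columns. This is exactly the status of the paper's own evidence, which recomputes the ranks at each spinor tried, so you claim no less than the paper does.
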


\begin{proof}
All four parts are exact rational computations in the representation of
Appendix~\ref{app:rep}, with the residuals carried as free symbols. The rank
dictionary of (i) and the left inverse of (ii) are extracted with the trace
pairing $\mathrm{tr}(M\Gamma_A^{\mathsf T})/16$ on the $256$ blades of rank at
most four, which are first verified to be pairwise trace-orthogonal and hence a
basis of the matrix algebra; the inverse is then checked residual by residual,
which is what makes (ii) a computation and not an assertion. For (iv) the
dimension $495$ is derived rather than quoted, as $\dim\mathrm{Sym}^2\Lambda^2
=666$ less the first Bianchi identity, of rank $126$, less the Ricci part, of
dimension $45$, the count $126$ and the combined $171$ being computed rather
than quoted as well. The rank $327$ is confirmed modulo three different primes,
and every rank here is recomputed at each spinor tried,
$\mathrm{Spin}(9)$-transitivity on $S^{15}$ making one spinor sufficient for
$N=1$. The rank $327$ corrects the naive expectation
$432$: the image of a single spinor is a $\mathrm{Spin}(7)$ submodule of the
$\mathbf{432}$, not all of it. Two caveats are recorded rather than hidden.
First, the off-shell identities \eqref{eq:B1}--\eqref{eq:B1minus} themselves are
proved by the abstract-index Cadabra computations of
Appendix~\ref{app:verify}; their re-derivation inside this script, on random
exact-rational data, is a Schwartz--Zippel confirmation of polynomials of
bounded degree rather than an independent proof, and it is quoted here only
because it is what carries the rank content. Second, the kernels of dimension
$168$ and $27$ in (iv) are not shown to be irreducible; their dimensions match
$\mathrm{Spin}(7)$ modules, and no more than the dimensions is claimed
($111$ checks).
\end{proof}

\begin{remark}[What (ii) buys]
\label{rem:minimality}
Statements of the form ``the integrability condition is implied by the field
equations'' are the standard content of a Lichnerowicz-type argument, and are
what Propositions~\ref{prop:B1}--\ref{prop:B1minus} record. The converse
direction in (ii) is what makes the pair a complete substitute for the field
equations rather than a consequence of them: a would-be near-horizon geometry
may be tested by evaluating two Clifford-valued objects, and any solution of
$\xi=\xi_i=0$ that is not a solution of \eqref{eq:Rij}--\eqref{eq:divY} does not
exist. Part (iv) then locates what the contracted conditions do \emph{not} see.
The full integrability condition $R^{(\pm)}_{ij}\eta_\pm=0$ carries a
$\mathbf{432}$ that no field equation reproduces, and it constrains the Weyl
tensor of the section; the classification programme of
Section~\ref{sec:discussion} has not used it, and for $N\geq3$ it is strong
enough to fix the Weyl tensor outright.
\end{remark}

\section{Verification protocol}
\label{app:verify}
Every algebraic statement in this paper is machine-checked, by two independent
tools where possible, and each computation terminates in an explicit
\texttt{PASS}/\texttt{FAIL} line. The control is of one of three kinds,
according to what the computation does. A script that tests an identity, an
inequality or a classification statement carries at least one negative control,
designed to fail when the input under test is removed or perturbed. A
\textsc{Cadabra}2 tensor computation instead carries the field-equation and
Bianchi residuals symbolically to the end, so that the statement is reached only
when they are set to zero, and is repeated independently in the explicit
representation of Appendix~\ref{app:rep}. A script that builds the Clifford
representation, the Gaussian null frame or the horizon operators is controlled
by the defining algebraic relations it is required to reproduce. What follows
records what is checked, in the order in which the statements appear in the
body.

\medskip\noindent\emph{The near-horizon system.} The Gaussian null frame
reduction of
Appendix~\ref{app:gnc} --- the connection, the curvature, the field equations
\eqref{eq:Rij}--\eqref{eq:divY} and the auxiliary identities of
Proposition~\ref{prop:auxiliary} --- is derived from
\eqref{eq:einstein}--\eqref{eq:formeq} by computer algebra rather than quoted. The
lightcone enumeration is performed order by order in $r$ and $u$ on two
independent sets of pseudo-random integer horizon data, with the horizon data
otherwise unconstrained, up to and including the first order at which the
expansion is known on structural grounds to terminate.

\medskip\noindent\emph{The tensor-level identities.} These are the off-shell
integrability conditions ---  \eqref{eq:B1}, \eqref{eq:B5}, \eqref{eq:B1minus}, the
two Lichnerowicz residuals \eqref{eq:lichplus} and \eqref{eq:lichminus} and the
Clifford expansions behind \eqref{eq:pointwiseexplicit} and
\eqref{eq:pointwiseminusexplicit}. Each is checked in \textsc{Cadabra}2 with abstract
indices and with the field-equation and Bianchi residuals carried symbolically
to the end, and every one of them again in the explicit real representation of
$\mathrm{Cl}(9,0)$ of Appendix~\ref{app:rep}, on pseudo-random integer horizon
data in exact rational arithmetic.

\medskip\noindent\emph{The warp--flux identities of Section~\ref{sec:warp}.}
The pointwise warp identity is derived twice, with abstract indices and again in
the representation of Appendix~\ref{app:rep} on pseudo-random integer data, and
the two bosonic identities of the $\eta_+$ sector and their combination are
derived independently in two computer-algebra systems. In the $\eta_-$ sector the
master identity is verified with its gradient term carried rather than assumed
away, together with the divergence identity \eqref{eq:divmaster} and the elliptic
form \eqref{eq:masterelliptic}. The maximum-principle identities for
$\nrm{\eta_\pm}^2$ are obtained directly rather than through the Lichnerowicz
route, with the two inputs they need made explicit, and the constancy of
$\nrm{\eta_+}$ is checked again at representation level. The equivalence of
Proposition~\ref{prop:conconx} is checked on the vector bilinear itself, and the
step from $c=0$ to $X=Y=h=0$ --- the triviality of $\ker\Theta_-$ off the
flux-free locus --- separately. Saturation is shown to force a direct product;
the static branch is matched term by term against the warped system of
\cite{KimPark}, including the coordinate change $r=f\rho$, the value $\Delta=c/f$
and the warp equation, on a compact model carrying non-constant $\Delta+h^2$; and
the identity chain (6.9) of \cite{11index} is run through the elimination that
produces \eqref{eq:crange}, so that what this paper attributes there and what it
does not is settled by computation. The balance family and its pointwise
sharpening are checked on the compact models described at
Theorem~\ref{thm:balance}. For Theorem~\ref{thm:rigid} the mutual Majorana
bilinears of the pair $(\eta_-,\eta_+=\Gamma_+\Theta_-\eta_-)$ and the relative
$\mathrm{Spin}(9)$ orbit they define are computed, and the question of whether
$N\ge4$ sharpens either the warp identity or the rigidity theorem is answered
there rather than assumed.

\medskip\noindent\emph{The $\mathrm{Spin}(7)$ material of
Section~\ref{sec:geom}.} The negative statement of
Proposition~\ref{prop:noclosure} is established by rank computations rather
than by an absence of evidence. For that proposition more is recorded than the two ranks the proof
uses: over fourteen samples with different spinors and fluxes, the rank of the
matrix of $I_1,\dots,I_9$ and the constant function is ten and rises to eleven
when the column of $\Delta$ values is adjoined, so not even a universal linear
relation between $\Delta$ and the $I_a$ exists. Three controls accompany
the computation: a manufactured combination of the $I_a$, and $\Delta$
stripped of its three $\varphi$-terms, each raise neither rank; and the
combinatorial evaluation of the $\epsilon$-contraction $I_5$ is checked against
the sum over all $8!$ orderings of its free indices. The rest of the section is
checked in the same discipline: the projectors onto the summands of \eqref{eq:spin7split} are built
as eigenprojections of the Cayley form taken from a spinor bilinear, so that no
adapted frame is chosen anywhere, and the perfect square \eqref{eq:deltasquare}
is verified sample by sample against three negative controls and one
constructive zero-witness. That square is also established a second time with
abstract indices, from the two contraction identities of the Cayley form and
nothing else: the identities are checked against each other and against
$\varphi^2=336$, the antisymmetriser in $\sigma$ is shown to be free against a
totally antisymmetric $w$, and $\sigma^T\sigma=24\,\Pi_{\mathbf 7}$ is obtained
as an identity in $\mathcal{Y}$ rather than by Schur's lemma together with an
evaluation on one element, so that neither a representation nor an adapted
frame enters at any point.

\medskip\noindent\emph{The global statements of
Section~\ref{sec:consequences}.} The $\mathfrak{sl}(2,\Real)$ statements of
Section~\ref{sec:sl2r} are established by carrying out the derivation with (V),
(M), (S) and (L) as the only substitutions permitted, and the brackets, the
Jacobi identity and the signature $(2,1)$ of the Killing form are then checked on
the resulting vector fields, with negative controls in which (V) or (M) is broken
and the Killing property fails. The charges of Section~\ref{sec:charges} are
checked from the metric rather than from the
formulae quoted for them: the Killing property of $\mathcal{K}$, the components
of $\mathcal{K}^\flat$, the component $(\mathrm{d}\mathcal{K}^\flat)_{ru}$ and
the Hodge dual pulled back to the section are computed in an explicit Gaussian
null metric with generic $h_i$, $\Delta$ and $\gamma_{ij}$, with a control in
which the twist depends on the orbit coordinate and the Killing property fails;
the reduction $h(K)=2c-2\Delta f$ is verified with the two scalars $h^2$ and
$h^i\tn_if$ carried as independent symbols, with controls that drop the
gradient term of \eqref{eq:master} and that mis-normalise $K$. Finally,
\eqref{eq:rotationbound} is verified by periodic quadrature on the same compact
models that carry Theorem~\ref{thm:balance}. The global formulae are also
evaluated termwise on an explicit compact model of the static branch
\eqref{eq:staticbranch}, the branch on which the constancy hypothesis fails, so
that their degeneracy there --- $V\equiv0$, hence $J=0$ and saturation of
\eqref{eq:rotationbound} --- is checked rather than assumed, together with the
volume formula of Corollary~\ref{cor:entropy}, the lower end of the entropy
sandwich of Corollary~\ref{cor:smarr} and the balance family of
Theorem~\ref{thm:balance} at five values of $p$. The eliminations behind
Corollary~\ref{cor:smarr} are carried out on the four global relations held as
independent symbols, so that \eqref{eq:smarr} and \eqref{eq:komarsign} are
derived and not assumed, with controls that drop the $-c\,\mathrm{Vol}(\Sect)$
term and that mis-normalise the entropy, and the same relations are then
evaluated termwise on those compact models. For Proposition~\ref{prop:totalR}
the trace of \eqref{eq:Rij} is taken on explicit rational tensors in nine
dimensions, so that the contraction identities and the coefficients
$\tfrac14$ and $\tfrac1{48}$ are verified rather than quoted, the elimination
of $\tn^ih_i$ is performed with the residuals of \eqref{eq:Rij} and
\eqref{eq:divh} carried, and \eqref{eq:Rtotal} is verified on compact models
built to solve \eqref{eq:divh} exactly with $X^2,Y^2\ge0$ enforced pointwise.
The two negative results of the same section --- that the Bochner identity
\eqref{eq:bochner} does not close, and that the Morse and Poincar\'e--Hopf
routes are empty --- are likewise established by exhibiting the uncontrolled
term and the counterexamples rather than by an absence of evidence. The zero
set of the Bochner obstruction is then mapped, and shown to lie inside the
constant-norm branch the argument was meant to reach, so that it contains no
horizons the identity has not already caught; adding the invariance
$\mathcal{L}_Vf=0$ of \eqref{eq:pwbalance} removes one of the two witnesses of
indefiniteness and leaves the other, which is why the non-closure is
structural. That the static branch closes on the single scalar equation
$\tn^2f=Qf-2c$, the $++$ and $+i$ Einstein equations collapsing identically once
$\mathrm{d}h=0$, is checked in the same way, and is what leaves
Remark~\ref{rem:whynotconst} nothing to work with; the two pure branches are run
separately, $X=0$ forcing $\beta=0$ and killing every Chern--Simons term, while
$Y=0$ leaves $X\wedge X=0$ as the one genuine pointwise restriction. The intermediate range of \eqref{eq:crange} is probed in the same
style: on covariantly constant product data the field equations reduce to
algebra in the three squared amplitudes, the $\epsilon$ contractions of
\eqref{eq:divX} and \eqref{eq:divY} are summed over permutations on explicit
nine-dimensional tensors rather than inferred from index supports, the
flatness of every one-dimensional factor is imposed, and the endpoint verdict
is checked over the whole solution cone of the curvature constraints instead of
at a sampled point; the configuration excluded by the Bianchi identity alone
and the configurations excluded by the curvature constraints alone are both
exhibited. The integrability of the twist behind Corollary~\ref{cor:minuspointwise} is
checked in the same style: $\tn^ih_i$ is assembled from \eqref{eq:htoflux} and
\eqref{eq:Deltaexplicit} with $\tn Z$ and $\tn\varphi$ eliminated by the
spatial Killing spinor equation, and the resulting row is tested against the
row space of the $219$ conditions carried by $\mathrm{d}X=0$, \eqref{eq:divX}
and \eqref{eq:divY} on the $1458$ components of $(\tn X,\tn Y)$, with the
multiplier exhibited, the singular-value gap behind the rank recorded, and
controls in which the target is randomised and in which a functional already in
the list is adjoined. The sector sign of \eqref{eq:divhsector} is read off in
both sectors, the seven coefficients of \eqref{eq:Deltaexplicit} are recovered
from that scalar alone rather than assumed, and the general form
\eqref{eq:htofluxgen} is verified on an arbitrary twist, with the control that
\eqref{eq:htoflux} fails on the same data.
For Proposition~\ref{prop:twosector} the two sector vector identities are each
solved for $h$ and required to agree, which cuts the $153$-dimensional family
on which the Killing system for $a\,Z^{(+)}+b\,Z^{(-)}$ is assembled; the rank
of that system and the rank of the reduced obstruction $[T^{(+)},T^{(-)}]$ are
taken in exact rational arithmetic on the family, and again, from the
singular-value gap, on points of the pairing subvariety produced by a
Gauss--Newton solve, the two relaxations being the same computation with one
extra column. The positive control is a degenerate configuration --- $X=0$ with
$Y$ annihilating both sector vectors --- on which both sector vectors are
parallel and the same test returns the two-parameter family it fails to find
in general, so the rank-two verdict is a property of the data and not of the
test. For Theorem~\ref{thm:hidden} the two maps
$(X,Y)\mapsto\tn Z$ and $(X,Y)\mapsto\tn\varphi$ are built from the spatial
Killing spinor equation in the explicit representation and their ranks taken in
both sectors and at several spinors; the ranks are shown not to be tolerance
artefacts by exhibiting the singular-value gap at the reported rank, the
derivative code is cross-checked against two identities it was not built to
satisfy, and the Killing--Yano and closed conformal conditions, together with
the Hodge duality between them, are verified in dimensions where the tensors
can be written out in full.

\medskip\noindent\emph{The conventions themselves.} The exact solutions of
Remark~\ref{rem:calibration} are used as an external check: both are verified
against
\eqref{eq:einstein}--\eqref{eq:formeq} in an orthonormal frame before any
identity of this paper is applied to them, and the identity stack is then run
on them with the volume of the section and the value of $f$ carried as free
positive constants, so that they must cancel --- on the supersymmetric
\eqref{eq:calibE} as the spinorial identities themselves, and on the
non-supersymmetric control \eqref{eq:calibM} in the degenerate form they take
at constant $f$ with $c:=\Delta f$, where they test the coefficients only. The
Freund--Rubin throats are
checked in the same way, including the radius ratios, with controls in which
each ratio is perturbed. The ratio $\lVert\mathcal{Y}_{\mathbf{21}}\rVert^2
=\tfrac13\lVert\mathcal{Y}_{\mathbf 7}\rVert^2$ of \eqref{eq:calibratio} is
established by exhausting all $3360$ configurations of three orthogonal
coordinate planes with all relative signs against the Cayley form built from
the spinor, in both spinor sectors and in exact rational arithmetic, with
controls off the locus and at a perturbed coefficient.

\medskip\noindent What is quoted rather than checked is, principally, that the
three vector fields of Section~\ref{sec:sl2r} are spinor bilinears preserving
the four-form, and the analytic maximum-principle step of
Theorem~\ref{thm:saturation}.

\medskip\noindent\textbf{Data availability.}
The verification material consists of the $61$ computations described above,
written in Python/\textsc{SymPy} and in \textsc{Cadabra}2, the logs of the run
cited here, a \texttt{README} mapping every numbered statement to the
computation that checks it and to the log recording its output, the exact
interpreter and library versions (Python 3.13.12, \textsc{SymPy}
1.13.3, \textsc{NumPy} 2.2.1, \textsc{Cadabra}2 2.5.15), the commit hash of the
source tree, and a single command that reruns every check and rewrites the
logs. It is not deposited with this paper; it is available from the author on
request.

\end{document}